\newtheorem{thm}{Theorem}[section]
\newtheorem{hyp}{Hypothesis}[section]
\newtheorem{fact}[thm]{Fact}
\newtheorem{lem}[thm]{Lemma}
\newtheorem{cor}{Corollary}[thm]
\newtheorem{prop}{Proposition}[thm]
\newtheorem{eg}{Example}[thm]
\newtheorem{constraint}{Constraint}[thm]
\renewcommand{\vec}{\bm}
\newcommand{\CB}{\mathcal{B}}
\newcommand{\BC}{\mathbb{C}}
\newcommand{\CE}{\mathcal{E}}
\newcommand{\BE}{\mathbb{E}}
\newcommand{\CF}{\mathcal{F}}
\newcommand{\CH}{\mathcal{H}}
\newcommand{\CL}{\mathcal{L}}
\newcommand{\CM}{\mathcal{M}}
\newcommand{\CN}{\mathcal{N}}
\newcommand{\CO}{\mathcal{O}}
\newcommand{\CT}{\mathcal{T}}
\newcommand{\vA}{\bm{A}}
\newcommand{\va}{\bm{a}}
\newcommand{\vB}{\bm{B}}
\newcommand{\vC}{\bm{C}}
\newcommand{\vCE}{\bm{\CE}}
\newcommand{\vF}{\bm{F}}
\newcommand{\vH}{\bm{H}}
\newcommand{\vI}{\bm{I}}
\newcommand{\vK}{\bm{K}}
\newcommand{\vO}{\bm{O}}
\newcommand{\vP}{\bm{P}}
\newcommand{\vS}{\bm{S}}
\newcommand{\vU}{\bm{U}}
\newcommand{\vX}{\bm{X}}
\newcommand{\vY}{\bm{Y}}
\newcommand{\vZ}{\bm{Z}}
\newcommand{\vsigma}{\bm{\sigma}}
\newcommand{\vrho}{\bm{\rho}}
\renewcommand{\L}{\left}
\newcommand{\R}{\right}
\newcommand*{\tr}{\mathrm{Tr}}
\newcommand*{\btr}{\bar{\mathrm{Tr}}}
\newcommand*{\poly}{\mathrm{Poly}}
\newcommand*{\Supp}{\mathrm{Supp}}
\newcommand{\lV}{\lVert}
\newcommand{\rV}{\rVert}
\newcommand{\vertiii}[1]{{\left\vert\kern-0.25ex\left\vert\kern-0.25ex\left\vert #1 
    \right\vert\kern-0.25ex\right\vert\kern-0.25ex\right\vert}}
\newcommand{\vertiiiNoLR}[1]{{\bigg\vert\kern-0.25ex\bigg\vert\kern-0.25ex\bigg\vert #1 
    \bigg\vert\kern-0.25ex\bigg\vert\kern-0.25ex\bigg\vert}}
\newcommand{\lvertiii}{\bigg\vert\kern-0.25ex\bigg\vert\kern-0.25ex\bigg\vert }
\newcommand{\rvertiii}{\bigg\vert\kern-0.25ex\bigg\vert\kern-0.25ex\bigg\vert }
\newcommand{\norm}[1]{\Vert {#1} \Vert}
\newcommand{\lnorm}[1]{\left\Vert {#1} \right\Vert}
\newcommand{\normp}[2]{\norm{#1}_{#2}}
\newcommand{\lnormp}[2]{\lnorm{#1}_{#2}}
\newcommand{\lexp}[1]{\exp\L( #1\R)}
\newcommand{\labs}[1]{\left\vert {#1} \right\vert}
\newcommand{\e}{\mathrm{e}}
\newcommand{\iunit}{\mathrm{i}}
\newcommand{\ri}{\mathrm{i}}
\newcommand{\indicator}{\mathbbm{1}}
\begin{document}
\title{Average-case Speedup for Product Formulas}

\author{Chi-Fang (Anthony) Chen}
\email{chifang@caltech.edu}
\affiliation{Institute for Quantum Information and Matter,
California Institute of Technology, Pasadena, CA, USA}
\author{Fernando G.S.L. Brand\~ao}
\affiliation{Institute for Quantum Information and Matter,
California Institute of Technology, Pasadena, CA, USA}
\affiliation{AWS Center for Quantum Computing, Pasadena, CA}
\date{\today}		

\begin{abstract}
     Quantum simulation is a promising application of future quantum computers. Product formulas, or Trotterization, are the oldest and still remain an appealing method to simulate quantum systems. For an accurate product formula approximation, the state-of-the-art gate complexity depends on the number of terms in the Hamiltonian and a local energy estimate. In this work, we give evidence that product formulas, in practice, may work much better than expected. We prove that the Trotter error exhibits a qualitatively better scaling for the vast majority of input states while the existing estimate is for the worst states. For general $k$-local Hamiltonians and higher-order product formulas, we obtain gate count estimates for input states drawn from any orthogonal basis. The gate complexity significantly improves over the worst case for systems with large connectivity. Our typical-case results generalize to Hamiltonians with Fermionic terms, with input states drawn from a fixed-particle number subspace, and with Gaussian coefficients (e.g., the SYK models). Technically, we employ a family of simple but versatile inequalities from non-commutative martingales called \textit{uniform smoothness}, which leads to \textit{Hypercontractivity}, namely $p$-norm estimates for $k$-local operators. This delivers concentration bounds via Markov's inequality. For optimality, we give analytic and numerical examples that simultaneously match our typical-case estimates and the existing worst-case estimates. Therefore, our improvement is due to asking a qualitatively different question, and our results open doors to the study of quantum algorithms in the average case.
\end{abstract}
\maketitle
{ \hypersetup{hidelinks} \tableofcontents }.


\section{Introduction}

A promising application of future quantum computers is to simulate properties of physical systems~\cite{lloyd1996universal,Childs2017TowardTF,babbush2018low,kivlichan2018quantum,2021_Microsoft_catalysis, mcardle2020quantum,chamberland2020building, Davoudi2022QuantumSF}. As a fundamental quantum algorithm subroutine, \textit{Hamiltonian simulation} seeks to efficiently approximate the time evolution operator 
$
    \e^{\ri \vH t}
$
using elementary building blocks, such as a universal gate set or whichever experimentally available operations. Despite the simplicity of the problem statement, developing quantum algorithms that minimize the required resources (e.g., the gate complexity) has drawn tremendous effort~\cite{Berry2005EfficientQA, Low_2019_qubitize, LCU, campbell2019random, thy_trotter_error}, especially given the current limited experimental capability of quantum simulators.

The main Hamiltonian simulation method we study is \textit{product formulas}, or \textit{Trotterization}. As an old idea, it simply approximates the exponential of a sum by products of individual exponentials
\begin{align}
    \e^{\ri (\vH_1+\vH_2 )t }= \e^{\ri\vH_1t } \e^{\ri \vH_2 t } +\CO(t^2).
\end{align}
Constructions such as the Lie-Trotter-Suzuki~\cite{suzuki1991general,lloyd1996universal} formulas generalize to Hamiltonians with many terms and to a higher-order approximation $\CO( t^{\ell+1} )$. However, the \textit{Trotter error}, as hidden in $\CO(t^2)$, had been challenging to analyze, and for a while product formulas were under the shadow of more advanced quantum algorithms based on quantum walks and quantum signal processing~\cite{low2017optimal,Low_2019_qubitize}.

Nevertheless, product formulas have recently resurfaced as a strong candidate for Hamiltonian simulation for experimental, numerical, and theoretical reasons. In the near-term or early-fault-tolerant regime with severe restriction to the number of qubits, depth, and connectivity, its simple prescription without controlled ancilla appears attractive. Despite its simplicity, numerical case studies~\cite{Childs2017TowardTF} suggest product formula may outperform more advanced methods. These reasons further fueled theoretical analysis of Trotter errors where sharper and sharper theoretical guarantees continue to reduce the cost by exploiting the structure of the problem, such as initial state knowledge~\cite{low_energy2020,Tong2021ProvablyAS} and spatial locality of the model~\cite{Haah_2021}.
\begin{figure}[t]
    \centering
    \includegraphics[width=0.7\textwidth]{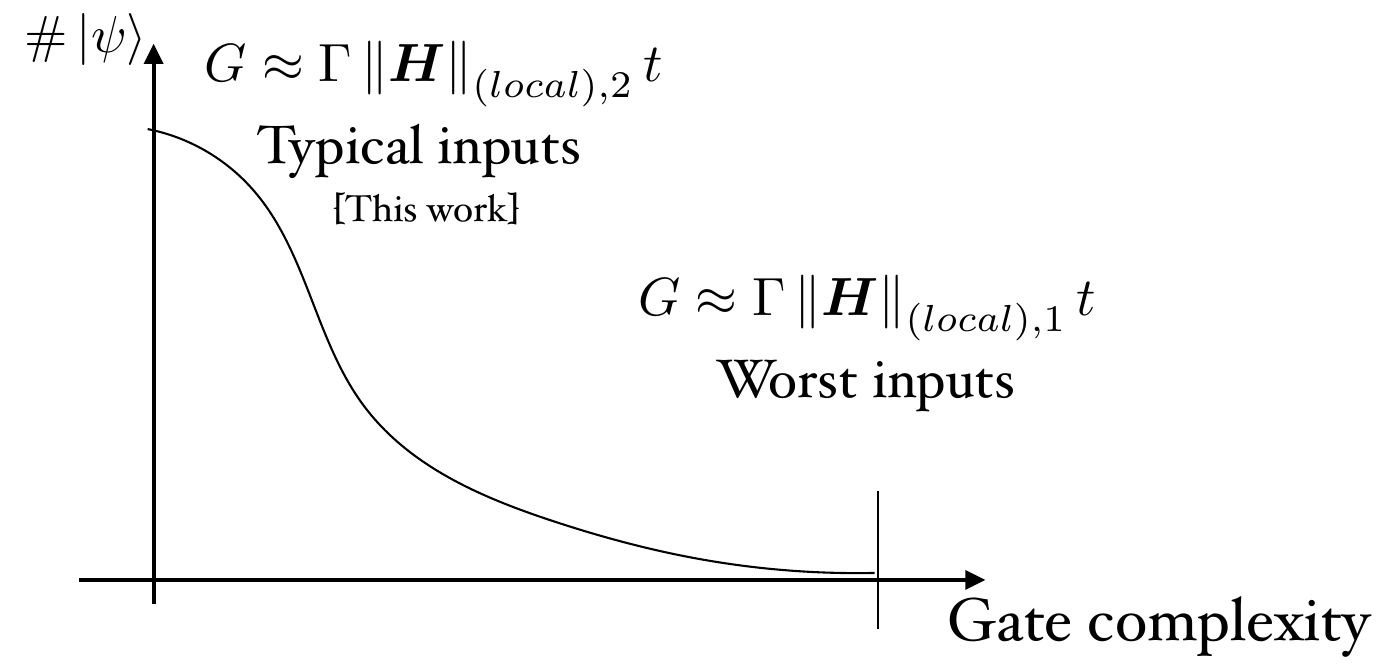}
    \caption{Concentration of gate complexity distribution for product formula for states drawn from any fixed orthogonal basis. The vast majority of states are controlled by our typical case results (Theorem~\ref{thm:Trotter_non_random_maintext}), while extremal states may require the worst case guarantees~\eqref{eq:1_norm_main}~\cite{thy_trotter_error}. The two gate complexities coexist and differ because the Trotter error is a high-dimensional object.}
    \label{fig:worst_typical}
\end{figure}

Especially, the seminal work~\cite{thy_trotter_error} puts together an analytic framework that exploits commutation relations. Consider the general class of \textit{k-local} Hamiltonians $\vH = \sum_{\gamma=1}^{\Gamma} \vH_{\gamma}$ (i.e., sum over few-body Pauli strings $\vsigma^x_1, \vsigma^y_1 \vsigma^y_2,\cdots$). It was shown that using higher-order Suzuki formulas, the gate complexity
\begin{align}
    G\approx \Gamma \lnormp{\vH}{(local),1} t \quad \text{ where}\quad \lnormp{\vH}{(local),1} &:= \max_{\text{site } i} \sum_{\gamma: i \in \gamma}\norm{\vH_\gamma} \label{eq:1_norm_main}
\end{align}
suffices to approximate the unitary evolution for \textit{any} input state. The bound depends on the number of terms $\Gamma$ in the Hamiltonian and a local energy estimate $\lnormp{\vH}{(local),1}$ (Figure~\ref{fig:local_quantity}). This local quantity sums over terms $\vH_\gamma$ overlapping with a site $i$ and takes the maximum over sites; it tends to be much smaller than the global sum $\sum_{\gamma} \norm{\vH_{\gamma}}$. This theoretical guarantee renders product formulas among the strongest candidates for simulating physical systems (Table~\ref{table:main}). 

In light of the developments, we may ask: what remains to be known for Trotter error? In some other contexts, the folklore~\cite{2021_Microsoft_catalysis} suggests errors in quantum computing might, in practice, add up \textit{incoherently}, which can be significantly smaller than coherent noise \cite{hastings2016turning,campbell_mixing16,2017_Temme_error_mitigation}. Intuitively, different scaling occurs whether the noises are ``pointing at the same direction''. For a minimal example, consider a sum over $m$ numbers taking values $\pm1$. In the worst possible scenario, they could all share the same sign and add up coherently. However, if the numbers have random signs independent of each other, the total strength is usually much smaller
\begin{align}
    1+1+\cdots 1 &= m \quad \text{(coherent error)}\\
    1-1+\cdots -1 &\sim 0 \pm \CO(\sqrt{m}) \quad \text{(incoherent error)}.
\end{align}
Curiously, the existing gate complexity, as a manifestation of the Trotter error, exhibits the coherent scaling where terms are added up linearly~\eqref{eq:1_norm_main}. Could Trotter error and the gate complexity, in practice, enjoy the much milder incoherent scaling?

This work presents the incoherent aspects of Trotter error that exhibit qualitatively different scaling from the state-of-the-art estimates~\cite{thy_trotter_error}. Pictorially, the Trotter error is a high-dimensional object that cannot be summarized in a single bound. Instead, there is a \textit{distribution} of Trotter error over input states (Figure~\ref{fig:worst_typical}). The existing estimate~\eqref{eq:1_norm_main} accounts for the \textit{worst-case} inputs that may not be practically relevant; the vast majority of inputs enjoy a much better scaling. More precisely, we show that, \textit{with high probability}, the gate complexity exhibits a \textit{root-mean-square}, or \textit{2-norm} scaling for inputs drawn from any orthogonal basis
\begin{align}
    G\approx \Gamma \lnormp{\vH}{(local),2} t \quad \text{ where}\quad \lnormp{\vH}{(local),2} &:= \max_{\text{site } i} \sqrt{\sum_{S: i \subset S } b^2_S} \quad\text{and}\quad b_{S}:=\sum_{\gamma\sim S} \norm{\vH_{\gamma}}.\label{eq:2_norm_main}
\end{align}
The local quantity $\lnormp{\vH}{(local),2}$ is now a sum-of-squares over sets $S$ overlapping with a site $i$ (the scalar $b_S$ sums over all terms $\vH_{\gamma}$ with support being the set $S$). Our estimate yields substantial improvements over~\eqref{eq:1_norm_main} when the Hamiltonian has large connectivity (such as with long-range interactions, see Table~\ref{table:main}), which directly leads to resource reduction for various quantum simulation tasks. Further, motivated by quantum chaos and the SYK models~\cite{maldacena2016remarks,Sachdev_1993}, we show that when the Hamiltonian itself has random coefficients, even the worst input states enjoy a 2-norm scaling for Trotter error. 

To reiterate, our results give evidence that, in practice, product formulas may generically work even better than expected. This improvement is due to framing a qualitatively different question from the existing worse-case results. Indeed, we provide analytic and numerical evidence that the average-case~\eqref{eq:2_norm_main} and worst-case~\eqref{eq:1_norm_main} estimates can be \textit{simultaneously} tight in their respective contexts. More broadly, our findings open doors to the average-case study of quantum algorithms, which is relatively unexplored yet could greatly improve the feasibility of quantum computing applications.

To derive our average-case results, we combine matrix concentration inequalities (uniform smoothness and hypercontractivity) with the commutator expansion of exponential products~\cite{thy_trotter_error}. The matrix analysis framework is simple and robust, and we expect further applications in quantum information (See, e.g.,~\cite{chen2020quantum,chen2021concentration,chen2021optimal}). 

When this work was completed, we became aware of the work~\cite{Qi_2021_Hamiltonian_simulation_random} which also studies Hamiltonian simulation for random inputs, and we briefly highlight the differences. First,~\cite{Qi_2021_Hamiltonian_simulation_random} studies only the variance of Trotter error, while we show a stronger sense of typicality where the 2-norm scaling holds for all but \textit{exponentially rare} inputs. This utilizes matrix concentration inequalities for the higher moments. Second, our gate complexity is asymptotically tighter for non-spatially local models and is accompanied by analytic and numerical evidence for optimality. This roots from diving deeply into the combinatorics of nested commutators. Third, in addition to random inputs, we also study random Hamiltonians and show the corresponding typical-case results. 

The main text is organized as follows: we summarize results for arbitrary k-local Hamiltonians in Section~\ref{sec:main_non-random} and random Hamiltonians in Section~\ref{sec:main_random}. The gate complexities are compared in Table~\ref{table:main}. We then introduce the proof ingredients in Section~\ref{sec:main_proof_ingredients}.

\begin{figure}[t]
    \centering
    \includegraphics[width=0.4\textwidth]{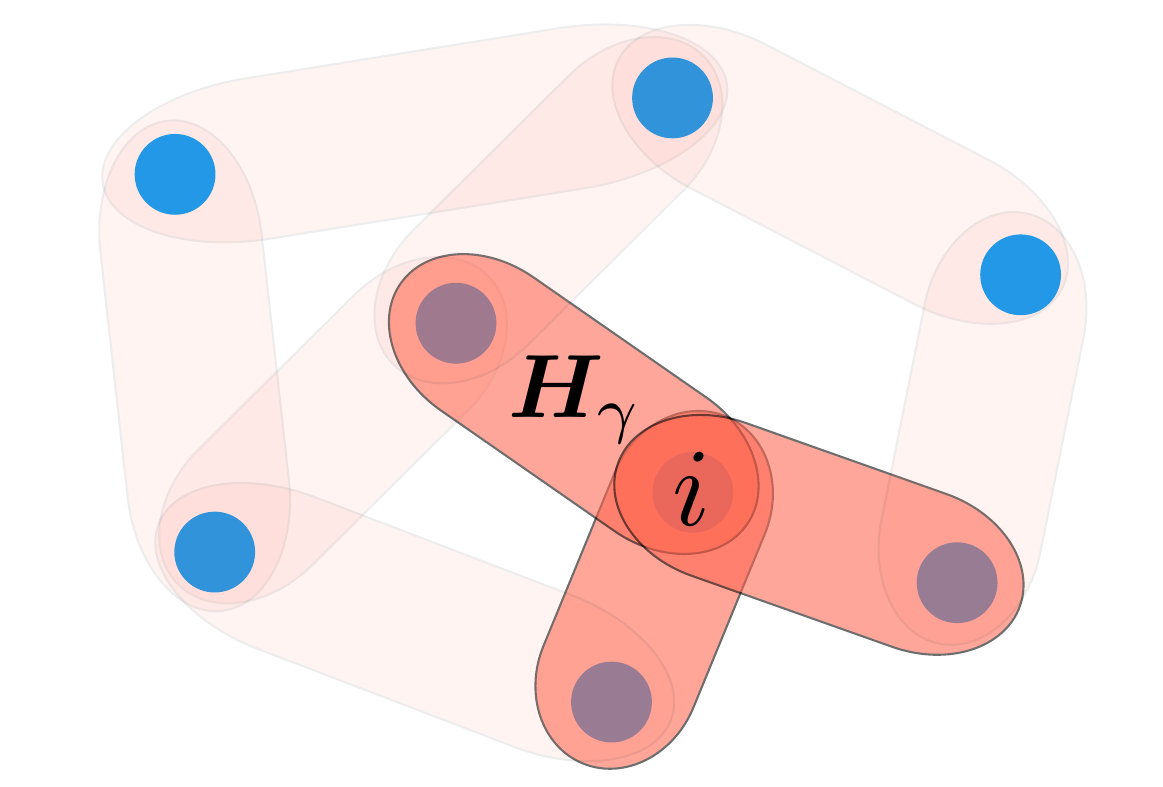}
    \caption{ The local energy estimates $\norm{\vH}_{(local),1}$ and $\norm{\vH}_{(local),2}$ sum over terms overlapping with a site $i$, and maximize over the sites. This is usually smaller than the global Hamiltonian.
    }
    \label{fig:local_quantity}
\end{figure}
\subsection{Summary of Results}
In this section, we present our main results regarding the performances of product formulas. Especially, consider the first-order Lie-Trotter formula and the second-order Suzuki formula 
\begin{align}
     \vec{S}_1(\tau) := \prod^\Gamma_{\gamma=1}\exp(\iunit \tau \vH_\gamma )\quad\text{and}\quad    \vec{S}_2(\tau) := \prod^1_{\gamma=\Gamma}\exp(\iunit(\tau/2) \vH_\gamma )\cdot \prod^\Gamma_{\gamma=1}\exp(\iunit(\tau/2) \vH_\gamma ),
\end{align}
and the higher-order ($\ell = 4, 6,\dots, 2p,\dots $) Suzuki~\cite{suzuki1991general} formulas constructed recursively 
\begin{equation}
\vec{S}_{2p}(\tau) := \vec{S}_{2p-2}(q_p\tau)^2 \cdot \vec{S}_{2p-2}((1-4q_p)\tau)\cdot \vec{S}_{2p-2}(q_p\tau)^2 \quad \text{where}\quad q_p:=1/(4-4^{1/(2p-1)})\label{eq:higher_order_suzuki}.
\end{equation}

\subsubsection{Non-random Hamiltonians}\label{sec:main_non-random}

Here, we consider a $k$-local (i.e., a sum of Pauli strings of length $k$) Hamiltonian on $n$-qubits with $\Gamma$ terms
   $\vH = \sum_{\gamma=1}^\Gamma \vH_\gamma.$
To present our main results, define the normalized Schatten $p$-norms 
$
    \lnormp{\vO}{\bar{p}}:= \frac{\lnormp{\vO}{p}}{\lnormp{\vI}{p}}
$, the vector 2-norm $\norm{\ket{\psi}}_{\ell_2} = \sqrt{\langle \psi|\psi\rangle }$, and a global energy estimate in \textit{2-norm}
\begin{align}
  \lnormp{\vH}{(global), 2} := \sqrt{\sum_{S} b_S^2} \quad\text{and}\quad b_{S}:=\sum_{\gamma\sim S} \norm{\vH_{\gamma}}.
\end{align}

\begin{thm}[Trotter error in $k$-local models] \label{thm:Trotter_non_random_maintext}
To simulate a $k$-local Hamiltonian using $\ell$-th order Suzuki formula, the gate count
\begin{align}
        G =\Omega\L( \left (\frac{p^{k/2}\lnormp{\vH}{(global),2} t}{\epsilon} \right)^{1/\ell} \Gamma p^{(k-1)/2}\lnormp{\vH}{(local),2} t \R)\ \ &\textrm{ensures}\ \  \lnormp{\e^{\iunit \vH t}- \vec{S}_{\ell}(t/r)^r}{\bar{p}} \le \epsilon.\label{eq:pnorm_maintext}
\end{align}
\end{thm}
The $p$-norm estimate implies concentration for typical input states via Markov's inequality.
\begin{cor} 
Draw $\ket{\psi}$ from a state 1-design ensemble such that $\BE[ \ket{\psi}\bra{\psi}] =\vI/\tr[\vI]$ (e.g., an orthonormal basis), then with high probability, the gate count
\begin{align}
    G \approx \left(\frac{\lnormp{\vH}{(global),2} t}{\epsilon} \right)^{1/\ell}  \Gamma \lnormp{\vH}{(local),2} t  \ \ &\textrm{ensures}\ \  \lnormp{(\e^{\iunit \vH t}- \vec{S}(t/r)^r)\ket{\psi}}{\ell_2} \le \epsilon.\label{eq:ell_2_maintext} 
\end{align} 
\end{cor}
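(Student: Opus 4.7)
The plan is to reduce the statement about a Haar-random or 1-design input state to the $p$-norm estimate of \cref{thm:Trotter_non_random_maintext} via a standard first-moment / Markov argument. Setting $\vec{E} := \e^{\iunit \vH t} - \vec{S}_{\ell}(t/r)^r$, the key observation is that $\norm{\vec{E}\ket{\psi}}_{\ell_2}^{2} = \bra{\psi}\vec{E}^\dagger \vec{E}\ket{\psi}$ is \emph{linear} in the rank-one projector $\ket{\psi}\bra{\psi}$, so it interacts with the ensemble only through its first moment.

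The first step is to take expectations. By the 1-design property $\BE[\ket{\psi}\bra{\psi}] = \vI/d$ with $d = \tr[\vI]$, so
\begin{equation*}
\BE\!\left[\norm{\vec{E}\ket{\psi}}_{\ell_2}^{2}\right]
= \tr\!\L[ \vec{E}^\dagger \vec{E}\, \BE[\ket{\psi}\bra{\psi}] \R]
= \frac{1}{d}\tr\!\L[\vec{E}^\dagger \vec{E}\R]
= \lnormp{\vec{E}}{\bar{2}}^{\,2}.
\end{equation*}
The second step is Markov's inequality on the nonnegative random variable $\norm{\vec{E}\ket{\psi}}_{\ell_2}^{2}$: for any $c > 0$,
\begin{equation*}
\Pr\!\L[\norm{\vec{E}\ket{\psi}}_{\ell_2} \ge c\,\lnormp{\vec{E}}{\bar{2}}\R] \le c^{-2}.
\end{equation*}
The third step is to plug in \cref{thm:Trotter_non_random_maintext} specialized to $p = 2$, which bounds $\lnormp{\vec{E}}{\bar{2}} \le \epsilon/c$ under the stated gate-count condition (the factor $c$ can be absorbed into the $\approx$ notation, since $c$ enters only through the implicit constants in $G$). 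Combining gives $\norm{\vec{E}\ket{\psi}}_{\ell_2} \le \epsilon$ except on an event of probability $\le c^{-2}$.

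The only subtlety --- rather than a real obstacle --- is the strength of concentration we can actually claim from a 1-design. Since only the second moment is accessible, Markov gives a \emph{constant} failure probability, not an exponentially small one. If one replaces ``1-design'' by a $2k$-design, then $\BE[\norm{\vec{E}\ket{\psi}}_{\ell_2}^{2k}]$ reduces, by the same linearity argument applied on the $k$-fold symmetric subspace, to a multiple of $\lnormp{\vec{E}}{\bar{2k}}^{2k}$, and \cref{thm:Trotter_non_random_maintext} at $p=2k$ then yields failure probability $\lesssim c^{-2k}$ at essentially the same gate count (the $p$-dependence in the theorem is mild in $p$ for moderate $k$). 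For the corollary as stated, however, the first-moment identity plus one application of Markov suffices, and no further machinery beyond \cref{thm:Trotter_non_random_maintext} is needed.
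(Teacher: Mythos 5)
Your reduction of the corollary to Theorem~\ref{thm:Trotter_non_random_maintext} via Markov's inequality is the right outline, and at $p=2$ your computation is correct. But the ``subtlety'' you wave off at the end is in fact the entire technical content of the corollary, and your proposed fix is not what the paper does.

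You assert that from a 1-design ``only the second moment is accessible,'' so that Markov at $p=2$ gives constant failure probability and anything better requires a $2k$-design. The paper shows this is false: Proposition~\ref{prop:typical_Schatten} proves that for \emph{any} pure-state ensemble with $\BE[\ket{\psi}\bra{\psi}]=\vrho$ and \emph{any} $p\ge 2$,
\begin{equation*}
\Pr\L(\lnormp{\vF\ket{\psi}}{\ell_2}\ge\epsilon\R)\ \le\ \L(\frac{\lnormp{\vF\vrho^{1/p}}{p}}{\epsilon}\R)^p .
\end{equation*}
The mechanism is not linearity of $\tr$ (which would indeed only give $p=2$), but \emph{concavity}: Fact~\ref{fact:poly_concavity} (Carlen--Lieb) says $\vA\mapsto\tr\bigl[(\vF\,\vA^{2/p}\vF^\dagger)^{p/2}\bigr]$ is concave for $p\ge 2$, and since $\ket{\psi}\bra{\psi}$ is an idempotent, $(\ket{\psi}\bra{\psi})^{2/p}=\ket{\psi}\bra{\psi}$, so
\begin{equation*}
\BE_\psi\L[\lnormp{\vF\ket{\psi}}{\ell_2}^{p}\R]
=\BE_\psi\,\tr\L[\L(\vF\ket{\psi}\bra{\psi}\vF^\dagger\R)^{p/2}\R]
\le \tr\L[\L(\vF\vrho^{2/p}\vF^\dagger\R)^{p/2}\R]
=\lnormp{\vF\vrho^{1/p}}{p}^{p}.
\end{equation*}
Only the \emph{first} moment $\BE[\ket{\psi}\bra{\psi}]$ enters — no higher designs are needed. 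For $\vrho=\vI/\tr[\vI]$ this is exactly $\lnormp{\vF}{\bar p}^p$, which one then feeds into Theorem~\ref{thm:Trotter_non_random} and optimizes over $p$. This is what produces the $\sqrt{\log(1/\delta)}$ (rather than $\delta^{-1/2}$) dependence in the detailed Corollary of Section~\ref{chap:non_random}: failure probability $\delta$ costs only a polylogarithmic overhead in the gate count, not a polynomial one. Your version, which needs $c\sim\delta^{-1/2}$ to reach failure probability $\delta$, is exponentially weaker, and your $2k$-design fallback both changes the hypothesis of the corollary (ruling out e.g.\ computational-basis inputs) and does not reproduce the paper's clean bound (the Haar $k$-th moment of $\bra{\psi}\vec{E}^\dagger\vec{E}\ket{\psi}$ involves a sum over permutations, not a single $\lnormp{\cdot}{\bar{2k}}$ term). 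The missing ingredient is Proposition~\ref{prop:typical_Schatten} with Fact~\ref{fact:poly_concavity}.
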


See Table~\ref{table:main} for the gate counts in various models and Section~\ref{chap:non_random} for the explicit dependence on the failure probability hidden in~\eqref{eq:ell_2_maintext}. When the Hamiltonian contains Fermionic terms or the input is restricted to a low-particle number subspace, see Proposition~\ref{prop:Trotter_non_random_low} and Proposition~\ref{prop:Trotter_non_random_Fermionic} for analogous results\footnote{This applies to the electronic structure Hamiltonian~\cite{babbush2018low,2021_Su_nearly_tight}, but there the error is dominated by single site terms (1-local Pauli $Z$ s), i.e., $\lnormp{\vH}{(local),2}\sim \lnormp{\vH}{(local),1}$. We only get improvement at lower order product formulas by $\lnormp{\vH}{(global),2}\ll \lnormp{\vH}{(global),1}$.}.

Regarding optimality (Section~\ref{sec:nonrandom_optimal}), we construct a Hamiltonian that demonstrates a separation between the worst case and the typical case bounds: its Schatten $p$-norm saturates our estimates, while the operator norm saturates the state-of-the-art bound~\cite{thy_trotter_error}. Namely, our 1-norm to 2-norm improvement is due to asking a qualitatively different question (Figure~\ref{fig:worst_typical}). 

\begin{prop}[A model with different $p$-norms and spectral norm] Consider a 2-local Hamiltonian on three subsystems of qubits $\CH =\CH_{S_1}\otimes \CH_{S_2}\otimes \CH_{S_3}$
\begin{align}
    \vH = \sum_{s_1\in S_1, s_2\in S_2} \vsigma^z_{s_1}\vsigma^x_{s_2} + \sum_{s_2\in S_2, s_3\in S_3} \vsigma^y_{s_2}\vsigma^z_{s_3}.
\end{align}
Then, at large subsystem sizes $\labs{S_1}=\labs{S_2}=\labs{S_3}\rightarrow \infty$, the first and second-order Trotter at short enough times match the p-norm estimates in Theorem~\ref{thm:Trotter_non_random} and also the spectral norm estimates~\cite{thy_trotter_error} (up to constant factors).
\end{prop}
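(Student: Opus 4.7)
The plan is to compute, for each formula, the leading-order Trotter error as an explicit operator, then evaluate separately its spectral norm and its normalized Schatten $\bar{2}$-norm and show each saturates the corresponding gate-count estimate. Abbreviate $N := \labs{S_1} = \labs{S_2} = \labs{S_3}$, $\vA := \sum_{s_1, s_2}\vZ_{s_1}\vX_{s_2}$ and $\vB := \sum_{s_2, s_3}\vY_{s_2}\vZ_{s_3}$. A direct count gives $\Gamma = 2N^2$ and $\lnormp{\vH}{(0),1}, \lnormp{\vH}{(1),1}, \lnormp{\vH}{(0),2}, \lnormp{\vH}{(1),2}$ of order $N^2, N, N, \sqrt{N}$ respectively, so the two gate-count bounds differ by a factor $N^{3/2}$ at $\ell=1$ and a factor $N$ at $\ell=2$.

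\textbf{First order.} A BCH-type expansion yields
\begin{equation*}
\e^{\iunit\vH t} - (\e^{\iunit\vA t/r}\e^{\iunit\vB t/r})^r = -\frac{t^2}{2r}\,[\vA,\vB] + \CO(t^3/r^2),
\end{equation*}
and a direct calculation using $[\vX_{s_2}, \vY_{s_2}] = 2\iunit\vZ_{s_2}$ gives the factorized form
\begin{equation*}
[\vA,\vB] = 2\iunit\,\Big(\sum_{s_1}\vZ_{s_1}\Big)\Big(\sum_{s_2}\vZ_{s_2}\Big)\Big(\sum_{s_3}\vZ_{s_3}\Big),
\end{equation*}
a product on three disjoint subsystems. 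Its spectral norm is exactly $2N^3$ (attained on $\ket{0}^{\otimes 3N}$), while Pauli trace-orthogonality gives its normalized Schatten $\bar{2}$-norm exactly $2N^{3/2}$. At short enough $t$ the remainder is sub-leading, so enforcing spectral error $\le\epsilon$ forces $r \gtrsim N^3 t^2/\epsilon$ and hence $G \gtrsim N^5 t^2/\epsilon$, while enforcing Schatten $\bar{2}$-error $\le\epsilon$ forces $r \gtrsim N^{3/2} t^2/\epsilon$ and hence $G \gtrsim N^{7/2} t^2/\epsilon$. These coincide with the 1-norm bound of~\cite{thy_trotter_error} and with Theorem~\ref{thm:Trotter_non_random_maintext} respectively.

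\textbf{Second order.} The symmetric formula $\vec{S}_2(\tau) = \e^{\iunit\vA\tau/2}\e^{\iunit\vB\tau}\e^{\iunit\vA\tau/2}$ has leading error of order $t^3/r^2$ built from the nested commutators $[\vA,[\vA,\vB]]$ and $[\vB,[\vB,\vA]]$. Disjointness again factorizes the answer, e.g.
\begin{equation*}
[\vA,[\vA,\vB]] = 4\,\Big(\sum_{s_1, s_1'}\vZ_{s_1}\vZ_{s_1'}\Big)\Big(\sum_{s_2}\vY_{s_2}\Big)\Big(\sum_{s_3}\vZ_{s_3}\Big),
\end{equation*}
with spectral norm $\Theta(N^4)$ and normalized Schatten $\bar{2}$-norm $\Theta(N^2)$ (and $[\vB,[\vB,\vA]]$ behaves identically). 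Inserting into the step-error balance $\epsilon \asymp t^3\lnorm{\,\cdot\,}/r^2$ produces $G \gtrsim N^4 (t/\epsilon)^{1/2} t$ in spectral norm and $G \gtrsim N^3 (t/\epsilon)^{1/2} t$ in Schatten $\bar{2}$-norm, matching the $\ell = 2$ 1-norm and 2-norm estimates respectively.

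\textbf{Main obstacle.} Identifying the leading commutators is easy thanks to the disjoint-subsystem structure, which makes both norms compute exactly. The delicate step is controlling the BCH remainders in both norms simultaneously at the same value of $t$ so that the leading term really dominates: each extra commutator layer costs at most a factor $2\lnorm{\vH} = \CO(N^2)$ in either norm, so taking $t \ll N^{-2}$ (with $r \ge 1$) makes the Taylor series geometrically decaying and the leading term dominant in both spectral and Schatten $\bar{2}$-norm, closing the argument.
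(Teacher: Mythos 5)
Your construction and leading‑order commutator identification are exactly those the paper uses: the disjoint‑subsystem structure makes $[\vA,\vB]$, $[\vA,[\vA,\vB]]$, $[\vB,[\vB,\vA]]$ factor into products over $S_1,S_2,S_3$, and your spectral‑norm and $\bar 2$‑norm evaluations, as well as the resulting gate‑count comparisons, are all correct. The one genuine gap is that you only evaluate the $\bar 2$‑norm, whereas the statement is about the $\bar p$‑norm estimate for general $p$, which carries an explicit factor $\sqrt{C_p}^{\,(\ell+1)(k-1)+1}$ that must also be saturated. The paper closes this by noting that the eigenvalue distribution of $\sum_s\vZ_s$ on $|S_i|\to\infty$ qubits tends (after rescaling by $\sqrt{|S_i|}$) to a standard Gaussian $g$, and $(\mathbb{E}|g|^p)^{1/p}=\Theta(\sqrt p)$, which gives $\lnormp{\sum_s\vZ_s}{\bar p}=\Theta(\sqrt{p|S_i|})$ and hence
\begin{align}
\lnormp{[\vA,\vB]}{\bar p}=\Omega\bigl((\sqrt{pN})^{3}\bigr),\qquad
\lnormp{[\vA,[\vA,\vB]]}{\bar p}=\Omega\bigl((\sqrt{pN})^{4}\bigr),
\end{align}
matching the upper bounds $\sqrt{C_p}^{3}\lnormp{\vH}{(0),2}\lnormp{\vH}{(1),2}$ and $\sqrt{C_p}^{4}\lnormp{\vH}{(0),2}\lnormp{\vH}{(1),2}^{2}$ for all $p\ge 2$, not just $p=2$. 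Your argument only establishes the $p$‑independent prefactor, so it shows tightness of the $\ell_2$ gate count at fixed failure probability but not tightness of the $p$‑dependence (equivalently, the $\log(1/\delta)$ powers in the corollary).

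Two smaller points. First, the second‑order leading error is $-\tfrac{\iunit}{12}\bigl([\vB,[\vB,\vA]]-\tfrac12[\vA,[\vA,\vB]]\bigr)\tau^3$; you write ``$[\vB,[\vB,\vA]]$ behaves identically'' but do not rule out cancellation between the two terms. The paper notes that the two nested commutators have identical norms and invokes the factor $\tfrac12$ to conclude $\lnormp{X-\tfrac12 Y}{p}\ge\tfrac12\lnormp{X}{p}$; in fact, here the two commutators carry different Paulis on $S_2$ ($\vX$ vs.\ $\vY$), so they are Hilbert–Schmidt orthogonal and there is no cancellation, but this needs to be said. Second, your remainder control takes $t\ll N^{-2}$ which is both unnecessarily strong and imposed on the wrong variable; the BCH remainder is suppressed in powers of the step size $\tau=t/r$, and comparing the per‑step $\tau^2$ and $\tau^3$ contributions in each norm shows $\tau\ll N^{-1}$ already suffices for both.
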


Note that the dependence on the number of terms $\Gamma$ is not optimal when the terms in the Hamiltonian have non-uniform strengths; we can use a truncation argument~\cite{thy_trotter_error} to improve the gate complexity at early times (Appendix~\ref{sec:trunc_H}). Interestingly, the error due to truncation also enjoys concentration (using Hypercontractivity directly). 
\begin{table}[]\label{table:main}
\centering
\begin{tabular}{l|l|lll}
\hline
qDRIFT~\cite{campbell2019random}      \ \ \ \ \ \   & qubitization~\cite{Low_2019_qubitize}            & higher-order Suzuki \ \ \      &  first-order Trotter      &                \\\hline
$\lnormp{\vH}{(global),1}^2 t^2/\epsilon $ &  $\Gamma'\lnormp{\vH}{(global),1} t$  &$\Gamma\lnormp{\vH}{(local),1} t$   &$\Gamma \lnormp{\vH}{(global),1} \lnormp{\vH}{(local),1} t^2/\epsilon$\ \ &spectral norm~\cite{thy_trotter_error} \\ 
 & & \color{brown}$\Gamma\lnormp{\vH}{(local),2} t$   &\color{brown}$\Gamma \lnormp{\vH}{(global),2} \lnormp{\vH}{(local),2} t^2/\epsilon$\ \ &\color{brown}typical inputs (Theorem~\ref{thm:Trotter_non_random_maintext}) \\\hline
$n^{2}t^2/\epsilon$& $n^2t$ &$nt$  & $n^{2}t^2/\epsilon $ & spatially-local~\cite{Childs2019NearlyOL} \\
 & &\color{brown}$n t$   & \color{brown}$n^{\frac{3}{2}}t^2/\epsilon $ & \\ \hline
$n^{k+1}t^2/\epsilon$& $n^{\frac{3k+1}{2}}t$ &$n^{\frac{3k-1}{2}}t$  & $n^{2k}t^2/\epsilon $ & $k$-local, all-to-all  \\
 & &\color{brown}$n^{k}t$   &  \color{brown}$n^{k+\frac{1}{2}}t^2/\epsilon $&  ($\sum_\gamma {\norm{\vH_\gamma}^2} = \CO(n) $) \\ \hline
  $n^{4-2\alpha/d}t^2/\epsilon$ & $n^{4-\frac{\alpha}{d}}t$ &$n^{3-\frac{\alpha}{d}}t$ & $n^{5-2\frac{\alpha}{d}}t^2/\epsilon $& Power-law 2-local $d/2\le \alpha\le d$\\
  & & \color{brown}$n^{2}t$   & $\color{brown}n^{2+\frac{1}{2}}t^2/\epsilon $&  ($\norm{\vH_{xy}}\le \labs{x-y}^{-\alpha}$) 
\end{tabular}
\caption{Comparison of gate complexities for non-random Hamiltonians for simulation time $t$ and system size $n$ with new results in brown. For the higher-order formulas, we drop asymptotically vanishing dependence $o(1/\ell)$. \textbf{The spatially local} models at higher orders have no average-case speedup because the model has constant connectivity, and the local 1-norm and 2-norm are both independent of system size $\normp{\vH}{(1),1} = const. \normp{\vH}{(1),2}$. We only obtain speed-up at low orders due to the global norm $\normp{\vH}{(0),1} \propto \sqrt{n}\normp{\vH}{(0),2}$.
\textbf{The $k$-local} models have uniform weights for each term, with an SYK-like normalization $\sum_\gamma {\norm{\vH_\gamma}^2} = \CO(n) $). Its large connectivity yields substantial typical-case improvement over the worst-case results. \textbf{The 2-body power-law interacting models} $\vH = \sum_{x,y} \vH_{xy}$ is defined on the d-dimensional lattice with decaying interaction strength $\norm{\vH_{xy}}\le \labs{x-y}^{-\alpha}$. We mainly focus on the regime $d/2 \le \alpha \le d$ with clear-cut improvements. In the qubitization gate counts, we plugged in the number of Hamiltonian terms $\Gamma'=\Gamma$ for comparison, but the terms could potentially be implemented using much fewer parameters $\Gamma'$, such as in quantum chemistry~\cite{2021_Microsoft_catalysis,THC_google}. 
}
\end{table}

Lastly, we present numerics complementing our Trotter error bounds. In particular, we study Trotter error for 2-body Hamiltonians with an on-site disorder, with all-to-all connectivity (Figure~\ref{fig:P2_all_XXYYZZ}, Figure~\ref{fig:time_P2_all_XXYYZZ}, Figure~\ref{fig:P1_P4_all_XXYYZZ}) or nearest-neighbor interactions (Figure~\ref{fig:P2_spatially_local}).\footnote{The Trotter error is dominated by the 2-body terms, which have non-random coefficients.} These models may capture many-body localization and glassy physics. The Trotter error is averaged over realizations of disorder to extract a smooth curve. The disorder also illustrates the robustness of our bounds. Our numerics appear to match the theoretical predictions regarding the dependence on the system size $n$ (Figure~\ref{fig:P2_all_XXYYZZ}, Figure~\ref{fig:P2_spatially_local}), the evolution time $t$ (Figure~\ref{fig:time_P2_all_XXYYZZ}), and the product formula order $\ell$ (Figure~\ref{fig:P1_P4_all_XXYYZZ}).

\begin{figure}[h]
    \centering
    \includegraphics[width=0.95\textwidth]{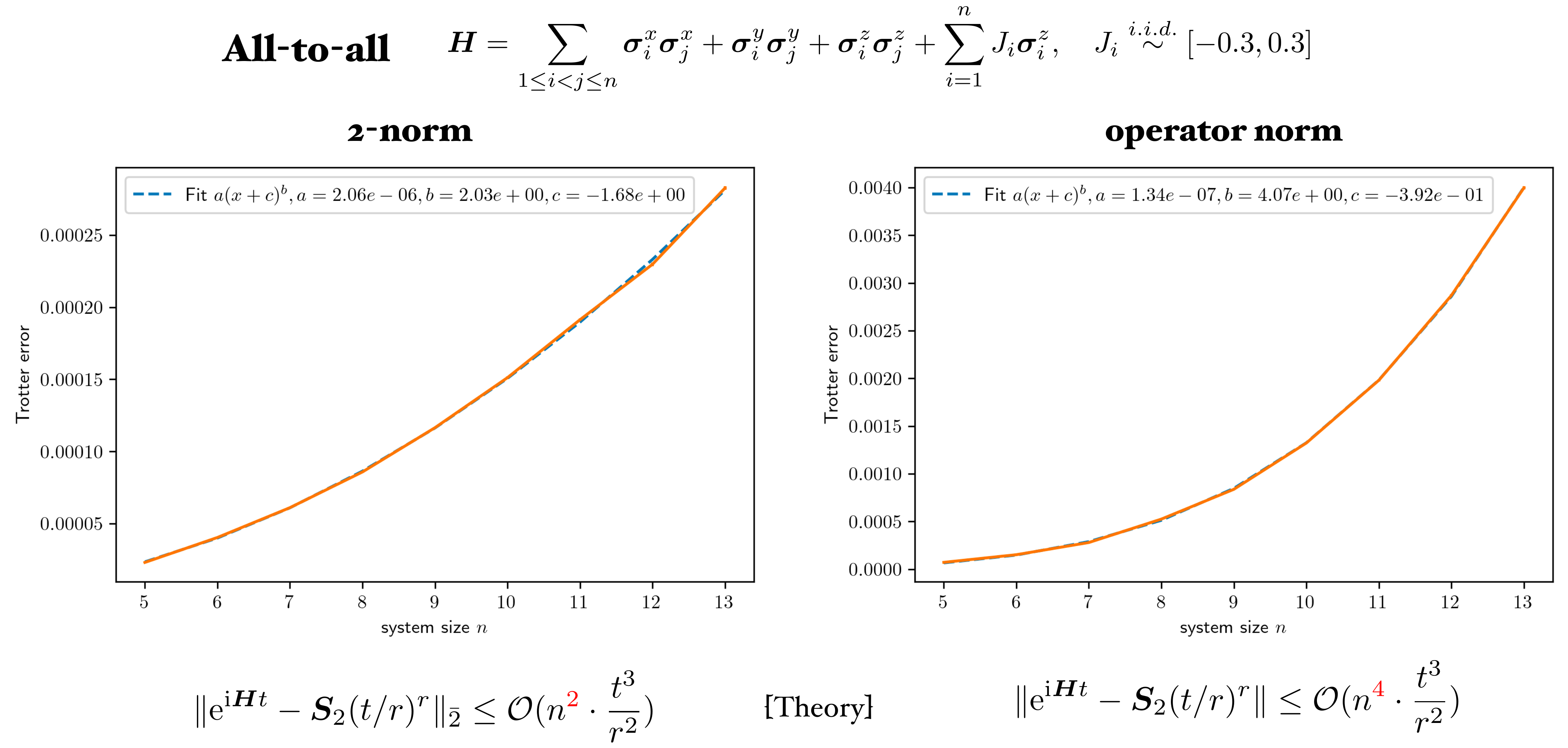}
    \caption{ Trotter error for the all-to-all interacting Heisenberg model for second-order Suzuki formulas $\vS_2(t/r)$. We fix time $t = 10$, repeats $r = 20000$, and change the system size $n = 5,\cdots, 13$. Each Trotter error is estimated by medium-of-mean: take the medium over $27$ bins, where each bin is an average over $32$ independent disorder realization. The fit $a(n+c)^b$ gives the system size dependence $b$. For average inputs (2-norm),  the empirical exponent reads $b=2.03\pm 0.03$, which matches the theoretical bound (Theorem~\ref{thm:Trotter_non_random_maintext}, $b = 2$). For worst inputs (operator norm), the empirical exponent is much larger, $b=4.07\pm 0.13$, which matches the theoretical bound (\cite{thy_trotter_error}, $b = 4$).
    }
    \label{fig:P2_all_XXYYZZ}
\end{figure}

\begin{figure}[h]
    \centering
    \includegraphics[width=0.6\textwidth]{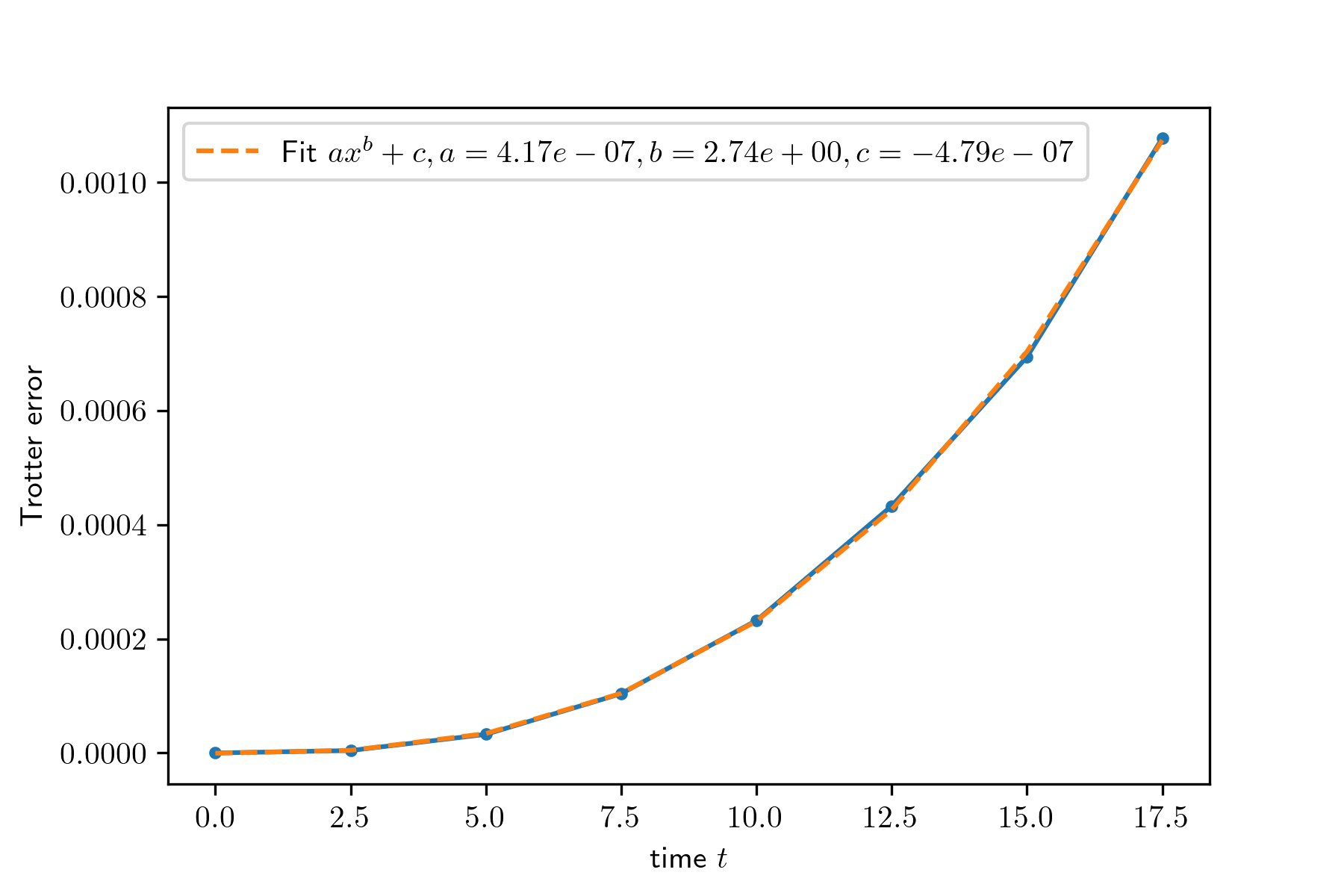}
    \caption{ Time dependence of the 2-norm Trotter error in Figure~\ref{fig:P2_all_XXYYZZ}. We fix repetition $r = 20000$, the system size $n = 12$, and change time $t = 0, \cdots 17.5$. 
    The fit $a t^b+c$ gives the time dependence exponent $b= 2.74 \pm 0.04$ (variance calculated by independent runs), which deviates slightly from the theoretical upper bounds (Theorem~\ref{thm:Trotter_non_random_maintext}, $b = 3$).
    }
    \label{fig:time_P2_all_XXYYZZ}
\end{figure}

\begin{figure}[h]
    \centering
    \includegraphics[width=0.95\textwidth]{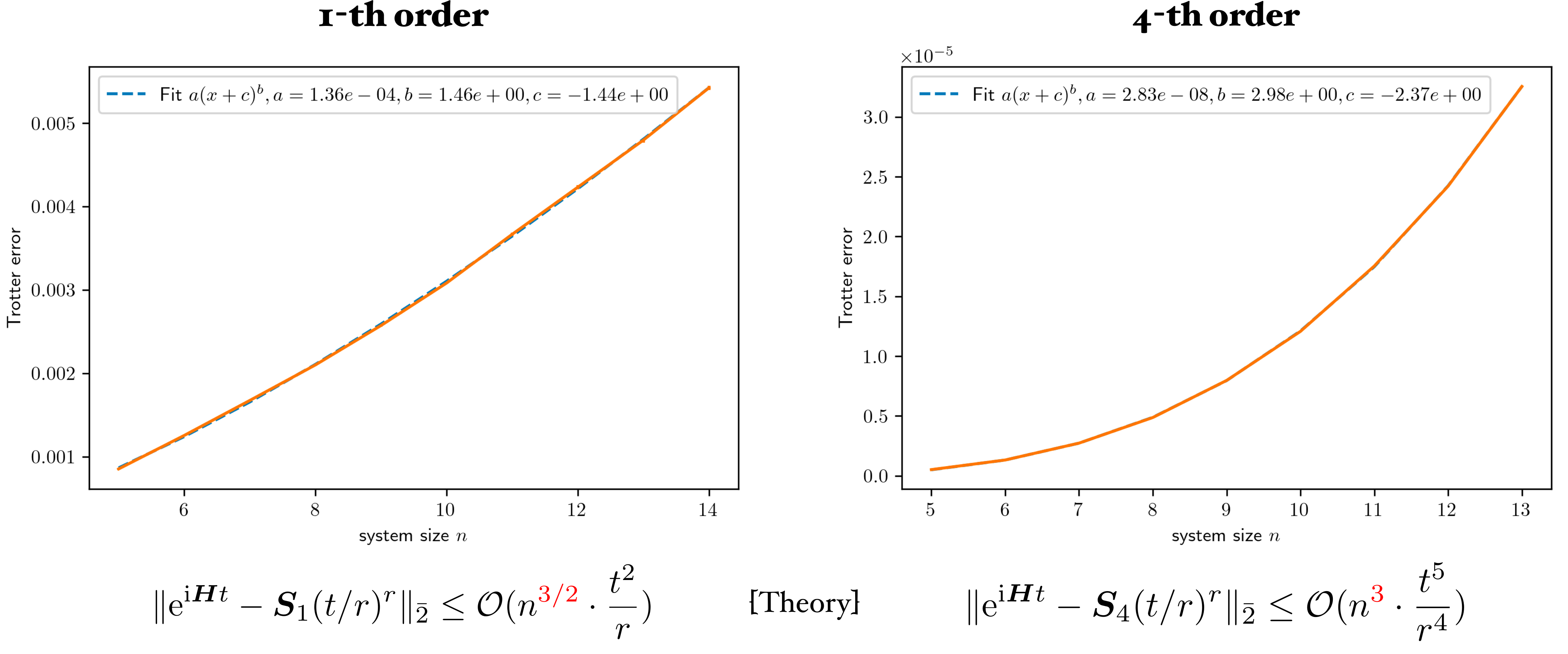}
    \caption{ 
    Different orders of Suzuki formulas for the all-to-all interacting Heisenberg model. For the first-order Lie-Trotter-Suzuki formula, the parameters are $t = 5, r = 200000, n = 5,\cdots, 14$. We take medium over 8 bins, each averaging over 12 runs. The fit $a(n+c)^b$ gives the empirical system size dependence $b=1.46\pm 0.03$, matching the theoretical bound (Theorem~\ref{thm:Trotter_non_random_maintext}, $b = 1.5$); the parameters for 4-th order formula are: $t=10, r =1000, n = 5,\cdots, 13$. We take medium over 32 bins, each averaging over 15 runs. The empirical exponent reads $b=2.98\pm 0.03$, matching the theoretical bound (Theorem~\ref{thm:Trotter_non_random_maintext}, $b = 3$).
    }
    \label{fig:P1_P4_all_XXYYZZ}
\end{figure}

\begin{figure}[h]
    \centering
    \includegraphics[width=0.95\textwidth]{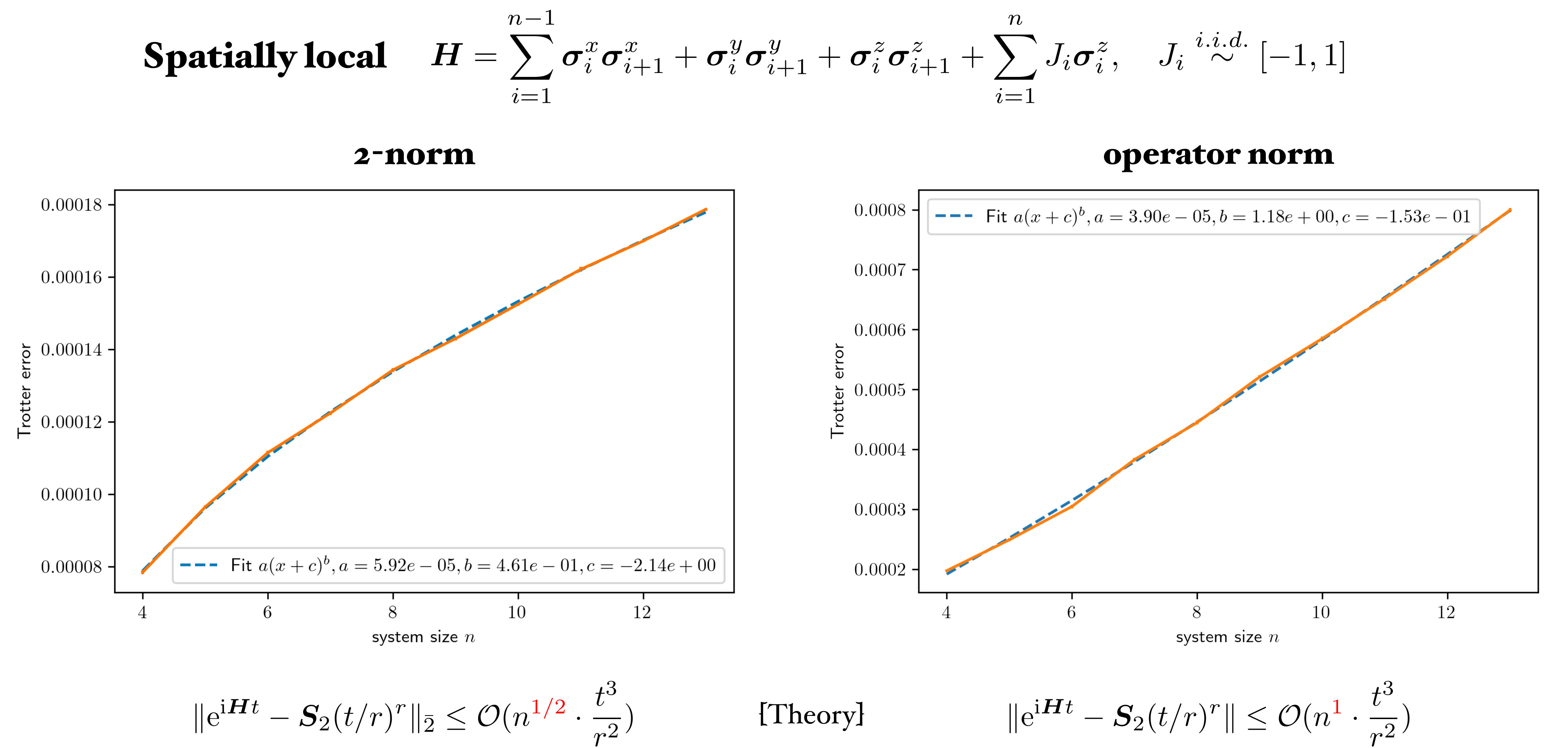}
    \caption{ Trotter error for the spatially-local Heisenberg model for second-order Suzuki formulas $\vS_2(t/r)$. We fix time $t = 50$, repeats $r = 40000$, and change the system size $n = 5,\cdots, 13$. We take medium over $15$ bins, where each bin is an average over $32$ independent disorder. 
    The fit $a(n+c)^b$ gives the system size dependence $b$. 
    For average inputs (2-norm),  the empirical exponent reads $b=0.46\pm 0.01$, which matches the theoretical bound (Theorem~\ref{thm:Trotter_non_random_maintext}, $b = 1$). For worst inputs (operator norm), the empirical exponent is much larger, $b=1.18\pm 0.02$, which is consistent with the theoretical bound (\cite{thy_trotter_error}, $b = 1$).
    }
    \label{fig:P2_spatially_local}
\end{figure}
\begin{figure}[h]
    \centering
    \includegraphics[width=0.95\textwidth]{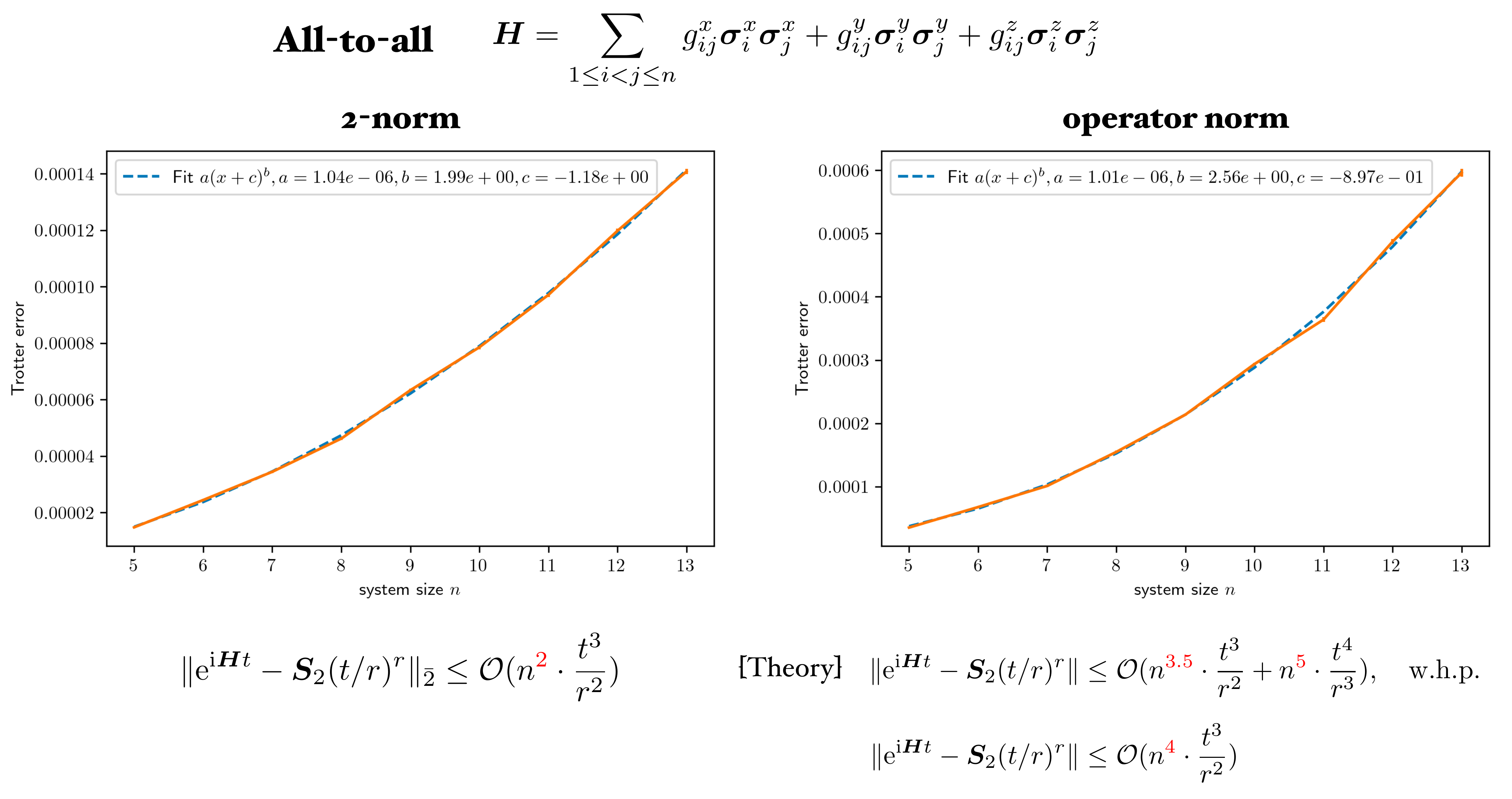}
    \caption{ Trotter error for the random all-to-all Heisenberg model for second-order Suzuki formulas $\vS_2(t/r)$. We fix time $t = 10$, repeats $r = 20000$, and change the system size $n = 5,\cdots, 13$. We take medium over $15$ bins, each averaging over $32$ independent disorder. 
    The fit $a(n+c)^b$ gives the system size dependence $b$. For worst inputs (operator norm), the empirical exponent is $b=2.56\pm 0.1$, which is smaller than the theoretical bound for random Hamiltonians (Theorem~\ref{thm:Trotter_random_H}, $b = 3.5$\footnote{The $b=5$ part should be suppressed at this value of repeats $r=20000$.} ) and non-random Hamiltonians (\cite{thy_trotter_error}, $b = 4$.). We are unable to numerically optimize the fixed input state for the norm $\normp{\cdot}{fix,2}$; we only present the numerics for average inputs (2-norm) for a comparison. 
    }
    \label{fig:P2_random_all}
\end{figure}

\subsubsection{Random Hamiltonians}\label{sec:main_random}
Sometimes, we are interested in an ensemble of Hamiltonians, most notably the Sachdev-Ye-Kitaev~\cite{Sachdev_1993,maldacena2016remarks} models with random coefficients. The intrinsic randomness of the Hamiltonian allows us to obtain similar but stronger results.   
More precisely, we consider random Hamiltonians $
   \vH = \sum_{\gamma=1}^\Gamma \vH_\gamma =   \sum_{\gamma=1}^\Gamma g_\gamma \vK_\gamma, 
$
where the coefficients $g_\gamma$ are i.i.d. standard Gaussian $\BE[g_\gamma^2]=1$, and the matrices $\vK_\gamma$ are deterministic. The local quantities here are defined by dropping Gaussians
\begin{align}
 \lnormp{\vH}{(global),2} := \sqrt{\sum_{\gamma} b_\gamma^2} \quad\text{and}\quad
 \lnormp{\vH}{(local),2} &:= \max_{\text{site }i}  \sqrt{\sum_{\gamma: i \subset \gamma } b^2_\gamma}, \quad\text{where}\quad b_{\gamma}:= \norm{\vK_{\gamma}}.
\end{align}

\begin{thm}[(informal) Trotter error in random models] \label{thm:Trotter_random_H_maintext}
Simulating random $k$-local models with Gaussian coefficients via higher-order ($\ell\rightarrow \infty$) Suzuki formulas, the asymptotic gate count 
\begin{align}
     G&\approx 
    \Gamma \lnormp{\vH}{(local),2} t\sqrt{n} \quad\textrm{ensures}\quad\lnorm{e^{\iunit \vH t}- \vec{S}(t/r)^r} \le \epsilon &(\text{all inputs}),\label{eq:all_input_maintext}\\
    G&\approx \Gamma\lnormp{\vH}{(local),2} t \quad\textrm{ensures}\quad \lnormp{(\e^{\iunit \vH t}- \vS(t/r)^r )\ket{\psi} }{\ell_2}\le \epsilon &(\text{fixed input state}),\label{eq:fixed_input_maintext}
\end{align}
with high probability drawing from the random Hamiltonian ensemble. The fixed input state $\ket{\psi}$ can be arbitrary.
\end{thm}
 See Section~\ref{chap:random} for the complete theorem depending on the finite order $\ell\ne \infty$ and the failure probability and Theorem~\ref{thm:first_order_gate_count} for a precise gate count for the first-order Trotter formula. In other words, when the Hamiltonian is random, an arbitrary fixed input state exhibits 2-norm scaling of Trotter error. A slightly higher gate count (by a factor of the system size $\sqrt{n}$) would control the performance for the worst inputs that may correlate with the Hamiltonian (e.g., the Gibbs state or the ground state of the model).

\begin{prop}[Distinct Hamiltonians]
There exists a set of k-local Hamiltonians $\{\vH^{(i)}\}$ with cardinality $\e^{\Omega(\Gamma)}$ such that each of them satisfies 
\begin{align}
\vH^{(i)} = \sum_{\gamma=1}^\Gamma \vH^{(i)}_\gamma \quad \text{where} \quad \norm{\vH^{(i)}_\gamma} \le \CO(1/n^{\frac{k-1}{2}}),
\end{align}
but for early times $t=\Omega(1)$ they are pairwise distinct
\begin{align}
    \lnormp{\vH^{(i)}- \vH^{(j)}}{\infty}t \ge \Omega(\sqrt{n}) \quad \text{for each pair} \quad i\ne j.
\end{align}
\end{prop}
If we further assume the matrix exponentials are also distinct (which is believable but harder to prove) $ \lnormp{\e^{\vH^{(i)}t}- \e^{\vH^{(j)}t}}{\infty} \stackrel{?}{\ge} \Omega(1), $
this implies a counting circuit complexity lower bound\footnote{For SYK ($k=4$, Majorana, Gaussian coefficients),~\cite{SYK_Babbush} uses qubitization to obtain a gate count 
$
  G=  n^{7/2}t + n^{5/2}t \log(n/\epsilon), 
$
which is lower than our $n^{k}t$. However, their $n^4$ Gaussians coefficients are not independent and hence not controlled by our circuit lower bounds. Physically, it is not clear whether SYK models with pseudorandom coefficients mimic the original ones.
} 
$G=\Omega(\Gamma)=\Omega(n^k)$, which matches our gate complexity for fixed inputs \eqref{eq:fixed_input_maintext} and typical input~\eqref{eq:pnorm_maintext} at early times $t=\theta(1)$. See Section~\ref{sec:counting} for the proof.

The general optimality of our bounds for random Hamiltonians is less understood numerically. We present qualitative evidence (Figure~\ref{fig:P2_random_all}) suggesting that the Trotter error for random Hamiltonians, in the operator norm, could be much smaller than that of non-random Hamiltonians~\cite{thy_trotter_error}. At the scale of our numerics, the error seems even smaller than our theoretical estimates. Unfortunately, we are not able to numerically estimate the norm $\normp{\cdot}{fix,2}$ for fixed inputs.

\subsection{Proof ingredients}\label{sec:main_proof_ingredients}
The Trotter error is a complicated function of matrices. The leading order Trotter error is a commutator; for example, in the first-order product formula
\begin{align}
    \vS_1(t)-\e^{\sum^{\Gamma}_{\gamma=1}\ri\vH_\gamma t}
    &= \frac{t^2}{2}\sum^\Gamma_{\gamma'>\gamma\ge 1}[\ri\vH_{\gamma'},\ri\vH_\gamma] +O(t^3).
\end{align}
Analogously, the $\ell$-th order product formulas have leading order errors as a degree $\ell+1$ polynomial of commutators~\cite{thy_trotter_error}.

There are two main technical steps: First, how to take care of the infinite series of higher-order terms? Second, how to deliver concentration bounds for the commutator? 

\subsubsection{A Good Presentation of Error}
The Trotter error has a rather nasty higher-order dependence on time, and a good expansion simplifies the proof. Here we build upon the framework from~\cite{thy_trotter_error}. Denote the target Hamiltonian
$
   \vH = \sum_{\gamma=1}^\Gamma \vH_\gamma,  
$
with some labels $\gamma$ for the summand. We specify a product formula $\e^{\ri a_J\vH_{\gamma(J)} t}\cdots \e^{\ri a_1 \vH_{\gamma(1)} t}$ with $J$ exponentials by choosing an ordering $\gamma(j)$ and weights $a_j$.
In particular, we will focus on the Suzuki formulas~\eqref{eq:higher_order_suzuki}, which can be rewritten as 
\begin{align}
    \vS_{\ell}(t) = \prod_{j=1}^{J} \e^{\ri a_{j}\vH_{\gamma(j)} t} = \prod_{\nu = 1}^{\Upsilon} \prod_{i=1}^{\Gamma} \e^{\ri a_{\nu,j} \vH_{\gamma(i, \nu)} t} \quad \text{where} \quad \Upsilon = 2\cdot 5^{\ell/2-1} \quad \text{and} \quad \labs{a_{\nu,i}} \le 1.
\end{align}
For the first-order Lie-Trotter formula, each term appears once, so there is only one stage $\Upsilon =1$; the higher-order Suzuki formula has a total of $J = \Gamma \cdot \Upsilon$ exponentials and decomposes into $\Upsilon$ stages, where each stage goes through each Hamiltonian term $\vH_{\gamma}$ exactly once.

Following~\cite{thy_trotter_error}, the Trotter error can be captured in the time-ordered exponential form
\begin{align}
    \prod_{j=1}^{J} \e^{\ri a_{j}\vH_{\gamma(j)} t} = \exp_{\CT}\L(\ri\int \L(\vec{\CE} (\vH_1,\cdots,\vH_\Gamma, t)+ \vH\R)dt\R).
\end{align}

The error is now represented as a sum of nested commutators
\begin{align}
    \vec{\CE} (\vH_1,\cdots,\vH_\Gamma, t) :&= \sum^{J}_{j=1}\left( \prod_{k=j+1}^{J} \e^{a_k\CL_{\gamma(k)} t} [a_j\vH_{\gamma(j)}] -a_j\vH_{\gamma(j)}\right)\quad \text{where}\quad \CL_{\gamma}[O]:=\ri[\vH_\gamma,O].
\end{align}
In our proof, we will ``beat the nested commutator $\vCE$ to death''; do a Taylor expansion on time $t$ for the nested commutators, and each order will be a polynomial of matrices. (Fortunately, we will not need the details of the particular orderings of product formulas.) We then apply our matrix concentration tools and go through a complicated combinatorial bound (which is much more involved than obtaining the 1-norm quantity $\normp{\vH}{(1),1}$ in~\cite{thy_trotter_error}).

\subsubsection{Uniform Smoothness, Matrix Martingales, and Hypercontractivity}\label{sec:maintext_unif_smooth}
\begin{figure}[t]
    \centering
    \includegraphics[width=0.7\textwidth]{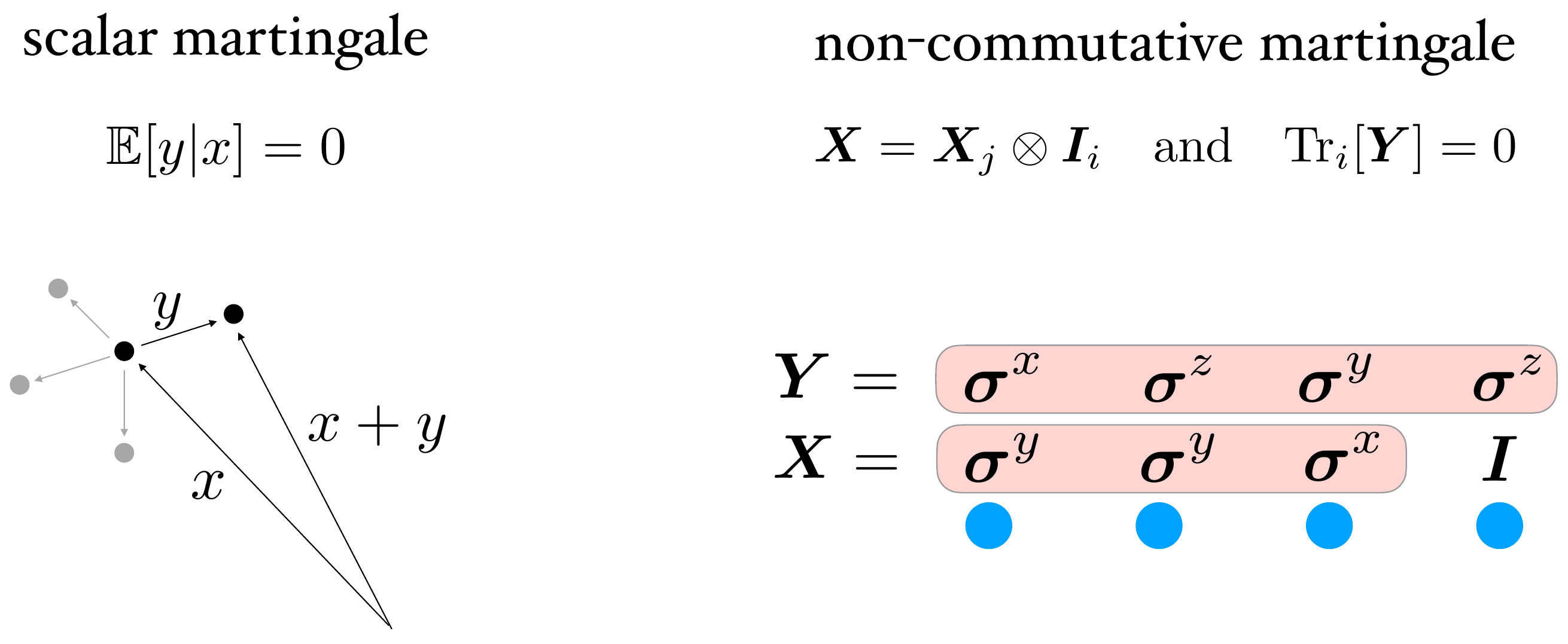}
    \caption{ For two scalar random variables satisfying the martingale condition, the variable $b$ has zero mean conditioned on variable $a$. In the non-commutative generalization, the matrix $\vB$ is partially traceless (``zero-mean'') on a subsystem where the matrix $\vA$ is trivial (``conditioned on $\vA$''). 
    }
    \label{fig:scalar_nc}
\end{figure}

To obtain quantitative control of complicated matrix functions, let us begin with an instructive example that captures the different perspectives. Consider a Hamiltonian as a sum of 1-local Pauli-Zs,
\begin{align}
    \vH = \vsigma^z_1 + \cdots +\vsigma^z_n, 
\end{align}
where each Pauli $\vsigma^z_i$ is supported on qubit $i$. How ``big'' is the sum?

(1) Take the spectral norm for the largest eigenvalue in magnitude
\begin{align}
    \lnorm{\vsigma^z_1 + \cdots +\vsigma^z_n} = n.
\end{align}

(2) Interpret the trace as an expectation, then its eigenvalue distribution is equivalent to a sum of independent random variables $ S_n:= x_1+\cdots+x_n$ each drawn from the Rademacher distribution $\Pr(x_i=1)=\Pr(x_i=-1)=1/2$. 
Now, we can use a \textit{concentration inequality} to describe how rarely the random variable deviates from its expectation
\begin{align}
\Pr( \lambda_i \ge \epsilon) \equiv   \Pr( S_n \ge \epsilon) \le \e^{-\epsilon^2/2n}  \  \ \ (\text{Hoeffding's inequality})\label{eq:hoeff}.
\end{align}
In other words, the \textit{typical} magnitude of eigenvalues $ \labs{\lambda}= \CO(\sqrt{n}) \ll n$ is much smaller than the largest eigenvalue. This simple example captures the overarching theme of this work: 
{  {\textit{Concentration is ubiquitous but often unspoken in the high dimensional setting.} }}.

To go beyond the above example, we rely on a family of recursive inequalities for their $p$-norms, which leads to concentration by Markov's inequality. We begin with reviewing the ancestral scalar version, often called the \textit{two-point inequality} or \textit{Bonami's inequality} (See, e.g.,~\cite{Garling2007InequalitiesAJ}).
\begin{fact}[The two-point inequality]\label{fact:two_point}
For real numbers $a, b$,
\begin{align} 
 \L(\frac{(a+b)^p+(a-b)^p}{2} \R)^{2/p} \le a^2+(p-1)b^2.
\end{align}
\end{fact}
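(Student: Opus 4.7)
The plan is to reduce the claim to a one-variable inequality and then compare Taylor coefficients. Both sides are positive-homogeneous of degree $p$ in $(a,b)$, so after dealing with the trivial case $a=0$ (which reads $|b| \le \sqrt{p-1}\,|b|$ and holds for $p \ge 2$), I may rescale to $a=1$. The resulting expression is invariant under $b \to -b$, so setting $t = b$ it suffices to show
\[
\phi(t) \;:=\; \bigl(1+(p-1)t^{2}\bigr)^{p/2} - \tfrac{1}{2}\bigl((1+t)^{p}+(1-t)^{p}\bigr) \;\ge\; 0, \qquad t \in [0,1],
\]
an even function with $\phi(0) = 0$.

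Next, I would expand both summands via the generalized binomial theorem,
\[
(1+(p-1)t^{2})^{p/2} = \sum_{k \ge 0}\binom{p/2}{k}(p-1)^{k}t^{2k}, \qquad \tfrac{1}{2}\bigl((1+t)^{p}+(1-t)^{p}\bigr) = \sum_{k \ge 0}\binom{p}{2k}t^{2k},
\]
and compare them term by term. Grouping the numerator of $\binom{p}{2k}$ into its $k$ even factors $\{p-2i : 0 \le i < k\}$ and $k$ odd factors $\{p-2i-1 : 0 \le i < k\}$ yields the convenient identity
\[
\binom{p}{2k} \;=\; \binom{p/2}{k}\prod_{j=0}^{k-1}\frac{p-2j-1}{2j+1},
\]
so the desired termwise domination $\binom{p}{2k} \le \binom{p/2}{k}(p-1)^{k}$ reduces to $\prod_{j=0}^{k-1}\tfrac{p-2j-1}{(2j+1)(p-1)} \le 1$. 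Each factor has absolute value at most $1$ for $p \ge 2$: the algebraic identity $(2j+1)(p-1) - (p-2j-1) = 2jp \ge 0$ bounds each factor by $1$ in the regime $p - 2j - 1 \ge 0$, while the parallel one-line check using $p \ge 2 > (2j+1)/(j+1)$ gives the matching absolute-value bound when the numerator is negative.

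The main obstacle — the step I would spend the most effort on — is sign handling for non-integer $p$: both $\binom{p}{2k}$ and $\binom{p/2}{k}$ can become negative once $k$ exceeds $p/2$, so the absolute-value bound above does not by itself imply the pointwise inequality $\phi(t) \ge 0$. I would close this gap by pairing consecutive sign-alternating tail terms of the two series and showing that each paired contribution is nonnegative on $[0,1]$, leveraging that the per-factor ratio $|p-2j-1|/\bigl((2j+1)(p-1)\bigr)$ is uniformly small once $j \gg p/2$, so the paired tail contributions decay geometrically and are dominated by the leading positive terms. A cleaner fallback, should the pairing argument become unwieldy, is to bypass series comparison entirely: verify $\phi(0) = \phi'(0) = 0$ by direct differentiation and attempt $\phi''(t) \ge 0$ on $[0,1]$ by rearranging the second derivative into a two-point-type inequality of lower exponent, setting up a downward induction in $p$ whose low-exponent base case is covered by the coefficient comparison above. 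Either route ultimately rests on the same combinatorial factorization of $\binom{p}{2k}$ derived here.
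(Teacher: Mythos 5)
You take the same route the paper gestures at (expand both sides in powers of $t=b/a$ after normalizing $a=1$), and the preparatory work is sound: the reduction to $t\in[0,1]$ is legitimate, although you should say why (swapping $a\leftrightarrow b$ leaves the left side unchanged while, since $p-1\ge 1$, the right side only increases when $|a|\ge|b|$, so WLOG $|a|\ge|b|$); the factorization $\binom{p}{2k}=\binom{p/2}{k}\prod_{j=0}^{k-1}\tfrac{p-2j-1}{2j+1}$ is correct; and so is the per-factor bound $|p-2j-1|\le(2j+1)(p-1)$ for $p\ge2$.

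The genuine gap is exactly the one you flag and then do not close. Termwise domination $\binom{p}{2k}\le\binom{p/2}{k}(p-1)^k$ really does fail for non-integer $p$: at $p=3$, $k=3$ the left coefficient is $\binom{3}{6}=0$ while the right is $\binom{3/2}{3}\cdot 2^3=-\tfrac12<0$. Your per-factor bound controls only a ratio of absolute values, which says nothing once the signs diverge, and they diverge in a way that is not a shared alternation: for $p\in(2,3)$ every $\binom{p}{2k}$ with $k\ge2$ is \emph{negative} while $\binom{p/2}{k}$ alternates, and for $p\in(3,4)$ every $\binom{p}{2k}$ is \emph{positive} while $\binom{p/2}{k}$ alternates. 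So ``pairing consecutive sign-alternating tail terms of the two series'' is not a well-defined operation, and the geometric-decay remark does not by itself produce a proof. The $\phi''\ge0$ fallback is stated as a plan rather than an argument, and notice that its proposed base case is the very coefficient comparison that fails: also the second derivative does not literally reproduce the two-point inequality at exponent $p-2$, since the resulting bracket is $(1+(p-1)t^2)^{p/2-2}(1+(p-1)^2t^2)$ rather than $(1+(p-3)t^2)^{(p-2)/2}$. To see the kind of extra structure actually needed, note that for $p\in(2,3)$ one can finish cleanly because all LHS coefficients beyond $k=1$ are nonpositive, giving $\tfrac12\bigl((1+t)^p+(1-t)^p\bigr)\le 1+\binom{p}{2}t^2\le(1+(p-1)t^2)^{p/2}$ by Bernoulli; but this has no analogue for $p\in(3,4)$, where the LHS coefficients are all positive. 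The argument must change shape with the parity of $\lfloor p\rfloor$, and your proposal elides this. As written it is not a proof — it correctly identifies the obstruction the paper's one-line hint skips over, but does not resolve it.
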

This can be seen by expanding the binomial. This seemingly trivial inequality turns out to have far-reaching consequences, and its simplicity becomes its strength (See, e.g., Boolean analysis~\cite{odonnell2021analysis}).
The same form of inequality has an exact matrix analog, often called \textit{uniform smoothness}.
\begin{fact}[Uniform smoothness for Schatten Classes {\cite{Tomczak1974}}]\label{fact:unif_schatten} For matrices $\vX$ and $\vY$, 
\begin{align}
    \left[\frac{1}{2}(\lnormp{\vX+\vY}{p}^p+\lnormp{\vX-\vY}{p}^p)\right]^{2/p} \le \lnormp{\vX}{p}^2+ (p-1) \lnormp{\vY}{p}^2.
\end{align}
\end{fact}
The above form is not directly applicable, but its alternative forms with a martingale flavor streamline most of our proofs. For $k$-local operators (which are, in fact, closely related to \textit{non-commutative martingales}; see Figure~\ref{fig:scalar_nc}), we derive and make heavy usage of the following:
\begin{prop}[Uniform smoothness for subsystems]\label{prop:unif_subsystem_intro}
Consider matrices $\vX, \vY \in \CB(\CH_i\otimes\CH_j)$ that satisfy the non-commutative martingale condition $\tr_i(\vY) = 0$ and $\vX= \vX_j\otimes \vI_i$. For $p \ge 2$, 
\begin{equation}
\lVert \vX + \vY\rVert_{p}^2\le \lVert \vX \rVert_{p}^2  + (p-1)\lVert \vY\rVert_{p}^2.
\end{equation}
\end{prop}
In other words, uniform smoothness delivers \textit{sum-of-squares} behavior that contrasts with the triangle inequality, which is \textit{linear}
\begin{align}
    \norm{\vX+\vY} \le \norm{\vX}+\norm{\vY}.
\end{align}
This difference highlights the qualitative distinction between the worst case and the typical case, which is the starting point of all arguments in this work.

To illustrate its power, we apply to the 2-local operator (Figure~\ref{fig:XY}) 
\begin{align}
    \lnormp{\sum_{j<i}\vsigma^x_i\vsigma^y_j }{p}^2 &\le (p-1)\sum_{j} \lnormp{\sum_{i:j<i}\vsigma^x_i\vsigma^y_j }{p}^2 \\
    & \le (p-1)^2\sum_{j<i}\lnormp{ \vsigma^x_i\vsigma^y_j }{p}^2,
\end{align}
and more generally this gives concentration of $k$-local operators, or \textit{Hypercontractivity} (Section~\ref{sec:prelim_hyper}). 

\begin{figure}[t]
    \centering
    \includegraphics[width=0.3\textwidth]{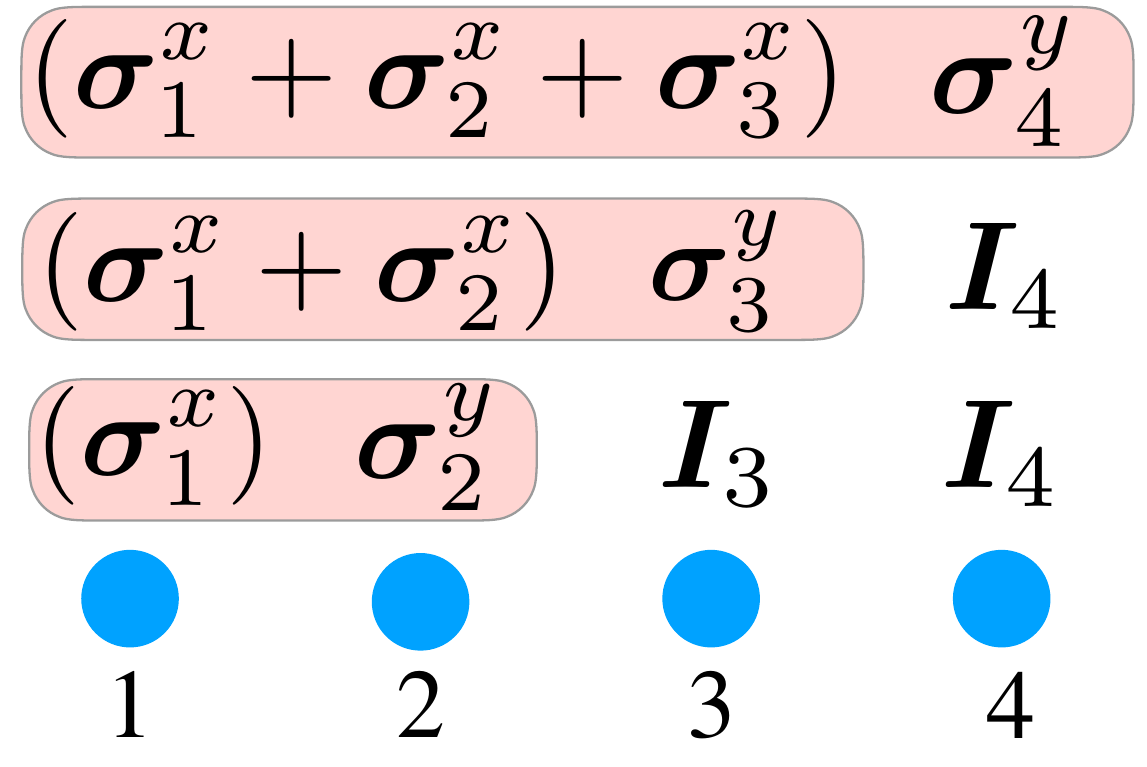}
    \caption{ Reorganizing the sum $\sum_{j<i}\vsigma^x_i \vsigma^y_j$ into a martingale, w.r.t. the index $j$. 
    }
    \label{fig:XY}
\end{figure}

For random Hamiltonians, the flavor of the problem changes slightly; we can think of adding Gaussian coefficients in our guiding example
\begin{align}
     \vH = g_1 \vsigma^z_1 + \cdots + g_n \vsigma^z_n.
\end{align}
The Gaussian coefficient (i.e., external randomness) requires the following version of uniform smoothness regarding the expected $p$-norm $\vertiii{\vX}_{p}:= (\BE[ \norm{\vX}_p^p ] )^{1/p}$ that will allow us to control the spectral norm, i.e., the worst input states. 
Initially, this featured in simple derivations of matrix concentration for martingales~\cite{naor_2012,HNTR20:Matrix-Product}.  
\begin{fact}[{Uniform smoothness for expected p-norm~\cite[Proposition~4.3]{HNTR20:Matrix-Product}}] \label{fact:sub_average_pq_maintext}
Consider random matrices $\vX, \vY$ of the same size that satisfy
$\BE[\vY|\vX] = 0$. When $2 \le p$,
\begin{equation}
\vertiii{\vX+\vY}_{p}^2 \le \vertiii{\vX}_{p}^2  + (p-1)\vertiii{\vY}_{p}^2 .
\end{equation}
\end{fact}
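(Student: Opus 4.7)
The plan is to reduce Fact~\ref{fact:sub_average_pq_maintext} to the deterministic Fact~\ref{fact:unif_schatten} via a symmetrization plus integration argument, and then repair the constant by reinterpreting $\vertiii{\cdot}_p$ as an honest Schatten $p$-norm on an enlarged non-commutative probability space.

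First, I would introduce $\vY'$, a random matrix independent of $\vY$ given $\vX$ and sharing its conditional distribution, so that $\BE[\vY'|\vX,\vY]=0$. Applying Jensen's inequality to the convex functional $\vZ\mapsto \norm{\vX+\vY-\vZ}_p^p$ yields
\begin{align}
\BE\,\norm{\vX+\vY}_p^p \;=\; \BE\,\norm{\vX+\vY-\BE[\vY'\,|\,\vX,\vY]}_p^p \;\le\; \BE\,\norm{\vX+\vY-\vY'}_p^p.
\end{align}
Because $\vY-\vY'$ is symmetric under a sign flip conditional on $\vX$, the right-hand side equals $\tfrac12\BE\bigl(\norm{\vX+(\vY-\vY')}_p^p+\norm{\vX-(\vY-\vY')}_p^p\bigr)$, to which Fact~\ref{fact:unif_schatten} applies pointwise in the realization, producing the bound $\BE\,(\norm{\vX}_p^2+(p-1)\norm{\vY-\vY'}_p^2)^{p/2}$.

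Next, I would raise both sides to the $2/p$-th power and invoke Minkowski's inequality in $L^{p/2}$ (valid because $p\ge 2$) to separate the two contributions, which gives
\begin{align}
\vertiii{\vX+\vY}_p^2 \;\le\; \vertiii{\vX}_p^2 \;+\; (p-1)\,\vertiii{\vY-\vY'}_p^2.
\end{align}

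The main obstacle is recovering the sharp constant: the naive triangle bound $\vertiii{\vY-\vY'}_p\le 2\vertiii{\vY}_p$ inflates $(p-1)$ by a factor of four, and the sharper bound $\vertiii{\vY-\vY'}_p^2\le 2\vertiii{\vY}_p^2$ (already exhibited by the scalar $p=2$ identity $\vertiii{\vY-\vY'}_2^2=2\vertiii{\vY}_2^2$) is provable in general only by a 2-smoothness argument essentially equivalent to the statement we are trying to establish. To break this circularity I would reinterpret $\vertiii{\cdot}_p$ as the Schatten $p$-norm on the tracial algebra $L^\infty(\Omega)\otimes M_d$ with trace $\BE\circ\tr$: under this identification the hypothesis $\BE[\vY|\vX]=0$ becomes $L^2$-orthogonality of $\vY$ to the $\vX$-measurable subalgebra, and Fact~\ref{fact:unif_schatten} applied directly on the enlarged algebra, together with martingale cross-term cancellation (or equivalently invoking the non-commutative 2-smoothness of $L^p$ spaces as a black box), delivers the one-sided inequality with the sharp constant $(p-1)$ and no symmetrization overhead. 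Carrying out this identification carefully, or justifying the corresponding black-box citation, is the step I anticipate needing the most care.
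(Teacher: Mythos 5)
The paper does not actually prove this Fact: it cites it to \cite[Proposition~4.3]{HNTR20:Matrix-Product}. But it does prove the deterministic analogues (Proposition~\ref{prop:unif_subsystem_recap} and Proposition~\ref{prop:unif_subalgebra}), and those proofs exhibit exactly the ingredient your proposal is missing.

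Your symmetrization step is valid: conditional Jensen plus symmetry of $\vY-\vY'$ plus Fact~\ref{fact:unif_schatten} applied pointwise, followed by Minkowski in $L^{p/2}$, does yield $\vertiii{\vX+\vY}_p^2 \le \vertiii{\vX}_p^2 + C_p\vertiii{\vY-\vY'}_p^2$. You correctly diagnose that the factor $\vertiii{\vY-\vY'}_p^2 / \vertiii{\vY}_p^2$ cannot be removed for free. (A cleaner route to the same place, mirroring the paper's Proposition~\ref{prop:nc_convexity}, is to skip symmetrization and use the monotonicity $\vertiii{\vX}_p \le \vertiii{\vX+\vY}_p$, which follows from conditional Jensen applied to $\vX = \BE[\vX+\vY\,|\,\vX]$; averaging that against the two-point inequality gives the constant $2C_p$ directly.) Either way you are stuck at $2C_p$.

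Your proposed repair does not close this gap. Reinterpreting $\vertiii{\cdot}_p$ as the non-commutative $L^p$ norm on $L^\infty(\Omega)\otimes M_d$ with trace $\BE\circ\tr$ is a correct and useful observation — it shows this Fact is the subalgebra statement of Proposition~\ref{prop:unif_subalgebra} with $\CN$ the $\vX$-measurable subalgebra — but the two-point inequality on the enlarged algebra still gives only $2C_p$ for the one-sided martingale estimate, because it is intrinsically symmetric and you must sacrifice one side via monotonicity. ``Martingale cross-term cancellation'' gives the sharp constant only at $p=2$; there is no such expansion for $p>2$. And invoking ``non-commutative $2$-smoothness of $L^p$ as a black box'' is circular: the martingale formulation of $2$-smoothness with constant $p-1$ is precisely the statement being proved.

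The ingredient you need is the rescaling trick the paper attributes to \cite[Lemma~A.1]{HNTR20:Matrix-Product} and uses to finish the proof of Proposition~\ref{prop:unif_subsystem_recap}. Set $\vZ := \vY/n$. The coarse bound gives the base case $\vertiii{\vX+\vZ}_p^2 - \vertiii{\vX}_p^2 \le 2C_p\vertiii{\vZ}_p^2$. For $k\ge 2$, the two-point inequality centered at $\vX+(k-1)\vZ$ (applied pointwise, then Minkowski in $L^{p/2}$ as you already did) gives the recursion
\begin{align}
\vertiii{\vX+k\vZ}_p^2 - \vertiii{\vX+(k-1)\vZ}_p^2 \le \L(\vertiii{\vX+(k-1)\vZ}_p^2 - \vertiii{\vX+(k-2)\vZ}_p^2\R) + 2C_p\vertiii{\vZ}_p^2,
\end{align}
which telescopes to $\vertiii{\vX+\vY}_p^2 \le \vertiii{\vX}_p^2 + C_p\frac{n+1}{n}\vertiii{\vY}_p^2$. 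Taking $n\to\infty$ washes out the inflated base-case constant and gives the sharp $C_p = p-1$. This is how \cite[Proposition~4.3]{HNTR20:Matrix-Product} proves the statement, and it is exactly what the paper reproduces for the subsystem version; your write-up has the pieces to reach $2C_p$ but not the mechanism to sharpen it.
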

See Section~\ref{sec:martingales} for the relevant background and an alternative norm for arbitrary fixed input states.
Beyond the scope of this work, we emphasize these robust and straightforward martingale inequalities should find applications in versatile quantum information settings, whether by exploiting the tensor product structure of the Hilbert space or the randomness in matrix summands. See, e.g.,~\cite{chen2021optimal,chen2021concentration} for applications in operator growth and~\cite{chen2020quantum} in randomized quantum simulation.

\subsection{Discussion}
For many physical systems of interest (i.e., non-random $k$-local Hamiltonians), we present an average-case gate complexity that is qualitatively better than the worst-case. Without even changing the product formula, our analysis leads to a direct reduction of resources for quantum simulation applications. It is natural to hope that states appearing in practice (such as the ground state in quantum chemistry applications) behave like the typical states rather than the worst states, which would make product formulas very appealing for quantum simulation. It would be very interesting to carry out small-scale numerics. Our result holds with high probability for inputs drawn from any orthonormal basis.
Unfortunately, our current argument is probabilistic and does not label the exceptional states. Heuristically, the Trotter error is another $k$-local Hamiltonian that does not resemble the original Hamiltonian, so the states at low energy need not be ``aligned'' with the extremal states maximizing the Trotter error. We leave this as an open problem.

Another natural question is whether other quantum simulation methods (such as qubitization) or even other quantum algorithms enjoy an average-case speed-up. If true, it would greatly improve the feasibility of many quantum computing applications.

\subsection{Acknowledgments}
We thank Yuan Su and Mario Berta for helpful discussions and Joel Tropp and Andrew Lucas for related collaborations. CFC is especially thankful to Joel Tropp for introducing to him the subject of matrix concentration inequalities. We thank Sam McArdle for his helpful comments. We thank Ashley Milsted for setting up AWS EC2 computing for the numerics and AWS for computing credits. After this work was completed, we became aware of related work~\cite{Qi_2021_Hamiltonian_simulation_random} by Qi Zhao, You Zhou, Alexander F. Shaw, Tongyang Li, and Andrew M. Childs that also studied Hamiltonian simulation with random input states. We thank them for letting us know about their work. CFC is supported by the Caltech RA fellowship and the Eddleman Fellowship.

\section{Preliminary: $k$-Locality, Uniform Smoothness, and Hypercontractivity}\label{sec:prelim_hyper}
In quantum information, we often encounter a \textit{k-local} operator: its Pauli strings $\vF_{S}$ have lengths at most $k$. This is the quantum analog of a low-degree Boolean function~\cite{montanaro_q_boolean}. 
\begin{align}
    \vF &= \vI + \sum_{\labs{S}=1}\sum_{\alpha=1}^3 b^\alpha_{s_1} \vec{\sigma}^\alpha_{s_1}+ \sum_{\labs{S}=2} \sum_{\alpha=1}^3\sum_{\beta=1}^3 c^{\alpha\beta}_{s_1s_2} \vec{\sigma}^{\alpha}_{s_1}\vec{\sigma}^{\beta}_{s_2}+ \cdots
    := \sum_{\labs{S}\le k} \vF_S.
\end{align}

Given such a k-local operator, let us quantify its ``strength'' acting on states. One cautious choice is to worry about the worst possible state via the operator norm 
\begin{align}
    \norm{\vF}=\sup_{\psi}\lnormp{\vF\ket{\psi}}{\ell_2} \quad\text{(worst inputs)},
\end{align}
which maximizes the vector $\ell_2$-norm. 

In this work, we are instead interested in framing a ``typical case'' statement that might apply to states we encounter in practice. To be more precise, let us model the typical states by states drawn \textit{randomly} from a 1-design ensemble (e.g., an orthonormal basis). Practically, this can be a random computational basis state (which is more realistic than a Haar random state).

For such a random state $\ket{\psi}$, how large is the typical strength $\lnormp{\vF\ket{\psi}}{\ell_2}$? This question can be succinctly phrased in terms of a \textit{concentration inequality} that controls the probability of an undesirably large strength. 
\begin{prop}[Typical states and Schatten p-norms]\label{prop:typical_Schatten} For a pure state $\ket{\psi}$ drawn randomly from an ensemble 
$\BE_{\psi}[ \ket{\psi}\bra{\psi} ]= \vrho$, 
we have 
\begin{align}
    \Pr\L(  \lnormp{ \vF \ket{\psi}}{\ell_2}\ge \epsilon \R) \le \L(\frac{\lnormp{\vF\vrho^{1/p}}{p}}{\epsilon}\R)^p.\label{eq:pbar_norm_markov}
\end{align}
\end{prop}
In particular, for the maximally mixed state, we recover the normalized Schatten p-norm
\begin{align}
\lnormp{\vF\vrho^{1/p}}{p}= \lnormp{\vF}{\bar{p}}\quad\text{for}\quad \vrho = \vI/\tr[\vI].    
\end{align}
\begin{proof}[Proof of Proposition~\ref{prop:typical_Schatten}]
For illustration, we start with Chebychev's inequality and the variance $p=2$
\begin{align}
    \Pr\L(  \lnormp{ \vF \ket{\psi}}{\ell_2}\ge \epsilon \R) \le \BE_{\psi}\L[   \lnormp{\vF \ket{\psi}}{\ell_2}^2 \R] = \tr\L[ \vF \vrho\vF^\dagger  \R]. 
\end{align}
The last equality evaluates the expectation over states. To obtain sharper tail bounds via Markov's inequality, consider the p-th moment
\begin{align}
     \BE_{\psi}\L[   \lnormp{\vF \ket{\psi}}{\ell_2}^p \R] = 
     \BE_{\psi}\tr\L[\L( \vF\ket{\psi} \bra{\psi}\vF^\dagger \R)^{p/2} \R]  &\le \tr\L[ \L(\vF\vrho^{\frac{2}{p}}\vF^\dagger \R) ^{p/2}\R]. \label{eq:p-th_moment_of_strength}
\end{align}
The inequality applies a certain form of concavity (Fact~\ref{fact:poly_concavity}). \end{proof}

In other words, the (weighted) 2-norm coincides with the variance of the typical strength; the (weighted) p-norm governs the tail bounds. Indeed, the p-norms can be expressed in terms of the eigenvalues of the operator $\vF$;~\eqref{eq:p-th_moment_of_strength} will be an equality if we draw the state $\ket{\psi}$ from the eigenbasis of the operator $\vF$. Conveniently, with other choices of basis, the concavity tells us we still retain an inequality in~\eqref{eq:p-th_moment_of_strength}. The rest of our discussion boils down to estimating the $p$-norm of $k$-local operators. This is the content of \textit{Hypercontractivity}. 
\begin{fact}[Hypercontractivity for Paulis{~\cite[Theorem~46]{montanaro_q_boolean}}]\label{fact:nc_hyper}
For an operator acting on qubits $\vF\in \CB(\CH(2^n))$, $p \ge 2$, and $C_p:=p-1$,
\begin{align}
\lnormp{\vF}{\bar{p}}^2 \le \lnormp{\sum_S \sqrt{C_p}^{\labs{S}}\vF_{S}}{\bar{2}}^2 = \sum_S C_p^{\labs{S}} \lnormp{\vF_{S}}{\bar{2}}^2.
\end{align}
\end{fact}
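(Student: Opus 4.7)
I would proceed by induction on the qubit count $n$. Introduce the shorthand $T\vF := \sum_S \sqrt{p-1}^{|S|}\vF_S$, so the target rewrites as $\lnormp{\vF}{\bar p} \le \lnormp{T\vF}{\bar 2}$. The base case $n=1$ is direct from Bonami's two-point inequality (Fact~\ref{fact:two_point}): any one-qubit operator $\vF = a\vI + b\vsigma^x + c\vsigma^y + d\vsigma^z$ has eigenvalues $a\pm r$ with $r := \sqrt{b^2+c^2+d^2}$, so the inequality reduces to the scalar $\bigl[\tfrac12(|a+r|^p+|a-r|^p)\bigr]^{2/p} \le a^2 + (p-1)r^2$, with the right-hand side being exactly $\lnormp{T\vF}{\bar 2}^2$ by Pauli orthogonality.

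For the inductive step, split on qubit $n$: $\vF = \vG^I\otimes\vI_n + \sum_{\alpha\in\{x,y,z\}}\vG^\alpha\otimes\vsigma^\alpha_n$, where each $\vG^\alpha$ acts on the first $n-1$ qubits. The key intermediate target is the matrix-valued four-point inequality
\begin{equation*}
\lnormp{\vF}{\bar p}^2 \le \lnormp{\vG^I}{\bar p}^2 + (p-1)\sum_{\alpha\in\{x,y,z\}}\lnormp{\vG^\alpha}{\bar p}^2, \qquad (\star)
\end{equation*}
which upgrades the scalar base case to matrix coefficients. Given $(\star)$, the inductive hypothesis $\lnormp{\vG^\alpha}{\bar p}^2 \le \lnormp{T\vG^\alpha}{\bar 2}^2$ substitutes in, and Pauli orthogonality in $L_2$ recombines the four contributions into $\lnormp{T\vF}{\bar 2}^2 = \lnormp{T\vG^I}{\bar 2}^2 + (p-1)\sum_\alpha\lnormp{T\vG^\alpha}{\bar 2}^2$, closing the induction.

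To establish $(\star)$, Proposition~\ref{prop:unif_subsystem_intro} applied to the split $\vF = \vG^I\otimes\vI_n + \vR$ with $\vR := \sum_\alpha \vG^\alpha\otimes\vsigma^\alpha_n$ (which is traceless on qubit $n$, while $\vG^I\otimes\vI_n$ is identity on qubit $n$) yields the weaker $\lnormp{\vF}{\bar p}^2 \le \lnormp{\vG^I}{\bar p}^2 + (p-1)\lnormp{\vR}{\bar p}^2$. It then remains to establish the completely-bounded single-qubit estimate $\lnormp{\vR}{\bar p}^2 \le \sum_\alpha\lnormp{\vG^\alpha}{\bar p}^2$. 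My plan is to iterate the symmetric form of uniform smoothness (Fact~\ref{fact:unif_schatten}) using the three conjugations $U = \vsigma^\beta_n$ for $\beta\in\{x,y,z\}$---each of which commutes with exactly one Pauli summand of $\vR$ while anti-commuting with the other two, so that $U\vR U^{-1}$ differs from $\vR$ only by flipping signs and the symmetric form applies---then symmetrize the three resulting asymmetric inequalities by averaging, invoking operator convexity of $t\mapsto t^{p/2}$ to recombine.

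\emph{Main obstacle.} The central difficulty is the \emph{symmetric} factor of $(p-1)$ per non-identity Pauli in $(\star)$. A naive iterated peel of the three Paulis via Proposition~\ref{prop:unif_subsystem_intro} unavoidably produces asymmetric factors $(p-1)^0,(p-1)^1,(p-1)^1,(p-1)^2$ across the four Pauli sectors on qubit $n$, and plain averaging of the three natural orderings still yields a loose constant of order $(p^2-1)/3$ in place of $(p-1)$. Obtaining the sharp constant is equivalent to completely-bounded $(2\to p)$ hypercontractivity of the single-qubit depolarizing semigroup and truly lies beyond uniform smoothness alone: it ultimately rests either on a Gross-type logarithmic Sobolev argument for the qubit depolarizer or on a Jordan--Wigner reduction to Carlen--Lieb fermionic hypercontractivity. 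This is the technical heart of matrix-valued Hypercontractivity, and the reason Fact~\ref{fact:nc_hyper} is typically cited from the literature rather than re-derived purely from martingale inequalities.
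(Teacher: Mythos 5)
Your self-assessment is accurate, and the diagnosis is the most valuable part of the proposal. The single-qubit estimate $(\star)$ with the symmetric factor $(p-1)$ on all three non-identity Paulis is precisely completely-bounded $2\to p$ hypercontractivity of the qubit depolarizer, and it genuinely does not follow from iterated uniform smoothness: the peel-and-average scheme you sketch bottoms out at $(p^2-1)/3$ per Pauli, lossy for $p>2$, exactly as you say. This is consistent with the paper's own structure --- Fact~\ref{fact:nc_hyper} is \emph{cited} from Montanaro--Osborne rather than proved, and the paper's uniform-smoothness route deliberately settles for the weaker Corollary~\ref{cor:unif_implies_hyper} with constant $(3C_p)^{|S|}$, noting it is ``equivalent \dots up to slightly worse constants.'' The sharp constant in the literature is obtained either by Jordan--Wigner reduction to Carlen--Lieb's Clifford-algebra hypercontractivity or by a log-Sobolev analysis of the depolarizing semigroup (King), exactly as you flag at the end. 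In short: the martingale toolkit of this paper was never meant to recover the sharp constant, and a completed version of your plan would at best reproduce Corollary~\ref{cor:unif_implies_hyper}, not Fact~\ref{fact:nc_hyper}.

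One additional gap worth flagging: your base case asserts eigenvalues $a\pm\sqrt{b^2+c^2+d^2}$ and reduces to Bonami's two-point inequality, but this identifies eigenvalues with singular values and silently assumes $\vF$ Hermitian (real $a,b,c,d$). For a general one-qubit operator the Schatten $p$-norm is not a function of $|a|$ and $\sqrt{|b|^2+|c|^2+|d|^2}$ alone --- e.g., $\vF=\vsigma^x+\iunit\vsigma^y$ has singular values $\{2,0\}$, not $\{\pm\sqrt{2}\}$ --- so the $n=1$ case is already a scalar-coefficient instance of the cb-hypercontractivity you correctly identify as the obstacle in the inductive step, rather than a trivial reduction to Fact~\ref{fact:two_point}.
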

Indeed, for $p=2$, it holds with equality, and the 2-norm gives an \textit{incoherent} sum over subsets $S$. (The Pauli strings are orthogonal in the Hilbert-Schmidt inner product.) We take squares to emphasize this sum-of-square nature. For general $p\ge 2$, Hypercontractivity gives an analogous incoherent sum over subsets but with enlarged coefficients $C_p^{\labs{S}}$. In other words, so long as the sets $S$ are small (e.g., the operator is $k$-local for a fixed $k$), we obtain all p-norm estimates from a 2-norm calculation. See also~\cite{Montanaro_application_hypercontract} for applications of Hypercontractivity, and note the equivalent inverted form also appears 
 \begin{align}
\lnormp{\sum_S \frac{1}{\sqrt{C_p}^{\labs{S}}} \vF_{S}}{\bar{p}}^2 \le \lnormp{\vF}{\bar{2}}^2.     
 \end{align}

Historically, this zoo of closely-related ideas starts from the Boolean cases (see, e.g.,~\cite{odonnell2021analysis} and Section~\ref{sec:scalar_hyper}) and extends to the non-commutative cases, including Paulis~\cite{montanaro_q_boolean}, Fermions~\cite{Carlen_hyper_fermions}, and abstract von Neumann algebras~\cite{ricardXu16}. In various contexts, Hypercontractivity has been constantly revisited and found applications in classical~\cite{Ben_Aroya_2008,Arunachalam21,odonnell2021analysis} and quantum computing~\cite{Montanaro_application_hypercontract}. The goal of our discussion here is to put together a coherent and accessible review that illustrates the rather common phenomena with some problem-driven adaptations. We begin with the intuitive qudit case (with the maximally mixed state as the ensemble) and later swiftly generalize to several settings arising from quantum simulation. 

To understand Hypercontractivity, our main approach  is the following recursive inequality called \textit{uniform smoothness}. 
\begin{prop}[Uniform smoothness for subsystems]\label{prop:unif_subsystem_recap}
Consider matrices $\vX, \vY \in \CB(\CH_i\otimes\CH_j)$ that satisfy
$\tr_i(\vY) = 0$ and $\vX= \vX_j\otimes \vI_i$. For $p \ge 2$,  $C_p=p-1$,
\begin{equation}
\lVert \vX + \vY\rVert_{p}^2\le \lVert \vX \rVert_{p}^2  + C_p\lVert \vY\rVert_{p}^2.
\end{equation}
\end{prop}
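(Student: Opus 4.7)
The statement is the subsystem/conditional strengthening of Tomczak--Jaegermann uniform smoothness (Fact~\ref{fact:unif_schatten}), and my plan is to reduce it to Fact~\ref{fact:unif_schatten} via symmetrization on subsystem $i$. The partial-trace conditional expectation $\CE_i(\vO) := (\vI_i/d_i) \otimes \tr_i(\vO)$ is the tracial projection onto the subalgebra $\vI_i \otimes \CB(\CH_j)$, and the hypotheses become $\CE_i(\vX) = \vX$ and $\CE_i(\vY) = 0$. Equivalently, every unitary $\vU$ on $\CH_i$ (extended by $\vI_j$) satisfies $\vU \vX \vU^\dagger = \vX$, while the Haar average $\BE_\vU[\vU \vY \vU^\dagger] = 0$. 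These are precisely the ingredients used in the non-commutative martingale uniform-smoothness argument (Pisier--Xu) for the single-step expectation $\CE_i$.

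I would first handle the rank-one building block $\vY = \vW_i \otimes \vZ_j$ with $\vW_i$ a traceless unitary on $\CH_i$, e.g.\ a non-identity Pauli. In the $\pm 1$ eigenbasis of $\vW_i$, the operator $\vX + \vY$ becomes block diagonal with blocks $\vX_j + \vZ_j$ and $\vX_j - \vZ_j$, so up to normalization from $\vI_i$ one obtains
\begin{equation}
\lnormp{\vX + \vY}{p}^p \;=\; \lnormp{\vX_j + \vZ_j}{p}^p + \lnormp{\vX_j - \vZ_j}{p}^p,
\end{equation}
and a single application of Fact~\ref{fact:unif_schatten} to $\vX_j, \vZ_j$ delivers the inequality with the sharp constant $p - 1$. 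For a general $\vY$, I would decompose $\vY = \sum_\alpha \vW_i^\alpha \otimes \vZ_j^\alpha$ into tensor-orthogonal rank-one pieces using a traceless unitary basis on $\CH_i$, and then run either a careful iteration or a single Haar-symmetrization step that uses $\BE_\vU[\vU \vY \vU^\dagger] = 0$ together with Jensen's inequality applied to the convex map $\vO \mapsto \lnormp{\vO}{p}^p$ to fold all pieces into one application of Fact~\ref{fact:unif_schatten}.

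The main obstacle is preserving the constant $p - 1$ across the multi-component decomposition: naive iteration of the rank-one case compounds the constant into $(p-1)^k$, and a naive Haar-symmetrization combined with the triangle inequality yields a bound with a spurious factor of $2$ or $4$. The cleanest way around this is to invoke the Pisier--Xu non-commutative Burkholder/uniform-smoothness inequality for one-step conditional expectations, of which Proposition~\ref{prop:unif_subsystem_recap} is the prototypical instance; the proof of that general inequality is itself a symmetrization-plus-duality argument built on Fact~\ref{fact:unif_schatten}, and one can alternatively reproduce it here by working in the conditional $L_p$-norm and dualizing the Haar-averaged expression above.
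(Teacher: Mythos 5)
Your rank-one block-diagonal observation is correct, and your diagnosis of the obstacle---that iterating the rank-one case compounds the constant and that naive Haar symmetrization plus triangle inequality introduces a spurious factor---is accurate. But the proposal then does not actually overcome it: invoking Pisier--Xu as a black box is not a proof of the statement, and the suggestion to ``reproduce it by working in the conditional $L_p$-norm and dualizing the Haar-averaged expression'' is too vague to check, since the whole difficulty is precisely to show that the sharp constant $p-1$ survives whatever passage you have in mind. As written, the general (multi-component) case is left unproven.

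The paper sidesteps the decomposition entirely. The missing ingredient is the monotonicity inequality $\lnormp{\vX}{p}\le\lnormp{\vX-\vY}{p}$ (Proposition~\ref{prop:nc_convexity}), proved via the variational formula for Schatten norms, or equivalently by Haar averaging on $\CH_i$ together with Jensen's inequality and unitary invariance---this is the clean place to use $\BE_{\vU}[\vU\vY\vU^\dagger]=0$, rather than in a symmetrization of $\vX+\vY$ itself. With it one has
\begin{equation}
\frac{\lnormp{\vX+\vY}{p}^2+\lnormp{\vX}{p}^2}{2}
\;\le\;
\Big(\tfrac{1}{2}\lnormp{\vX+\vY}{p}^p+\tfrac{1}{2}\lnormp{\vX-\vY}{p}^p\Big)^{2/p}
\;\le\;
\lnormp{\vX}{p}^2+C_p\lnormp{\vY}{p}^2,
\end{equation}
using Lyapunov and then Fact~\ref{fact:unif_schatten}; rearranging gives the result with $2C_p$. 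The sharp constant $C_p$ is then recovered by the rescaling/telescoping argument of \cite[Lemma~A.1]{HNTR20:Matrix-Product}: set $\vZ:=\vY/n$, telescope $\lnormp{\vX+k\vZ}{p}^2-\lnormp{\vX+(k-1)\vZ}{p}^2$ over $k=1,\dots,n$, and send $n\to\infty$. This route never decomposes $\vY$, so the constant never compounds.
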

Technically, the partially traceless assumption $\tr_i(\vY) = 0$ makes it a \textit{non-commutative martingale} where taking partial trace is a \textit{conditional expectation}
\begin{align}
E_i[\cdot]:= \vI_i\otimes \btr_i[\cdot] \quad \text{where} \quad \btr_i[\cdot]: = \frac{\tr_i[\cdot]}{\tr_i[\vI_i]}.
\end{align}
We can rewrite Proposition~\ref{prop:unif_subsystem_recap} more formally \footnote{$\CB(\CH_j)\otimes \vI_i \subset  \CB(\CH_i\otimes\CH_j)$ is the filtration of von Neumann algebras. We will mostly stick to the usual qubit picture. See Section~\ref{sec:subalgebras} for an analogous result entirely in terms of von Neumann algebras.}. For any matrix $\vK$,
\begin{align}
    \lnormp{ \vK }{p}^2\le \lnormp{ E_i[\vK] }{p}^2  + C_p\lnormp{(1-E_i)[\vK] }{p}^2.
\end{align}
This martingale condition naturally appears in subroutines of quantum information applications, while Hypercontractivity as a black box is more ``rigid''. Although these two ideas are intimately related, we emphasize uniform smoothness is a versatile\footnote{In particular the above Hypercontractivity (Fact~\ref{fact:nc_hyper}) is restricted to qubits, (\cite{king2012Hypercontractivity} generalizes to other unital noise operators, on qubits), while uniform smoothness generalizes fairly easily. } and transparent driving horse, which implies, among other consequences, Hypercontractivity (Corollary~\ref{cor:unif_implies_hyper}).
 
 \subsection{Uniform Smoothness for Subsystems}
Proposition~\ref{prop:unif_subsystem_recap} is a special case of~\cite{ricardXu16}. Here, we present an elementary proof by adapting the argument in~\cite[Prop~4.3]{HNTR20:Matrix-Product}\footnote{This observation was made during this work and other work~\cite{chen2021optimal}. We include the proof in both.}. We start with the primitive form of uniform smoothness as a black box. 
\begin{fact}[Uniform smoothness for Schatten Classes, recap~\cite{Tomczak1974}]\label{fact:unif_schatten_recap} For matrices $\vX$ and $\vY$, 
\begin{align}
    \left[\frac{1}{2}\L(\lnormp{\vX+\vY}{p}^p+\lnormp{\vX-\vY}{p}^p\R)\right]^{2/p} \le \lnormp{\vX}{p}^2+ C_p \lnormp{\vY}{p}^2.
\end{align}
\end{fact}
We also need the following fact.
\begin{fact}[Monotonicity of $p$-norm w.r.t partial trace]\label{fact:nc_convexity}
For matrices $\vX$ and $\vY$ satisfying $\tr_i(\vY) = 0$ and $\vX= \vX_j\otimes \vI_i$, $p \ge 2$,
\begin{align}
\lnormp{\vX}{p}\le \lnormp{\vX+\vY}{p}.
\end{align}
\end{fact}
This can be understood as the non-commutative analog of convexity 
 $
     \lnormp{\vX+\BE_{\vY}\vY}{p} \le \BE_{\vY}\lnormp{\vX+\vY}{p}. 
 $
\begin{proof}[Proof of Fact~\ref{fact:nc_convexity}]
Recall the variational expression~\cite[Sec~12.2.1]{Wilde_QShannon} for Schatten p-norms 
\begin{align}
    \lnormp{\vX_j}{p} = \sup_{\lnormp{\vB_j}{q}\le 1} \tr(\vX_j\vB_j^\dagger)\quad \text{for} \quad 1/p+1/q=1.
\end{align}
Then,
\begin{align}
    \lnormp{\vX+\vY}{p} = \sup_{\lnormp{\vB}{q}\le 1} \tr[(\vX+\vY)\vB^\dagger] \ge \tr\L[ (\vX+\vY) \vB_j^\dagger\otimes \frac{\vI_i}{\lnormp{\vI_i}{q}}\R] = \lnormp{\vX_j\otimes \vI_i}{p}.
\end{align}

The last equality uses the partially traceless condition $\tr_i \vY =0$ and that 
\begin{align}
\vX_j\otimes \vI_i \quad \text{has maximizer} \quad\vB_j\otimes \vI_i/{\lnormp{\vI_i}{q}}.
\end{align}

An alternative proof is by averaging over Haar unitary on subsystem $i$
\begin{align}
\lnormp{\vX}{p}=\lnormp{\vX+\BE_{\vec{U}}[ \vec{U}_i\vY\vec{U}_i^\dagger]}{p}\le \BE_{\vec{U}_i}\lnormp{\vX+ \vec{U}_i\vY\vec{U}_i^\dagger}{p} = \lnormp{\vX+\vY}{p} .
\end{align}
The first equality is Schur's lemma, then convexity, and lastly, we used unitary invariance of p-norms. 
\end{proof}
We can almost prove Proposition~\ref{prop:unif_subsystem_recap}.
\begin{align}
     \frac{\lnormp{\vX+\vY}{p}^2 + \lnormp{\vX}{p}^2}{2} &\le \frac{\lnormp{\vX+\vY}{p}^2 + \lnormp{\vX-\vY}{p}^2}{2} \\
     &\le
     \left(\frac{\lnormp{\vX+\vY}{p}^p+\lnormp{\vX-\vY}{p}^p}{2} \right)^{2/p} 
     \le \lnormp{\vX}{p}^2+ C_p \lnormp{\vY}{p}^2.
\end{align}
The last inequality is Lyapunov's and then Fact~\ref{fact:unif_schatten}. Rearranging terms yields a slightly worse constant $2C_p$. The advertised constant can be obtained via another elementary but insightful trick~\cite[Lemma~A.1]{HNTR20:Matrix-Product}, which we reproduce as follows.
\begin{proof}[Proof of Proposition~\ref{prop:unif_subsystem_recap}]
The proof considers a rescaling argument. Let $\vZ:=\frac{1}{n}\vY$. We have just obtained 
\begin{align}
\lVert \vX + \vZ\rVert_{p}^2- \lVert \vX \rVert_{p}^2 \le  2C_p\lVert \vZ\rVert_{p}^2. \label{eq:2Cp}
\end{align}
Rearranging Fact~\ref{fact:unif_schatten}, 
\begin{align}
\lVert \vX + k\vZ\rVert_{p}^2-\lVert \vX - (k-1)\vZ\rVert_{p}^2&\le  \bigg(\lVert \vX + (k-1)\vZ\rVert_{p}^2-\lVert \vX + (k-2)\vZ\rVert_{p}^2 \bigg)+ 2C_p\lVert \vZ\rVert_{p}^2\\
&\le 2C_pk\lVert \vZ\rVert_{p}^2.
\end{align}
 The last inequality recursively applies the first line for $n\ge k\ge 2$ and \eqref{eq:2Cp} at base case\footnote{The quadratic rescaling argument inherits its uniform smoothness constant from Fact~\ref{fact:unif_schatten}, and the dependence of the base case constant vanishes in the limit. We just need some constant at the base. $\lVert \vX + \vZ\rVert_{p}^2- \lVert \vX \rVert_{p}^2 \le  f(p)\lVert \vZ\rVert_{p}^2.$} $k=1$.
Therefore, 
\begin{align}
    \lVert \vX + \vY\rVert_{p}^2 = \sum_{k} \bigg(\lVert \vX + k\vZ\rVert_{p}^2-\lVert \vX - (k-1)\vZ\rVert_{p}^2\bigg) &\le \sum_{k}2C_pk \lVert \vZ\rVert_{p}^2\\
    &=C_p\frac{n+1}{n} \lVert \vY\rVert_{p}^2.
\end{align}
Take $n\rightarrow\infty$ to obtain the sharp constant.
\end{proof}

\subsubsection{Subalgebras}\label{sec:subalgebras}
Let us work out the analogous elementary derivation for a subalgebra $\CN \subset \CM $, which captures non-commutative martingales in full generality. This also provides a unifying perspective for the manipulations we are doing.
For subalgebras $\CN \subset \CM \subset B(\CH)$, let $E:\CM \rightarrow \CN $ be the \textit{projection to subalgebra} $\CN$ (or \textit{the trace-preserving conditional expectation}), with the defining properties: 
\begin{align}
    E^\dagger[\vI]=\vI \quad \text{and}\quad  \tr[\vZ \vX] =   \tr[E[\vZ] \vX]\quad \text{for any} \quad \vX \in \CN, \vZ \in \CM.
\end{align}
Intuitively, $E$ is the analog of normalized partial trace $\vI_j \frac{\tr_j[\cdot ]}{\tr[\vI_j]}$. Using the notation natural in this setting, we reproduce the monotonicity.
\begin{fact}[Monotonicity of $p$-norm w.r.t projection to subalgebra]\label{fact:nc_convexity_subalgebra}
Consider finite dimensional subalgebras $\CN \subset \CM \subset B(\CH)$ and the corresponding projection to subalgebra $E:\CM \rightarrow \CN $.  Then, for any $\vZ \subset \CM$ and $p \ge 2$, 
\begin{align}
\lnormp{E[\vZ]}{p}\le \lnormp{\vZ}{p}.
\end{align}
\end{fact}
\begin{proof}[Proof of Fact~\ref{fact:nc_convexity_subalgebra}]
Again, consider the variational expression
\begin{align}
    \lnormp{\vX}{p} = \sup_{\lnormp{\vB_j}{q}\le 1, \vB \in \CN } \tr(\vX\vB^\dagger) \quad\text{where}\quad 1/p+1/q=1.
\end{align}
Note that the maximum is attained in the same algebra $\vB \in \CN$ (This can be seen by the structure theorem of finite-dimensional von Neumann algebra. $\CN$ is a direct sum of subsystems.). 
Then 
\begin{align}
    \lnormp{\vZ}{p} = \sup_{\lnormp{\vB'}{q}\le 1} \tr[\vZ\vB'^\dagger ] \ge \tr\L[ \vZ \vB^\dagger \R] =\tr\L[ E[\vZ] \vB^\dagger \R] = \lnormp{E[\vZ]}{p},
\end{align}
which is the advertised result.
\end{proof}
Through the same arguments, we conclude the discussion for subalgebras by the following.
\begin{prop}[Uniform smoothness for subalgebras]\label{prop:unif_subalgebra}
Consider finite dimensional subalgebras $\CN \subset \CM \subset \CB(\CH)$ and the corresponding projection to subalgebra $E:\CM \rightarrow \CN $.  Then, for any $\vZ\in \CM$,
\begin{align}
\lnormp{\vZ}{p}^2 \le \lnormp{E[\vZ]}{p}^2 +C_p \lnormp{(1-E)[\vZ]}{p}^2.
\end{align}
\end{prop}
This result was first obtained in~\cite{ricardXu16} in a more technical setting. We hope the discussion here provides a simple interpretation.
\subsection{Deriving Hypercontractivity}
Uniform smoothness, through a recursion, implies Hypercontractivity-like global formulas. \footnote{After this work was presented in QIP 2022, we thank Haonan Zhang for pointing to us the work~\cite{ricardXu16} that made an analogous observation.}
\begin{prop}[Moment estimates for local operator]\label{prop:general_pauli_expansion}
For an operator $\vF \in \CB(\BC^{d\otimes n})$ on $n$-qudits, and $p\ge 2$, $C_p := p-1$,
\begin{align}
\lnormp{\vF}{p}^2 &\le \sum_{S\subset \{m,\cdots,1\}} (C_p)^{|S|} \lnormp{\vF_S}{p}^2 \quad \text{where} \quad \vF_{S}:=\prod_{s\in S}(1-E_s)\prod_{s'\in S^c} E_{s'}[\vF].
\end{align}
The super-operator $E_s[\cdot]:=\vI_{s}\otimes \btr_s[\cdot] $ is the conditional expectation associated with the partial trace, and the set $S^c$ is the complement of set $S$. 
\end{prop}

Intuitively, for each subset $S$, the product of conditional expectations selects the component $\vF_S$ that is non-trivial on set $S$ and trivial on the complement $S^c$. Indeed, for qubits, the summand $\vF_S$ corresponds to the Pauli strings non-trivial on set $S$. 

Let us grasp this formula with some examples. For single-site Paulis, this resembles the usual concentration inequality for bounded independent summand (e.g., Hoeffdings' inequality).
\begin{eg}[1-local Paulis]
\normalfont For $\vF:=\sum_i \alpha_i\vec{\sigma}^z_i$,
\begin{align}
    \lnormp{\vF}{\bar{p}}^2 \le C_p \sum_i \alpha_i^2\lnormp{ \vec{\sigma}^z_i }{\bar{p}}^2=(p-1) \sum_i \labs{\alpha_i}^2\norm{ \vec{\sigma}^z_i }. 
\end{align}
By Markov's inequality, we obtain concentration for pure states drawn from a fixed orthonormal basis $\BE[\ket{\psi}\bra{\psi}]=\vI/\tr[\vI]$.
\begin{align}
    \Pr\L(\lnormp{\vH\ket{\psi}}{\ell_2}\ge \epsilon\R) \le (\frac{\normp{\vH}{\bar{p}} }{\epsilon})^{p} \le \L(\frac{\sqrt{(p-1)\sum_i \labs{\alpha_i}^2}}{\epsilon}\R)^p \le \e\cdot\exp \L(-\frac{\epsilon^2}{\e\sum_i \labs{\alpha_i}^2}\R).
\end{align}
In other words, the strength is typically bounded by the variance $\sum_i \labs{\alpha_i}^2$. If we take the states to be the computational basis, the tail bound applies to its eigenvalue distribution.
\end{eg}
Moreover, we obtain a similar sum-of-squares behavior for $4$-local Paulis, albeit with heavier tails.
\begin{eg}[4-local Paulis] \normalfont For $\vF:=\sum_{i>j>k>\ell} \alpha_{ijk\ell}\vec{\sigma}^x_i\vec{\sigma}^z_j \vec{\sigma}^y_k\vec{\sigma}^x_\ell$, 
\begin{align}
    \lnormp{\vF}{\bar{p}}^2 \le (C_p)^4 \sum_{i>j>k>\ell} \labs{\alpha_{ijk\ell}}^2\lnorm{\vec{\sigma}^x_i\vec{\sigma}^z_j \vec{\sigma}^y_k\vec{\sigma}^x_\ell}^2 
\end{align}
By Markov's inequality, we obtain 
\begin{align}
    \Pr\L(\lnormp{\vH\ket{\psi}}{\ell_2}\ge \epsilon\R) \le \left(\frac{\sqrt{(p-1)^4\sum_{i>j>k>\ell} \labs{\alpha_{ijk\ell}}^2}}{\epsilon} \right)^p \le \exp \left(-\sqrt{\frac{\epsilon}{\e \sqrt{\sum_{i>j>k>\ell} \labs{\alpha_{ijk\ell}}^2}}} \right)
\end{align}
which does not have a Gaussian tail anymore but still decays super-polynomially. 
\end{eg}
Let us now present the elementary proof.
\begin{proof}[Proof of Proposition~\ref{prop:general_pauli_expansion}]
Apply uniform smoothness (Proposition~\ref{prop:unif_subsystem_recap}) for $s=1,\cdots, n$ recursively. 
\begin{align}
\lnormp{\vF}{p}^2 &= \lnormp{\prod_{s=1}^n\L((1-E_s)+E_s \R)[\vF]}{p}^2\\ 
&\le \sum_{S\subset \{n,\cdots,1\}} (C_p)^{|S|} \lnormp{\prod_{s\in S}(1-E_s)\prod_{s'\in S^c} E_{s'}[\vF]}{p}^2
\end{align}
Each application produces two branches, and the total $2^n$ branches are labeled by the subsets $S\subset \{n,\cdots, 1\}$. We may regroup the conditional expectations since they are just taking partial traces of disjoint subsystems. 
The power $(C_p)^{|S|}$ comes from the times the branch $(1-E_s)$ appears.
\end{proof}
To compare with the existing Hypercontractivity for qubits, it is worth bringing Proposition~\ref{prop:general_pauli_expansion} to the following form.
\begin{cor}[Non-commutative Hypercontractivity]\label{cor:unif_implies_hyper} In the setting of Proposition~\ref{prop:general_pauli_expansion},
\begin{align}
\lnormp{\vF}{\bar{p}}^2 &\le \sum_{S\subset \{m,\cdots,1\}} (3C_p)^{|S|} \lnormp{\vF_S}{\bar{2}}^2 = \lnormp{\sqrt{3C_p}^{\labs{S}}\vF_{S}}{\bar{2}}. 
\end{align}
\end{cor}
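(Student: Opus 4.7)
The plan is to reduce from the $\bar{p}$-norm inequality of Proposition~\ref{prop:general_pauli_expansion} to the advertised $\bar{2}$-norm bound by paying a factor $3^{|S|}$ per subset $S$, exploiting the qubit Pauli structure of each local piece $\vF_S$. The proof would assemble three ingredients: a normalization, a single-site $\bar{p}\to\bar{2}$ comparison via Cauchy--Schwarz, and orthogonality of the $\vF_S$'s in the normalized Hilbert--Schmidt inner product.

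First, I would divide both sides of Proposition~\ref{prop:general_pauli_expansion} by $\lnormp{\vI}{p}^2$ to obtain $\lnormp{\vF}{\bar{p}}^2 \le \sum_{S\subset\{m,\cdots,1\}} (C_p)^{|S|}\lnormp{\vF_S}{\bar{p}}^2$. The substantive step is then the local norm comparison $\lnormp{\vF_S}{\bar{p}}^2 \le 3^{|S|}\lnormp{\vF_S}{\bar{2}}^2$. By construction, $\vF_S$ is supported on the qubits in $S$ and annihilated by the normalized partial trace $\btr_s$ on every $s\in S$, so its Pauli expansion contains only the $3^{|S|}$ Pauli strings $P$ whose single-site components lie in $\{\vsigma^x_s,\vsigma^y_s,\vsigma^z_s\}$ on each $s\in S$ and identity on $S^c$. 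Writing $\vF_S=\sum_P c_P P$ and using that each such $P$ is unitary (so $\lnormp{P}{\bar{p}}=1$) and that the $P$'s are orthonormal in the normalized Hilbert--Schmidt inner product, the triangle inequality and Cauchy--Schwarz give
\begin{align}
\lnormp{\vF_S}{\bar{p}} \;\le\; \sum_P |c_P| \;\le\; \sqrt{3^{|S|}}\,\bigg(\sum_P |c_P|^2\bigg)^{1/2} \;=\; \sqrt{3^{|S|}}\,\lnormp{\vF_S}{\bar{2}}.
\end{align}

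Substituting back yields $\lnormp{\vF}{\bar{p}}^2 \le \sum_S (3C_p)^{|S|}\lnormp{\vF_S}{\bar{2}}^2$, and the equality in the corollary is a Pythagorean rewriting: for $S\neq S'$, the operators $\vF_S$ and $\vF_{S'}$ have disjoint Pauli-support patterns (they differ in which qubits carry the identity), hence are orthogonal in the $\bar{2}$ inner product, so the sum of squares equals $\lnormp{\sum_S \sqrt{3C_p}^{|S|}\vF_S}{\bar{2}}^2$. I do not foresee any real obstacle here; the factor $3^{|S|}$ is the known, benign loss compared with the sharper Fact~\ref{fact:nc_hyper}, and reflects the fact that Cauchy--Schwarz replaces the tighter single-qubit hypercontractive inequality. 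The purpose of the corollary is to demonstrate that uniform smoothness already \emph{implies} a hypercontractive-type estimate up to an absolute constant per site, rather than to obtain the optimal constant.
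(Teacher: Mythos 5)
Your proposal is correct and mirrors the paper's own proof: both first normalize Proposition~\ref{prop:general_pauli_expansion} to get $\lnormp{\vF}{\bar p}^2\le\sum_S (C_p)^{|S|}\lnormp{\vF_S}{\bar p}^2$, then bound each $\lnormp{\vF_S}{\bar p}^2\le 3^{|S|}\lnormp{\vF_S}{\bar 2}^2$ by expanding $\vF_S$ over the $3^{|S|}$ Pauli strings on $S$, using that each coefficient-times-Pauli has $\bar p$-norm independent of $p$, then Cauchy--Schwarz and Parseval. Your additional remark that the final equality in the corollary follows from $\bar 2$-orthogonality of the $\vF_S$ across distinct $S$ is a correct clarification of a step the paper leaves implicit.
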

This is equivalent to the existing bound (Fact~\ref{fact:nc_hyper}) up to slightly worse constants. However, the martingale formulation streamlines a simple proof (Proposition~\ref{prop:unif_subsystem_recap}) and, more importantly, allows us to adapt to different settings in the subsequent sections. 
\begin{proof}[Proof of Corollary~\ref{cor:unif_implies_hyper}]
Bound the normalized p-norm $\lnormp{\vF}{\bar{p}}:=\frac{\lnormp{\vF}{p}}{\lnormp{\vI}{p}}$ by Pauli expansion $\sigma_S:=\{\vsigma^x,\vsigma^y,\vsigma^z\}^{\labs{S}}$
\begin{align}
      \lnormp{\vF_S}{\bar{p}}^2\le \left(\sum_{\sigma_S} \lnormp{\vF_{\sigma_S}}{\bar{p}} \right)^2=  \left(\sum_{\sigma_S} \lnormp{\vF_{\sigma_S}}{\bar{2}} \right)^2 \le (\sum_{\sigma_S} )\cdot \sum_{\sigma_S} \lnormp{\vF_{\sigma_S}}{\bar{2}}^2 = 3^{\labs{S}}\cdot \lnormp{\vF_{S}}{\bar{2}}^2,
\end{align}
which is the advertised result. Intuitively, the extra factor we pay is the number of distinct Pauli strings $3^{\labs{S}}$.
\end{proof}

\subsection{Product Background States}\label{sec:product_bg}
Our previous discussion focused on the unweighted p-norm. In this section, we discuss the \textit{weighted p-norms}. For $0\le s\le 1$, define
\begin{align}
    \lnormp{\vF}{p,\vrho,s}&: =\lnormp{\vrho^{\frac{1-s}{p} }\vF\vrho^{\frac{s}{p}}}{p},
\end{align}
where $s=1/2$~\cite{Beigi_2013} and $s=1$ are the notable cases
\begin{align}
    \lnormp{\vF}{p,\vrho,\frac{1}{2}}&: =\lnormp{\vrho^{\frac{1}{2p}}\vF\vrho^{\frac{1}{2p}}}{p}\quad \text{and}\quad 
    \lnormp{\vF}{p,\vrho,1}: =\lnormp{\vF\vrho^{\frac{1}{p}}}{p}.
\end{align}
The latter feeds into the concentration for typical input states drawn from an ensemble whose average is $\vrho$ (Proposition~\ref{prop:typical_Schatten}). Even though not applied elsewhere in this paper, we keep the general $0\le s\le 1$ expression in the following arguments. Uniform smoothness generalizes to the $\vrho$-weighted $p$-norm for factorized state $\vrho=\otimes_i \vrho_i$. The martingale condition now depends on the state $\vrho_i$.
\begin{prop}[Uniform smoothness for subsystem, weighted]\label{prop:unif_subsystem_weighted}
Consider product state $\vrho= \vrho_j\otimes \vrho_i$ and matrices $\vX, \vY \in \CB(\CH_i\otimes\CH_j)$ that satisfy the martingale condition
$\tr_i(\vrho_i\vY) = 0; \vX= \vX_j\otimes \vI_i$. For $p \ge 2$, $C_p=p-1$,
\begin{equation}
\lVert \vX + \vY\rVert_{p,\vrho,s}^2\le \lVert \vX \rVert_{p,\vrho,s}^2  + C_p\lVert \vY\rVert_{p,\vrho,s}^2.
\end{equation}
\end{prop}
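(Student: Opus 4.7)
The plan is to follow the two-step structure of Proposition~\ref{prop:unif_subsystem_recap} verbatim, replacing the Schatten norm by its $(\vrho,s)$-twisted cousin. Setting $\tilde{\vX}:=\vrho^{(1-s)/p}\vX\vrho^{s/p}$ and $\tilde{\vY}:=\vrho^{(1-s)/p}\vY\vrho^{s/p}$, the primitive Schatten uniform smoothness (Fact~\ref{fact:unif_schatten_recap}) applied to $\tilde{\vX},\tilde{\vY}$ reads, by definition of $\lnormp{\cdot}{p,\vrho,s}$,
\begin{equation}
\left[\tfrac{1}{2}\bigl(\lnormp{\vX+\vY}{p,\vrho,s}^p+\lnormp{\vX-\vY}{p,\vrho,s}^p\bigr)\right]^{2/p}\le \lnormp{\vX}{p,\vrho,s}^2+C_p\lnormp{\vY}{p,\vrho,s}^2.
\end{equation}
The only real work is a weighted analog of Proposition~\ref{prop:nc_convexity}, namely $\lnormp{\vX}{p,\vrho,s}\le \lnormp{\vX\pm\vY}{p,\vrho,s}$. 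Granted this, the standard chain (monotonicity, then Lyapunov, then the display above) yields the coarse constant $2C_p$, and the $\vZ=\vY/n$ rescaling/telescoping argument of~\cite[Lemma~A.1]{HNTR20:Matrix-Product} upgrades $2C_p$ to the sharp $C_p$ in the limit $n\to\infty$, exactly as in the unweighted proof.

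For the weighted monotonicity, I would proceed through the variational formula
\begin{equation}
\lnormp{\vM}{p,\vrho,s}=\sup_{\lnormp{\vN}{q}\le 1}\bigl|\tr[\vrho^{s/p}\vM^\dagger \vrho^{(1-s)/p}\vN]\bigr|\qquad (1/p+1/q=1)
\end{equation}
using the test element $\vN:=\vN_j^\star\otimes \vrho_i^{1/q}$, where $\vN_j^\star$ is an optimizer for $\lnormp{\vX_j}{p,\vrho_j,s}$; admissibility $\lnormp{\vN}{q}\le 1$ is immediate since $\lnormp{\vrho_i^{1/q}}{q}^q=\tr\vrho_i=1$. The $\vX$-contribution evaluates cleanly to $\lnormp{\vX_j}{p,\vrho_j,s}=\lnormp{\vX}{p,\vrho,s}$ because the three exponents of $\vrho_i$ sum to $s/p+(1-s)/p+1/q=1$ and collapse into $\tr\vrho_i=1$. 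The $\vY$-contribution, after expanding $\vY=\sum_k\vY_j^k\otimes\vY_i^k$ and using cyclicity inside the partial trace on $i$, boils down to $\tr_j\bigl[\vrho_j^{s/p}(\tr_i[\vrho_i\vY])^\dagger\vrho_j^{(1-s)/p}\vN_j^\star\bigr]$, which vanishes by the hypothesis $\tr_i[\vrho_i\vY]=0$ (after one Hermitian conjugation on system $j$).

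The main obstacle is exactly this monotonicity step: one has to engineer $\vN_j^\star\otimes\vrho_i^{1/q}$ so that the three fractional powers of $\vrho_i$ on system $i$ collapse under cyclicity and so that the hypothesis $\tr_i[\vrho_i\vY]=0$ kills the cross-term. I note in passing that one cannot simply invoke the subalgebra version (Proposition~\ref{prop:unif_subalgebra}) after twisting by $\vA\mapsto \vrho^{(1-s)/p}\vA\vrho^{s/p}$, since that twist fails to be an algebra homomorphism; the variational route above is the cleanest workaround. Everything else---Fact~\ref{fact:unif_schatten_recap} and the rescaling trick---transfers verbatim from the unweighted proof.
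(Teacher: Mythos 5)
Your proof is correct and takes essentially the same route as the paper: prove a $(\vrho,s)$-weighted monotonicity via the variational formula with the test element $\vN_j^\star\otimes\vrho_i^{1/q}$ (the paper calls this Fact~\ref{fact:nc_convexity_weighted}), then combine with Fact~\ref{fact:unif_schatten_recap} applied to the twisted matrices and the rescaling argument of~\cite[Lemma~A.1]{HNTR20:Matrix-Product}. Your version is in fact slightly cleaner on the bookkeeping (the exponents $s/p$, $(1-s)/p$, $1/q$ summing to $1$ are stated correctly, whereas the paper's displayed variational formula has a stray factor of $2$ in the denominators), and your remark that the subalgebra route fails because the twist is not an algebra homomorphism is a correct and useful observation.
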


In a similar proof, all we need is to modify monotonicity.
\begin{fact}[monotonicity w.r.t partial trace]\label{fact:nc_convexity_weighted}
For matrices $\vX$ and $\vY$ satisfying $\tr_i(\vrho_i\vY) = 0$ and $\vX= \vX_j\otimes \vI_i$, $p \ge 2$, 
\begin{align}
\lnormp{\vX}{p,\vrho,s}\le \lnormp{\vX+\vY}{p,\vrho,s}.
\end{align}
\end{fact}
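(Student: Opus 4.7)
My plan is to adapt the variational proof of Proposition~\ref{prop:nc_convexity} line-for-line to the weighted setting. The weighted norm is defined through an ordinary Schatten norm of a sandwiched operator, so the Schatten duality $\lnormp{\vec{A}}{p}=\sup_{\lnormp{\vec{B}}{q}\le 1}\tr(\vec{A}\vec{B})$, $1/p+1/q=1$, gives
\begin{align}
\lnormp{\vX+\vY}{p,\vrho,s}
=\sup_{\lnormp{\vec{B}}{q}\le 1}\tr\!\L(\vrho^{\frac{1-s}{p}}(\vX+\vY)\vrho^{\frac{s}{p}}\vec{B}\R).
\end{align}
The goal is to display a single admissible $\vec{B}$ for which this trace already reaches $\lnormp{\vX}{p,\vrho,s}$; lower-bounding the supremum by that value then furnishes the claimed inequality.

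\textbf{Building the test operator.} Since $\vX=\vX_j\otimes\vI_i$ and $\vrho=\vrho_j\otimes\vrho_i$ factorise, the sandwich factorises as $\vrho^{\frac{1-s}{p}}\vX\vrho^{\frac{s}{p}}=\L(\vrho_j^{\frac{1-s}{p}}\vX_j\vrho_j^{\frac{s}{p}}\R)\otimes\vrho_i^{1/p}$, and $\tr(\vrho_i)=1$ makes $\lnormp{\vrho_i^{1/p}}{p}=1$. This suggests the product-form test operator $\vec{B}=\vec{B}_j\otimes\vrho_i^{1/q}$, where $\vec{B}_j\in\CB(\CH_j)$ is a dual maximiser of $\lnormp{\vX_j}{p,\vrho_j,s}$. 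Admissibility holds because $\lnormp{\vrho_i^{1/q}}{q}=\tr(\vrho_i)^{1/q}=1$, and by construction the $\vX$-contribution to the trace equals $\lnormp{\vX}{p,\vrho,s}$.

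\textbf{Killing the $\vY$-term.} By cyclicity, the $\vY$-contribution collects its $\vrho_i$-exponents via $s/p+1/q+(1-s)/p=1/p+1/q=1$, so it reduces to $\tr\!\L(\vY\cdot(\vec{M}_j\otimes\vrho_i)\R)$ with $\vec{M}_j:=\vrho_j^{s/p}\vec{B}_j\vrho_j^{(1-s)/p}$. Pulling the $\vec{M}_j$-factor out of $\tr_i$ (it acts as identity on $\CH_i$) leaves $\tr_j\!\L(\vec{M}_j\,\tr_i(\vrho_i\vY)\R)$, which vanishes by the hypothesis $\tr_i(\vrho_i\vY)=0$. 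Summing the $\vX$- and $\vY$-contributions yields $\lnormp{\vX+\vY}{p,\vrho,s}\ge\lnormp{\vX}{p,\vrho,s}$.

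\textbf{Main obstacle.} Conceptually the argument is just bookkeeping once one spots that $\vrho_i^{1/q}$ is the correct dual factor on $\CH_i$; the Hölder identity $1/p+1/q=1$ is exactly what regenerates the weight $\vrho_i$ and unlocks the hypothesis. The only mildly technical point is arguing that the dual maximiser of $\lnormp{\vX_j}{p,\vrho_j,s}$ can be chosen inside $\CB(\CH_j)$ rather than in the ambient $\CB(\CH_j\otimes\CH_i)$, which parallels the corresponding step in Proposition~\ref{prop:nc_convexity} and follows from a direct-sum/structure argument. The Haar-averaging alternative from that proposition does not transfer as cleanly, since averaging over unitaries commuting with $\vrho_i$ does not reproduce the $\vrho_i$-weighted partial trace when $\vrho_i$ is not maximally mixed; this is why the variational route is the preferred path.
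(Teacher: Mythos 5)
Your proof is correct and takes essentially the same route as the paper's: both expand $\lnormp{\vX+\vY}{p,\vrho,s}$ by Schatten duality, plug in the test operator $\vec{B}_j\otimes\vrho_i^{1/q}$ (the paper writes a normalization $\normp{\vrho_i^{1/q}}{q}$ that equals $1$, as you observe), and use the exponent identity $s/p+1/q+(1-s)/p=1$ together with $\tr_i(\vrho_i\vY)=0$ to kill the $\vY$-contribution. Your final caveat about locating the dual maximizer inside $\CB(\CH_j)$ is not actually an obstacle here — $\lnormp{\vX}{p,\vrho,s}=\lnormp{\vX_j}{p,\vrho_j,s}$ holds by the product structure of $\vrho$ and $\tr(\vrho_i)=1$, so $\vec{B}_j$ is by construction an operator on $\CH_j$ alone — but this does not affect the validity of the argument.
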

\begin{proof}
Once again, plug in the variational expression
\begin{align}
    \lnormp{\vrho^{\frac{1-s}{2p}}_j\vX_j \vrho^{\frac{s}{2p}}_j}{p} = \sup_{\lnormp{\vB_j}{q}\le 1} \tr\L[\vrho^{\frac{1-s}{2p}}_j\vX_j \vrho^{\frac{s}{2p}}_j \vB^\dagger_j\R] \quad \text{for} \quad 1/p+1/q=1.
\end{align}
Suppose the maximum is attained at some $\vB_j$. Then by Proposition~\ref{fact:nc_convexity},
\begin{align}
    \lnormp{\vX+\vY}{p,\vrho,s} &= \sup_{\lnormp{\vB}{q}\le 1} \tr\L(\vrho^{\frac{1-s}{2p}}_j(\vX_j+\vY_j) \vrho^{\frac{s}{2p}}_j \vB^\dagger\R) \\
    &\ge \tr\L[ \vrho^{\frac{1-s}{2p}}_j(\vX_j+\vY_j) \vrho^{\frac{s}{2p}}_j\cdot \vB^\dagger_j\otimes \frac{\vrho_i^{\frac{1}{q}}}{\normp{\vrho_i^{\frac{1}{q}}}{q}}\R] = \lnormp{\vX_j\otimes \vI_i}{p,\vrho,s}.
\end{align}
In the last inequality, we used the partially traceless assumption $\tr_i [\vrho_i\vY] =0$ and that 
\begin{align}
 \vrho^{\frac{1-s}{2p}} (\vX_j\otimes \vI_i )\vrho^{\frac{s}{2p}} \quad \text{has maximizer} \quad   \vB_j\otimes \frac{\vrho_i^{{1}/{q}}}{\normp{\vrho_i^{{1}/{q}}}{q}}.
\end{align}
\end{proof}
Combining the monotonicity with Fact~\ref{fact:unif_schatten}, we obtain uniform smoothness (Proposition~\ref{prop:unif_subsystem_weighted}). Moreover, we automatically get a weighted version of a Hypercontractivity-like formula. 
Let us first define the appropriate operator re-centered w.r.t. the background 
\begin{align}
    \vO^\eta:= (1-\eta)\ket{1}\bra{1}-\eta\ket{0}\bra{0} \quad \text{such}\quad \tr[\vrho_i\vO_i^\eta]=0
\end{align}
as a ``shifted'' Pauli $\vsigma^{z}$. Accordingly, we shift the conditional expectation
\begin{align}
    E_s[\cdot] := \vI_s \otimes \tr[\vrho_s \cdot]. 
\end{align}

\begin{prop}[Moment estimates for local operator, $\vrho$-weighted] \label{prop:general_pauli_expansion_weighted}
For an operator $\vF \in \CB(\BC^{d\otimes n})$ on $n$-qudits, product state $\vrho = \otimes_i \vrho_i$, and $p\ge 2$, $C_p := p-1$,
\begin{align}
\lnormp{\vF}{p,\vrho,s}^2 &\le \sum_{S\subset \{m,\cdots,1\}} (C_p)^{|S|} \lnormp{\vF_S}{p,\vrho,s}^2 \quad \text{where} \quad \vF_{S}:=\prod_{s\in S}(1-E_s)\prod_{s'\in S^c} E_{s'}[\vF].
\end{align}
The set $S^c$ is the complement of set $S$.
\end{prop}

\subsubsection{Low-particle number subspace}
Why did we study product state as the background? Interestingly, it will tell us about concentration when restricting to \textit{low particle number subspaces}.  Consider the following two operators: the projector $\vP_m$ to the $m$-particle subspace of $n$-qubit Hilbert space 
\begin{align}
    \vP_m :=\sum_{\#(\ket{1}) =m} \L( \ket{0}\cdots \ket{1}\R) \L( \cdot\R)^\dagger = \sum_{\#(\ket{1}) =m} \ket{0}\bra{0}\cdots \ket{1}\bra{1}
\end{align}
and the product state
\begin{align}
    \vrho_{\eta}=\otimes_i \vrho_{i}=\bigotimes_i \bigg(\eta\ket{1}\bra{1}+(1-\eta)\ket{0}\bra{0}\bigg)_i\quad \text{where}\quad \eta:= \frac{m}{n}.
\end{align} 
We can control the p-norm with the low-particle subspace, which we care about, with the product state, which we can calculate.
\begin{prop}\label{prop:projector_to_product}For any operator $\vF$,
\begin{align}
   \displaystyle \lnormp{\vF}{p,\bar{\vP},s }  \le 
   \lnormp{\vF}{p,\vrho_{\eta},s} \cdot\left(\poly(n,m)\right)^{1/p}.
\end{align}
\end{prop}
Note the factor $\poly(n,m)$ is mild since they are suppressed as long as $p\gtrsim \log(\poly(n,m))$. 
\begin{proof}[Proof of Proposition~\ref{prop:projector_to_product}]
By Stirling's approximation, the operators obey positive semi-definite order
\begin{align}
    \bar{\vP_m}:= \frac{\vP_m}{\tr[\vP_m]} \le \vrho_{\eta}\cdot b(n,m)\quad \text{where} \quad b(n,m) = (\eta)^m(1-\eta)^{n-m}\binom{n}{m}=\CO(\poly(n,m)).
\end{align}
This gives the advertised result by Fact~\ref{fact:mono_weight}.
\end{proof}
Fact~\ref{fact:mono_weight}, proved below, is that weighted norms are monotone w.r.t the state. In our application for Trotter error, the Hamiltonian is often particle number preserving, and the following becomes trivial. But for potential applications in other contexts, we include a quick proof when the operator $\vF$ and state $\vrho$ are not commuting.

\begin{fact}[monotonicity of weight]\label{fact:mono_weight}
For positive semi-definite operators $\vrho \ge \vsigma \ge 0$ (presumably not normalized),
\begin{align}
    \lnormp{\vO}{p,\vrho,s}\ge \lnormp{\vO}{p,\vsigma,s}. 
\end{align}
\end{fact}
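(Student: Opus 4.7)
The plan is to rewrite the $p$-th power of the weighted norm as a trace of a positive operator, then exploit the L\"owner--Heinz theorem together with Weyl's monotonicity to compare eigenvalues. Assume $p\ge 2$ (the regime used in the paper). First, I would use the standard identity $\lnormp{\vX}{p}^p=\tr[(\vX\vX^\dagger)^{p/2}]$ applied to $\vX=\vrho^{(1-s)/p}\vO\vrho^{s/p}$, giving
\begin{equation*}
\lnormp{\vO}{p,\vrho,s}^p = \tr\L[\L(\vrho^{(1-s)/p}\vO\,\vrho^{2s/p}\vO^\dagger\vrho^{(1-s)/p}\R)^{p/2}\R].
\end{equation*}
Cyclic invariance of the spectrum lets me identify the eigenvalues of the operator inside the trace with those of the product $\vP(\vrho)\vQ(\vrho)$, where $\vP(\vrho):=\vO\vrho^{2s/p}\vO^\dagger$ and $\vQ(\vrho):=\vrho^{2(1-s)/p}$ are both positive semidefinite; equivalently, with those of the self-adjoint similarity $\vQ(\vrho)^{1/2}\vP(\vrho)\vQ(\vrho)^{1/2}$, which has non-negative real spectrum.

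The next step is operator monotonicity. Since $p\ge 2$ and $s\in[0,1]$, the exponents $2s/p$ and $2(1-s)/p$ both lie in $[0,1]$, so by L\"owner--Heinz the maps $\vrho\mapsto\vrho^{2s/p}$ and $\vrho\mapsto\vrho^{2(1-s)/p}$ are operator monotone. Thus $\vrho\ge\vsigma$ yields $\vP(\vrho)\ge\vP(\vsigma)$ (conjugation by $\vO$ preserves the order) and $\vQ(\vrho)\ge\vQ(\vsigma)$. Chaining Weyl's monotonicity in two steps, and inserting a cyclic permutation in between to swap which factor is being replaced,
\begin{align*}
\lambda_k\!\L[\vQ(\vrho)^{1/2}\vP(\vrho)\vQ(\vrho)^{1/2}\R]
&\ge \lambda_k\!\L[\vQ(\vrho)^{1/2}\vP(\vsigma)\vQ(\vrho)^{1/2}\R] \\
&= \lambda_k\!\L[\vP(\vsigma)^{1/2}\vQ(\vrho)\vP(\vsigma)^{1/2}\R] \\
&\ge \lambda_k\!\L[\vP(\vsigma)^{1/2}\vQ(\vsigma)\vP(\vsigma)^{1/2}\R].
\end{align*}
Because $x\mapsto x^{p/2}$ is monotone on $[0,\infty)$, summing $\lambda_k^{p/2}$ preserves the inequality, which produces $\tr[(\vP(\vrho)\vQ(\vrho))^{p/2}]\ge \tr[(\vP(\vsigma)\vQ(\vsigma))^{p/2}]$, and taking $p$-th roots yields the claim.

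The main obstacle I anticipate is the bookkeeping around the non-self-adjoint product $\vP\vQ$: the symbol $(\vP\vQ)^{p/2}$ is only meaningful because the product is similar to the positive operator $\vQ^{1/2}\vP\vQ^{1/2}$, so every eigenvalue statement should be rephrased in this self-adjoint form before Weyl is invoked. A secondary point is that the restriction $p\ge 2$ is essential: it is precisely what places both exponents $2s/p$ and $2(1-s)/p$ in the L\"owner--Heinz range $[0,1]$. The regime $p<2$ (not needed here) would require a separate argument, e.g. complex interpolation between the trivially monotone endpoint cases $s=0$ and $s=1$ where the norm reduces to $\lnormp{\vrho^{1/p}\vO}{p}$ and $\lnormp{\vO\vrho^{1/p}}{p}$, respectively.
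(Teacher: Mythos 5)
Your proof is correct, and it reaches the same conclusion via a genuinely different key tool. Structurally the two arguments share the same skeleton: both replace $\vrho$ by $\vsigma$ in one factor, cyclically move it past $\vO$ (the paper via $\lnormp{\vX^\dagger\vX}{p/2}=\lnormp{\vX\vX^\dagger}{p/2}$, you via the $AB\sim BA$ eigenvalue identity), and replace $\vrho$ by $\vsigma$ in the second factor. The difference is how each single replacement is justified. The paper quotes the Carlen--Lieb concavity theorem (their Fact~\ref{fact:poly_concavity}, cited as \cite[Theorem~1.1]{Carlen_2008}), which says $\vA\mapsto\tr[(\vB^\dagger\vA^{1/q}\vB)^{rq}]$ is concave, hence monotone; you instead combine L\"owner--Heinz operator monotonicity of $t\mapsto t^{2s/p}$ and $t\mapsto t^{2(1-s)/p}$ (valid because $p\ge2$ forces both exponents into $[0,1]$) with Weyl's monotonicity of ordered eigenvalues and the fact that $x\mapsto x^{p/2}$ is increasing. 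Your route is more elementary and self-contained, avoiding the deeper Lieb-type concavity machinery at the cost of explicitly restricting to $p\ge2$ (which you correctly flag as the regime the paper uses anyway). The paper's route is heavier but gets concavity, not just monotonicity, and reuses a fact they already invoke elsewhere (Proposition~\ref{prop:typical_Schatten}). One small bookkeeping point you handled correctly: both $\vQ(\vrho)^{1/2}\vP(\vsigma)\vQ(\vrho)^{1/2}$ and $\vP(\vsigma)^{1/2}\vQ(\vrho)\vP(\vsigma)^{1/2}$ are square matrices of the same size, so the $AB\sim BA$ identity gives equality of the full ordered eigenvalue sequences, not merely the nonzero part — this is what makes the chained Weyl inequalities legitimate.
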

This is closely related to a polynomial version of Lieb's concavity. 
\begin{fact}[{\cite[Theorem~1.1]{Carlen_2008}} ] \label{fact:poly_concavity}
For operator $\vA\ge 0$, and $q\ge 1$, $r\le 1$, the function
\begin{align}
f(\vA):=\tr[(\vB^\dagger \vA^{\frac{1}{q}}\vB)^{rq}]    
\end{align} 
is concave (and hence monotone) in $\vA$.
\end{fact}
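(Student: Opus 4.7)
The plan is to view $f(\vA) = \tr[(\vB^\dagger \vA^{1/q}\vB)^{rq}]$ as a two-stage composition: an operator-concave inner map $g(\vA) := \vB^\dagger \vA^{1/q}\vB$ followed by the scalar trace function $\vY \mapsto \tr[\vY^{rq}]$. Reparameterize $p := 1/q \in (0,1]$ and $s := rq$, so the hypotheses read $p \in (0,1]$ and $ps = r \le 1$.

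\emph{Inner layer.} The map $g(\vA) = \vB^\dagger \vA^p \vB$ is operator concave on the PSD cone. This is Ando's theorem, which upgrades the operator concavity of the scalar $x \mapsto x^p$ for $p \in (0,1]$ to the compressed form $\vB^\dagger(\cdot)^p\vB$; equivalently, it follows from the integral representation $x^p = c_p \int_0^\infty \lambda^{p-1}\frac{x}{\lambda+x}\, d\lambda$ combined with the operator concavity of each $x \mapsto x/(\lambda+x)$.

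\emph{Outer layer, easy regime $s \le 1$.} The scalar function $y \mapsto y^s$ is concave and monotone, so $\vY \mapsto \tr[\vY^s]$ is both concave in $\vY \ge 0$ (Peierls' inequality / Weyl majorization) and monotone in the operator order. Composing with the operator-concave $g$ immediately yields concavity of $f$ in this regime.

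\emph{Outer layer, the main obstacle: $s > 1$ with $ps \le 1$.} Now $\vY \mapsto \tr[\vY^s]$ is convex in $\vY$ and the clean layer-by-layer composition breaks; this regime is the real content of the Carlen--Lieb theorem. The strategy I would follow is to study instead the $s$-th root $h(\vA) := \norm{\vB^\dagger \vA^p \vB}_s$, expand it via H\"older duality as $h(\vA) = \sup\{\tr[\vA^p(\vB\vY\vB^\dagger)] : \vY \ge 0,\ \norm{\vY}_{s/(s-1)} \le 1\}$, and observe that for each fixed $\vY$ the integrand is concave in $\vA$ by Lieb's concavity theorem ($\vA \mapsto \tr[\vA^p \vK]$ is concave for $\vK \ge 0$, $p \in (0,1]$). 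A supremum of concave functions is not automatically concave, so the technical heart is a minimax / Legendre-transform manipulation converting the expression into an infimum of concave functions in $\vA$; the hypothesis $ps \le 1$ is exactly what then permits raising $h$ to the $s$-th power at the end without destroying concavity (via homogeneity-of-degree-$p$ of $h$ together with concavity of $y \mapsto y^s$ on the resulting range). Monotonicity comes for free: $f$ is nonnegative and concave on the PSD cone, and any such function is automatically monotone in operator order (reduce to the scalar case along the ray $\vA + t\vC$ with $\vC \ge 0$; a nonnegative concave function on $[0,\infty)$ must have non-decreasing slope, else concavity forces it to $-\infty$).
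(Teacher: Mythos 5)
The paper never proves this statement: it is imported wholesale as \cite[Theorem~1.1]{Carlen_2008} and used as a black box, so there is no in-paper argument to compare against and your proposal has to stand on its own. In the regime $s:=rq\le 1$ it does: Ando's operator concavity of $\vA\mapsto\vB^\dagger\vA^{1/q}\vB$ composed with the monotone, concave trace functional $\vY\mapsto\tr[\vY^{s}]$ gives concavity, and your derivation of monotonicity from nonnegativity plus concavity along rays $\vA+t\vC$, $\vC\ge 0$, is a correct and clean way to get the ``hence monotone'' clause.

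The gap is the regime $s=rq>1$, which is precisely the regime the paper actually uses (in Proposition~\ref{prop:typical_Schatten} one has $q=p\ge 2$ and $r=1/2$, so $rq=p/2>1$; likewise in the weighted-norm monotonicity fact). You correctly diagnose why the layered argument fails there and why the H\"older-dual representation of $\lnormp{\vB^\dagger\vA^{p}\vB}{s}$ as a supremum of Lieb-concave functions proves nothing, but you then label the conversion of that supremum into an infimum of concave functions ``the technical heart'' and do not carry it out. That conversion \emph{is} the Carlen--Lieb theorem: their proof constructs an explicit variational formula writing the degree-one-homogeneous normalization of the functional as an infimum over auxiliary positive matrices, each term concave by Lieb/Epstein-type results, and the constraint $r=ps\le 1$ enters there and in the final renormalization. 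On that last point your sketch is also internally off: you propose to finish by ``raising $h$ to the $s$-th power \ldots via concavity of $y\mapsto y^{s}$,'' but $y\mapsto y^{s}$ is convex for $s>1$; the legitimate final step is to pass to the degree-one object $h^{1/p}$, prove \emph{that} is concave, and then apply the concave increasing map $t\mapsto t^{ps}$ with $ps=r\le 1$. Moreover, concavity of $h$ together with degree-$p$ homogeneity does not by itself yield concavity of $h^{1/p}$ (the relevant scalar inequality runs the wrong way), so even the reduction you gesture at needs the superadditivity argument done for the homogeneous-degree-one functional directly. As written, the proposal establishes the Fact only for $rq\le 1$ and leaves the case the paper depends on unproven.
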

We can now quickly adapt to our settings to present a proof.
\begin{proof}[Proof of Fact~\ref{fact:mono_weight}]
\begin{align}
    \lnormp{\vO}{p,\vrho}^p = \tr\L[\left(\vrho^{\frac{s}{2p}}\vO^\dagger \vrho^{\frac{1-s}{p}}\vO \vrho^{\frac{s}{2p}}\right)^{\frac{p}{2}}\R] &\ge \tr\L[\left(\vrho^{\frac{s}{2p}}\vO^\dagger \vsigma^{\frac{1-s}{p}}\vO \vrho^{\frac{s}{2p}}\right)^{\frac{p}{2}}\R]\\
    &=\tr\L[\left(\vsigma^{\frac{1-s}{2p}}\vO\vrho^{\frac{s}{p}}\vO^\dagger \vsigma^{\frac{1-s}{2p}}\right)^{\frac{p}{2}}\R] \ge\tr\L[\left(\vsigma^{\frac{1-s}{2p}}\vO\vsigma^{\frac{s}{p}}\vO^\dagger \vsigma^{\frac{1-s}{2p}}\right)^{\frac{p}{2}}\R] = \lnormp{\vO}{p,\vsigma,s}^p\notag. 
\end{align}
Both inequalities use Fact~\ref{fact:poly_concavity} for $q=\frac{p}{1-s}\ge 1$ and $r = \frac{1-s}{2} \le 1$ and for $q=\frac{p}{s}\ge 1$ and $r = \frac{s}{2} \le 1$. The second equality is $\lnormp{\vX^\dagger \vX}{\frac{p}{2}}= \lnormp{\vX \vX^\dagger}{\frac{p}{2}}$. This is the advertised result.
\end{proof}

\subsection{Fermionic Operators}
Uniform smoothness and Hypercontractivity apply to Fermions. Consider the Jordan-Wigner transform
\begin{align}
    \va_s &:= -  \prod_{j=1}^{s-1} \vI_j \cdot \vec{\sigma}^-_s\cdot \prod_{i=s+1}^{n}\vec{\sigma}^z_i\quad \text{where}\quad \vec{\sigma}^-= \ket{0}\bra{1},\\
    \va^\dagger_s &:= -  \prod_{j=1}^{s-1} \vI_j \cdot \vec{\sigma}^+_s\cdot \prod_{i=s+1}^{n}\vec{\sigma}^z_i\quad \text{where}\quad \vec{\sigma}^+= \ket{1}\bra{0}.
\end{align}
These operators also linearly span the full algebra on $n$-qubits $\CB(\CH(2^n))$ by products $\prod_s(\va_s,\va^\dagger_s,\va_s\va^\dagger_s,\vI_s)$. In this form, Fermions are not local operators due to the Pauli-Z strings. Fortunately, all we need for uniform smoothness is the martingale property (conditionally zero-mean). We derive an analogous 2-norm-like bound with a minor tweak due to Jordan-Wigner strings. The following result was known in~\cite[Theorem~4]{Carlen_hyper_fermions}\footnote{It uses the primitive uniform smoothness (Fact~\ref{fact:unif_schatten}).} but we hope the presented derivation is more transparent. We will also extend it in Corollary~\ref{cor:a_z_uniform}.
\begin{cor}[Hypercontractivity for Fermions]\label{cor:a_only_uniform} On $n$-qubits, consider an operator without terms $\va_i\va^\dagger_i$.  Expand it $\vA = \sum_{S\subset \{n,\cdots,1\}} \vA_{S}$ by subsets $S$ indicated by Fermionic operators $\{\va^\dagger,\va\}$. Then, for $p\ge 2$, $C_p=p-1$,
\begin{align}
\lnormp{\vA}{p}^2 &\le \sum_{S\subset \{m,\cdots,1\}} (C_p)^{|S|} \lnormp{\vA_S}{p}^2.
\end{align}
\end{cor}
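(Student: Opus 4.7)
The plan is to induct on the number of Fermionic modes $m$, peeling off the lowest mode at each step by applying uniform smoothness for subsystems (Proposition~\ref{prop:unif_subsystem_recap}) on qubit $1$. The base $m=0$ is trivial. For the inductive step, the Jordan-Wigner form is crucial: for every $s \ge 2$, both $\va_s$ and $\va_s^\dagger$ act as $\vI_1$ on qubit $1$ (their JW strings live strictly above qubit $s$), while $\va_1$ and $\va_1^\dagger$ restrict to the traceless operators $\vsigma^-_1$ and $\vsigma^+_1$. Splitting $\vA = \vA^{(1\notin)} + \vA^{(1\in)}$ by whether the Fermionic index set contains $1$, the first piece factors as $\vI_1 \otimes \vA'$ with $\vA'$ a no-square Fermionic operator on modes $\{2,\ldots,m\}$, and $\btr_1[\vA^{(1\in)}] = 0$. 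Proposition~\ref{prop:unif_subsystem_recap} then gives $\lnormp{\vA}{p}^2 \le \lnormp{\vA^{(1\notin)}}{p}^2 + C_p \lnormp{\vA^{(1\in)}}{p}^2$.

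For the first term, the induction hypothesis applied to $\vA'$ (with the $\lnormp{\vI_1}{p}$ factor distributing cleanly, since every summand $\vA_S$ with $1 \notin S$ likewise factors as $\vI_1 \otimes \vA'_S$) yields $\lnormp{\vA^{(1\notin)}}{p}^2 \le \sum_{S \subset \{2,\ldots,m\}} C_p^{|S|} \lnormp{\vA_S}{p}^2$. For the harder second term, anticommuting the Jordan-Wigner string $\vZ := \prod_{i=2}^n \vsigma^z_i$ past the remaining Fermionic factors yields $\vA^{(1\in)} = \vsigma^-_1 \otimes \vF^-\vZ + \vsigma^+_1 \otimes \vF^+\vZ$, where $\vF^\pm$ are no-square Fermionic operators on modes $\{2,\ldots,m\}$ built from the coefficients of $\va_1$ and $\va_1^\dagger$ respectively. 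The identity $\vsigma^\pm\vsigma^\pm = 0$ then block-diagonalizes $(\vA^{(1\in)})^\dagger \vA^{(1\in)}$ along the $\{\ket{0},\ket{1}\}$ basis of qubit $1$; taking the $p/2$ power and trace, together with unitarity of $\vZ$, gives $\lnormp{\vA^{(1\in)}}{p}^p = \lnormp{\vF^-}{p}^p + \lnormp{\vF^+}{p}^p$, and the analogous $\lnormp{\vA_{\{1\}\cup S'}}{p}^p = \lnormp{(\vF^-)_{S'}}{p}^p + \lnormp{(\vF^+)_{S'}}{p}^p$ for each $S' \subset \{2,\ldots,m\}$.

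Applying the induction hypothesis to $\vF^-$ and $\vF^+$ individually, and combining via Minkowski's inequality on the two-dimensional $\ell^{p/2}$ norm (valid for $p \ge 2$) applied to the pair-valued sum $\sum_{S'}\bigl(C_p^{|S'|}\lnormp{(\vF^-)_{S'}}{p}^2,\, C_p^{|S'|}\lnormp{(\vF^+)_{S'}}{p}^2\bigr)$, one obtains $\lnormp{\vA^{(1\in)}}{p}^2 \le \sum_{S'} C_p^{|S'|}\lnormp{\vA_{\{1\}\cup S'}}{p}^2$. Multiplying by $C_p$ and recombining with the first piece closes the induction. The main obstacle to anticipate is the non-locality of Fermions under Jordan-Wigner: naively peeling qubit $m$ (the largest mode) as in the Pauli proof of Proposition~\ref{prop:general_pauli_expansion} fails, since $\va_s$ with $s<m$ deposits a $\vsigma^z$ on qubit $m$, so $\vA_S$ with $m\notin S$ need not be identity on qubit $m$. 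Peeling from the smallest mode upward sidesteps this, and the residual JW unitary $\vZ$ on the remaining qubits is harmless thanks to unitary invariance of $p$-norms.
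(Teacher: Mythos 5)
Your proof is correct, and it takes a genuinely more explicit route for the recursive step than the paper does. Both proofs open the same way: peel qubit~$1$ via Proposition~\ref{prop:unif_subsystem_recap}, using that every $\va_s,\va_s^\dagger$ with $s\ge 2$ acts as $\vI_1$ on qubit~$1$ while the $1\in S$ piece is partially traceless there. Where you diverge is in how the Jordan--Wigner string is disposed of afterwards. The paper invokes a gauge transformation to reset the JW string on the residual algebra and then "repeats the above inequality,'' i.e.\ applies uniform smoothness again on qubit~$2$; the details of which unitary conjugation accomplishes this, and how it interacts with the $\vsigma^\pm_1$ prefactors, are left implicit. You instead compute the $p$-norm of $\vA^{(1\in)}$ \emph{exactly}: writing $\vA^{(1\in)}=\vsigma^-_1\otimes\vF^-\vZ+\vsigma^+_1\otimes\vF^+\vZ$, the relations $\vsigma^\pm\vsigma^\pm=0$, $\vsigma^+\vsigma^-=\ket{1}\bra{1}$, $\vsigma^-\vsigma^+=\ket{0}\bra{0}$ block-diagonalize $(\vA^{(1\in)})^\dagger\vA^{(1\in)}$ along qubit~$1$, and unitary invariance under the JW string $\vZ$ gives the clean identity
\begin{align}
\lnormp{\vA^{(1\in)}}{p}^p = \lnormp{\vF^-}{p}^p+\lnormp{\vF^+}{p}^p,
\end{align}
and likewise for each $\vA_{\{1\}\cup S'}$. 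Applying the induction hypothesis to $\vF^\pm$ and recombining by monotonicity plus the triangle inequality for the two-dimensional $\ell^{p/2}$ norm (valid since $p/2\ge 1$) closes the induction without ever reinvoking uniform smoothness on the $1\in S$ part, and without any gauge fixing. This buys transparency: you never have to check that the gauge-transformed residual regains the factored-tensor-plus-traceless form that Proposition~\ref{prop:unif_subsystem_recap} requires. One small inaccuracy in your framing: the Pauli proof of Proposition~\ref{prop:general_pauli_expansion} also peels from qubit~$1$ upward (the induction hypothesis covers $\{m,\ldots,1\}$ and qubit $m{+}1$ is the new one), so ``peeling qubit $m$ as in the Pauli proof'' is a misreading; the real point, which you make correctly, is that for Fermions the peeling direction must match the orientation of the JW strings, whereas for Paulis any order works.
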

\begin{proof}
 WLG, assume the Fermionic operators are ordered such that the larger index appears on the right (e.g.$\va_1\va_3\va_n$).
\begin{align}
    \lnormp{\vA}{p}^2= \lnormp{\va_1\vA_{>1}+\va^\dagger_1\vA'_{>1}+\vI_1\otimes \vB_{>1}}{p}^2 &\le \lnormp{\vI_1\otimes \vB_{>1}}{p}^2 +C_p\lnormp{\va_1\vA_{>1}+\va^\dagger_1\vA'_{>1}}{p}^2.
\end{align}
To complete the induction as in~\ref{prop:general_pauli_expansion}, apply a gauge transformation to change the Jordan-Winger string such that only $a_2$ is nontrivial on site $2$. Then we can repeat the above inequality. Note that the background $\vrho_{\eta}$ is invariant under gauge transformations, and the Pauli strings of $\vsigma^z$ do not blow up the weighted p-norm.
\end{proof}
\begin{eg}[2-local Fermionic operators]
\normalfont
\begin{align}
    \lnormp{\sum_{i<j} \alpha_{ij} \va_j\va_i}{p}^2 &\le \sum_{i<j}(C_p)^{2} \labs{\alpha_{ij}}^2\lnormp{\va_j\va_i}{p}^2\le \sum_{i<j}(C_p)^{2} \labs{\alpha_{ij}}^2\norm{\va_i\va_j}\lnormp{\vI}{p}^2
\end{align}
\end{eg}

However, when multiplying Fermion operators we may get even powers $\va^\dagger_i \va_i= (\vI+\vsigma_i)/2, \va_i\va^\dagger_i = (\vI-\vsigma_i)/2$ where the Pauli string $\vsigma^z$ cancels. Let us quickly extend to the cases with the presence of $\vsigma^z_i$ terms (perhaps with weighted background). Let us formally define the conditional expectation
\begin{align}
    &E_s: \CB(\{\va^\dagger_i,\va_i\}_{i=1,\cdots,n}) \rightarrow \CB(\{\va^\dagger_i,\va_i\}_{i=1,\cdots,n, i\ne s})=:\CN\\
    &\text{such that}\quad E_s[\vO_{-s}\va^\dagger_s] = E_s[\vO_{-s}\va_s] = E_s[\vO_{-s}\vO^{\eta}] = 0\quad \text{and}\quad E_s[\vO_{-s}] = \vO_{-s} \quad \text{for}\quad \vO_{-s} \in \CN.
\end{align}
The conditional expectation maps the full algebra to the subalgebra generated by all but one fermions. Intuitively, it removes terms that have non-trivial terms $\va_s,\va_s^\dagger, \vO^{\eta}$ on site $s$. 
\begin{cor}[Hypercontractivity for Fermions and $\vO^{\eta}$] \label{cor:a_z_uniform} On $n$ qubits, consider a product state diagonal in the computational basis $\vrho_{\eta}=\otimes_i \vrho_{i}=\otimes_i (\eta\ket{1}\bra{1}+(1-\eta)\ket{0}\bra{0})_i$. 
Then, for $p\ge 2$, $C_p =p-1$,
\begin{align}
\lnormp{\vA}{p,\vrho_\eta}^2 
&\le \sum_{ S \subset \{m,\cdots,1\}} (C_p)^{\labs{S}} \lnormp{\vA_S}{p,\vrho_\eta}^2 \quad \text{where}\quad \vA_{S}:=\prod_{s\in S}(1-E_s)\prod_{s'\in S^c} E_{s'}[\vA].
\end{align}
\end{cor}

The proof is also elementary.
\begin{proof} 
\begin{align}
    \lnormp{\vA}{p,\vrho_\eta}^2&= \lnormp{\va_1\vA_{>1}+\va^\dagger_1\vA'_{>1}+\vO^\eta_1\otimes \vC_{>1} +\vI_1\otimes \vB_{>1}}{p,\vrho_\eta}^2 \\
    &\le \lnormp{\vI_1\otimes \vB_{>1}}{p,\vrho_\eta}^2 +C_p\lnormp{\va_1\vA_{>1}+\va^\dagger_1\vA'_{>1}+\vO^\eta_1\otimes \vC_{>1}}{p,\vrho_\eta}^2.
\end{align}
The rest gauge transformation argument follows Corollary~\ref{cor:a_only_uniform}. Note that $\vO^{\eta}$ is invariant under gauge transformations. 
Alternatively, we can take the formal route by manipulating the conditional expectations as in Proposition~\ref{prop:general_pauli_expansion}.  
\end{proof}

\section{Non-Random $k$-Local Hamiltonians}
\label{chap:non_random}

This section presents the main result of this work. We evaluate Hypercontractivity (Section~\ref{sec:prelim_hyper}) for Trotter error of non-random Hamiltonians. 
\begin{thm}[Trotter error in $k$-local models] \label{thm:Trotter_non_random}
To simulate a $k$-local Hamiltonian using $\ell$-th order Suzuki formula, the gate complexity 
\begin{align}
G =\Omega\L( \left(\frac{p^{\frac{k}{2}} \lnormp{\vH}{(global),2} t}{\epsilon} \right)^{1/\ell} \Gamma p^{\frac{k-1}{2}} \lnormp{\vH}{(local),2} t\R) \ \ &\textrm{ensures}\ \  \lnormp{\e^{\iunit \vH t}- \vec{S}_{\ell}(t/r)^r}{\bar{p}} \le \epsilon.
\end{align} 
\end{thm}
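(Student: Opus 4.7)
The proof plan builds on the nested-commutator representation of Trotter error from~\cite{thy_trotter_error} recalled in Section~\ref{sec:maintext_unif_smooth}, replacing operator-norm estimates with the hypercontractive Schatten bounds of Corollary~\ref{cor:unif_implies_hyper}. First, I fix a Trotter step $\tau := t/r$ and write the single-step error $\vec{S}_\ell(\tau)-\e^{\iunit\vH \tau}$ via the time-ordered integrand $\vCE$. Taylor-expanding in $\tau$, all terms of order $\tau^m$ with $m\le\ell$ vanish by design of the $\ell$-th order Suzuki formula, and the leading contribution is the explicit polynomial
\begin{align}
\vec{P}_{\ell+1} = \sum_{\vec{\gamma}} c_{\vec{\gamma}} \, [\vH_{\gamma_{\ell+1}}, [\vH_{\gamma_\ell}, \ldots, [\vH_{\gamma_2}, \vH_{\gamma_1}]\cdots]]
\end{align}
with $|c_{\vec{\gamma}}| = \CO(\Upsilon^{\ell+1})$ determined by the Suzuki stage coefficients. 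The bulk of the work is to bound $\lnormp{\vec{P}_{\ell+1}}{\bar p}$.

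The key structural observation is that a nested commutator vanishes unless each newly added factor $\vH_{\gamma_{j+1}}$ shares at least one qubit with the support of the inner nested commutator. Hence each nonzero term has support $|S|\le k+\ell(k-1)$, and Corollary~\ref{cor:unif_implies_hyper} supplies the global hypercontractive gain $\lnormp{\vec{P}_{\ell+1}}{\bar p}^2 \lesssim (3(p-1))^{k+\ell(k-1)} \lnormp{\vec{P}_{\ell+1}}{\bar 2}^2$. Then I would iterate Proposition~\ref{prop:unif_subsystem_recap} at $p=2$ from the outside inwards along the indices $\gamma_{\ell+1}, \gamma_\ell, \ldots, \gamma_2$: at each layer the sum is restricted to those terms overlapping the $\CO(kj)$ sites in the current support, contributing a factor $\CO(kj)\,\lnormp{\vH}{(1),2}^2$, while the free innermost index $\gamma_1$ contributes $\lnormp{\vH}{(0),2}^2$. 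Collecting,
\begin{align}
\lnormp{\vec{P}_{\ell+1}}{\bar p}^2 \lesssim (p-1)^{k+\ell(k-1)} \cdot (k\ell)^{\CO(\ell)} \,\lnormp{\vH}{(0),2}^2 \,\lnormp{\vH}{(1),2}^{2\ell}.
\end{align}

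Higher-order Taylor terms $\vec{P}_{m+1}$ for $m>\ell$ satisfy the same recursion with an extra factor $\tau\,\lnormp{\vH}{(1),2}\,p^{(k-1)/2}$ per order, summing to a convergent geometric series once $\tau$ is small enough. The total error across $r$ Trotter steps is $\lesssim r\tau^{\ell+1}\lnormp{\vec{P}_{\ell+1}}{\bar p} \lesssim t\,\tau^\ell\,p^{(k+\ell(k-1))/2}\lnormp{\vH}{(0),2}\lnormp{\vH}{(1),2}^\ell$; setting this equal to $\epsilon$ and solving for $r=t/\tau$ yields exactly the advertised $r \sim (p^{k/2}\lnormp{\vH}{(0),2}t/\epsilon)^{1/\ell}\cdot p^{(k-1)/2}\lnormp{\vH}{(1),2}t$, and $G=\CO(r\Gamma\Upsilon)$ matches the claim. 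The main obstacle is the second stage: different orderings $\vec{\gamma}$ can produce overlapping Pauli words, so a naive expansion of $\lnormp{\vec{P}_{\ell+1}}{\bar 2}^2$ overcounts. What saves the argument is that at each layer the outermost sum is conditionally mean-zero with respect to the subalgebra generated by the inner support, so Proposition~\ref{prop:unif_subsystem_recap} applies cleanly and turns cross-term interference into the desired $\lnormp{\vH}{(1),2}^2$-scaling; carrying out the bookkeeping of which sites are ``free'' vs.\ ``forced'' at each layer is, as the paper notes, substantially more involved than the analogous 1-norm argument of~\cite{thy_trotter_error}.
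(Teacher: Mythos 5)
Your overall strategy — Taylor-expand the error, use uniform smoothness/hypercontractivity to get a sum-of-squares bound, and iterate a per-layer $\lnormp{\vH}{(1),2}$ factor down to a free $\lnormp{\vH}{(0),2}$ at the innermost index — is the same skeleton the paper uses. Two specific points differ, and one of them is a genuine gap.

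First, a minor route difference: you propose converting $\lnormp{\cdot}{\bar p}$ to $\lnormp{\cdot}{\bar 2}$ upfront via Corollary~\ref{cor:unif_implies_hyper} and then doing all the combinatorics at $p=2$. The paper instead applies Proposition~\ref{prop:general_pauli_expansion} directly at general $p$, keeping the $(C_p)^{|S|}$ weight attached to the sets throughout. The paper explicitly discusses your route in Section~\ref{sec:another_hyper} and notes it is feasible but only buys constant overheads; the 2-norm reduction does not avoid any of the combinatorial accounting, and the key nontrivial step — bounding the effective $2{\to}2$ norm of $\sum_{S_2}\CL_{S_2}$ in the presence of multi-site overlaps — still has to be done (Corollaries~\ref{cor:22norm_with_alpha} and~\ref{cor:22norm_with_alpha_beyond_g}). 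So this part of your plan is a viable alternative, not a different theorem.

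The genuine gap is your treatment of the higher-order Taylor tail. You assert that each successive order contributes a multiplicative factor $\tau\,\lnormp{\vH}{(1),2}\,p^{(k-1)/2}$ and that the resulting series is geometric once $\tau$ is small. This is not correct: when the nested commutator has accumulated support of size $\CO(g k)$, each new application of $\sum_{S_2}\CL_{S_2}$ carries combinatorial overcounting factors of the form $\binom{|S|}{|S_1|}$, etc., which grow with $|S|\sim gk$. The paper's $g$-th-order estimate~\eqref{eq:nonrandom_gthorder} consequently has an explicit $g^{2g(k-1)}$ factor, and the paper states outright that ``the series is not summable over $g$ due to the factor $g^{g(k-1)}$.'' Your claim at order $\ell+1$ (a factor $(k\ell)^{\CO(\ell)}$) is correct, but the analogous factor at order $g$ is $(kg)^{\CO(g)}$, which is not bounded by a fixed ratio times the previous order — so your geometric-sum step fails. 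To close the argument you would need to do what the paper does: truncate the Taylor expansion at a $\tau$-dependent order $g' = \CO\bigl((b_p\tau)^{-1/(k-1)}\bigr)$, handle the orders $\ell+1\le g<g'$ individually with the growing-but-controlled factors, and bound the remainder $\ge g'$ by a separate argument (Section~\ref{sec:g'_th_non_random}, where the unitary exponentials are removed by norm invariance and one more commutator layer is peeled off outside the square root). Without this truncation-and-remainder step, the short-time single-step bound, and hence the advertised $r\sim(\cdot)^{1/\ell}\cdot(\cdot)$ scaling, does not follow.
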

The $p$-norm estimate and Proposition~\ref{prop:typical_Schatten} imply concentration for typical input states via Markov's inequality.
\begin{cor} 
Draw $\ket{\psi}$ from an orthonormal basis $\BE[ \ket{\psi}\bra{\psi}] =\vI/\tr[\vI]$, then 
\begin{align*}
    G =\Omega \L( \left(\frac{\sqrt{\log(1/\delta)}^{k}\lnormp{\vH}{(global),2} t}{\epsilon} \right)^{1/\ell} \sqrt{\log(1/\delta)}^{k-1} \Gamma \lnormp{\vH}{(local),2} t \R) \ \ &\textrm{ensures}\ \  \Pr \L( \lnormp{\e^{\iunit \vH t}- \vec{S}(t/r)^r\ket{\psi}}{\ell_2}  \ge \epsilon \R) \le \delta. 
\end{align*} 
\end{cor}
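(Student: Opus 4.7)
The plan is to combine the $p$-norm bound of Theorem~\ref{thm:Trotter_non_random} with Markov's inequality (in the form of Proposition~\ref{prop:typical_Schatten}) and then optimize over the free parameter $p$. Since any 1-design satisfies $\BE_{\psi}[\ket{\psi}\bra{\psi}] = \vI/\tr[\vI]$, Proposition~\ref{prop:typical_Schatten} applied with background $\vrho = \vI/\tr[\vI]$ and operator $\vF := \e^{\iunit \vH t} - \vec{S}(t/r)^r$ yields
\begin{equation}
\Pr\L( \lnormp{\vF \ket{\psi}}{\ell_2} \ge \epsilon \R) \le \L(\frac{\lnormp{\vF}{\bar{p}}}{\epsilon}\R)^{p},
\end{equation}
so everything reduces to making the normalized Schatten $p$-norm on the right-hand side sufficiently small.

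Next, I would invoke Theorem~\ref{thm:Trotter_non_random}, which guarantees $\lnormp{\vF}{\bar{p}} \le \epsilon'$ as soon as the gate count scales as $G = \Omega\L( ( p^{k/2} \lnormp{\vH}{(0),2} t / \epsilon' )^{1/\ell} \,\Gamma\, p^{(k-1)/2} \lnormp{\vH}{(1),2} t\R)$. To force the Markov bound to be at most $\delta$, it suffices to take $\epsilon' = \epsilon\, \delta^{1/p}$, i.e.\ to require $\lnormp{\vF}{\bar{p}} \le \epsilon\, \delta^{1/p}$ inside Theorem~\ref{thm:Trotter_non_random}.

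Finally, I would tune $p$ to balance the two sources of $p$-dependence: the $\delta^{1/p}$ factor (which hurts for small $p$) against the $p^{k/2}$ and $p^{(k-1)/2}$ factors (which hurt for large $p$). The natural choice is $p = \log(1/\delta)$, which gives $\delta^{1/p} = 1/\e$, so $\epsilon'$ is only a constant factor below $\epsilon$ and the $1/\epsilon'$ in the gate count produces the same scaling as $1/\epsilon$. Substituting $p^{k/2} = \sqrt{\log(1/\delta)}^{\,k}$ and $p^{(k-1)/2} = \sqrt{\log(1/\delta)}^{\,k-1}$ into the gate count and absorbing the constant $\e$ into $\Omega(\cdot)$ yields exactly the advertised bound. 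There is no real obstacle beyond this optimization: all the technical content (the nested-commutator expansion, the subsystem uniform-smoothness recursion, and the combinatorics giving the $\lnormp{\vH}{(1),2}$ and $\lnormp{\vH}{(0),2}$ quantities) is already packaged inside Theorem~\ref{thm:Trotter_non_random}, and the corollary is a standard moment-to-tail conversion.
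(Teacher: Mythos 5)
Your proposal is correct and takes essentially the same route as the paper: the paper also derives this corollary by combining the $\bar{p}$-norm estimate of Theorem~\ref{thm:Trotter_non_random} with Markov's inequality (Proposition~\ref{prop:typical_Schatten}) and then choosing $p \sim \log(1/\delta)$ so that the moment-to-tail conversion loses only an $\Omega(1)$ factor, exactly as you do with the substitution $\epsilon' = \epsilon\,\delta^{1/p}$ and $p = \log(1/\delta)$. The only cosmetic difference is that the paper presents the optimization by fixing $r$ and solving for the required $p_r$ before imposing $p_r \gtrsim \log(1/\delta)$, whereas you fix $p$ first and then read off the gate count; the two directions are equivalent.
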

This quickly converts to the trace distance between the pure states 
\begin{align}
    \lnormp{\ket{\psi_1}\bra{\psi_1}-\ket{\psi_2}\bra{\psi_2}}{1} &\le \lnormp{\ket{\psi_1}\big(\bra{\psi_1}-\bra{\psi_2}\big)+\big(\ket{\psi_1}-\ket{\psi_2}\big)\bra{\psi_2}}{1}\\
    & \le 2\normp{\ket{\psi_1}-\ket{\psi_2}}{\ell_2}.
\end{align}
We begin with an instructive example that illustrates the combinatorics (Section~\ref{sec:heuristic_first_order}). We sketch the proof in Section~\ref{sec:sketch}. In Section~\ref{sec:nonrandom_g_th} and Section~\ref{sec:g'_th_non_random}, we combine the estimates and conclude the proof with explicit constants in Section~\ref{sec:proof_non_random}. See Section~\ref{sec:proof_fermion_non_random} for the analogous result for Fermions.

\subsection{An instructive example}\label{sec:heuristic_first_order}
Consider a 2-local Hamiltonian on three subsystems of qubits $\CH = \CH_{I_1}\otimes \CH_{I_2}\otimes \CH_{I_3}$ of equal subsystem sizes $n/3$.
\begin{align}
    \vH = \sum_{\gamma = 1}^{\Gamma} \vH_{\gamma} = \sum_{i_1\in I_1, i_2\in I_2} \vsigma^x_{i_1}\vsigma^z_{i_2} + \sum_{i_1\in I_1, i_2\in I_2} \vsigma^x_{i_1}\vsigma^x_{i_2} + \sum_{i_2\in I_2, i_3\in I_3} \vsigma^x_{i_2}\vsigma^z_{i_3}.
\end{align}
Let us play around with the first-order product formula. Recall
\begin{align}
     \e^{\ri\vH_\Gamma t} \cdots \e^{\ri\vH_1}- \e^{\iunit \sum_{\gamma=1}^\Gamma \vH_\gamma} = \frac{t^2}{2}\sum_{\gamma_2>\gamma_1}[\vH_{\gamma_2},\vH_{\gamma_1}]+\CO(t^3). 
\end{align}
The leading order $\CO(t^2)$ Trotter error will be a sum of 3-local terms and 1-local terms
\begin{align}
    \sum_{\gamma_2>\gamma_1}[\vH_{\gamma_2},\vH_{\gamma_1}]& = \sum_{\labs{S_3}=3}\vF_{S_3} + \sum_{\labs{S_1}=1}\vF_{S_1}.
\end{align}
The 3-local terms are the ``greediest'' way to produce long Pauli strings
\begin{align}
    \sum_{\labs{S_3}=3} \vF_{S_3} = - 2\sum_{i_1,i_2,i_3}\vsigma^x_{i_1}\vsigma^y_{i_2}\vsigma^z_{i_3} + 2\sum_{i_1,i'_1,i_2} \vsigma^x_{i_1}\vsigma^x_{i'_1}\vsigma^y_{i_2} \quad \text{and}\quad  \lnormp{\sum_{\labs{S_3}=3}\vF_{S_3}}{\bar{2}} = \theta( n^{3/2}). \quad \text{(greedy)}
\end{align}
No cancellation nor collision occurs, and each term is supported on distinct subsets $\{i_1,i_2,i_3\}$ or $\{i_1,i_1',i_2\}$. These operators add incoherently (in the Hilbert-Schmidt norm for simplicity). 
The 1-local terms are more peculiar but turn out equally important. They come from terms that overlap on both sites 
\begin{align}
    \sum_{\labs{S_1}=1}\vF_{S_1} = \sum_{i_1, i_2} [\vsigma^x_{i_1}\vsigma^z_{i_2}, \vsigma^x_{i_1}\vsigma^x_{i_2}]=2(\sum_{i_1})\cdot \sum_{i_2}\vsigma^y_{i_2} \quad \text{and} \quad  \lnormp{\sum_{\labs{S_1}=1}\vF_{S_1}}{\bar{2}} = \theta( n^{3/2}). \quad \text{(colliding)}
\end{align}
The collision of the same Pauli $\vsigma^x_{i_1}$ leads to a ``constructive interference'' over site $i_1$. Consequently, it gives a comparable contribution to the Trotter error, although it has a single sum over $i_2$.
This is not a coincidence; both terms are formally controlled by the advertised quantity
\begin{align}
    \lnormp{\sum_{\labs{S_1}=1}\vF_{S_1}}{\bar{2}}, \lnormp{\sum_{\labs{S_3}=3}\vF_{S_3}}{\bar{2}}=\theta(\sqrt{n}\cdot n)=\theta\L( \normp{\vH}{(1),2}\cdot \normp{\vH}{(0),2}\R).
\end{align}
From this example, we can anticipate a formal proof would require (1)
extracting the local quantities $\normp{\vH}{(1),2}$ and $\normp{\vH}{(0),2}$ from the nested commutators and Hypercontractivity and (2) dealing with the higher-order time dependence.
\subsection{Proof Outline}\label{sec:sketch}

With the above example in mind, we sketch the proof strategy as follows. Recall for any product formula with ordering $\gamma(j)$, weights $a_j$, and number of $J$ exponentials,
the general Trotter error can be represented in a time-ordered exponential 
\begin{align}
    \prod_{j=1}^{J} \e^{\ri a_j\vH_{\gamma(j)} t} = \e^{\ri a_J\vH_{\gamma(J)} t}\cdots \e^{\ri a_1 \vH_{\gamma(1)} t}= \exp_{\CT}(\ri\int (\vec{\CE} + \vH)dt).
\end{align}
The error $\CE$ is time-dependent and takes the commutator form
\begin{align}
    \vCE=\vec{\CE} (\vH_1,\cdots,\vH_\Gamma, t) :&= \sum^{J}_{j=1}\left( \prod_{k=j+1}^{J} \e^{a_k\CL_{\gamma(k)} t} [a_j\vH_{\gamma(j)}] -a_j\vH_{\gamma(j)}\right) \quad \text{where}\quad \CL_{\gamma}[O]:=\ri[\vH_\gamma,O].\label{eq:error_comm}
\end{align}
The particular form depends on the choice of ordering and weights, but fortunately, the precise values of the coefficients $a_j$ will not matter. For $2\ell$-th order Suzuki formulas that we focus on, all we need is a crude uniform bound $\labs{a_k} \le 1$ and that the total formula consists of $\Upsilon=2\cdot 5^{\ell/2-1}$ stages for $J= \Upsilon \cdot \Gamma$. Our combinatorial argument takes norms everywhere and does not rely on delicate cancellations. Our proof will "beat the error $\vec{\CE}$ to death" by Taylor expansion (from right to left).
\begin{fact}[Taylor expansion{\cite[Theorem~10]{thy_trotter_error}}] For any order $g'$,
\begin{align*}
    \e^{\CL_J t}\cdots \e^{\CL_{j+1} t} &= \sum^{g'-1}_{g=1} \sum_{g_J+\cdots+g_{j+1}=g-1}\CL^{g_J}_J\cdots \CL^{g_{j+1}}_2 \frac{t^{g-1}}{g_J!\cdots g_{j+1}!}\\
    &+\sum_{m=j+1}^{J} \e^{\CL_J t}\cdots \e^{\CL_{m+1} t} \int_0^t dt_1 \sum_{g_m+\cdots+g_{j+1}=g'-1,g_m\ge 1}e^{\CL_{m} t_1} \CL^{g_m}_m\cdots \CL^{g_{j+1}}_{j+1} \frac{(t-t_1)^{g_m-1}t^{g'-g_m-1}}{(g_m-1)!\cdots g_{j+1}!}.
\end{align*}
\end{fact}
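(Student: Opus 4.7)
The plan is a telescoping argument built on the integral form of Taylor's theorem applied iteratively to the exponentials from right to left. For each $m \in \{j,j+1,\ldots,J\}$, define the interpolating quantity
\begin{align*}
B_m := e^{\CL_J t}\cdots e^{\CL_{m+1} t}\sum_{g_m+\cdots+g_{j+1}\le g'}\frac{\CL_m^{g_m}\cdots\CL_{j+1}^{g_{j+1}}\, t^{g_m+\cdots+g_{j+1}}}{g_m!\cdots g_{j+1}!},
\end{align*}
where the sum collapses to $1$ when no indices are present ($m=j$). By construction, $B_j=e^{\CL_J t}\cdots e^{\CL_{j+1} t}$ is the full left-hand side of the Fact, while $B_J$ is the Taylor polynomial forming the first term on the right. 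The identity
\begin{align*}
e^{\CL_J t}\cdots e^{\CL_{j+1} t}-B_J=\sum_{m=j+1}^{J}(B_{m-1}-B_m)
\end{align*}
reduces the proof to identifying each increment $B_{m-1}-B_m$ with the $m$-th summand of the stated remainder.

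To compute $B_{m-1}-B_m$, I factor out the common left prefactor $e^{\CL_J t}\cdots e^{\CL_{m+1} t}$ and compare $e^{\CL_m t}\cdot P_{m-1}(t)$ with $P_m(t)$, where $P_m$ is the polynomial appearing inside $B_m$. Expanding $e^{\CL_m t}=\sum_{g_m\ge 0}(\CL_m t)^{g_m}/g_m!$ and subtracting, all contributions with $g_m+g_{m-1}+\cdots+g_{j+1}\le g'$ cancel term-by-term, leaving only the tail with total joint degree exceeding $g'$. For each fixed $(g_{m-1},\ldots,g_{j+1})$ with $g_{m-1}+\cdots+g_{j+1}\le g'$, the residual $g_m$-tail
\begin{align*}
\sum_{g_m\ge N}\frac{(\CL_m t)^{g_m}}{g_m!}\qquad\text{with}\qquad N=g'-(g_{m-1}+\cdots+g_{j+1})+1
\end{align*}
is, by Taylor's theorem with integral remainder, equal to
\begin{align*}
\int_0^t\frac{(t-t_1)^{N-1}}{(N-1)!}\,\CL_m^{N}\,e^{\CL_m t_1}\,dt_1.
\end{align*}
Re-parametrising by setting $g_m:=N$ so that the new constraint becomes $g_m+\cdots+g_{j+1}=g'+1$ with $g_m\ge 1$, the polynomial factors $(t-t_1)^{g_m-1}$ and $t^{g'-g_m}$ together with the denominator $(g_m-1)!\,g_{m-1}!\cdots g_{j+1}!$ emerge from the Taylor remainder and the surrounding coefficients of $P_{m-1}$, respectively, matching the form in the Fact.

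The main obstacle I foresee is combinatorial bookkeeping: aligning the re-indexed residue from $B_{m-1}-B_m$ with the precise constraint set and factorial pattern requires careful tracking of the shift $g_m\mapsto g_m+\mathrm{const}$ and of how the $t$-powers split between the Taylor-remainder kernel $(t-t_1)^{g_m-1}$ and the ambient $t^{g'-g_m}$ inherited from the polynomial coefficients. Spot-checking small cases ($J-j=1$ with $g'=1,2$, and $J-j=2$ with $g'=1$) pins down the exact convention for the outer Taylor range and makes the telescope cancellations visible. A conceptually cleaner alternative avoids the interpolant $B_m$ entirely: apply the single-exponential identity $e^{\CL_m t}=\sum_{g=0}^{N-1}(\CL_m t)^g/g!+\int_0^t (t-t_1)^{N-1}/(N-1)!\cdot\CL_m^{N}e^{\CL_m t_1}\,dt_1$ to each exponential from right to left, with $N$ chosen adaptively to match the remaining ``budget'' $g'$; the integral branch arising at position $m$, combined with the polynomial branches at positions $j+1,\ldots,m-1$ and the untouched full exponentials at positions $m+1,\ldots,J$, produces precisely the $m$-th summand of the stated remainder.
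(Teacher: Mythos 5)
The paper does not prove this Fact itself — it is quoted verbatim from Theorem~10 of~\cite{thy_trotter_error} — so what you are supplying is a genuine proof of the cited result. Your strategy (interpolate between the full product $e^{\CL_J t}\cdots e^{\CL_{j+1}t}$ and the truncated Taylor polynomial by replacing exponentials with Taylor polynomials one factor at a time from the right, telescope, and identify each increment with a single Taylor integral remainder) is exactly the idea used in the cited source, and it is the right one.

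There is, however, an off-by-one slip that you do not resolve. With $P_m$ truncated at total degree $\le g'$ (matching the polynomial part $\sum_{g=0}^{g'}$ written in the Fact), your own calculation correctly gives that the surviving tail in $B_{m-1}-B_m$ is over multi-indices with $g_m+\cdots+g_{j+1}\ge g'+1$, and after re-parametrising the Taylor remainder you land on the constraint $g_m+\cdots+g_{j+1}=g'+1$. You state this, but then immediately write the ambient power as $t^{g'-g_m}$ ``matching the form in the Fact.'' Those two assertions are mutually inconsistent: if $g_m+g_{m-1}+\cdots+g_{j+1}=g'+1$, the power inherited from $P_{m-1}$ is $t^{g_{m-1}+\cdots+g_{j+1}}=t^{g'+1-g_m}$, not $t^{g'-g_m}$. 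The source of the confusion is that the Fact, as transcribed in the paper, has a typo: the polynomial part should run $\sum_{g=0}^{g'-1}$, not $\sum_{g=0}^{g'}$, to be consistent with a remainder whose total degree is $g'$ (this is also what the paper actually uses downstream, where the $g$-th order is handled for $\ell<g<g'$ and the $g'$-th order separately). A quick sanity check with $J-j=2$, $g'=1$ confirms this: each remainder term integrates to $e^{\CL_{j+2}t}(e^{\CL_{j+1}t}-1)$ and $e^{\CL_{j+2}t}-1$, which sum to $e^{\CL_{j+2}t}e^{\CL_{j+1}t}-1$, forcing the polynomial part to be $1$ rather than $1+t(\CL_{j+2}+\CL_{j+1})$. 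If you define $P_m$ with the cutoff $\le g'-1$ throughout, your telescoping argument goes through verbatim and reproduces the stated remainder, with $N=g'-(g_{m-1}+\cdots+g_{j+1})$ and hence $g_m+\cdots+g_{j+1}=g'$, $t^{g'-g_m}$, and $(g_m-1)!$ all falling out correctly. So the proof is essentially correct; you should just fix the cutoff convention and note the typo in the stated polynomial bound.
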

The $g-1$ exponent will be used consistently in the following. Setting $\CL_j :=a_j\CL_{\gamma(j)}$, Taylor expansion gives the formal expansion for the error in powers of time $t$ 
\begin{align}
    \vCE = \sum_{g=\ell+1}^{g'-1} \vCE_g + \vCE_{\ge g'} \quad  \text{where}\quad \vCE_g = \CO(t^{g-1})\quad \text{and}\quad \vCE_{\ge g'} = \CO(t^{g'-1}).
\end{align}
Each $g$-th order term $\vCE_g$ is a sum of nested commutators, and we control its $p$-norm (Section~\ref{sec:nonrandom_g_th}). We will evaluate Hypercontractivity through a rather involved combinatorics to extract the local quantities $\normp{\vH}{(1),2}$ and $\normp{\vH}{(0),2}$. Note that we will use the version we derived (Proposition~\ref{prop:general_pauli_expansion})
\begin{align}
\lnormp{\vF}{p}^2 &\leq \sum_{S\subset \{m,\cdots,1\}} (C_p)^{|S|} \lnormp{\vF_S}{p}^2.
\end{align}
This will straightforwardly generalize to the case of Fermions (Section~\ref{sec:proof_fermion_non_random}) and is not restricted to the case of qubits.  See Section~\ref{sec:proof_non_random} for comments on how much constant overhead improvement is possible using the other Hypercontractivity $\lnormp{\vF}{\bar{p}}^2 \le \sum_{S} C_p^{\labs{S} } \lnormp{\vF_S}{\bar{2}}^2$ (Proposition~\ref{fact:nc_hyper}).

We handle the edge case $g'$-th order term $\vCE_{g'}$ in Section~\ref{sec:g'_th_non_random}. Indeed, bounding the infinite series will give divergent results, so we must halt the expansion at an appropriate order $g'$. We combine the estimates and apply Markov's inequality in Section~\ref{sec:proof_non_random}.

\subsection{Bounds on the $g$-th Order}\label{sec:nonrandom_g_th}
We proceed by controlling each $g$-th order~\eqref{eq:error_comm} polynomial by Hypercontractivity (Proposition~\ref{prop:unif_subsystem_recap}). We begin with $\CL_j :=a_j\CL_{\gamma(j)}$ to ease notation
\begin{align}
    \lnormp{\vCE_g}{p}^2&= \lnormp{\sum^{J}_{j=1} \sum_{g_J+\cdots+g_{j+1}=g-1}\CL^{g_J}_J\cdots \CL^{g_{j+1}}_{j+1} [\vH_{j}]\frac{t^{g-1}}{g_J!\cdots g_{j+1}!}}{p}^2\\
    &\le \sum_{S\subset \{n,\cdots, 1\}} (C_p)^{|S|} \lnormp{\left[\sum^{J}_{j=1} \sum_{g_J+\cdots+g_{j+1}=g-1}\CL^{g_J}_J\cdots \CL^{g_{j+1}}_{j+1} [\vH_{j}]\frac{t^{g-1}}{g_J!\cdots g_{j+1}!}\right]_S}{p}^2\\
    &\le (C_p)^{g(k-1)+1}\Upsilon(t\Upsilon)^{2(g-1)} \sum_{S\subset \{n,\cdots, 1\}}  \left( \sum_{\gamma_{g-1}}^\Gamma\cdots\sum_{\gamma_0}^\Gamma\lnormp{\left[\CL_{\gamma_{g-1}}\CL_{\gamma_{g-2}}\cdots \CL_{\gamma_1} [\vH_{\gamma_0}]\right]_S}{p}\right)^2.\label{eq:g-order_LLL}
\end{align}

The second inequality uses a uniform bound on locality $\labs{S}\le g(k-1)+1$ and applies a brutal triangle inequality. The last inequality expresses $\CL_j$ by $a_j\CL_{\gamma(j)}$ and $\vH_j$ by $a_j\vH_{\gamma(j)}$ and uses that $\labs{a_j} \le 1$. We also symmetrize the sum over terms $\CL_{\gamma}$ by throwing in extra terms. This costs an extra factor of $(g-1)!$ (which cancels the factor $1/(g-1)!$ in the exponential) due to possible permutation of a $(g-1)$-th order term. For example, consider a particular term 
\begin{align}
    \e^{\CL_3t}\e^{\CL_2t}\e^{\CL_1t} &=  \cdots + \CL_3 \CL_2 \CL_1 t^3+\cdots\\
    \e^{(\CL_3+\CL_2+\CL_1)t} &= \cdots + \CL_3 \CL_2 \CL_1 \frac{t^3}{3!}+\cdots.
\end{align}
The number of stages $\Upsilon$ arise as each term $\CL_{\gamma}$ or $\vH_{\gamma}$ appears $\Upsilon$-times. 

The main lemma of this section is the following recursive estimate for one layer of commutators $\sum_{\gamma} \CL_\gamma$. This is effectively calculating certain ``2-2 norm'' for the commutator $\sum_{S_2} \CL_{S_2}$,
where the  ``2-norm'' is
$
   \sum_{S_1} \L( \sum_\alpha \lnormp{[\vO]^{\alpha}_{S_1}}{p}\R)^2.
$
We will keep this at an analogy level to avoid introducing extra notations.
\begin{lem}[Effective 2-2 norm of the commutator]\label{lem:22norm_with_alpha}
For any set of operators $\{\vO^{\alpha}\}_{\alpha}$,
\begin{align}
    \sum_{S\subset \{n,\cdots, 1\}} \left(\sum_{\gamma}\sum_\alpha\lnormp{\left[\CL_{\gamma}[\vO^\alpha]\right]_S}{p}\right)^2
    &\le \lambda(k)^2(\labs{S}_{max})^{2k} \cdot \sum_{S_1\subset \{n,\cdots, 1\}} \left(\sum_{\alpha}\lnormp{[\vO^\alpha]_{S_1}}{p}\right)^2
\end{align}
where $\CL_{\gamma}[\vO^\alpha]$ is at most $\labs{S_{max}}$-local and 
\begin{align}
    \lambda(k):=\frac{2^{k/2+1}}{(k-1)!} \sum_{k_{f}=1}^k \frac{2^{k_{f}/2}}{{(k-k_{f})!}}\vertiii{\vH}_{(k_{f}),2}.
\end{align}
\end{lem}
Assuming Lemma~\ref{lem:22norm_with_alpha}, iterating it for $(g-1)$-times gives the  estimate 
\begin{align}
    (cont.)\quad \lnormp{\vCE_g}{p}^2 &\le (C_p)^{g(k-1)+1}  (\Upsilon t)^{2(g-1)} \bigg((g(k-1)+1)\cdots (2(k-1)-1)\bigg)^{2k}\lambda(k)^{2(g-1)}\cdot \lnormp{\vH}{(global),2}^2\lnormp{\vI}{p}^2\\
     &\le (C_p)^{g(k-1)+1} g^{2gk}\left(c(k)\Upsilon t\right)^{2(g-1)}\cdot \lnormp{\vH}{(global),2}^2\lnormp{\vI}{p}^2.\label{eq:nonrandom_gthorder}
\end{align}
The first inequality also evaluates the last sum over the Hamiltonian terms $\vH_{\gamma_0}$ by
\begin{align}
    \sum_{S\subset \{n,\cdots, 1\}} \lnormp{\left[\vH_{\gamma_0}\right]_S}{p}^2 \le  \lnormp{\vH}{(global),2}\lnormp{\vI}{p}^2.
\end{align}
The last inequality uses $g(k-1)+1\le g k$ and hides constants depending only on $k$ in the value $c(k)$
\begin{align}
    c(k):=k^{k}\lambda(k).
\end{align}
The expression~\eqref{eq:nonrandom_gthorder} yields the desired estimate for the g-th order error term. 
Unfortunately, the power series is not summable due to the super-exponential factor $g^{2gk}$. We will later truncate the expansion at some properly chosen order $g'$ (Section~\ref{sec:g'_th_non_random}).

What remains in this section is to show Lemma~\ref{lem:22norm_with_alpha}. As hinted in the example (Section~\ref{sec:heuristic_first_order}), we need to systematically handle the cases that grow greedily and those with collisions. Let us identify how taking commutators may produce other sets $S$ (Figure~\ref{fig:S2S1}). 
\begin{align}
    \CL_{\gamma}[\vO_{S_1}] = \sum_S  \left[ \CL_{\gamma}[\vO_{S_1}]\right]_S.
\end{align}
Let $S_2(\gamma)$ be the support of $\gamma$.

\begin{itemize}
    \item If the sets $S_1$ and $S_2(\gamma)$ are disjoint,  the commutator vanishes.
    \item (I) If they overlap on a single site, there is no cancellation. The resulting set is the union $S=S_2(\gamma)\cup S_1$. This was the ``greedy'' term in the example.
    \item (II) If they overlap on more than 1 site, we may lose all but 1 site. The resulting set $S$ is a subset of the union $S\subset S_2(\gamma)\cup S_1$.\footnote{Since the commutator has vanishing partial trace $\tr_S[\vO_{AS},\vO_{SB}]=0$, for operators $\vO_{AS},\vO_{SB}$ partially traceless on $S$; we must have at least 1 Pauli left. This is not the case for Fermions. See~\ref{sec:proof_fermion_non_random}. }
\end{itemize}
To account for the above, we rewrite the sets $S,S_2,S_1$ in terms of the components
\begin{align}
    S   &= S_0\perp \hspace{0.8cm} S_{f} \perp S_+,\\
    S_2 &= \hspace{0.8cm} S_-\perp S_{f}\perp S_+,\\
    S_1 &= S_0\perp S_-\perp S_{f}.
\end{align}
where
\begin{itemize}
    \item $S_0:= S_1/S_2$ are the ``untouched'' sites.
    \item $S_-\subset S_1\cap S_2$ are the sites that got canceled due to collison
    \item $S_f$ are the sites that stayed in all sets $S_0, S_1,S_2$. We must have $\labs{S_f} \ge 1$
    \item $S_+:=S_2/S_1$ are the new sites.
\end{itemize}
We will constantly use this decomposition back and forth in the proof. 

\subsubsection{``Greedy growth'' :Overlapping at 1 site }
To get familiar with the manipulations and notations, we work out the simpler case when the sets overlap on a single site
\begin{align}
    \labs{S_1\cap \gamma} =1\quad \text{for}\quad \CL_{\gamma}[\vO_{S_1}]. 
\end{align}
We will see that the growth due to the commutator $\sum_{\gamma} \CL_{\gamma}$ is controlled by the succinct norm $\lnormp{\vH}{(local),2}^2$, multiplied by some function of the locality $\labs{S}$. To ease the notation, we will also overload the set $S_2(\gamma)$ by $\gamma$.
\begin{align}
    \sum_{S\subset \{n,\cdots, 1\}} \L(\sum_{\gamma} \sum_{\substack{\labs{S_1\cap \gamma}=1 \\ S_1\cup \gamma=S}} \lnormp{\left[ \CL_{\gamma}[\vO_{S_1}]\right]_S}{p}\R)^2 &= \sum_{S\subset \{n,\cdots, 1\}} \L( \sum_{\substack{\labs{S_1\cap S_2}=1 \\ S_1\cup S_2=S}} \sum_{\gamma\sim S_2} \lnormp{\left[ \CL_{\gamma}[\vO_{S_1}]\right]_S}{p}\R)^2\\
    &\le \sum_{S\subset \{n,\cdots, 1\}} \sum_{\substack{\labs{S_1\cap S_2}=1 \\ S_1\cup S_2=S}} \L(\sum_{\gamma\sim S_2}  \lnormp{\CL_{\gamma}[\vO_{S_1}]}{p}\R)^2\cdot \L(\sum_{\substack{\labs{S'_1\cap S'_2}=1 \\ S'_1\cup S'_2=S}}\R)\\
    &\le \sum_{S_1} \sum_{S_2: \substack{\labs{S_1\cap S_2}=1}} \L( \sum_{\gamma\sim S_2}2\norm{\vH_{\gamma}}\R)^2\lnormp{\vO_{S_1}}{p}^2\cdot \L(\binom{\labs{S}}{\labs{S_1}}\labs{S_1}\R)\\
    &\le 4 \cdot  \lnormp{\vH}{(local),2}^2 \cdot \max_{\labs{S},\labs{S_1}}\binom{S}{\labs{S_{1}}}\labs{S_{1}}^2\cdot \left(\sum_{S_1} \lnormp{\vO_{S_1}}{p}^2\right).
\end{align}
The first inequality is Cauchy-Schwartz. The second inequality rearranges the sum over $S_1,S_2$, uses Holder's inequality, and then evaluates the combinations that the two sets $S_1,S_2$ can give rise to the set $S$. In the last inequality, we make the 2-norm $\lnormp{\vH}{(local),2}$ explicit by
\begin{align}
    \sum_{S_2:\labs{S_1\cap S_2}=1} \norm{\vH_{S_2}}^2 &= \sum_{s_1\in S_1}\sum_{S_+\cap S_1 = \emptyset}\norm{\vH_{S_+\cup\{s_1\}}}^2 \\
    &\le \labs{S_1} \cdot \max_{s_1}\sum_{S_+\cap S_1 = \emptyset}\norm{\vH_{S_+\cup\{s_1\}}}^2= \labs{S_1} \cdot \lnormp{\vH}{(local),2}^2.
\end{align}
We also use a uniform upper-bound for the combinatorial function of the set sizes $\labs{S},\labs{S_1}$. 

It is instructive to compare with the $1-1$-norm calculation without invoking Hypercontractivity. 
\begin{align}
    \sum_{S\subset \{n,\cdots, 1\}} \sum_{\gamma} \sum_{\substack{\labs{S_1\cap \gamma}=1 \\ S_1\cup \gamma=S}} \lnorm{\left[ \CL_{\gamma}[\vO_{S_1}]\right]_S} &\le \sum_{S_1} \sum_{S_2: \substack{\labs{S_1\cap S_2}=1}} \sum_{\gamma\sim S_2}2\norm{\vH_{\gamma}}\lnorm{\vO_{S_1}}\\
    & \le 2 \lnormp{\vH}{(local),1} \max_{S_1} \labs{S_1}\cdot \sum_{S_1}\lnorm{\vO_{S_1}}.
\end{align}
This is the 1-norm local quantity that featured in the worst-case Trotter error~\cite{thy_trotter_error}.

\begin{figure}[t]
    \centering
    \includegraphics[width=0.95\textwidth]{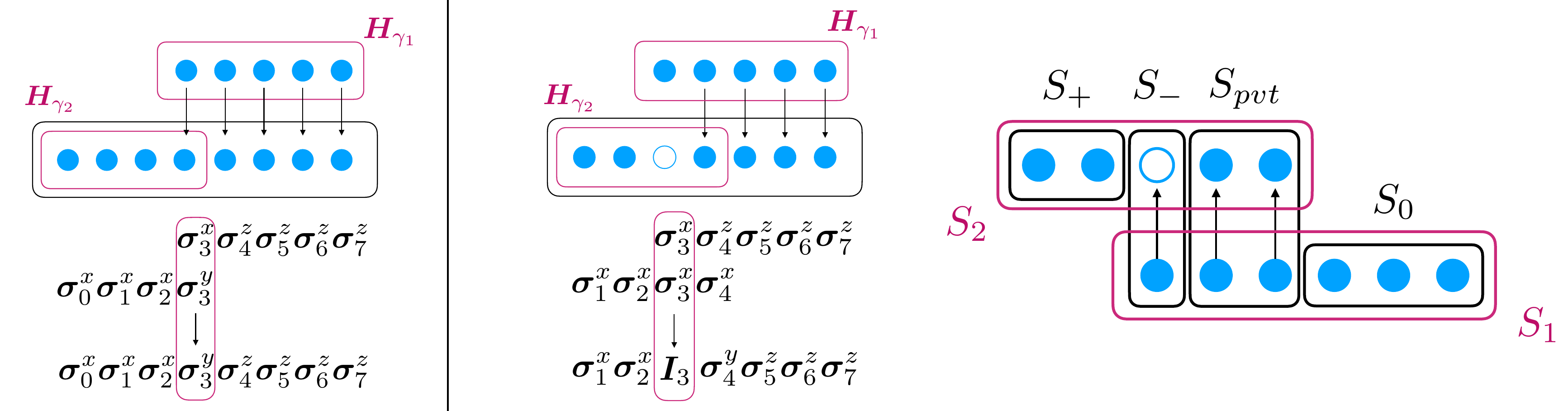}
    \caption{ 
    (Left) Intuitively, commuting with another operator produces new occupancy. (Middle) Unfortunately, occasion cancellations complicate the calculation. (Right) For bookkeeping, we label the possible sets that can be produced by acting the commutator $\CL_{S_2}$ on some operator $\vO_{S_1}$. In their intersection, some subset $S_-$ becomes the identity, and some subset $S_{f}$ remains occupied. For Pauli strings, the fixed subset $S_{f}$ must be non-empty; in the Fermionic case, the fixed subset $S_{f}$ may be empty.
    }
    \label{fig:S2S1}
\end{figure}
\subsubsection{Cancellation and collision due to larger overlap}
The case with cancellation requires delicate notations to handle. Suppose we lose some set in the overlap $ S_- \subset S_1\cap S_2$ due to collision and gain a new set $S_2/S_1=:S_+$ (Figure~\ref{fig:S2S1}). The combinatorics will be organized by the size of the fixed set $k_{f}:=\labs{S_{f}} = 1,\cdots, k$.
\begin{prop}[Fixed $k_f$]\label{prop:fixed_kf} For a value of $\labs{S_{f}}=k_{f} \in \{ 1,\cdots, k\}$,
\begin{align*}
    \sum_{S\subset \{n,\cdots, 1\}} \left(\sum_{\gamma} \sum_{\substack{ \labs{S_{f}}=k_{f} \\
    S_0\perp S_{f} \perp S_+ = S}} \indicator(S_2\sim \gamma) \lnormp{\left[\CL_{\gamma}[\vO_{S_1}]\right]_S}{p}\right)^2 &\le  2^{k+k_{f}+2} \cdot \left(\frac{(\labs{S}_{max})^k}{(k-k_{f})!(k-1)!}\right)^2 \cdot  \normp{\vH}{(k_{f}),2}^2 \cdot \left(\sum_{S_1}\lnormp{\vO_{S_1}}{p}^2\right),\end{align*}
    where the indicator $\indicator(S_2\sim\gamma)$ checks if the set $S_2$ coincides with the support of $\gamma$ and
    \begin{align}
    \lnormp{\vH}{(k_f),2} := \sqrt{ \max_{|S_f|=k_f } \sum_{S\supset S_f} b_{S}^2} \quad \text{where}\quad  b_{S}:=\sum_{\gamma\sim S} \norm{\vH_{\gamma}}.
    \end{align}
\end{prop}
To connect to our notation in the main text, for $k_f= 0$, this is what we defined as the global norm $\lnormp{\vH}{(0),2} = \lnormp{\vH}{(global),2}$; for $k_f= 1$, this is what we defined as the local norm $\lnormp{\vH}{(1),2} = \lnormp{\vH}{(local),2}$. The norms for $k_f\ge 2$ are more of a proof artifact. To be careful with the distinction between an operator $\vO$ and its local component $\vO_{S}$, we first note the following bound.
\begin{fact} \label{fact:project_2^S} For any set $S$ and operator $\vO$, we have 
\begin{align}
    \lnormp{ \prod_{s \in S} (1-E_s) [\vO]}{p} \le 2^{\labs{S}}\lnormp{\vO}{p} \quad \text{where}\quad E_s :=\vI_{s}\frac{\tr_{s}[\cdot]}{\tr[\vI_{s}]}.
\end{align}
\end{fact}
\begin{proof}[Proof of Fact~\ref{fact:project_2^S}]
Use monotonicity of partial trace (Fact~\ref{fact:nc_convexity}), i.e., the conditional expectation $E_s$ is norm non-increasing. The last factor $2^{\labs{S}}$ is due to a brutal triangle inequality.
\end{proof}
\begin{proof}[Proof of Proposition~\ref{prop:fixed_kf}]
\begin{align}
    \sum_{S\subset \{n,\cdots, 1\}} \left(\sum_{\substack{ \labs{S_{f}}=k_{f} \\S  = 
    S_0\perp S_{f} \perp S_+}} \sum_{\gamma\sim S_2}\lnormp{\left[\CL_{\gamma}[\vO_{S_1}]\right]_S}{p}\right)^2\label{eq:rest_of_effective_22norm}
    &\le\sum_{S\subset \{n,\cdots, 1\}} \left(2^{\labs{S_{f}}} \sum_{S_{f},S_+}\sum_{S_-}\sum_{\gamma\sim S_2} 2\norm{\vH_{\gamma}}\lnormp{\vO_{S_{f}S_-S_0}}{p}\right)^2\\ 
     &\le (4^{k_f+1})\cdot \sum_{S\subset \{n,\cdots, 1\}}  \sum_{S_{f},S_+}\left(\sum_{S_-}b_{S_2} \lnormp{\vO_{S_{f}S_-S_0}}{p}\right)^2\cdot \L(\sum_{S'_0\perp S'_{f} \perp S'_+=S} 1\R)\\ 
    &\le(\cdot)(\cdot)\sum_{S\subset \{n,\cdots, 1\}}  \sum_{S_{f},S_+}\left(\sum_{S_-}b_{S_fS_-S_+}^2\right)\left(\sum_{S'_-}\lnormp{\vO_{S_{f}S'_-S_0}}{p}^2\right)\\
    &= (\cdot)(\cdot) \sum_{S_{f}}\left(\sum_{S_-,S_+}b_{S_fS_-S_+}^2\right)\left(\sum_{S'_-,S_0}\lnormp{\vO_{S_{f}S'_-S_0}}{p}^2\right)\\
    &\le 2^{k+k_{f}+2} \cdot \left(\frac{(\labs{S}_{max})^k}{(k-k_{f})!(k-1)!}\right)^2 \cdot  \normp{\vH}{(k_{f}),2}^2 \cdot \left(\sum_{S_1}\lnormp{\vO_{S_1}}{p}^2\right).
\end{align}
The first inequality uses that $\normp{\left[\CL_{\gamma}[\vO_{S_1}]\right]_S}{p}= \normp{\prod_{s \in S_f} (1-E_s) \CL_{\gamma}[\vO_{S_1}]}{p}\le 2^{\labs{S_f}}\normp{\CL_{\gamma}[\vO_{S_1}]}{p}$ via Fact~\ref{fact:project_2^S}. 
The second inequality evaluates $\sum_{\gamma\sim S} \norm{\vH_{\gamma}}= b_{S}$ and uses Cauchy-Schwartz w.r.t to the sum over sets $S_{f}, S_+, S_0$ associated with a given set $S$. The third inequality uses Cauchy-Schwartz w.r.t the sum over sets $S_-$.
We also evaluate the elementary sum (the last inequality here uses that the largest term is attained at $\labs{S_+} = k- \labs{S_f}$. )
\begin{align}
    \sum_{S'_0\perp S'_{f} \perp S'_+=S} 1 = \sum_{\labs{S_+}=0}^{k-\labs{S_{f}}} \binom{S}{S_{f}} \binom{S-S_{f}}{S_+}
    &= \sum_{\labs{S_+}=0}^{k-\labs{S_{f}}} \frac{\labs{S}!}{\labs{S_{f}}!\labs{S_+}!(\labs{S}-\labs{S_+}-\labs{S_{f}})!}\\
    &\le \sum_{\labs{S_+}=0}^{k-\labs{S_{f}}} \frac{\labs{S}^{\labs{S_+}}}{\labs{S_{+}}!}\frac{\labs{S}^{\labs{S_{f}}}}{\labs{S_{f}}!}
     \le (k-k_{f}+1) \frac{\labs{S}^{k}}{(k-k_{f})!k!}.
\end{align}
The equality rearranges the sum. The last inequality uses the following estimates
\begin{align}
\max_{S_{f}} \left(\sum_{S_-,S_+} b_{S_fS_-S_+}^2\right) &=\max_{S_{f}} \left(\sum_{S'}\sum_{S_-\cup S_+=S'}b_{S_fS_-S_+}^2\right) \\
&\le 2^{k-k_{f}} \cdot \normp{\vH}{(k_{f}),2}^2,     
\end{align}
and
\begin{align}
   \sum_{S_{f},S'_-,S_0}\lnormp{\vO_{S_{f}S'_-S_0}}{p}^2\le \sum_{S_1} \sum_{S_{f}\perp S'_-\perp S_0=S_1}\lnormp{\vO_{S_1}}{p}^2&\le \sum_{\labs{S'_-}=0}^{k-k_{f}} \binom{\labs{S_1}}{S_{f}} \binom{\labs{S_1}-S_{f}}{S'_-}\cdot \sum_{S_1}\lnormp{\vO_{S_1}}{p}^2\\
   &\le (k-k_{f}+1) \frac{(\labs{S}_{max})^{k}}{(k-k_{f})!k!}\cdot \sum_{S_1}\lnormp{\vO_{S_1}}{p}^2.
\end{align}
These, together with the hidden constants $(\cdot)(\cdot)$, give the ultimate prefactors.
\end{proof}

We can now prove the main lemma by summing over the set sizes $k_f = 1,\cdots, k$.
\begin{proof}[Proof of Lemma~\ref{lem:22norm_with_alpha}]
\begin{align}
    \sum_{S\subset \{n,\cdots, 1\}} \left(\sum_{\gamma}\sum_\alpha\lnormp{\left[\CL_{\gamma}[\vO^\alpha]\right]_S}{p}\right)^2
     &= \sum_{S\subset \{n,\cdots, 1\}} \left(\sum_{S_2,S_1}\sum_{\gamma}\indicator(\gamma\sim S_2)\sum_\alpha\lnormp{\left[\CL_{\gamma}[\vO^\alpha_{S_1}]\right]_S}{p}\right)^2\\
    & = \sum_{S\subset \{n,\cdots, 1\}} \left(\sum_{k_f=1}^k\sum_{\substack{ \labs{S_{f}}=k_{f} \\S  = 
    S_0\perp S_{f} \perp S_+}}\sum_\alpha\lnormp{\left[\CL_{\gamma}[\vO^\alpha_{S_{f}S_-S_0}]\right]_S}{p}\right)^2\\
    & \le\left( \sum_{k_f=1}^k \sqrt{\sum_{S\subset \{n,\cdots, 1\}} \left( \sum_{\substack{ \labs{S_{f}}=k_{f} \\S  = 
    S_0\perp S_{f} \perp S_+}}\sum_\alpha\lnormp{\left[\CL_{\gamma}[\vO^\alpha_{S_{f}S_-S_0}]\right]_S}{p}\right)^2} \right)^2.
\end{align}
The second equality presents the sets $S_1,S_2$ by the decomposition $S_{f},S_-,S_+,S_0$ and isolates the sum over $k_f$. The last inequality might look intimidating, but it is actually a triangle inequality (over values $\labs{S_{f}}$ ) for certain 2-norm
\begin{align}
   \sqrt{\sum_{S} (\sum_{k_f} f(k_f,S))^2} \le \sum_{k_f} \sqrt{\sum_{S}  f(k_f,S)^2} \quad \text{for any function} \quad f(k_f,S). 
\end{align}
We may now use a variant of Proposition~\ref{prop:fixed_kf} with an additional sum over an abstract set $\sum_\alpha$. The derivation is analogous by keeping the sum at the innermost layer (sticking to the operator $\vO^\alpha$). with the replacement 
\begin{align}
\lnormp{\vO_{S_{f}S_-S_0}}{p}\rightarrow \left(\sum_{\alpha}\lnormp{\vO^\alpha_{S_{f}S_-S_0}}{p}\right).    
\end{align}
This is the advertised result.
\end{proof}

\subsection{Bounds for $g'$-th Order and Beyond.}\label{sec:g'_th_non_random}
The previous section evaluates the g-th order terms $\CE_g$. This section takes care of the last term in the Taylor expansion $\CE_{\ge g'}$. To ease notation, we set the dummy variable to be $g'\rightarrow g$. It has infinite-order dependence on time, so we have to tweak the calculations. Recall~\eqref{eq:error_comm},
\begin{align}
    \lnormp{\vec{\CE}_{\ge g}}{p}
    &= \lnormp{ \sum^{J}_{j=1} \sum_{m=j+1}^{J} \e^{\CL_J t}\cdots \e^{\CL_{m+1} t} \int_0^t dt_1 \sum_{g_m+\cdots+g_{j+1}=g-1,g_m\ge 1}e^{\CL_{m} t_1} \CL^{g_m}_m\cdots \CL^{g_{j+1}}_{j+1}[\vH_{j}] \frac{(t-t_1)^{g_m-1}t^{g-1-g_m}}{(g_m-1)!\cdots g_{j+1}!}}{p}\\
    &\le \sum_{m=2}^{J} \lnormp{ \sum^{J}_{j=m-1} \sum_{g_m+\cdots+g_{j+1}=g-1,g_m\ge 1}\CL^{g_m}_m\cdots \CL^{g_{j+1}}_{j+1}[\vH_{j}] \frac{t^{g-1}}{g_m!\cdots g_{j+1}!}}{p}\label{eq:apply_low_particle}\\
    &\le \sqrt{C_p}^{g(k-1)+1}\Upsilon (t\Upsilon)^{g-1} \sum_{\gamma_{g-1}}^\Gamma \sqrt{ \sum_{S\subset \{n,\cdots, 1\}}  \left( \sum_{\gamma_{g-2}}^\Gamma\cdots\sum_{\gamma_0}^\Gamma\lnormp{\CL_{\gamma_{g-1}}\left[\CL_{\gamma_{g-2}}\cdots \CL_{\gamma_1} [\vH_{\gamma_0}]\right]_S}{p}\right)^2 }\label{eq:g'-order_LLL}.
\end{align}
The first inequality exchanges the summation order, applies the triangle inequality, integrates over time, and removes the unitary conjugations by unitary invariance of p-norms. The second inequality is a similar calculation to~\eqref{eq:g-order_LLL}. We use Hypercontractivity, pull the p-norm inside the sum, and symmetrize the sum by completing the exponential for $\gamma_{g-2}\cdots \gamma_0$. 

The only difference from~\eqref{eq:g-order_LLL} is the outer-most sum \textit{outside} the square root.
\begin{lem}[Sum outside the square-root] \label{lem:22norm_with_alpha_beyond_g}
\begin{align}
    \sum_{\gamma} \sqrt{\sum_{S\subset \{n,\cdots, 1\}} \left(\sum_\alpha\lnormp{\left[\CL_{\gamma}[\vO^\alpha]\right]_S}{p}\right)^2 } &\le \lambda'(k) \cdot  \sqrt{\sum_{S_1\subset \{n,\cdots, 1\}} \left(\sum_{\alpha}\lnormp{[\vO^\alpha]_{S_1}}{p}\right)^2}
\end{align}
where 
\begin{align}
\lambda'(k)= 2\cdot \sum_{k'=1}^k \binom{k}{k'} {\sqrt{20}}^{k'} \sqrt{\frac{1}{k'!}\lnormp{\vH}{(k'),1}\lnormp{\vH}{(global),1}}.    
\end{align}
\end{lem}
We can evaluate the bound using Lemma~\ref{lem:22norm_with_alpha_beyond_g} for the outer-most sum $\sum_{\gamma_{g-1}}^{\Gamma}$ and Lemma~\ref{lem:22norm_with_alpha} for $\gamma_{g-2},\cdots \gamma_1$
\begin{align}
(cont.) &\le \Upsilon (\Upsilon t)^{g-1}   \sqrt{C_p}^{g(k-1)+1}   (g(k-1)+1)^{k/2}  \lambda'(k) \cdot \bigg( ((g-1)(k-1)+1)\cdots (2(k-1)-1)\bigg)^{k} \lambda(k)^{g-2}\cdot \lnormp{\vH}{(global),2}\lnormp{\vI}{p}\notag\\
    &\le  \sqrt{C_p}^{g(k-1)+1}   c'(k) \cdot g^{gk}\Upsilon\left(c(k)\Upsilon t\right)^{g-1}\lnormp{\vI}{p}.\label{eq:non_random_beyond_g}    
\end{align}
The last inequality absorbs constants into $c'(k) $ 
\begin{align}
    c'(k) &= \frac{\lambda'(k)}{\lambda(k)} \frac{1}{\sqrt{k}^k}\lnormp{\vH}{(global),2}.
\end{align}
 In other words, the higher-order time dependence forces us to apply triangle inequality for the outer layer sum; fortunately, we can still use Lemma~\ref{lem:22norm_with_alpha} for the inner sums. These give the different prefactor $c'(k)$.

\begin{proof}[Proof of Lemma~\ref{lem:22norm_with_alpha_beyond_g}]
The calculation is analogous to Lemma~\ref{lem:22norm_with_alpha}. We define a slightly different quantity
\begin{align}
    \labs{S_-}+\labs{S_{f}} := k'
\end{align}
that will organize the combinatorics (the analog of the number $k_f$ in Lemma~\ref{lem:22norm_with_alpha}). We first rearrange the expression in terms of the subsets $S_+,S_-,S_0,S_f$.

\begin{align}
    \sum_{\gamma} \sqrt{\sum_{S\subset \{n,\cdots, 1\}} \left(\sum_\alpha\lnormp{\left[\CL_{\gamma}[\vO^\alpha]\right]_S}{p}\right)^2 }&\le
    \sum_{S_2}\sum_{\gamma\sim S_2} \sqrt{ \sum_{S\subset \{n,\cdots, 1\}} \left(\sum_{S_1}\sum_\alpha\lnormp{\left[\CL_{\gamma}[\vO^\alpha_{S_1}]\right]_S}{p}\right)^2 }\\
    &\le  \sum_{S_2}\sum_{\gamma\sim S_2} \sqrt{ \sum_{S\subset \{n,\cdots, 1\}} \left(\sum_{k'=1}^k\sum_{\substack{S_-,S_{f}\subset S_2\\ \labs{S_-}+\labs{S_{f}} = k'}}2^{\labs{S_{f}}} \sum_\alpha\lnormp{\CL_{\gamma}[\vO^\alpha_{S_{f}S_-S_0}]}{p}\right)^2}\\
    &\le \sum_{k'=1}^k  \sum_{S_2} \sum_{\gamma\sim S_2}\sqrt{ \sum_{S\subset \{n,\cdots, 1\}} \left(\sum_{\substack{S_-,S_{f}\subset S_2\\ \labs{S_-}+\labs{S_{f}} = k'}}2^{\labs{S_{f}}} \sum_\alpha\lnormp{\CL_{\gamma}[\vO^\alpha_{S_{f}S_-S_0}]}{p}\right)^2}.
\end{align}
The first inequality parameterizes the sets $S_1 = S_{f}S_-S_0$ that could give rise to $S$ after taking the commutator $\CL_{\gamma}$. The factor $2^{\labs{S_{f}}}$ is due to Fact~\ref{fact:project_2^S}. The second inequality is a triangle inequality to postpone the sum over $k' = \labs{S_{f}}+\labs{S_-}$. 

Next, we use Cauchy-Schwartz to break the non-linear expression into individual pieces. This costs multiplicative constant overheads that depend only on $k$.
\begin{align}
    (cont.)&\le  \sum_{k'=1}^k\sum_{S_2} \sum_{\gamma\sim S_2} \sqrt{ \sum_{S_0} \sum_{\substack{S_-,S_{f}\subset S_2\\ \labs{S_-}+\labs{S_{f}} = k'}} \left( \sum_\alpha\lnormp{\CL_{\gamma}[\vO^\alpha_{S_{f}S_-S_0}]}{p}\right)^2 (\sum_{S'_{f},S'_-} 2^{2\labs{S'_{f}}}) }\\
    &\le  \sum_{k'=1}^k\sum_{\substack{S_+,S_-,S_{f}\\ \labs{S_-}+\labs{S_{f}} = k'}} 2b_{S_+S_-S_{f}} \sqrt{ \sum_{S_0} \left( \sum_\alpha\lnormp{\vO^\alpha_{S_{f}S_-S_0}}{p}\right)^2 (\cdot)}\\
    &\le \sum_{k'=1}^k 2\sqrt{\sum_{\substack{S_-,S_{f}\\ \labs{S_-}+\labs{S_{f}} = k'}} \L( \sum_{S_+} b_{S_+S_-S_{f}} \R)^2 } 
    \sqrt{ \sum_{\substack{S_-,S_{f},S_0 \\ \labs{S_-}+\labs{S_{f}} = k'}} \left( \sum_\alpha\lnormp{\vO^\alpha_{S_{f}S_-S_0}}{p}\right)^2 (\cdot)  }.
\end{align}
The first inequality is Cauchy-Schwartz, where the sum evaluates to
\begin{align}
     (\cdot)=\sum_{S'_{f},S'_-} 2^{2\labs{S'_{f}}} = \binom{k}{k'} \cdot \sum^{k'}_{k_{f}=1} \binom{k'}{k_{f}} 2^{2\labs{S'_{f}}} =  \binom{k}{k'} 5^{k'}.
\end{align}
The second inequality is a triangle inequality for the sum over subsets $S_-,S_{f}\subset S_2$, which then combines with the sum over $S_2$.  The fifth inequality is Cauchy-Schwartz's. 
Lastly, we evaluate the combinatorial factors for each term
\begin{align}
    \sum_{S_-,S_{f}} \L( \sum_{S_+} b_{S_+S_-S_{f}}\R)^2 &\le \sum_{S_-,S_{f}} \sum_{S_+} b_{S_+S_-S_{f}} \cdot \max_{S'_-,S'_{f}} \sum_{S'_+} b_{S'_+S'_-S'_{f}}\\
    &=  \binom{k}{k'} 2^{k'}\lnormp{\vH}{(global),1}\cdot \normp{\vH}{(k'),1},
\end{align}
and
\begin{align}
    \sum_{S_-,S_{f},S_0} \left( \sum_\alpha\lnormp{\vO^\alpha_{S_{f}S_-S_0}}{p}\right)^2  = \binom{\labs{S_{max}}}{k'} 2^{k'}\cdot  \sum_{S} \left( \sum_\alpha\lnormp{\vO^\alpha_{S}}{p}\right)^2.
\end{align}
These give the advertised result.
\end{proof}

\subsection{Proof of Theorem~\ref{thm:Trotter_non_random}}\label{sec:proof_non_random}
\begin{proof}
For a short time $\tau$, we arrange and perform the last integral using estimate $\int (\tau')^{g-1} d\tau'\le \tau'^{g}$
\begin{align}
    \frac{\lnormp{e^{\iunit \vH \tau}- \vec{S}_\ell(\tau)}{p}}{\lnormp{\vI}{p}} &\le \int^\tau_0 \frac{\lnormp{\vCE(\tau')}{p}}{\lnormp{\vI}{p}} d\tau' \\
    &\le\frac{\sqrt{C_p}}{c(k)} \lnormp{\vH}{(global),2} \cdot \sum_{g=\ell+1}^{g'-1} \left(g^{k}\sqrt{C_p}^{k-1}c(k)\Upsilon\tau \right)^{g}   
    + \sqrt{C_p}  \frac{c'(k)}{c(k)} \cdot \left(g'^{k}\sqrt{C_p}^{k-1}c(k)\Upsilon\tau \right)^{g'}\\
    &:= c'_{1,p} \sum_{g=\ell+1}^{g'-1} \left(g^{k}b_p\tau \right)^{g} + c'_{2,p} \left(g'^{k}b_p\tau \right)^{g'}\\
    &\le \frac{c'_{1,p}}{1-1/\e} \L( (\ell+1)^{k} b_p\tau \R)^{\ell+1} + c'_{2,p} \exp\L(-\frac{1}{\e(b_p\tau)^{1/k}} +1\R)\\
    &:= c_{1,p} ( b_p\tau )^{\ell+1} + c_{2,p} \lexp{-\frac{1}{\e(b_p\tau)^{1/k}}}.
\end{align}
In the second inequality we call the bounds for each $g-th$ order~\eqref{eq:nonrandom_gthorder} and the $g'$-th order~\eqref{eq:non_random_beyond_g} for a good value of 
\begin{align}
g' = \L\lfloor \frac{1}{\e(b_p\tau)^{1/k}} \R\rfloor.    
\end{align}
This is possible as long as the following holds.
\begin{constraint}\label{constr:non_random_1st}
$  (\frac{1}{b_p\tau})^{1/k}\ge \e (\ell+3).  $
\end{constraint}

Then, the total Trotter error at a long time $t = r \cdot \tau $ is bounded by a telescoping sum 
\begin{align}
    \frac{\lnormp{\vCE_{tot} }{p}}{\lnormp{\vI}{p}} := \frac{\lnormp{e^{\iunit \vH t}- \vec{S}_\ell(t/r)^r}{p}}{\lnormp{\vI}{p}} \le r \cdot \frac{\lnormp{e^{\iunit \vH t/r}- \vec{S}_\ell(t/r)}{p}}{\lnormp{\vI}{p}} &\le  c_{1,p} \frac{(b_pt)^{\ell+1}}{r^{\ell}} + r c_{2,p} \lexp{-\frac{1}{\e}(\frac{r}{b_pt})^{1/k} }\\
    &\le 2c_{1,p} \frac{(b_pt)^{\ell+1}}{r^{\ell}}\le p^\eta 2c_{1} \frac{(bt)^{\ell+1}}{r^{\ell}}.\\
    \text{where}\quad \eta:=\frac{(\ell+1)(k-1)+1}{2}.
\end{align}
At the second line we restrict to sufficiently large values of r that the first term dominates.\footnote{ In obtaining Constraint~\ref{constr:non_random_2nd}, note that factors of $p$ cancels out $c_{2,p}/c_{1,p} = c_{2}/c_{1}$.} 
\begin{constraint}\label{constr:non_random_2nd}
$(\frac{1}{b_p\tau})^{1/k} \ge \e\ln\L( \frac{c_2}{c_1} (\frac{1}{b_p\tau})^{\ell+1} \R)$.
\end{constraint}
The last inequality isolates the $p$-dependence and we use $C_p = p-1\le p$. 

Next, for each value of $r$,  
\begin{align}
\text{choose}\quad p= \L( \frac{\epsilon r^\ell}{2c_1(bt)^{\ell+1}}\R)^{1/\eta} \frac{1}{\e}.
\end{align}
Via Markov's inequality, this gives concentration for its singular values(or over any 1-design inputs)
\begin{align}
    \L(\frac{ \lnormp{\vec{\CE}_{tot}}{p}}{\epsilon\normp{\vI}{p}} \R)^p
    \le \lexp{- \frac{\eta}{\e } ( \frac{\epsilon r^{\ell}}{2c_1(bt)^{\ell+1}} )^{1/\eta}  } = \delta .
\end{align}
Choose 
\begin{align}
    r\quad \text{such that}\quad p\ge \max\L( 2, \log(1/\delta )/ \eta \R),
\end{align}
 which explicitly evaluates to
\begin{align}
    r\ge   \L(  \frac{2\sqrt{\e (2+\log(1/\delta ) / \eta)}}{\e-1} \L((\ell+1)\sqrt{\e(2+\log(1/\delta ) / \eta) }\R)^{(\ell+1)(k-1)} \cdot \frac{\lnormp{\vH}{(global),2}\Upsilon t }{\epsilon} \R)^{1/\ell} c(k)\Upsilon t.
\end{align}

We also need to comply with both Constraint~\ref{constr:non_random_1st} and Constraint~\ref{constr:non_random_2nd}, which summarize\footnote{We use $2\max(a,b) \ge a+b$ to simplify the constraints.} to 

\begin{align}
    \frac{1}{b_p\tau} \ge a \quad \text{where} \quad    a := \max \L[  \L(\e(\ell+3) \R)^{k}, \L(2\e \ln\L( \frac{c_2}{c_1} \R)  \R)^{k}, x \R].
\end{align}
The constant $x$ is the unique solution to the transcendental equation
\begin{align}
    x\quad \text{such that}\quad x = (2\e(\ell+1))^{k}\cdot \ln^{k}(x).
\end{align}
Rearrange to obtain
\begin{align}
    r \ge a^{2\eta/k} \L(  c(k)(\Upsilon t)^{1/k} \R) \L(\frac{1-1/\e}{2 \e^{\eta} (\ell+1)^{(\ell+1)k}    } \frac{\epsilon}{\lnormp{\vH}{(global),2}}\R)^{\frac{k-1}{k}},
\end{align}
And recall the explicit values
\begin{align}
    c_1 &= \frac{\norm{\vH}_{(0),2}}{c(k)(1-1/\e)}\\
    c_2 &=\e \frac{c'(k)}{c(k)}\\
    c(k)&=ak^{k}\lambda(k)\\
    c'(k)&= \frac{\lambda'(k)}{\lambda(k)} \frac{1}{\sqrt{k}^k}\lnormp{\vH}{(global),2}
\end{align}
and
\begin{align}
    \lambda'(k)&= 2\cdot \sum_{k'=1}^k \binom{k}{k'} {\sqrt{20}}^{k'} \sqrt{\frac{1}{k'!}\lnormp{\vH}{(k'),1}\lnormp{\vH}{(global),1}}\\
    \lambda(k)&=\frac{2^{k/2+1}}{(k-1)!} \sum_{k'=1}^k \frac{2^{k'/2}}{{(k-k')!}}\lnormp{\vH}{(k'),2}.
\end{align}
The above expressions for gate count are for numerical evaluation; for comprehension, use $\Omega(\cdot)$ to suppress functions of $k,\ell$ (such as the number of stages $\Upsilon$) and note the local norms are decreasing with $k'$. 

\begin{align}
     r &= \Omega\bigg[ \ln(\delta)^{\eta/\ell}\L(\frac{ \lnormp{\vH}{(global),2}}{\epsilon\lnormp{\vH}{(local),2}} \R)^{\frac{1}{\ell}} \L( \lnormp{\vH}{(local),2} t\R)^{1+\frac{1}{\ell}} \notag\\
     &\hspace{2cm}+(\lnormp{\vH}{(local),2}t)^{\frac{1}{k}} \L(\frac{\epsilon \lnormp{\vH}{(local),2}}{\lnormp{\vH}{(global),2}}\R)^{\frac{k-1}{k}} \L( \ln\L(  \frac{\sqrt{\lnormp{\vH}{(local),1}\lnormp{\vH}{(global),1}}}{\lnormp{\vH}{(local),2}}\R)   \R)^{2\eta} \bigg]\\
     &=\Omega\L[ \ln(\delta)^{\frac{k-1}{2}} \lnormp{\vH}{(local),2} t \cdot \L( \ln(\delta)^{k/2}\frac{ \lnormp{\vH}{(global),2} t}{\epsilon} \R)^{\frac{1}{\ell}} \R].
\end{align}
The first term dominates for large time, system size, and error (fixing the value of failure probability $\delta$). The gate complexity is given by $G = r\cdot \Upsilon \cdot \Gamma$. This is the advertised result.
\end{proof}
\subsubsection{Constant overhead improvement from another Hypercontractivity}\label{sec:another_hyper}
One may consider directly apply the existing Hypercontractivity $\lnormp{\vF}{\bar{p}}^2 \le \sum_{S} C_p^{\labs{S} } \lnormp{\vF_S}{\bar{2}}^2$ (Proposition~\ref{fact:nc_hyper}). However, one needs to go through the same combinatorial estimates, with minor constant overheads improvements by replacing $\lnormp{\vO_{S_1}}{p}^2\rightarrow \lnormp{\vO_{S_1}}{2}^2$ and discarding Fact~\ref{fact:project_2^S}. Unfortunately, what comes into the ultimate quantity $\lnormp{\vH}{(local),2}$ is the spectral norm $\norm{\vH_{\gamma}}$ coming from a Holder's inequality
\begin{align}
    \lnormp{\CL_{\gamma} [\vO_{S_1}]}{p} \le 2 \norm{\vH_{\gamma}} \lnormp{\vO_{S_1}}{p},
\end{align} 
and it requires more accounting to get better estimates.

\subsection{Spin Models at a Low Particle Number}\label{sec:spin_low_particle}
In many Hamiltonians, each term $\vH_{\gamma}$ preserves the particle number and the total Hilbert space decomposes into a direct sum of subspaces labeled by their particle number. The input state may have a known particle number. 

In this section, we will present an appropriate notion of concentration for input states drawn randomly from a fixed particle number subspace. Formally, denote the m-particle subspace by the orthogonal projector 
\begin{align}
    \vP_m :=\sum_{\#(\ket{1}) =m} \big( \ket{0}\cdots \ket{1}\big) (h.c.) = \sum_{\#(\ket{1}) =m} \ket{0}\bra{0}\cdots \ket{1}\bra{1},
\end{align}
then particle number preserving means
\begin{align}
    [\vP_{m'}, \vH_{\gamma}] = 0 \quad \text{for each} \quad m',\gamma.
\end{align}
We need to first define the appropriate $k$-locality in this case by expanding the Hamiltonian in the basis 
\begin{align}
\vH &= \sum_{S_-\perp S_+\perp S_z\subset \{m,\cdots,1\}} b_{S_+S_-S_z} \prod_{s_+\in S_+ } \vsigma^+_{s_+} \prod_{s_-\in S_- } \vsigma^-_{s_-}\prod_{s_z\in S_z }\vO^\eta_{s_z}\label{eq:k_local_low_particle}\\
&\text{where} \quad     \vsigma^+ :=\ket{1}\bra{0}, \vsigma^- :=\ket{0}\bra{1}, \quad\text{and}\quad     \vO^\eta:= (1-\eta)\ket{1}\bra{1}-\eta\ket{0}\bra{0}.
\end{align}
$k$-locality in this basis is defined by
\begin{align}
 k =\labs{S_-}+\labs{S_+}+\labs{S_z}.
\end{align}
Note that particle number preserving enforces the number of raising and lower operators match $\labs{S_+}=\labs{S_-}$. This expansion is motivated by an auxiliary product state $\vrho_{\frac{m}{n}}$ that closely relates to the normalized subspace projector $\bar{\vP}_m = \vP_m/\tr[\vP_m]$. Intuitively, the operator $\vO^{\eta}$ is the analog of Pauli $\vsigma^z$ in a biased background 
\begin{align}
\vrho_i := \eta\ket{1}\bra{1}+(1-\eta)\ket{0}\bra{0} \quad \text{such that}\quad  \tr[\vO^{\eta}\vrho_i] = 0.    
\end{align}
See Section~\ref{sec:product_bg} for the details on the construction. Here, we present the concentration result for Trotter error.
\begin{prop}[Trotter error in $k$-local models] \label{prop:Trotter_non_random_low}
To simulate a number preserving $k$-local Hamiltonian using the $\ell$-th order Suzuki formula on the m-particle subspace $\vP_m$, the gate complexity 
\begin{align}
        G =\Omega\L( \left(\poly(n,m)^{1/p}\frac{p^{\frac{k}{2}} \lnormp{\vH}{(global),2} t}{\epsilon} \right)^{1/\ell} \Gamma p^{\frac{k-1}{2}} \lnormp{\vH}{(local),2} t\R) \ \ &\textrm{ensures}\ \  \lnormp{\e^{\iunit \vH t}- \vec{S}_{\ell}(t/r)^r}{p,\bar{\vP}_m} \le \epsilon,
\end{align} 
where the quantities $\lnormp{\vH}{(global),2}$ and $\lnormp{\vH}{(local),2} $ are defined w.r.t to~\eqref{eq:k_local_low_particle}.
\end{prop}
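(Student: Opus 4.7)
The plan is to reduce the low-particle-number weighted $p$-norm to the product-state weighted $p$-norm and then mimic the argument for Theorem~\ref{thm:Trotter_non_random} term-by-term, replacing every unweighted tool by its $\vrho$-weighted analog developed in Section~\ref{sec:product_bg}.

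\medskip
\noindent\textbf{Step 1: Reduction to a product background.} Since $\vH$ preserves particle number, so do $\e^{\iunit \vH t}$ and $\vec{S}_\ell(t/r)^r$, hence the Trotter error $\vCE_{tot}:=\e^{\iunit \vH t}-\vec{S}_\ell(t/r)^r$ commutes with every $\vP_{m'}$. Using $\bar{\vP}_m\le b(n,m)\,\vrho_{m/n}$ with $b(n,m)=\CO(\poly(n,m))$ and the monotonicity of weighted $p$-norms under the background (Fact following Proposition~\ref{prop:general_pauli_expansion_weighted}), I obtain
\begin{align}
\lnormp{\vCE_{tot}}{p,\bar{\vP}_m}\le \poly(n,m)^{1/p}\cdot \lnormp{\vCE_{tot}}{p,\vrho_{m/n},s},
\end{align}
so it suffices to control the product-state-weighted $p$-norm on the right.

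\medskip
\noindent\textbf{Step 2: Rewrite the framework with $\vO^\eta$ in place of $\vsigma^z$.} Expand each $\vH_\gamma$ in the basis $\{\vZ^+,\vZ^-,\vO^\eta,\vI\}$ (with $\eta=m/n$), which is the natural Hermitian basis adapted to the product background $\vrho_{m/n}=\otimes_i\vrho_i$. The $k$-locality of the Hamiltonian in~\eqref{eq:k_local_low_particle} is the analog of Pauli $k$-locality, and the normalization $\norm{\vH_\gamma}\le b_\gamma$ controls each term's spectral norm irrespective of the background. I still write the Trotter error as the time-ordered commutator expansion $\vCE(\vH_1,\ldots,\vH_\Gamma,\Upsilon t)$, and I re-Taylor-expand to $g$-th and $g'$-th order exactly as in Section~\ref{sec:sketch}.

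\medskip
\noindent\textbf{Step 3: Replace every tool by its weighted version.} The single replacement that drives everything is: use Proposition~\ref{prop:general_pauli_expansion_weighted} in place of Proposition~\ref{prop:general_pauli_expansion}, so that for any operator $\vF$ expanded on subsystem $\{m,\ldots,1\}$,
\begin{align}
\lnormp{\vF}{p,\vrho_{m/n},s}^2\le \sum_{S}(C_p)^{|S|}\lnormp{\vF_S}{p,\vrho_{m/n},s}^2.
\end{align}
With this in hand, I rerun the pivot analysis of Section~\ref{sec:nonrandom_g_th} with $S_2=S_-\sqcup S_{pvt}\sqcup S_+$, $S_1=S_0\sqcup S_-\sqcup S_{pvt}$, and $S=S_0\sqcup S_{pvt}\sqcup S_+$, noting that $\vO^\eta$ still forces a non-empty pivot $S_{pvt}$ (the weighted partial trace of products containing a single $\vO^\eta$ or a single $\vZ^\pm$ at site $s$ vanishes, which is the defining property of the basis). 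The weighted analog of Fact~\ref{fact:project_2^S}, $\lnormp{[\vO]_S}{p,\vrho,s}\le 2^{|S|}\lnormp{\vO}{p,\vrho,s}$, follows from Fact~\ref{fact:nc_convexity_weighted} (i.e., $\vI_s\btr_s[\cdot]$ replaced by the conditional expectation $E_s$ onto the $\vrho_s$-invariant subalgebra is norm non-increasing) combined with a brutal triangle inequality, exactly as in the unweighted case. Consequently, the effective $2$-$2$ norm estimates of Corollary~\ref{cor:22norm_with_alpha} and Corollary~\ref{cor:22norm_with_alpha_beyond_g} carry over verbatim with $\lnormp{\vH}{(c),2}$ defined through $\norm{\vH_\gamma}$ (which is background-independent).

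\medskip
\noindent\textbf{Step 4: Assemble and pick $p$.} The $g$-th order bound~\eqref{eq:nonrandom_gthorder} and the $g'$-th order tail~\eqref{eq:non_random_beyond_g} reappear with the normalization $\lnormp{\vI}{p,\vrho_{m/n},s}=1$, and the telescoping argument in Section~\ref{sec:proof_non_random} gives the same functional form of gate count. Finally Markov's inequality is applied to the transformed norm $\lnormp{\cdot}{p,\bar{\vP}_m}$, incurring the extra $\poly(n,m)^{1/p}$ factor inside the $\epsilon/\cdot$ ratio. Choosing $p=\Omega(\log(1/\delta))$ absorbs the polynomial factor into a $\log$, and choosing $r$ as in the proof of Theorem~\ref{thm:Trotter_non_random} yields the advertised scaling
\begin{align}
G=\Omega\!\left(\Big(\poly(n,m)^{1/p}\frac{p^{k/2}\lnormp{\vH}{(0),2}t}{\epsilon}\Big)^{1/\ell}\Gamma\, p^{(k-1)/2}\lnormp{\vH}{(1),2}t\right).
\end{align}

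\medskip
\noindent\textbf{Main obstacle.} The only delicate point is verifying that the pivot / projection fact $\lnormp{[\vO]_S}{p,\vrho,s}\le 2^{|S|}\lnormp{\vO}{p,\vrho,s}$ and the monotonicity Fact~\ref{fact:nc_convexity_weighted} compose correctly on \emph{factorized} backgrounds, so that each site-conditional expectation is a contraction in the weighted norm; once this is established, the whole combinatorial edifice of Sections~\ref{sec:nonrandom_g_th}--\ref{sec:proof_non_random} is reused as a black box. Since $\vrho_{m/n}$ is a tensor product, this composition is straightforward, and the rest of the proof is the same calculation in a different typeface.
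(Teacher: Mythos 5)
Your proof follows essentially the same route as the paper's: convert the $\bar{\vP}_m$-weighted norm to the $\vrho_{m/n}$-weighted norm paying a $\poly(n,m)^{1/p}$ factor, invoke the weighted hypercontractivity (Proposition~\ref{prop:general_pauli_expansion_weighted}), and then rerun the combinatorics of Sections~\ref{sec:nonrandom_g_th}--\ref{sec:proof_non_random}. The paper's own proof is considerably terser — it is a one-display calculation plus two technical remarks — so your version is a more explicit unpacking of the same argument, and there is no disagreement in substance.

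One place to be a bit more careful: the ``Main obstacle'' you identify (that the site-conditional expectations compose correctly in the weighted norm, via Fact~\ref{fact:nc_convexity_weighted}) is a real point, but it is not the only delicate one. The paper flags, in two footnotes, two additional technicalities that your Step~3 glosses over. First, reusing Corollaries~\ref{cor:22norm_with_alpha} and~\ref{cor:22norm_with_alpha_beyond_g} ``verbatim'' requires the H\"older-type inequality $\lnormp{\vH_\gamma \vO}{p,\vrho_\eta,s}\le \lV\vH_\gamma\rV\,\lnormp{\vO}{p,\vrho_\eta,s}$; your justification (``$\lV\vH_\gamma\rV\le b_\gamma$ controls each term's spectral norm irrespective of the background'') is true but does not by itself yield this H\"older step, which fails for a general weighted norm and holds here precisely because every operator produced in the expansion commutes with $\vrho_\eta$ by particle-number preservation. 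Second, one must check that the subsystem projection $[\vO]_S$ itself preserves particle number, so that the previous remark continues to apply after each pivot step; the paper verifies this by noting that deleting terms from an operator which is a sum of balanced $\vZ^+/\vZ^-$ monomials does not unbalance it. Neither is a gap in the sense of invalidating the approach — you are right that the combinatorics carry over — but they are the specific reasons the black-box reuse is licensed, and it is worth naming them.
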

Note that we have drop the parameter $s$ in $\lnormp{\cdot}{p,\bar{\vP}_m,s}$ since every term commutes with $\vP_m$ (and the auxiliary state $\vrho_{\frac{m}{n}}$).
\begin{proof}
The result quickly follows by converting to the p-norm w.r.t. the auxiliary product state 
$\vrho_{\eta}=\otimes_i \vrho_{i}=\otimes_i (\eta\ket{1}\bra{1}+(1-\eta)\ket{0}\bra{0})_i$ defined by the filling ratio $\eta = \frac{m}{n}$. For $\vF = [\vec{\CE} (\vH_1,\cdots,\vH_\Gamma, t)]_{g}$, 
\begin{align}
     \displaystyle \lnormp{\vF}{p,\bar{\vP}}  \le \lnormp{\vF}{p,\vrho_{\eta}} \cdot\left(\poly(n,m)\right)^{1/p} \le \sqrt{\sum_{S\subset \{m,\cdots,1\}} (C_p)^{|S|} \lnormp{\vF_S}{p,\vrho_{\eta}}^2 }\cdot\left(\poly(n,m)\right)^{1/p}.
\end{align}
Some technical notes: Holder's inequality still works for the $\vrho_{\eta}-$ weighted norms\footnote{Due to particle number preserving, i.e., $\vrho_{\eta}$ commutes with any term produced by the Hamiltonian. } $\lnormp{\vH_{\gamma} \vO}{p,\vrho_{\eta}}\le \lnormp{\vO}{p,\vrho_{\eta}} \norm{\vH_{\gamma}}$ (which needs not be true for general $\vrho$); if $\vO$ is particle number preserving, then $[\vO]_{S}$ is also particle number preserving \footnote{This can be seen by $\vO$ is the sum of terms that each has the same number of $\vsigma^+$ and $\vsigma^-$. Removing some of them by $[\vO]_S$ does not change this structure.}. 
\end{proof}

Via Markov's inequality (plug $\vrho =\bar{\vP}_m $ into Proposition~\ref{prop:typical_Schatten}), we obtain concentration.
\begin{cor} 
Draw $\ket{\psi}$ from a $m = \eta n$ - particle subspaces (i.e., $\BE[ \ket{\psi}\bra{\psi}] =\vP_{m}/\tr[\vP_{m}]$), then \begin{align*}
    G &=\Omega \L( \left(\frac{\sqrt{\log(\poly(n,m)/\delta)}^{k}\lnormp{\vH}{(global),2} t}{\epsilon} \right)^{1/\ell} \sqrt{\log(\poly(n,m)/\delta)}^{k-1} \Gamma \lnormp{\vH}{(local),2} t \R)\notag \\ &\hspace{4cm} \textrm{ensures}\ \  \Pr \L( \lnormp{\e^{\iunit \vH t}- \vec{S}(t/r)^r\ket{\psi}}{\ell_2}  \ge \epsilon \R) \le \delta. 
\end{align*} 
\end{cor}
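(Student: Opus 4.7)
The plan is to reduce the corollary to a direct application of Markov's inequality for random pure states, using Proposition~\ref{prop:typical_Schatten} with the background state $\vrho = \vP_m/\tr[\vP_m] = \bar{\vP}_m$, and then substituting the $p$-norm bound already established in Proposition~\ref{prop:Trotter_non_random_low}. Writing $\vF := \e^{\iunit \vH t}- \vec{S}_\ell(t/r)^r$, Proposition~\ref{prop:typical_Schatten} yields
\begin{align}
\Pr\L(\lnormp{\vF\ket{\psi}}{\ell_2} \ge \epsilon\R) \le \L(\frac{\lnormp{\vF\, \bar{\vP}_m^{1/p}}{p}}{\epsilon}\R)^p = \L(\frac{\lnormp{\vF}{p,\bar{\vP}_m,1}}{\epsilon}\R)^p.
\end{align}
Because $\vH$ and each Trotter factor commute with $\vP_m$ (number preservation), so does $\vF$, and the weighted norm is independent of the parameter $s$; hence this coincides with the quantity controlled in Proposition~\ref{prop:Trotter_non_random_low}.

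Next, I would insert Proposition~\ref{prop:Trotter_non_random_low}, which gives that the gate count
\begin{align}
    r = \Omega\L(\L(\poly(n,m)^{1/p}\,\frac{p^{k/2}\lnormp{\vH}{(0),2} t}{\epsilon}\R)^{1/\ell} p^{(k-1)/2} \lnormp{\vH}{(1),2} t \R)
\end{align}
ensures $\lnormp{\vF}{p,\bar{\vP}_m} \le \epsilon$. Rescaling $\epsilon \to \epsilon \delta^{1/p}$ then forces $(\lnormp{\vF}{p,\bar{\vP}_m}/\epsilon)^p \le \delta$.

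The only nontrivial step is choosing $p$ so that the $\poly(n,m)^{1/p}$ overhead and the Markov exponent $p$ simultaneously behave well. Picking $p \asymp \log(\poly(n,m)/\delta)$ makes $\poly(n,m)^{1/p} = O(1)$ and turns $\delta^{1/p}$ into a constant; the $p^{k/2}$ and $p^{(k-1)/2}$ prefactors then become the $\sqrt{\log(\poly(n,m)/\delta)}^{k}$ and $\sqrt{\log(\poly(n,m)/\delta)}^{k-1}$ factors in the stated gate count. This is essentially the same Markov-with-optimized-$p$ calculation carried out in Section~\ref{sec:proof_non_random} for the maximally mixed case; the only difference here is that $\log(1/\delta)$ is replaced by $\log(\poly(n,m)/\delta)$ because of the monotonicity factor $\bar{\vP}_m \le \poly(n,m)\,\vrho_{m/n}$ used in Proposition~\ref{prop:Trotter_non_random_low}. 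No new combinatorics or matrix concentration is needed—the heavy lifting (nested-commutator bookkeeping, uniform smoothness for shifted-Pauli bases, $g'$-th order truncation) is inherited through Proposition~\ref{prop:Trotter_non_random_low}, so the proof is essentially a one-line Markov step plus this logarithmic re-tuning of $p$.
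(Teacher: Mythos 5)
Your proposal is correct and follows essentially the same route as the paper: apply Proposition~\ref{prop:typical_Schatten} with $\vrho = \bar{\vP}_m$, feed in the weighted $p$-norm bound from Proposition~\ref{prop:Trotter_non_random_low} (using that number preservation makes the $s$-parameter irrelevant), and choose $p \asymp \log(\poly(n,m)/\delta)$ to kill the $\poly(n,m)^{1/p}$ overhead while producing the advertised $\log$ factors. The paper states this in a single sentence; your write-up is a faithful unpacking of that step.
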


\subsection{$k$-locality for Fermions}\label{sec:proof_fermion_non_random}
Analogously, we generalize to Hamiltonians with Fermionic terms. We begin with defining $k$-locality for Fermionic systems. Suppose the particle-number preserving Fermionic Hamiltonian can be written as 
\begin{align}
\vH = \sum_{S_-\perp S_+\perp S_z\subset \{m,\cdots,1\}} b_{S_+S_-S_z} \prod_{s_+\in S_+ } \va^\dagger_{s_+} \prod_{s_-\in S_- } \va_{s_-}\prod_{s_z\in S_z }\vO^\eta_{s_z}\label{eq:fermion_k_local}.
\end{align}
Again, particle number preserving enforces $\labs{S_+}=\labs{S_-}$.\footnote{Even worse, odd Fermionic terms, e.g., a single fermion $\va^\dagger_i$, anti-commute with each other even if they have no overlapping sites. }
Recall the second quantization commutation relations (following~\cite{thy_trotter_error})
\begin{align}
        [\va^\dagger,\vO^\eta] &= -\va^\dagger,\\
    [\va,\vO^\eta] &= \va,
\end{align}
and 
\begin{align}
    [\va^\dagger_j\va_k,\va^\dagger_\ell \va_m  ] & = \delta_{kl} \va^\dagger_j\va_m-  \delta_{jm} \va^\dagger_{\ell}\va_k,\label{eq:fermion_fermion}\\
    [\va^\dagger_j\va_k,\vO^{\eta}_\ell ] & = \delta_{kl} \va^\dagger_j\va_\ell- \delta_{j\ell} \va^\dagger_{\ell}\va_k,\\
    [\va^\dagger_j\va_k,\vO^{\eta}_\ell \vO^{\eta}_m ] & = \L(\delta_{kl} \va^\dagger_j\va_\ell- \delta_{j\ell} \va^\dagger_{\ell}\va_k \R)\vO^{\eta}_m+  \vO^{\eta}_{\ell}\L(\delta_{km} \va^\dagger_j\va_m- \delta_{jm} \va^\dagger_{m}\va_k \R).
\end{align}
Compared with $k$-local Paulis, the only difference for the Fermionic case is~\eqref{eq:fermion_fermion}: commuting two Fermionic operators on the same site $\ell$ can produce an identity $\vI_{\ell}$. This would add an extra term in our effective 2-2 norm calculation (Corollary~\ref{lem:22norm_with_alpha})
\begin{align}
    \lambda_{ferm}(k):=\frac{2^{k/2+1}}{(k-1)!} \sum_{k_{f}=1}^k \frac{2^{k_{f}/2}}{{(k-k_{f})!}}\lnormp{\vH}{(k_{f}),2} + \frac{2^{k/2+1}}{(k-1)!} \frac{1}{{k!}}\lnormp{\vH_{ferm}}{(0),2},
\end{align}
where the ``global'' 2-norm $\lnormp{\cdot}{(0),2}$ only contains Fermionic operators
\begin{align}
\vH_{ferm}:= \sum_{\labs{S_-}+\labs{S_+}\ne 0, S_-\perp S_+\perp S_z\subset \{m,\cdots,1\}} b_{S_+S_-S_z} \prod_{s_+\in S_+ } \va^\dagger_{s_+} \prod_{s_-\in S_- } \va_{s_-}\prod_{s_z\in S_z }\vO^\eta_{s_z}.
\end{align}
Intuitively, when identity is produced at the overlapping site, more terms may collide, i.e., add coherently. See Section~\ref{sec:fermion_example} for an example where this term is necessary. Otherwise, the rest of the calculation is identical ($\lambda'(k)$ remains the same). Note that we would use a Fermionic version of Fact~\ref{fact:project_2^S}, which can be shown by a gauge transformation argument. 
\begin{prop}[$k$-local Fermionic Hamiltonians]
\label{prop:Trotter_non_random_Fermionic} 
To simulate a $k$-local, particle number preserving Fermionic Hamiltonian using $\ell$-th order Suzuki formula on m-particle subspace $\vP_m$, the gate complexity 
\begin{align*}
    G =\Omega\L(\L(\frac{\poly(n,m)^{1/p} p^{k/2}\lnormp{\vH}{(global),2} t}{\epsilon}\R)^{1/\ell} \Gamma p^{k/2}(\lnormp{\vH}{(local),2}+\lnormp{\vH_{ferm}}{(0),2})  t\R) \ \ &\textrm{ensures}\ \  \lnormp{\e^{\iunit \vH t}- \vec{S}(t/r)^r}{\bar{p},\vP_{m}} \le \epsilon,
\end{align*} 
where $\lnormp{\vH}{(global),2}, \lnormp{\vH}{(local),2} $ is defined w.r.t to~\eqref{eq:fermion_k_local}.
\end{prop}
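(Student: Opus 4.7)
The plan is to mirror the proof of Theorem~\ref{thm:Trotter_non_random} almost verbatim, while replacing two building blocks to handle the Fermionic structure and the restriction to the particle-number subspace. The starting point is again the time-ordered commutator representation of the Trotter error $\vCE(\vH_1,\ldots,\vH_\Gamma,\Upsilon t)$ together with the order-by-order Taylor truncation from Section~\ref{sec:sketch}. Since every $\vH_\gamma$ preserves particle number, it commutes with the auxiliary product state $\vrho_\eta = \bigotimes_i \L(\eta\ket{1}\bra{1}+(1-\eta)\ket{0}\bra{0}\R)_i$ for $\eta = m/n$, so I can carry out the whole argument in the weighted norm $\lnormp{\cdot}{p,\vrho_\eta}$, exactly as in Proposition~\ref{prop:Trotter_non_random_low}, and only at the end convert to $\lnormp{\cdot}{\bar{p},\vP_m}$ via $\bar{\vP}_m \le b(n,m)\,\vrho_\eta$, paying the $\poly(n,m)^{1/p}$ factor. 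The role of Proposition~\ref{prop:general_pauli_expansion} is played by the Fermionic uniform smoothness of Corollary~\ref{cor:a_z_uniform}, which already covers the joint expansion into $\{\va,\va^\dagger,\vO^\eta,\vI\}$ in the $\vrho_\eta$-weighted norm.

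The only substantive modification is to the effective $2$-$2$ norm estimate of $\sum_{S_2}\CL_{S_2}$ (Corollary~\ref{cor:22norm_with_alpha}). In the Pauli case the projector $[\CL_{S_2}[\vO_{S_1}]]_S$ vanishes unless at least one site survives the commutator, so the overlap decomposition was indexed by $k_{pvt}\ge 1$. For Fermionic bilinears the identity $[\va^\dagger_j\va_k,\va^\dagger_\ell\va_m] = \delta_{k\ell}\va^\dagger_j\va_m - \delta_{jm}\va^\dagger_\ell\va_k$ can wipe out both sites when the inner indices collapse, so the $k_{pvt}=0$ case must be added. The estimate for $k_{pvt}\ge 1$ goes through unchanged (with $\vrho_\eta$-weighted Schatten norms in place of unweighted ones, using the weighted monotonicity Fact~\ref{fact:nc_convexity_weighted}); the new $k_{pvt}=0$ contribution is controlled by summing $\norm{\vH_{S_+S_-}}^2$ over all purely Fermionic pairs that share their entire support, which is exactly $\lnormp{\vH_{ferm}}{(0),2}^2$ (with the $\vO^\eta$ factors playing no role since commuting with an $\vO^\eta$ never produces an identity). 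Accumulating both contributions yields the new constant
\begin{equation}
\lambda_{ferm}(k) = \frac{2^{k/2+1}}{(k-1)!}\sum_{k_{pvt}=1}^{k} \frac{2^{k_{pvt}/2}}{(k-k_{pvt})!}\lnormp{\vH}{(k_{pvt}),2} + \frac{2^{k/2+1}}{(k-1)!\, k!}\lnormp{\vH_{ferm}}{(0),2}.
\end{equation}
The analog of Fact~\ref{fact:project_2^S} bounding the partial-traceless projector by $2^{\labs{S}}$ needs a site-by-site gauge transformation on the Jordan--Wigner tail (the same trick used in Corollary~\ref{cor:a_only_uniform}) so that each split reduces to the single-site decomposition $\va_i\vA_{>i}+\va^\dagger_i\vA'_{>i}+\vO^\eta_i\vC_{>i}+\vI_i\vB_{>i}$ to which Proposition~\ref{prop:unif_subsystem_weighted} applies.

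With these two ingredients in hand, the $g$-th order estimate~\eqref{eq:nonrandom_gthorder} and the tail $g'$-th order estimate~\eqref{eq:non_random_beyond_g} hold with the same form but $\lambda(k)$ replaced by $\lambda_{ferm}(k)$ (and $\lambda'(k)$ unchanged, since the beyond-$g$ estimate uses $1$-norm quantities where the new Fermionic collapse is already absorbed into $\lnormp{\vH}{(0),1}$). The telescoping over short-time windows, choice of $g'$, and Markov's inequality go through as in Section~\ref{sec:proof_non_random}, and the conversion to $\bar{\vP}_m$ inserts the $\poly(n,m)^{1/p}$ factor, producing the advertised gate count. The main obstacle I anticipate is not any one step individually but the careful bookkeeping for the $k_{pvt}=0$ case: one must verify that the Fermionic collapse contributes only through bilinear pairs matched inside the iterated commutator, and that the overcounting of $(S_-,S_+,S_0)$ decompositions is still dominated by the simple factor scaling as $\lnormp{\vH_{ferm}}{(0),2}$ rather than by a locality-dependent analog.
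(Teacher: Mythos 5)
Your proposal matches the paper's approach: it relies on the same three ingredients — the $\vrho_\eta$-weighted Fermionic uniform smoothness (Corollary~\ref{cor:a_z_uniform}) in place of Proposition~\ref{prop:general_pauli_expansion}, a gauge-transformed analog of Fact~\ref{fact:project_2^S}, and an extra $k_{pvt}=0$ term in the effective $2$--$2$ norm that yields exactly the $\lambda_{ferm}(k)$ the paper writes down, while $\lambda'(k)$ is left untouched. The only soft spot is the informal characterization of the $k_{pvt}=0$ contribution as coming from "pairs that share their entire support," but this is no less heuristic than the paper's own one-line justification, and the resulting combinatorial bound and gate count are the paper's.
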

\begin{cor}
Draw $\ket{\psi}$ from a $m = \eta n$ - particle subspaces (i.e., $\BE[ \ket{\psi}\bra{\psi}] =\vP_{m}/\tr[\vP_{m}]$), then 
\begin{align*}
    &G =\tilde{\Omega} \L( \L(\frac{\log(\poly(n,m)/\delta)^{k/2}\lnormp{\vH}{(global),2} t}{\epsilon}\R)^{1/\ell} \log(\poly(n,m)/\delta)^{k/2} \Gamma \L(\lnormp{\vH}{(local),2}+\lnormp{\vH_{ferm}}{(0),2}\R) t \R) \\ 
    &\hspace{7cm} \textrm{ensures}\ \  \Pr \L( \lnormp{(\e^{\iunit \vH t}- \vec{S}(t/r)^r)\ket{\psi}}{\ell_2}  \ge \epsilon \R) \le \delta. 
\end{align*}
\end{cor}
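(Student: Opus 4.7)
The plan is to obtain the concentration statement as a routine consequence of the $p$-norm bound in Proposition~\ref{prop:Trotter_non_random_Fermionic}, combined with Markov's inequality in the form of Proposition~\ref{prop:typical_Schatten} and an optimal choice of $p$. Set $\vF := \e^{\iunit \vH t} - \vec{S}(t/r)^r$. Since $\vH$ (and hence each summand $\vH_\gamma$) is particle-number-preserving, $\vF$ commutes with $\vP_m$, so $\vF \bar{\vP}_m^{1/p}$ is supported on the $m$-particle subspace and its $p$-norm coincides with $\lnormp{\vF}{\bar{p},\vP_m}$ as used in the proposition.

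First, I would instantiate Proposition~\ref{prop:typical_Schatten} with $\vrho = \bar{\vP}_m = \vP_m/\tr[\vP_m]$, the average of the 1-design inside the $m$-particle subspace. This yields
\begin{equation*}
\Pr\L(\lnormp{\vF\ket{\psi}}{\ell_2} \ge \epsilon\R) \le \L(\frac{\lnormp{\vF\bar{\vP}_m^{1/p}}{p}}{\epsilon}\R)^p = \L(\frac{\lnormp{\vF}{\bar{p},\vP_m}}{\epsilon}\R)^p.
\end{equation*}
Next, I would plug in the gate-count bound of Proposition~\ref{prop:Trotter_non_random_Fermionic} to guarantee $\lnormp{\vF}{\bar{p},\vP_m} \le \epsilon/\e$, say, so that the right-hand side is at most $\e^{-p}$. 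The delicate point is that Proposition~\ref{prop:Trotter_non_random_Fermionic} itself carries a factor $\poly(n,m)^{1/p}$ inside the gate count; to kill this factor (and also turn $\e^{-p}$ into $\delta$), I pick $p = \Theta\!\L(\log(\poly(n,m)/\delta)\R)$, which makes $\poly(n,m)^{1/p} = O(1)$ and $\e^{-p} \le \delta$ simultaneously.

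Substituting this choice of $p$ into the gate-count expression of Proposition~\ref{prop:Trotter_non_random_Fermionic}, the factors $p^{k/2}$ become $\log(\poly(n,m)/\delta)^{k/2}$ both in the base and in the $1/\ell$-power expression, and the $\poly(n,m)^{1/p}$ contribution is absorbed into the $\tilde{\Omega}(\cdot)$ notation. This gives exactly the advertised gate count
\begin{equation*}
G = \tilde{\Omega}\L( \L(\frac{\log(\poly(n,m)/\delta)^{k/2}\lnormp{\vH}{(0),2} t}{\epsilon}\R)^{1/\ell} \log(\poly(n,m)/\delta)^{k/2}\, \Gamma\, \L(\lnormp{\vH}{(1),2}+\lnormp{\vH_{ferm}}{(0),2}\R) t \R).
\end{equation*}

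There is no real hard step here: once Proposition~\ref{prop:Trotter_non_random_Fermionic} is in hand, the passage from Schatten $p$-norm estimates to concentration is Markov, and this is essentially a transcription of the derivation sketched for the spin case in Section~\ref{sec:proof_non_random} together with the $\bar{\vP}_m$-weighted variant used for Proposition~\ref{prop:Trotter_non_random_low}. The only mildly subtle part is the simultaneous balancing of the $\poly(n,m)^{1/p}$ and $\delta$ factors through the choice of $p$, and the verification that the additional constraints on $r$ analogous to Constraint~\ref{constr:non_random_1st} and Constraint~\ref{constr:non_random_2nd} are satisfied at this $p$. Since those constraints only require $r$ to exceed a polynomial in $k,\ell$ times $\lnormp{\vH}{(1),2} t$ (up to logarithmic factors in $1/\epsilon$), they are dominated by the main gate-count expression, and the $\tilde{\Omega}(\cdot)$ hides them.
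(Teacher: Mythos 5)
Your proposal is correct and follows the same route the paper (implicitly) takes: the corollary is a direct consequence of Proposition~\ref{prop:Trotter_non_random_Fermionic} via Proposition~\ref{prop:typical_Schatten} and Markov's inequality, with the choice $p = \Theta(\log(\poly(n,m)/\delta))$ simultaneously killing the $\poly(n,m)^{1/p}$ overhead and yielding the $\delta$ tail bound, exactly as in the spin-model derivation of Section~\ref{sec:proof_non_random} and the low-particle-number corollary preceding it. Your remark that $\vF$ commutes with $\vP_m$ (so the $s$-parameter drops and $\lnormp{\vF}{p,\bar{\vP}_m}=\lnormp{\vF\bar{\vP}_m^{1/p}}{p}$) and that the constraints on $r$ are subsumed by the main term are both the right observations.
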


\section{Optimality for First-order and Second-order Formulas}\label{sec:nonrandom_optimal}
We demonstrate the optimality of our p-norm estimates for a particular 2-local Hamiltonian, at short times, for the first and second-order Lie-Trotter-Suzuki formulas. The $k\ge 2$ cases can also be constructed analogously. 
Consider the Hamiltonian
\begin{align}
    \vH = \sum_{i>j} \alpha_{ij} \vsigma^z_i\vsigma^z_j + \sum_{i>j} \alpha_{ij} \vsigma^x_i\vsigma^x_j =: \vA+\vB
\end{align}
for the first-order Trotter formula
\begin{align}
    \e^{\ri (\vA+\vB )t} - \e^{\ri \vA} \e^{ \ri\vB t} &= \frac{1}{2}[\vA,\vB]t^2 +\CO(t^3).\\
\end{align}
We can exactly compute its 2-norm due to the orthogonality of Paulis
\begin{align}
    \lnormp{ [\vA,\vB]  }{2}^2&= \lnormp{ \L[\sum_{k>\ell} \alpha_{k\ell} \vsigma^z_k\vsigma^z_{\ell}, \sum_{i>j} \alpha_{ij} \vsigma^x_i\vsigma^x_j\R]}{2}^2\\
    &= \sum_{\{i,j,k\}} \lnormp{\alpha_{ij}\alpha_{jk} \vsigma^z_i \vsigma^y_j \vsigma^x_k }{2}^2 = \sum_{\{i,j,k\}} \alpha_{ij}^2\alpha_{jk}^2.
\end{align}
For our upper bounds~\eqref{eq:nonrandom_gthorder}, 
\begin{align}
\lnormp{\vH}{(global),2}^2 =\sum_{ij} 4\alpha_{ij}^2 \quad\text{and}\quad \lnormp{\vH}{(local),2}^2=\max_i \sum_{j} 4\alpha_{ij}^2,
\end{align}
which means when $\alpha_{ij}=1$ are equal strength,
\begin{align}
    \lnormp{\vH}{(global),2}^2\lnormp{\vH}{(local),2}^2 = \theta\L(  \lnormp{ [\vA,\vB]  }{2}^2\R).
\end{align}
It is less obvious how to calculate its p-norm or operator norm.

To obtain tight p-norm and spectral norm estimates, we construct another Hamiltonian on three set of qubits $\CH = \CH_{S_1}\otimes \CH_{S_2}\otimes \CH_{S_3}$
\begin{align}
    \vH = \sum_{s_1\in S_1, s_2\in S_2} \vsigma^z_{s_1}\vsigma^x_{s_2} + \sum_{s_2\in S_2, i_3\in S_3} \vsigma^y_{s_2}\vsigma^z_{s_3}:= \vA+\vB\label{eq:ZXYZ}.
\end{align}
The commutator evaluates to a factorized commuting sum 
\begin{align}
    [\vA,\vB] &=  \L[\sum_{s_1\in S_1, s_2\in S_2} \vsigma^z_{s_1}\vsigma^x_{s_2}, \sum_{s_2\in S_2, s_3\in S_3} \vsigma^y_{s_2}\vsigma^z_{s_3} \R]\\
    & = 2\sum_{s_1\in S_1, s_2\in S_2, s_2\in S_3} \vsigma^z_{s_1}\vsigma^z_{s_2} \vsigma^z_{s_3} =  2(\sum_{s_1\in S_1} \vsigma^z_{s_1}) \cdot (\sum_{s_2\in S_2} \vsigma^z_{s_2} )\cdot (\sum_{s_3\in S_3} \vsigma^z_{s_3}) .
\end{align}
Its p-norms can be obtain by central limit theorem at large $\labs{S_1}, \labs{S_2}, \labs{S_3}$

\begin{align}
    \lnormp{ \sum_{s_1\in S_1} \vsigma^z_{s_1} }{p}= \Omega( \sqrt{p\labs{S_1}} ) \lnormp{\vI}{p},
\end{align}
where we recall the p-th moment of standard Gaussian $\labs{g}_p=\theta(\sqrt{ p } )$.
Now, let $\labs{ S_1}=\labs{ S_2}=\labs{ S_3}=\theta(n)$, then it saturates our first-order p-norm upper bound~\eqref{eq:nonrandom_gthorder}.
\begin{align}
    \lnormp{ [\vA,\vB]}{p} &=\Omega(\sqrt{pn})^3 \lnormp{\vI}{p}\\
    \sqrt{C_p}^3\lnormp{\vH}{(global),2}\lnormp{\vH}{(local),2} \lnormp{\vI}{p} &= \CO\L( \sqrt{p}^3 \cdot \sqrt{ n^2} \cdot \sqrt{n}\R) \lnormp{\vI}{p}.
\end{align}
At the same time, its spectral norm 
\begin{align}
    \norm{ [\vA,\vB] } &= \theta( n^3 ) = \lnormp{\vH}{(global),1}\lnormp{\vH}{(local),1}
\end{align}
matches the triangle inequality bound in~\cite{thy_trotter_error}.

\subsection{Second-order Suzuki Formulas}
For the second-order Trotter error, recall the expansion~\cite[Appendix~L]{thy_trotter_error},
\begin{align}
     \e^{\ri (\vA+\vB )t} - \e^{\ri \vA t/2} \e^{ \ri\vB t} \e^{\ri \vA t/2}= -\frac{\ri}{12} \L([\vB,[\vB,\vA ]] - \frac{1}{2}[\vA,[\vA,\vB]]\R)t^3 +\CO(t^4)
\end{align}
with the same Hamiltonian~\eqref{eq:ZXYZ}. Due to the symmetry, we know $[\vB,[\vB,\vA ]]$ has the same p-norm as $[\vA,[\vA,\vB ]]$. Conveniently, the factor $\frac{1}{2}$ allows us to consider only one term (at most losing a constant overhead $\frac{1}{2}$)
\begin{align}
    [\vB,[\vB,\vA ]] &=  -4\sum_{s_1\in S_1, s_2\in S_2, s_3, s'_3\in S_3} \vsigma^z_{s_1}\vsigma^x_{s_2} \vsigma^z_{s_3} \vsigma^z_{s'_3}.
\end{align}
This converges to a function of three independent Gaussians (note that the $s_3$, $s'_3$ are two dummy indexes in the same set $S_3$)
\begin{align}
    \lnormp{[\vB,[\vB,\vA ]]}{p} &= \Omega( \labs{g_1g_2g_3^2}{p} ) \lnormp{\vI}{p} = \Omega(\sqrt{pn})^4 \lnormp{\vI}{p},\\
   \sqrt{C_p}^4\lnormp{\vH}{(global),2}\lnormp{\vH}{(local),2}^2 \lnormp{\vI}{p} &= \CO\L( \sqrt{p}^4 \cdot \sqrt{ n^2} \cdot \sqrt{n}^2\R) \lnormp{\vI}{p},
\end{align}
matching our p-norm bound.
The spectral norm
\begin{align}
    \norm{ [\vB,[\vB,\vA ]]} &= \theta( n^4 ) = \lnormp{\vH}{(global),1}\lnormp{\vH}{(local),1}^2
\end{align}
again matches the triangle inequality bounds in~\cite{thy_trotter_error}.

\subsection{Fermionic Hamiltonians}\label{sec:fermion_example}
To demonstrate the need for the extra term for Fermionic Hamiltonians $\lnormp{\vH_{ferm}}{(0),2}$, consider a Hamiltonian of the form 
\begin{align}
    \vH = \sum_{s_1\in S_1, s_2\in S_2} \va_{s_1}\va^\dagger_{s_2}+ \va^\dagger_{s_1}\va_{s_2} + \sum_{s_2\in S_2, s_3\in S_3} \va_{s_2}\va^\dagger_{s_3} +\va^\dagger_{s_2}\va_{s_3}:= \vA+\vB.
\end{align}
The commutator evaluates to 
\begin{align}
    [\vB,\vA] & = \sum_{s_1\in S_1, s_2\in S_2, s_2\in S_3}\va_{s_1} \va^\dagger_{s_3} -\va^\dagger_{s_1} \va_{s_3}\\ 
    \lnormp{[\vA,\vB]}{2} &= 2 \sqrt{\labs{S_1}} \labs{S_2} \sqrt{\labs{S_3}} = \theta( \lnormp{\vH}{(global),2}\lnormp{\vH}{(global),2}).
\end{align}
And for the second-order Suzuki,
\begin{align}
    [\vB,[\vB,\vA] ] & = -\labs{S_2}\cdot \sum_{s_1\in S_1, s_2\in S_2, s_2\in S_3} \va^\dagger_{s_2} \va_{s_1} +  \va_{s_1}\va^\dagger_{s_2}\\ 
    \lnormp{[ \vB,[\vB,\vA]]}{2} &= 2 \sqrt{\labs{S_1}} \labs{S_2}^2 \sqrt{\labs{S_3}} = \theta( \lnormp{\vH}{(global),2}^2\lnormp{\vH}{(global),2}). 
\end{align}

\section{Preliminary: matrix-valued martingales}\label{sec:martingales}
\textit{Concentration inequalities} are well known for an i.i.d. sum of random numbers. 
Unfortunately, the phenomena in the wild are rarely like a sum, identical, or independent yet nonetheless concentrate around the mean. Among the zoo of extensions that attempt to capture realistic randomness, a (scalar-valued) \textit{martingale} describes a random process that the future has zero mean conditioned on the past. Martingales constitute a class more flexible than i.i.d. sums that will serve our purpose.

 For a minimal technical introduction (following Tropp~\cite{tropp2011freedmans} and Huang et. al~\cite{HNTR20:Matrix-Product}), consider a filtration of the master sigma algebra $\CF_0\subset \CF_1 \subset \CF_2 \cdots \subset \CF_t \subset \cdots \CF$, where for each filtration $\CF_j$ we denote the conditional expectation $\BE_j$. Intuitively, we can think of the index $t$ as the 'time', where the associated filtration $\CF_t$ hosts possible events happening before time $t$. More precisely, a martingale is a sequence of random variable $Y_t$ adapted to the filtration $\CF_t$ such that
\begin{align}
    \sigma(Y_t) &\subset \CF_t &\textrm{(causality)},\\
    \BE_{t-1} Y_t &= Y_{t-1} &\textrm{(status quo)}.
\end{align}
In other words, the present depends on the past ('causality'), and tomorrow has the same expectation as today ('status quo'). For simplicity, we often subtract the mean to obtain a \textit{martingale difference} sequence 
\begin{align}
    \BE_{t-1} D_t = 0  \quad \text{where}\quad D_t:=Y_t-Y_{t-1}.
\end{align}

\subsection{Useful Norms and Recursive Bounds for Matrices}
In our case, our goal is to quantify the error between the ideal unitary $\vU=\e^{\iunit \vH t}$ and the product formula $\vec{S}$ where the Hamiltonian is drawn randomly. This can be framed as a \textit{matrix-valued} martingale
\begin{align}
    \sigma(\vY_t) &\subset \CF_t,\\
    \BE_{t-1} \vY_t &= \vY_{t-1},
\end{align}
where the conditional expectation $\BE_{t-1}$ acts \textit{entrywise}. In other words, the randomness here has both classical (the expectation $\BE$) and quantum (the trace $\tr[\cdot]$) sources. In comparison, our previous discussion on Paulis strings does not have the above extra layer of classical randomness. This will give slightly different flavors.

Historically, the earliest general results on matrix-valued martingales were established in~\cite{lust_piquard_86, lust_Pisier_91,Pisier_1997}, and more recent works and applications include~\cite{tropp2011freedmans,oliveira2010concentration,HNTR20:Matrix-Product, jungeZenq_nc15, chen2020quantum,chen2021concentration}. Throughout this work, our main driving horse is \textit{again} uniform smoothness (in a slightly different format from uniform smoothness for subsystems (Proposition~\ref{prop:unif_subsystem_recap})). It is not the tightest kind of martingale inequality but arguably the simplest and most robust when matrices are bounded (or with Gaussian coefficients via the central limit theorem). Analogously to Proposition~\ref{prop:unif_subsystem_recap}, these inequalities deliver sum-of-square (``incoherent'') estimates sharper than the triangle inequality, which is linear (``coherent'').

To study concentration of matrices, we first pick a suitable norm. The error between the ideal unitary and the product formula can be quantified in two ways with different operational meanings. For both norms, uniform smoothness streamlines our concentration results (Section~\ref{sec:first_order_random}, Section~\ref{sec:matrix_poly}). 

\subsubsection{The operator norm}
The operator norm quantifies the error for the worst input state
\begin{align}
 \lnorm{\vec{U}-\vec{S}}:=\sup_{\ket{\psi}}\lnormp{(\vec{U}-\vec{S})\ket{\psi}}{\ell_2}.
\end{align}
If we are interested in concentration of the operator norm, it suffices to control its moments by the \textit{expected Schatten p-norm} \begin{align}
    (\BE \lV \vY\rV^p )^{1/p}\le (\BE \lV \vY\rV^p_p)^{1/p}=:\vertiii{\vY}_p.
\end{align}
To bound the RHS, the driving horse is the following bound with only a martingale requirement (``conditionally zero-mean'').
\begin{fact}[{Uniform smoothness for Schatten classes~\cite[Proposition~4.3]{HNTR20:Matrix-Product}}] \label{fact:sub_average_pq}
Consider random matrices $\vX, \vY$ of the same size that satisfy
$\BE[\vY|\vX] = 0$. When $2 \le p$,
\begin{equation}
\vertiii{\vX+\vY}_{p}^2 \le \vertiii{\vX}_{p}^2  + C_p\vertiii{\vY}_{p}^2. 
\end{equation}
The constant $C_p = p - 1$ is the best possible.
\end{fact}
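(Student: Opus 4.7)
The plan is to adapt the proof of Proposition~\ref{prop:unif_subsystem_recap} essentially verbatim, replacing the subsystem/conditional-expectation structure (partial trace onto a subsystem) by the filtration structure provided by the conditioning $\BE[\vY|\vX]=0$. Three ingredients are needed: a monotonicity lemma for $\vertiii{\cdot}_p$, a deterministic-style two-point inequality for the expected norm, and a rescaling step to sharpen the constant.

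First, I establish monotonicity: $\vertiii{\vX}_p \le \vertiii{\vX \pm \vY}_p$ whenever $\BE[\vY|\vX]=0$. Since $\vX = \BE[\vX+\vY \mid \vX]$, conditional Jensen's inequality applied to the convex function $\lnormp{\cdot}{p}$ gives $\lnormp{\vX}{p} \le \BE[\lnormp{\vX+\vY}{p} \mid \vX]$. A second application of Jensen for $x \mapsto x^p$ followed by taking the outer expectation yields $\vertiii{\vX}_p^p \le \vertiii{\vX+\vY}_p^p$, and the argument is identical with $-\vY$. This replaces the role played by Proposition~\ref{prop:nc_convexity} in the subsystem proof.

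Next, I lift the deterministic uniform smoothness (Fact~\ref{fact:unif_schatten_recap}) to the expected norm. Starting from the pointwise bound
\[
\tfrac{1}{2}\bigl(\lnormp{\vX+\vY}{p}^p+\lnormp{\vX-\vY}{p}^p\bigr) \le \bigl(\lnormp{\vX}{p}^2+C_p\lnormp{\vY}{p}^2\bigr)^{p/2},
\]
take expectation and apply Minkowski's inequality for $L^{p/2}$ on the right-hand side (valid since $p\ge 2$ makes $p/2\ge 1$) to obtain
\[
\left(\frac{\vertiii{\vX+\vY}_p^p+\vertiii{\vX-\vY}_p^p}{2}\right)^{2/p} \le \vertiii{\vX}_p^2+C_p\vertiii{\vY}_p^2.
\]
This is the expected-norm analog of Fact~\ref{fact:unif_schatten_recap} and, crucially, requires no martingale assumption, so it will remain available throughout the rescaling step.

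Finally, combining Lyapunov's inequality on the LHS with monotonicity $\vertiii{\vX}_p \le \vertiii{\vX-\vY}_p$ gives the crude bound $\vertiii{\vX+\vY}_p^2 \le \vertiii{\vX}_p^2 + 2C_p\vertiii{\vY}_p^2$. To shave the spurious factor of $2$, I apply the rescaling argument from~\cite[Lemma~A.1]{HNTR20:Matrix-Product} in exactly the same shape as in the proof of Proposition~\ref{prop:unif_subsystem_recap}: writing $\vZ = \vY/n$, iterate the intermediate bound to telescope $\vertiii{\vX+k\vZ}_p^2 - \vertiii{\vX+(k-1)\vZ}_p^2 \le 2kC_p\vertiii{\vZ}_p^2$, sum $k$ from $1$ to $n$, substitute $\vY = n\vZ$, and take $n\to\infty$. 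The assumption $\BE[\vY|\vX]=0$ propagates to $\vZ = \vY/n$, so the monotonicity needed for the base case survives throughout. Optimality of $C_p = p-1$ then reduces to the scalar case (one-dimensional Rademacher $\vX,\vY$), where the inequality collapses to Bonami's sharp two-point inequality (Fact~\ref{fact:two_point}). The main obstacle is bookkeeping rather than any new idea: the only step with genuine content beyond the subsystem proof is the Minkowski application in the lifting, and the only subtle point is keeping track of which steps truly require the conditional-mean-zero property (only monotonicity does).
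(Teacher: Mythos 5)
The paper cites this fact from~\cite[Proposition~4.3]{HNTR20:Matrix-Product} without reproducing its proof, but your argument correctly reconstructs the standard one: it mirrors, step for step, the paper's own proof of the subsystem version (Proposition~\ref{prop:unif_subsystem_recap}), swapping the partial-trace monotonicity for the conditional-Jensen monotonicity $\vertiii{\vX}_p \le \vertiii{\vX\pm\vY}_p$, and you correctly observe that the only place the martingale hypothesis enters is the base case while the expectation-lifted Tomczak--Jaegermann inequality (valid with no conditioning, by Minkowski in $L^{p/2}$) drives the rescaling. Both the crude $2C_p$ bound and the sharpening to $C_p=p-1$ via the telescoping-and-rescale trick are carried out exactly as in the paper, so this is essentially the same approach.
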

Uniform smoothness for Schatten classes in another form (Fact~\ref{fact:unif_schatten}) was proven by~\cite{Tomczak1974} with optimal constants determined by~\cite{BallCL_optimal_unif_smooth}. The above martingale form is due to~\cite{naor_2012,ricardXu16} and~\cite[Proposition~4.3]{HNTR20:Matrix-Product}. This can be alternatively seen as a special case of Proposition~\ref{prop:unif_subsystem_recap} by interpreting the classical expectation as a trace.\footnote{The condition $\BE[\vY|\vX] = 0$ is a special case of the general operator algebra formulation using the conditional expectation superoperator $E[\vY]=0, E[\vX]=\vX$.}

\subsubsection{Fixed input state}
Sometimes we only care about a \textit{fixed but arbitrary} input state $\ket{\psi}$. This deserves another error metric (following~\cite{chen2020quantum}) that differs from the spectral norm by an order of quantifier
\begin{align}
    \vertiii{\vY}_{\textrm{fix},p} :=\sup_{\ket{\psi}}\left(\BE  \lV \vY\ket{\psi}\rV_{\ell_2}^p \right)^{1/p} \ne \left(\BE \sup_{\ket{\psi}} \lV \vY\ket{\psi}\rV_{\ell_2}^p \right)^{1/p} = (\BE \lV \vY\rV^p )^{1/p}.
\end{align}

Uniform smoothness for this norm follows.
\begin{cor}[{Uniform smoothness, fixed input}~\cite{chen2021concentration}] \label{fact:sub_average_2q_DP}
Consider random matrices $\vX, \vY$ of the same size that satisfy
$\BE[\vY|\vX] = 0$. When $2 \le p$,
\begin{align}
\vertiii{\vX+\vY}_{\textrm{fix},p}^2 &= \vertiii{\vX}_{\textrm{fix},p}^2+C_p\vertiii{\vY}_{\textrm{fix},p}^2
\end{align}
with constant $C_p = p - 1$. 
\end{cor}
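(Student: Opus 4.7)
The plan is to reduce the fixed-input statement directly to Fact~\ref{fact:sub_average_pq} (the standard uniform smoothness for the expected Schatten $p$-norm). The key observation is that the norm $\vertiii{\cdot}_{\textrm{fix},p}$ is built out of quantities of the form $\vertiii{\vZ\vec{P}}_p$ for a deterministic rank-one projector $\vec{P}$, and the $\vZ \mapsto \vZ\vec{P}$ transformation preserves the martingale structure of the pair $(\vX,\vY)$.

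Concretely, fix an arbitrary rank-one projector $\vec{P}$ and set $\tilde{\vX} := \vX\vec{P}$, $\tilde{\vY} := \vY\vec{P}$. Since $\vec{P}$ is deterministic and $\BE[\vY\mid \vX] = 0$, the tower property gives $\BE[\tilde{\vY}\mid \tilde{\vX}] = \BE\!\left[\BE[\vY\mid \vX]\mid \vX\vec{P}\right]\vec{P} = 0$, so $(\tilde{\vX},\tilde{\vY})$ still satisfies the martingale hypothesis of Fact~\ref{fact:sub_average_pq}. Applying that fact yields
\begin{equation}
\vertiii{\tilde{\vX}+\tilde{\vY}}_{p}^{2} \;\le\; \vertiii{\tilde{\vX}}_{p}^{2} + C_p\,\vertiii{\tilde{\vY}}_{p}^{2}.
\end{equation}
By definition of $\vertiii{\cdot}_{\textrm{fix},p}$, each of the two terms on the right is bounded by the corresponding fixed-input norm, i.e. $\vertiii{\vX\vec{P}}_{p} \le \vertiii{\vX}_{\textrm{fix},p}$ and similarly for $\vY$. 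Hence
\begin{equation}
\vertiii{(\vX+\vY)\vec{P}}_{p}^{2} \;\le\; \vertiii{\vX}_{\textrm{fix},p}^{2} + C_p\,\vertiii{\vY}_{\textrm{fix},p}^{2}.
\end{equation}

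To finish, take the supremum over rank-one projectors $\vec{P}$ on the left-hand side; the right-hand side does not depend on $\vec{P}$, so
\begin{equation}
\vertiii{\vX+\vY}_{\textrm{fix},p}^{2} \;=\; \sup_{\mathrm{rank}(\vec{P})=1}\vertiii{(\vX+\vY)\vec{P}}_{p}^{2} \;\le\; \vertiii{\vX}_{\textrm{fix},p}^{2} + C_p\,\vertiii{\vY}_{\textrm{fix},p}^{2},
\end{equation}
which is the claim (with optimal constant $C_p = p-1$ inherited from Fact~\ref{fact:sub_average_pq}). There is no real obstacle beyond carefully checking the conditional-expectation step; the whole argument is a ``factor out $\vec{P}$'' reduction that works precisely because $\vec{P}$ is non-random and because $\vertiii{\cdot}_{\textrm{fix},p}$ commutes with the supremum over $\vec{P}$. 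The same template, incidentally, would give analogous fixed-input versions of any other uniform-smoothness inequality whose hypothesis is closed under right-multiplication by deterministic matrices.
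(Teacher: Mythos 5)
Your proof is correct and takes essentially the same approach as the paper: right-multiply by the deterministic rank-one projector $\vec{P}$, apply Fact~\ref{fact:sub_average_pq} to the pair $(\vX\vec{P},\vY\vec{P})$, and take suprema over $\vec{P}$ (using $\sup(a+b)\le\sup a+\sup b$). Your explicit verification of the conditional-expectation step (via the tower property, since $\sigma(\vX\vec{P})\subset\sigma(\vX)$) is a detail the paper leaves implicit; and note the equality sign in the displayed corollary is a typo for $\le$, which is what both you and the paper actually prove.
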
 
\begin{proof}
This can be seen by rewriting the $\ell_2$-norm as a p-norm
\begin{align}
    \vertiii{\vX+\vY}_{\textrm{fix},p}^2 &= \sup_{rank(\vec{P})=1}\vertiii{\vX\vec{P}+\vY\vec{P}}_{p}^2\\
&\le\sup_{rank(\vec{P})=1}\left( \vertiii{\vX\vec{P}}_{p}^2  + C_p\vertiii{\vY\vec{P}}_{p}^2\right)\\
&\le \sup_{rank(\vec{P})=1}\vertiii{\vX\vec{P}}_{p}^2 + C_p\sup_{rank(\vec{P})=1}\vertiii{\vY\vec{P}}_{p}^2.
\end{align}
\end{proof}

Note that the pure inputs $\ket{\psi}$ capture general mixed inputs $\vrho$ by convexity
\begin{align}
    \frac{1}{2}\left(\sup_{\vrho}\BE \lnormp{ \vec{U}\vrho\vec{U}^\dagger -\vec{S}\vrho\vec{S}^\dagger }{1}^p \right)^{1/p}&\le \frac{1}{2}\left(\sup_{\ket{\psi}}\BE \lnormp{ \vec{U}\ket{\psi}\bra{\psi}\vec{U}^\dagger -\vec{S}\ket{\psi}\bra{\psi}\vec{S}^\dagger }{1}^p \right)^{1/p}\\
    &\le \sup_{\ket{\psi}} \left(\BE \lnormp{ (\vec{U}-\vec{S})\ket{\psi}\bra{\psi}}{1}^p \right)^{1/p} = \sup_{\ket{\psi}} \left(\BE \lnormp{ (\vec{U}-\vec{S})\ket{\psi}}{\ell_2}^p \right)^{1/p}.
\end{align}
The second inequality is a telescoping sum. The third equality uses that the operator norm equals to the 1-norm $\norm{\cdot}=\lnormp{\cdot}{1}$ for rank 1 matrices.

\subsection{Reminders of Useful Facts }
Before we turn to the proof, let us remind ourselves of the useful properties for the underlying norms $\vertiii{ \cdot}_{*}:=\vertiii{ \cdot}_{p,q}, \vertiii{ \cdot}_{\textrm{fix},p}$ for $p,q\ge 2$. They are largely inherited from the (non-random) Schatten p-norm. Following~\cite{chen2021concentration}, 
\begin{fact}[non-commutative Minkowski]\label{fact:non-commutative_mink}
Each of the expected moments satisfies the triangle inequality and thus is a valid norm. For any random matrix $\vX, \vY$
\begin{align}
\vertiii{\vX+\vY}_{*} \le \vertiii{\vX}_{*}+\vertiii{\vY}_{*}.
\end{align}
\end{fact}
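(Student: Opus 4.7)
The plan is to reduce the claim to two classical Minkowski inequalities---the triangle inequality for Schatten $p$-norms, and the triangle inequality for scalar $L^p(\Omega)$ spaces---applied in sequence. Both norms in the statement can be viewed as a composition of these two. First I would handle $\vertiii{\vX}_p := (\BE[\lnormp{\vX}{p}^p])^{1/p}$, which is the $L^p(\Omega; S_p)$ norm of the matrix-valued random variable $\vX$. For each realization $\omega$ in the sample space, Schatten-$p$ Minkowski gives $\lnormp{\vX(\omega)+\vY(\omega)}{p} \le \lnormp{\vX(\omega)}{p} + \lnormp{\vY(\omega)}{p}$. Applying scalar Minkowski in $L^p(\Omega)$ to the non-negative random variables $\lnormp{\vX}{p}$ and $\lnormp{\vY}{p}$ then yields $\vertiii{\vX+\vY}_p \le \vertiii{\vX}_p + \vertiii{\vY}_p$.

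For $\vertiii{\cdot}_{\textrm{fix},p}$, I would fix a rank-one projector $\vec{P}$ and note that $\vX \mapsto (\BE[\lnormp{\vX\vec{P}}{p}^p])^{1/p}$ is a semi-norm by the same two-step argument, since right-multiplication by $\vec{P}$ is linear. This gives, for every $\vec{P}$ of rank one,
\begin{align}
\L(\BE[\lnormp{(\vX+\vY)\vec{P}}{p}^p]\R)^{1/p} &\le \L(\BE[\lnormp{\vX\vec{P}}{p}^p]\R)^{1/p} + \L(\BE[\lnormp{\vY\vec{P}}{p}^p]\R)^{1/p} \\
&\le \vertiii{\vX}_{\textrm{fix},p} + \vertiii{\vY}_{\textrm{fix},p}.
\end{align}
Taking the supremum over $\vec{P}$ on the left-hand side gives the desired bound. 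The same template handles the hybrid norm $\vertiii{\cdot}_{p,q}$: one composes a Schatten matrix norm with an $L^q$ norm over the randomness, and Minkowski holds at each layer. Homogeneity and positive-definiteness are immediate from the underlying Schatten norm together with the expectation (or supremum).

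I do not expect a substantive obstacle here; the whole statement is a routine verification that composition and supremum preserve the norm axioms. The only subtle point is ensuring that each building block is a genuine norm rather than a quasi-norm, which is why the hypothesis $p,q \ge 2$ (so in particular $\ge 1$) matters: for $p<1$ the Schatten class is only quasi-Banach and the triangle inequality fails in the base layer. Given $p,q\ge 2$, all constants collapse to $1$ and the bound is sharp.
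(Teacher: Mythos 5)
Your proof is correct and is the standard two-layer Minkowski argument: apply the Schatten $p$-norm triangle inequality pointwise on the sample space, then the scalar $L^p(\Omega)$ Minkowski inequality, and for $\vertiii{\cdot}_{\textrm{fix},p}$ run the same two steps inside the supremum over rank-one projectors before pulling the supremum outside via $\sup(f+g)\le\sup f+\sup g$. The paper does not supply its own proof of this Fact (it only cites~\cite{chen2021concentration}), so there is nothing to compare against; your remark that the triangle inequality really only needs $p,q\ge 1$, with $p,q\ge 2$ imposed elsewhere for uniform smoothness, is also accurate.
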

\begin{fact}[operator ideal norms]\label{fact:operator ideal}
For operators $\vA$ deterministic and $\vec{X}$ random
\begin{align}
\vertiii{\vec{A}\vec{X}}_{*}, \vertiii{\vec{X}\vec{A}}_{*} \le \lV \vec{A}\rV \cdot \vertiii{\vec{X}}_{*}.
\end{align}
\end{fact}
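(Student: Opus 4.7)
The plan is to leverage the well-known operator ideal property of the (non-random) Schatten $p$-norm, $\lnormp{\vA\vec{B}}{p}\le \lV\vA\rV\cdot\lnormp{\vec{B}}{p}$ and symmetrically from the right, and then verify that the randomized norms $\vertiii{\cdot}_p$ and $\vertiii{\cdot}_{\mathrm{fix},p}$ inherit this property. Since $\vA$ is deterministic, the scalar $\lV\vA\rV$ can always be pulled outside the expectation, so the task reduces to checking that each norm's variational structure is compatible with the pointwise Schatten inequality.

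For the expected Schatten norm $\vertiii{\vY}_p = (\BE\,\lnormp{\vY}{p}^p)^{1/p}$ the argument is essentially one line: apply the pointwise ideal inequality to $\vA\vec{O}$ (resp.\ to $\vec{O}\vA$, using $\lnormp{\vec{O}\vA}{p} = \lnormp{\vA^\dagger\vec{O}^\dagger}{p}$ and $\lV\vA^\dagger\rV = \lV\vA\rV$), raise to the $p$-th power, take expectations, and factor out $\lV\vA\rV^p$. For the fixed-input norm $\vertiii{\vY}_{\mathrm{fix},p} = \sup_{\mathrm{rank}(\vec{P})=1}(\BE\,\lnormp{\vY\vec{P}}{p}^p)^{1/p}$, the left-multiplication case $\vA\vec{O}$ is equally immediate: the rank-one projector $\vec{P}$ sits to the right and is untouched by $\vA$, so the same one-line argument works inside the supremum.

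The only genuinely non-trivial case, and hence the main (minor) obstacle, is right multiplication in the fixed-input norm: in $\vec{O}\vA\vec{P}$ the projector is no longer on the outside, so I must first rewrite $\vA\vec{P}$. The plan is to set $\vec{P}=\ket{\psi}\bra{\psi}$ and $\vA\ket{\psi}=\lV\vA\ket{\psi}\rV\cdot\ket{\phi}$ for a unit vector $\ket{\phi}$, giving $\vec{O}\vA\vec{P}=\lV\vA\ket{\psi}\rV\cdot\vec{O}\ket{\phi}\bra{\psi}$. Then exploit the rank-one identity $\lnormp{\vec{O}\ket{\phi}\bra{\psi}}{p}=\lV\vec{O}\ket{\phi}\rV=\lnormp{\vec{O}\vec{P}'}{p}$ with $\vec{P}':=\ket{\phi}\bra{\phi}$, a rank-one projector depending only on $\vA$ and $\vec{P}$ but \emph{not} on the randomness in $\vec{O}$. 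Bounding $\lV\vA\ket{\psi}\rV\le\lV\vA\rV$, taking expectations, and enlarging the supremum to all rank-one projectors then yields $\vertiii{\vec{O}\vA}_{\mathrm{fix},p}\le\lV\vA\rV\cdot\vertiii{\vec{O}}_{\mathrm{fix},p}$. The subtle point to double-check is precisely that $\vec{P}'$ is chosen independently of the realization of $\vec{O}$, so that the supremum-expectation interchange implicit in the definition of the norm is respected.
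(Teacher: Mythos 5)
Your proposal is correct and matches the paper's argument. The paper also proves only the nontrivial right-multiplication, fixed-input case, by writing $\vA\vec{P}=\vec{P}'\vA'$ via the singular value decomposition of the rank-one matrix $\vA\vec{P}$; your version of decomposing $\vA\ket{\psi}=\lV\vA\ket{\psi}\rV\ket{\phi}$ and setting $\vec{P}'=\ket{\phi}\bra{\phi}$ is precisely the same factorization written in vector form, and you correctly flag the one subtlety (that $\vec{P}'$ is deterministic, depending only on $\vA$ and $\vec{P}$, so the $\sup$/$\BE$ order is preserved).
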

\begin{fact}[unitary invariant norms]
For $\vec{U}, \vec{V}$ deterministic unitaries and random operator $\vX$
\begin{align}
\vertiii{\vec{U}\vec{X}\vec{V}}_{*} = \vertiii{\vec{X}}_{*}.
\end{align}
\end{fact}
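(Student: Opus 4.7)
The plan is to reduce unitary invariance of the two random-matrix norms to the classical, deterministic unitary invariance of the Schatten $p$-norm, $\lnormp{\vec{U}\vec{X}\vec{V}}{p} = \lnormp{\vec{X}}{p}$, applied pointwise on the probability space. That underlying deterministic fact is a standard consequence of the singular value decomposition being unitary-covariant, so I would treat it as given.

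For $\vertiii{\vec{O}}_{p}^{p} = \BE\,\lnormp{\vec{O}}{p}^{p}$, I would apply deterministic unitary invariance realization-by-realization, obtaining $\lnormp{\vec{U}\vec{O}\vec{V}}{p}^{p} = \lnormp{\vec{O}}{p}^{p}$ as an almost-sure equality of random variables; taking expectation and then the $p$-th root preserves equality and yields $\vertiii{\vec{U}\vec{O}\vec{V}}_{p} = \vertiii{\vec{O}}_{p}$.

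For the fixed-input norm $\vertiii{\vec{O}}_{\mathrm{fix},p}^{p} = \sup_{\mathrm{rank}(\vec{P})=1} \BE\,\lnormp{\vec{O}\vec{P}}{p}^{p}$, left multiplication by $\vec{U}$ is handled exactly as above. For right multiplication by $\vec{V}$, I would substitute $\vec{P}' := \vec{V}\vec{P}\vec{V}^{\dagger}$, rewrite $\vec{O}\vec{V}\vec{P} = \vec{O}\vec{P}'\vec{V}$, apply pointwise Schatten invariance to strip the trailing $\vec{V}$, and note that $\vec{P} \mapsto \vec{V}\vec{P}\vec{V}^{\dagger}$ is a bijection on the family of rank-one orthogonal projectors, so the supremum over $\vec{P}$ equals the supremum over $\vec{P}'$.

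The only step requiring any care is verifying this final bijection, but unitary conjugation manifestly preserves rank and self-adjoint idempotency, so it is immediate. Overall the argument is a routine lift of deterministic unitary invariance through the expectation (for $\vertiii{\cdot}_{p}$) and through both the expectation and the supremum (for $\vertiii{\cdot}_{\mathrm{fix},p}$), with no genuine technical obstacle.
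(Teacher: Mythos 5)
Your argument is correct, but it takes a route that differs from the paper's. The paper does not prove unitary invariance directly; instead it proves the stronger \emph{operator ideal} property (Fact~\ref{fact:operator ideal}), $\vertiii{\vec{A}\vec{O}}_{*}, \vertiii{\vec{O}\vec{A}}_{*} \le \lV\vec{A}\rV\,\vertiii{\vec{O}}_{*}$ for arbitrary bounded deterministic $\vec{A}$, and then observes that unitary invariance follows because applying the ideal inequality twice with $\vec{A}$ unitary and then again with $\vec{A}^{\dagger}$ turns the $\le$ into equality. For the fixed-input norm, the paper's proof of the ideal property goes through a singular value decomposition $\vec{A}\vec{P} = \vec{P}'\vec{A}'$ that factors off a rank-one projector. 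Your proof instead exploits the much cleaner symmetry that unitary conjugation $\vec{P}\mapsto\vec{V}\vec{P}\vec{V}^{\dagger}$ is a bijection on rank-one projectors, so the supremum is unchanged; this avoids the SVD machinery entirely. What you lose by doing so is the intermediate operator ideal statement, which the paper actually uses later (for instance when stripping off $\lV\vec{H}_{\gamma}\rV$ via Hölder-type steps), so the paper's route earns its keep beyond this one fact. As a standalone proof of unitary invariance, though, your argument is the more elementary and transparent of the two, and every step you outline — pointwise Schatten invariance under the expectation, left multiplication pointwise, the conjugation bijection for the supremum — is sound.
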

Being operator ideal already implies unitary invariance, but we state it regardless. As the norm $\vertiii{\cdot}_{\textrm{fix},p}$ defined for low-rank input is somewhat non-standard, we include a proof as follows.
\begin{proof}[Proof of Fact~\ref{fact:operator ideal} for fixed inputs]
The case $\vertiii{\vA\vX}_{\textrm{fix},p}$ follows from the fact that p-norms are operator ideal. For the other ordering,
\begin{align}
\vertiii{\vX\vec{A}}_{\textrm{fix},p}&=\sup_{rank(\vec{P})=1}(\BE[\lV \vX\vec{A}\vec{P}\rV_p^p])^{1/p}\\
&=\sup_{rank(\vec{P})=1}(\BE[\lV \vX\vec{P}'\vec{A}'\rV_p^p])^{1/p}\\
&=\sup_{rank(\vec{P}')=1}(\BE[\lV \vX\vec{P}'\rV_p^p])^{1/p} \lV \vec{A}\rV\\
&=\vertiii{\vX}_{\textrm{fix},p} \lV \vec{A}\rV.
\end{align}
In the second line, we use the singular value decomposition 
\begin{align}
    \vec{A}\vec{P} = \vec{U}\vec{S}\vec{V}= \vec{U}\vec{S}_1 \vec{S}_{\vec{A}'}\vec{V} = \vec{U}\vec{S}_{1}\vec{U}^\dagger \cdot \vec{U}\vec{S}_{\vec{A}'}\vec{V} := \vec{P}'\vec{A}',
\end{align} where we rewrite the diagonal elements as product $\vec{S}=\vec{S}_{1}\vec{S}_{\vec{A}'}$, where $\vec{S}_1$ is a rank $1$ projector and $\lV \vec{S}_{\vec{A}'}\rV \le \lV \vec{S}\rV \le \lV \vec{A}\rV$. This is possible since $\vec{S}$ is bounded by $\lV \vec{S}\rV \le \lV \vec{P}\vec{A}\rV \le \lV \vec{A}\rV$. This is the advertised result.
\end{proof}

\section{First-order Trotter for random Hamiltonians}\label{sec:first_order_random}
In this section, we employ matrix martingales techniques on the first-order Lie-Trotter formula for random Hamiltonians.\footnote{The higher-order formulas require the more general analysis of random matrix polynomials; see Section~\ref{sec:matrix_poly}. } Recall, it suffices to control the Trotter error represented in the exponentiated form~\cite{thy_trotter_error}
\begin{align}
    \e^{\ri\vH_\Gamma t} \cdots \e^{\ri\vH_1} = \exp_{\CT}\L(\ri\int (\vec{\CE} (t)+ \vH)dt\R)\quad \text{where}\quad
    \vec{\CE} (t) := \sum^{\Gamma}_{k=2}\left( \prod_{\gamma=k-1}^{1} \e^{\CL_\gamma t} [\vH_k] -\vH_k\right).
\end{align}

\begin{thm}[First-order Trotter for random Hamiltonians] \label{thm:first_order_gate_count}
Consider a random Hamiltonian $ \vH= \sum_{\gamma=1}^\Gamma \vH_\gamma$ on $n$-qubits, where each term $\vH_\gamma$ is independent, zero mean,  and almost surely bounded 
    \begin{align}
        \BE \vH_\gamma =0\quad \text{and}\quad\norm{\vH_\gamma} \le b_\gamma.
    \end{align}
Then, the gate count
\begin{align*}
    G =  2\sqrt{2}\L(n\ln(2)+\log(\e^2/\delta)\R) \Gamma\lnormp{\vH}{(global),2} \lnormp{\vH}{(local),2}\frac{ t^2}{\epsilon}
    \ \ \textrm{ensures}\ \ \Pr\L(\norm{e^{\iunit \vH t}- \vec{S}_1(t/r)^r} \ge \epsilon\R) \le \delta. 
\end{align*}
For arbitrary fixed input state $\vrho$, the gate count
\begin{align*}
    G =  2\sqrt{2}\log(\e^2/\delta) \Gamma\lnormp{\vH}{(global),2} \lnormp{\vH}{(local),2}\frac{t^2}{\epsilon} 
    \ \ \textrm{ensures}\ \ \Pr\left(\frac{1}{2}\lnormp{(\e^{-\iunit \vH t}\vrho \e^{\iunit \vH t}- \vec{S}_1(t/r)^{\dagger r}\vrho \vec{S}_1(t/r)^r)}{1} \ge \epsilon\right) \le \delta.
\end{align*}
\end{thm}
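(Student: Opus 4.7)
The plan is to bound the one-step error $\vertiii{\e^{\iunit\vH\tau}-\vec{S}_1(\tau)}$ in one of the martingale norms of Section~\ref{sec:martingales} ($\vertiii{\cdot}_p$ for the operator-norm claim, $\vertiii{\cdot}_{\mathrm{fix},p}$ for the fixed-input claim), telescope over the $r$ Trotter steps using unitary invariance, and conclude by Markov's inequality at a carefully chosen $p$. Starting from the representation $\vec{S}_1(\tau)=\exp_{\CT}(\iunit\int_0^\tau (\vec{\CE}(s)+\vH)\,ds)$ and applying a Duhamel identity, the one-step error is controlled by $\int_0^\tau \vertiii{\vec{\CE}(s)}\,ds$, so the task reduces to bounding $\vertiii{\vec{\CE}(s)}$. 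Taylor expanding $\vec{\CE}(s) = \sum_{g\ge 1} s^g\, \vec{\CE}_g$ organizes the error into polynomials $\vec{\CE}_g$ of degree $g+1$ in the random summands $\vH_\gamma$, with $\vec{\CE}_1 = \sum_{k>j}[\iunit\vH_k,\iunit\vH_j]$ the dominant commutator piece.

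The core calculation handles $\vec{\CE}_1$ using the filtration $\CF_k = \sigma(\vH_1,\dots,\vH_k)$. Independence and zero mean give $\BE[\vH_k \mid \CF_{k-1}]=0$, so the differences $D_k := [\iunit\vH_k,\sum_{j<k}\iunit\vH_j]$ form a matrix-valued martingale; applying uniform smoothness (Fact~\ref{fact:sub_average_pq} or Corollary~\ref{fact:sub_average_2q_DP}) to the outer sum in $k$, then conditioning on $\vH_k$ and applying it again to the inner sum in $j$, yields
\begin{equation}
\vertiii{\vec{\CE}_1}^2 \le C_p^2 \sum_{k>j} \vertiii{[\iunit\vH_k,\iunit\vH_j]}^2.
\end{equation}
Each inner moment is estimated via the operator-ideal property (Fact~\ref{fact:operator ideal}) together with either the almost sure bound $\norm{\vH_\gamma}\le b_\gamma$ in case~(A) or the Gaussian moment bound $(\BE|g|^p)^{1/p}=O(\sqrt{p})$ in case~(B), giving $\vertiii{[\iunit\vH_k,\iunit\vH_j]}\le 2\alpha_\ast\, b_k b_j$, where $\alpha_\ast = d^{1/p}$ in the spectral-norm setting (since $\norm{\vX}_p \le d^{1/p}\norm{\vX}$ with $d$ the Hilbert-space dimension) and $\alpha_\ast = 1$ in the fixed-input setting (since rank-one matrices have dimension-free Schatten norms). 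Because $[\vH_k,\vH_j]=0$ unless the supports overlap, fixing a site $i\in \Supp(\vH_j)$ and invoking the definition of $\lnormp{\vH}{(1),2}$ collapses the surviving double sum to $O(k)\cdot \lnormp{\vH}{(0),2}^2\, \lnormp{\vH}{(1),2}^2$.

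The higher-order pieces $\vec{\CE}_g$ for $g\ge 2$ are nested commutators of degree $g+1$; applying the same recursive martingale argument $g+1$ times yields $\vertiii{\vec{\CE}_g}$ estimates carrying an extra factor $(\sqrt{C_p}\lnormp{\vH}{(1),2})^{g-1}$ modulo combinatorial overhead tamed by the $1/g!$ from the Taylor series. These pieces are geometrically subleading once $\tau\sqrt{C_p}\lnormp{\vH}{(1),2}$ is bounded by a constant, which is automatic under the advertised gate counts. Integrating over $s\in[0,\tau]$ and telescoping across $r$ steps via unitary invariance produces the total error bound $O(\alpha_\ast \sqrt{C_p}\,\lnormp{\vH}{(0),2}\lnormp{\vH}{(1),2}\, t^2/r)$. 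Markov's inequality $\Pr(\norm{\cdot}\ge\epsilon)\le(\vertiii{\cdot}_p/\epsilon)^p$ then closes both claims: for the spectral-norm version, choose $p = n\ln d + \log(\e^2/\delta)$ so that $d^{1/p}=O(1)$ and $\e^{-p}\le \delta/\e^2$; for the fixed-input version, the dimension factor drops and $p = \log(\e^2/\delta)$ suffices, each recovering the stated constant $2\sqrt{2}$. The principal obstacle is the bookkeeping for the tail $g\ge 2$: one must verify that the per-step time $t/r$ is small enough (relative to $1/(\sqrt{C_p}\lnormp{\vH}{(1),2})$) that the Taylor series converges geometrically, and that the Gaussian variant of the martingale recursion in case~(B) absorbs the additional $p$-dependent factors coming from $(\BE|g|^p)^{1/p}$ into the universal constants.
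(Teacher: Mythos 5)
Your identification of the leading commutator $\sum_{k>j}[\iunit\vH_k,\iunit\vH_j]$ as a double martingale-difference sequence, yielding $\vertiii{\vec{\CE}_1}_*^2\le C_p^2\sum_{k>j}\vertiii{[\iunit\vH_k,\iunit\vH_j]}_*^2$, is the right structural observation and matches (up to constants) what Lemma~\ref{lem:first_Trotter_error} extracts. However there are two genuine gaps.

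The more serious one is the Gaussian case (B). You estimate $\vertiii{[\iunit\vH_k,\iunit\vH_j]}_*$ by invoking $(\BE|g|^p)^{1/p}=\CO(\sqrt p)$ and then claim the resulting $p$-dependence is "absorbed into the universal constants," but it cannot be: $[\vH_k,\vH_j]=g_kg_j[\vZ_k,\vZ_j]$, and $(\BE|g_kg_j|^p)^{1/p}=\Theta(p)$ multiplies the already-present $C_p^2\sim p^2$, making the one-step error $\CO(p^2)\lnormp{\vH}{(0),2}\lnormp{\vH}{(1),2}\tau^2$ rather than $\CO(p)$. After Markov's inequality this inflates the gate count by an extra factor $\log(\e^2/\delta)$ (or $n\ln d + \log(\e^2/\delta)$ in the spectral-norm case), so the advertised constant $2\sqrt{2}$ and linear logarithmic dependence are not recovered. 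The paper avoids exactly this: at the end of the proof of Theorem~\ref{thm:first_order_gate_count} each Gaussian is re-expanded as a limit of i.i.d.\ Rademacher sums, so uniform smoothness is always applied to \emph{bounded} martingale increments and the constant stays at $C_p=p-1$ with no extra Gaussian-moment factor. Equivalently, one can use Proposition~\ref{prop:Gaussian_coefficients}, i.e.\ $\vertiii{\vX+g\vY}_*^2\le\vertiii{\vX}_*^2+C_p\vertiii{\vY}_*^2$, which is precisely the improvement over $C_p\vertiii{g\vY}_*^2$ that your argument needs.

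The second issue is the tail $g\ge2$. Taylor-expanding $\vec{\CE}(s)=\sum_g s^g\vec{\CE}_g$ and then "applying the same recursive martingale argument $g+1$ times" does not go through as stated: for $g\ge2$ the terms $\CL_{k-1}^{g_{k-1}}\cdots\CL_1^{g_1}[\vH_k]$ contain repeated indices ($g_n\ge2$), and conditioning on the past does not kill $\BE[\CL_n^2\cdots]$, so the summands are not martingale differences layer by layer; taming this requires the repeated-index combinatorics of Theorem~\ref{thm:Gaussian_matrix}, which the paper invokes only for higher-order formulas. For first order, Lemma~\ref{lem:first_Trotter_error} takes a genuinely different route: it does \emph{not} Taylor-expand at all, but writes the telescoping identity $\prod_{\gamma<k}\e^{\CL_\gamma t}[\vH_k]-\vH_k=\sum_n\prod_{\gamma<n}\e^{\CL_\gamma t}(\e^{\CL_nt}-\vI)[\vH_k]$ and splits each increment into a martingale difference $\vD_n$ (controlled to all orders in $t$ at once via unitary invariance plus $\|(\e^{\CL_nt}-\vI)[\vH_k]\|\le t\|[\vH_n,\vH_k]\|$) and a bias $\vB_n=\prod\e^{\CL_\gamma t}\BE_{n-1}(\e^{\CL_nt}-\vI)[\vH_k]$, which starts at $\CO(t^2)$ because $\BE\vH_n=0$ and is dispatched by a plain triangle inequality. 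This is what keeps the first-order analysis self-contained and free of the multivariate polynomial machinery.
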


We see that the gate counts depend on the 2-norm quantities $ \lnormp{\vH}{(global),2}:=\sqrt{\sum_{\gamma} b_\gamma^2}$ and $\lnormp{\vH}{(local),2}:=\max_{i}  \sqrt{\sum_{\gamma: i \subset \gamma } b^2_\gamma}$, but differ by the logarithm of the dimension $\log(d^n)$. 
Often, the Hamiltonian we encounter has gaussian coefficients. By a central limit theorem, we may quickly obtain an analogous result. 
\begin{cor}[Gaussian coefficients]\label{cor:first_order_gaussian}
Theorem~\ref{thm:first_order_gate_count} also holds for random Hamiltonian where each term $\vH_{\gamma}$ is a deterministic bounded matrix with i.i.d. standard Gaussian coefficients
    \begin{align}
        \vH_\gamma = g_\gamma \vK_\gamma\quad \text{and}\quad \norm{\vK_\gamma} \le b_\gamma.
    \end{align}
\end{cor}

For a concrete gate complexity, we evaluate Theorem~\ref{thm:first_order_gate_count} on all-to-all interacting (SYK-like) models on $n$-qudits,
\begin{align}
\vH = \sum_{\gamma} g_{\gamma} \vK_{\gamma} \quad \text{where}\quad \norm{\vK_{\gamma}} \le J\sqrt{\frac{(k-1)!}{kn^{k-1}} },\quad
    \lnormp{\vH}{(global),2}^2 \le  
    \frac{J^2n}{k^2} \quad  \text{and}\quad 
    \lnormp{\vH}{(local),2}^2 \le  
    J^2
\end{align}
 with $\Gamma\le n^k/k!.$ 
\begin{cor}[First-order Trotter for SYK models]
\begin{align*}
    G &=  \frac{2\sqrt{2}}{k\cdot k!}\L(n\ln(d)+\log(\e^2/\delta)\R) \frac{n^{k+1/2} (Jt)^2}{\epsilon} \quad &\text{(worst inputs)}\\
    G &=  \frac{2\sqrt{2}}{k\cdot k!}\log(\e^2/\delta) \frac{n^{k+1/2} (Jt)^2}{\epsilon}\quad &\text{(fixed input)}.
\end{align*}
\end{cor}
The proof of Theorem~\ref{thm:first_order_gate_count} is mainly controlling the integrand $\CE(t)$ via matrix martingale techniques, summarized in the following lemma.
\begin{lem}
\label{lem:first_Trotter_error} For both p-norms $\vertiii{\cdot}_*=\vertiii{\cdot}_p$ and $\vertiii{\cdot}_{\textrm{fix},p}$,
\begin{align*}
    \vertiii{\vCE(t)}_{*}^2
    &\le 2\vertiii{\vI}^2_*C_p\sum^{\Gamma-1}_{k=1}\left[ 4C_pt^2\sum^{k-1}_{n=1}\labs{\big\lV[\vH_n,\vH_k]\big\rV_\infty}_p^2+ \left(\sum^{k-1}_{n=1}  \frac{t^2}{2}\labs{\left\lV[\vH_n,[\vH_n,\vH_k]]\right\rV_\infty}_p \right)^2 \right].
\end{align*}
\end{lem}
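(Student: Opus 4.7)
The plan is to apply the uniform smoothness inequality (Fact~\ref{fact:sub_average_pq} for $\vertiii{\cdot}_{p}$, Corollary~\ref{fact:sub_average_2q_DP} for $\vertiii{\cdot}_{\textrm{fix},p}$) in a nested fashion---once to the outer sum over the ``Hamiltonian index'' $k$, and once more to the leading first-order commutator sum over $n$---while the higher-order bias in $t$ is controlled via the triangle inequality. First, observe that each summand $V_k := \prod_{\gamma=k-1}^{1}e^{\CL_\gamma t}[\vH_k]-\vH_k$ is linear in $\vH_k$, because the superoperator $\prod_{\gamma<k}e^{\CL_\gamma t}-I$ depends only on $\vH_1,\ldots,\vH_{k-1}$. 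Under either hypothesis (A) or (B), $\vH_k$ is independent of $\vH_{<k}$ and zero-mean, hence $\BE[V_k\mid\vH_{<k}]=0$ and $\{V_k\}$ is a matrix-valued martingale difference sequence with respect to $\CF_k=\sigma(\vH_1,\ldots,\vH_k)$. Iterating uniform smoothness yields
\[ \vertiii{\vec{\CE}}_{*}^2 \;\le\; C_p \sum_{k=2}^{\Gamma}\vertiii{V_k}_{*}^2. \]

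Second, I decompose $V_k = A_k + B_k$ with $A_k := it\sum_{n=1}^{k-1}[\vH_n,\vH_k]$ (the first-order Taylor piece in $t$) and $B_k := V_k - A_k$ (the higher-order remainder), and use the elementary estimate $\vertiii{V_k}_{*}^2\le 2\vertiii{A_k}_{*}^2+2\vertiii{B_k}_{*}^2$. The summands of $A_k$ themselves form a martingale difference in $n$: conditioning on $\vH_k$, each $[\vH_n,\vH_k]$ is zero-mean in the independent variable $\vH_n$. A second application of uniform smoothness then gives
\[ \vertiii{A_k}_{*}^2 \;\le\; C_p t^2 \sum_{n=1}^{k-1}\vertiii{[\vH_n,\vH_k]}_{*}^2 \;\le\; C_p t^2 \vertiii{\vI}_{*}^2 \sum_{n=1}^{k-1}\labs{\lV[\vH_n,\vH_k]\rV_\infty}_p^2, \]
where the second inequality combines the operator-ideal property (Fact~\ref{fact:operator ideal}) with $\lnormp{\vX}{p}\le\lV\vX\rV_\infty\lnormp{\vI}{p}$, and the analogous estimate for the fixed-input norm.

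For the bias $B_k$, I would expand via the telescoping $V_k = \sum_{n=1}^{k-1}(e^{\CL_n t}-I)[W_{n+1}]$ with $W_n := \prod_{\gamma=n}^{k-1}e^{\CL_\gamma t}[\vH_k]$ and use the integral-form Taylor remainder $(e^{\CL_n t}-I-t\CL_n)[X]=\int_0^t(t-s)\CL_n^2 e^{\CL_n s}[X]\,ds$, together with the unitary invariance of $\lV\cdot\rV_\infty$ under $e^{\CL_n s}$, to bound each residual contribution in operator norm by $\tfrac{t^2}{2}\lV[\vH_n,[\vH_n,\vH_k]]\rV_\infty$. The triangle inequality then produces
\[ \vertiii{B_k}_{*} \;\le\; \vertiii{\vI}_{*}\sum_{n=1}^{k-1}\tfrac{t^2}{2}\labs{\lV[\vH_n,[\vH_n,\vH_k]]\rV_\infty}_p, \]
and substituting both estimates into the outer bound delivers the lemma (up to the explicit constants). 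The main obstacle is this last step: the integral remainder naturally produces $[\vH_n,[\vH_n,W_{n+1}]]$ rather than $[\vH_n,[\vH_n,\vH_k]]$, because $W_{n+1}$ still carries the later exponentials $e^{\CL_{n+1}t}\cdots e^{\CL_{k-1}t}$. The cross double commutators $[\vH_n,[\vH_m,\vH_k]]$ with $m\neq n$ generated by Taylor-expanding these extra exponentials must either be absorbed into a refined version of the martingale piece $A_k$ (which accounts for the factor $4C_p$ in the stated bound rather than the naive $C_p$ one would obtain from the decomposition above) or reduced back to the diagonal form via unitary invariance; this bookkeeping is the most delicate part of the argument and determines the explicit constants.
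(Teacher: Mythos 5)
Your outer step---applying uniform smoothness over the index $k$ after verifying that each $V_k$ is a martingale difference in $\CF_k$---matches the paper exactly. The inner step, however, diverges in a way that leaves a genuine gap. You propose a \emph{Taylor split} $V_k = A_k + B_k$ with $A_k := \iunit t\sum_{n<k}[\vH_n,\vH_k]$ the exact first-order coefficient. As you correctly observe, the remainder $B_k$ then inherits the full second-order Taylor coefficient of $\prod_{\gamma<k}\e^{\CL_\gamma t}$, which contains cross terms $\CL_m\CL_n[\vH_k]=-[\vH_m,[\vH_n,\vH_k]]$ for $m\neq n$ as well as the diagonal $\CL_n^2[\vH_k]$. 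A triangle-inequality bound on $B_k$ then produces roughly $(k-1)^2$ terms of order $t^2$, whereas the lemma's bias term $\left(\sum_{n<k}\frac{t^2}{2}\labs{\lV[\vH_n,[\vH_n,\vH_k]]\rV_\infty}_p\right)^2$ has only $k-1$ summands. You flag this as ``the most delicate part,'' but neither of your proposed fixes closes it: the cross terms cannot be absorbed into a refined $A_k$ (they are zero-mean in $\vH_n$ conditioned on $\vH_m,\vH_k$, but they are still quadratic in $t$ and would contaminate the leading martingale estimate, not just the constants), and the factor $4C_p$ in the statement does not come from accounting for them---it comes from the crude re-centering step $\vertiii{\vX-\BE_{n-1}\vX}_*\le 2\vertiii{\vX}_*$, which costs a factor of $4$ on the square.

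The paper avoids Taylor-expanding the product of exponentials entirely. After the same telescoping decomposition you write, each term has the form $T_n=\prod_{\gamma=n-1}^{1}\e^{\CL_\gamma t}\,(\e^{\CL_n t}-I)[\vH_k]$, with the already-processed exponentials sitting as an \emph{outer unitary conjugation}. The inner split is then a \emph{martingale split}, not a Taylor split: $T_n = \vec{D}_n + \vec{B}_n$ with $\vec{B}_n := \BE_{n-1}T_n$ (averaging only over $\vH_n$) and $\vec{D}_n := T_n-\vec{B}_n$. For the martingale piece, one never expands $(\e^{\CL_n t}-I)$: unitary invariance strips the outer conjugation, the operator-ideal property reduces to a single Duhamel estimate $\lV(\e^{\CL_n t}-I)[\vH_k]\rV_\infty\le t\lV[\vH_n,\vH_k]\rV_\infty$, and re-centering gives the factor $4$. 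All cross terms $\CL_m\CL_n$ with $m<n$ are hidden inside that unitary and vanish from the estimate without ever being written out. For the bias, $\BE[\CL_n]=0$ kills the linear-in-$t$ part of $\BE_{n-1}[\e^{\CL_n t}-I]$, so the integral remainder is exactly the diagonal $\tfrac{t^2}{2}\CL_n^2[\vH_k]$ and nothing else---no cross terms ever appear. The lesson is that the two splits (``which order in $t$'' vs. ``which part has zero conditional mean'') are genuinely different decompositions, and only the latter is compatible with the unitary-invariance trick that keeps the bookkeeping to $k-1$ terms per $k$.
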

Given such a bound, we may quickly convert to the advertised estimates.
\begin{proof}[Proof of Theorem~\ref{thm:first_order_gate_count}]
For a total evolution time $t$, repeat the Trotter formula for $r$ rounds with individual duration $\tau= t/r$.
Assuming Lemma~\ref{lem:first_Trotter_error}, each round has an error 
\begin{align}
    \vertiii{\vec{\CE} (\tau)}_*^2 &\le 2\vertiii{\vI}^2_*C_p\sum^{\Gamma-1}_{k=1}\left[ 16C_p b^2_k\tau^2\lnormp{\vH}{(local),2}^2+ \left(  b_k\frac{\tau^2}{2} \lnormp{\vH}{(local),2}^2 \right)^2 \right]\\
    &\le 32\vertiii{\vI}^2_*p^2 \tau^2 \lnormp{\vH}{(global),2}^2  \lnormp{\vH}{(local),2}^2. \label{eq:bound_high_order}
\end{align}
The last inequality simplifies the subleading term by the crude estimate $\frac{\tau}{4} \lnormp{\vH}{(local),2}\le 1$, which will be verified. To control the total Trotter error, integrate along time $\tau$ and invoke a telescoping sum,
\begin{align}
    \vertiii{e^{\iunit \vH t}- \vec{S}_1(t/r)^r}_* =\vertiii{\vec{\CE}_{tot}(t)}_*  &\le 2\sqrt{2}\vertiii{\vI}_*p  \frac{t^2}{r} \lnormp{\vH}{(global),2} \lnormp{\vH}{(local),2}=:\lambda  \vertiii{\vI}_*p.
\end{align}
To obtain concentration, it remains to optimize the moment $p$ for Markov's inequality.\\ 
(i) For the spectral norm, set $\vertiii{\cdot}_{*}= (\BE \norm{\cdot}_p^p)^{1/p}$
\begin{align}
    \Pr(\norm{\vec{\CE}_{tot}}\ge \epsilon)\le \frac{\BE \norm{\vec{\CE}_{tot}}^p}{\epsilon^p}&\le \frac{\BE \lnormp{\vec{\CE}_{tot}}{p}^p}{\epsilon^p} \\
    &\le D \left(p\frac{\lambda}{\epsilon}\right)^p\\
    &\le D \exp(-\frac{\epsilon}{\lambda}+2).
\end{align}
The factor of dimension $D=\norm{\vI}_p^p= 2^n$ is due to the trace and the offset $+2$ accounts for the constraint $p\ge 2$. To ensure the Trotter error is at most $\epsilon$ with failure probability $\delta$, we demand $\lambda \le \tfrac{\epsilon}{\log(\e^2 D/\delta)}$, which is
\begin{align}
    G = \Gamma r = 2\sqrt{2} \log(\e^2 D/\delta) \Gamma \frac{t^2}{\epsilon} \lnormp{\vH}{(global),2} \lnormp{\vH}{(local),2}.
\end{align}

(ii) For arbitrary fixed inputs, the factor $\log(D)$ disappears since $\norm{\vI}_{\textrm{fix},p}^p=\sup_{rank(\vP)=1}\norm{\vP}_{p}^p=1$. We arrive at the gate count
\begin{align}
    G =\Gamma r \ge 2\sqrt{2} \log(\e^2/\delta) \Gamma \frac{t^2}{\epsilon} \lnormp{\vH}{(global),2} \lnormp{\vH}{(local),2}
\end{align}
which already improves over qDRIFT~\cite{campbell2019random}. Lastly, for a consistency check, the choices of $r$ in both calculations (i) and (ii) guarantee that  
\begin{align}
    \tau^2 \lnormp{\vH}{(local),2} \lnormp{\vH}{(local),2} \le \frac{\epsilon}{2\sqrt{2} \log(\e^2/\delta)}   \le 16,
\end{align}
which is what we needed for \eqref{eq:bound_high_order}. 

\end{proof}
The above result for bounded random matrices quickly extends to those with Gaussian coefficients by the central limit theorem. 
\begin{proof}[Proof of Corollary~\ref{cor:first_order_gaussian}]
Representing Gaussian by a sum of i.i.d. Rademachers
\begin{align}
    \vH_\gamma=g_\gamma\vK_\gamma = (\lim_{N\rightarrow\infty}\sum_j^N \frac{\epsilon_{\gamma,j} }{\sqrt{N}}) \vK_\gamma:=\lim_{N\rightarrow\infty}\sum_j^N \vY_{\gamma,j},
\end{align}
we obtain a Hamiltonian as sum over bounded, zero mean summands
\begin{align}
   \vH = \sum_{\gamma=1}^\Gamma \vH_\gamma = \sum_{j=1}^N \sum_{\gamma=1}^\Gamma \vY_{\gamma,j}= \sum_{\gamma'=1}^{\Gamma'} \vY_{\gamma'}:= \vH'
\end{align}
where we use notation $ \vH'$ for the summand $\vec{Y_{\gamma'}}$. Plug in Theorem~\ref{thm:first_order_gate_count} and evaluate the 2-norm quantities
\begin{align}
    \norm{ \vH'}^2_{(0),2} &=\sqrt{\sum_{\gamma} b_\gamma^2} \\
    \norm{ \vH'}_{(1),2}^2 &= \max_{i}  \sqrt{\sum_{\gamma: i \subset \gamma} b^2_\gamma}.
\end{align}
This is the advertised result.
\end{proof}

\subsection{Proof of Lemma~\ref{lem:first_Trotter_error}}
It remains to prove Lemma~\ref{lem:first_Trotter_error} for the random Hamiltonian with bounded summand. We will use the martingale structure twice.
\begin{proof}
Recall
\begin{align}
    \vec{\CE} (t) = \sum^{\Gamma}_{k=2}\left( \prod_{\gamma=k-1}^{1} \e^{\CL_\gamma t} [\vH_k] -\vH_k\right)
\end{align}
 and observe the martingale property for summand\footnote{This evident martingale structure is unique to first-order Trotter. At higher orders, such a martingale structure is lost; we have to consider general polynomials.} 
\begin{align}
    \BE_{k-1}\L[ \prod_{\gamma=k-1}^{1} \e^{\CL_\gamma t} [\vH_k] -\vH_k\R] =0 \quad \text{for each}\quad k =1, \cdots, \Gamma.
\end{align}
Indeed, the terms $\gamma=k-1,\cdots, 1$ in the exponential are independent of $\vH_k$. By uniform smoothness, the martingale difference sequence is bounded by a sum-of-squares
\begin{align}
    \vertiii{\vec{\CE} (t)}^2_* &\le \vertiii{\sum^{\Gamma-1}_{k=2}\left( \prod_{\gamma=k-1}^{1} \e^{\CL_\gamma t} [\vH_k] -\vH_k \right)}^2 + C_p\vertiii{\prod_{\gamma=\Gamma-1}^{1} \e^{\CL_\gamma t} [\vH_\Gamma] -\vH_\Gamma}^2_{*}\\
    & \le C_p\sum^{\Gamma-1}_{k=1} \vertiii{ \prod_{\gamma=k-1}^{1} \e^{\CL_\gamma t} [\vH_k] -\vH_k }^2_*.
\end{align}

Next, we further massage each term to identify (yet another) martingale difference apart from the 'bias.' For each $k$, consider a telescoping sum
\begin{align}
    \prod_{\gamma=k-1}^{1} \e^{\CL_\gamma t} [\vH_k] -\vH_k &= \sum^{k-1}_{n=1} \prod_{\gamma=n-1}^{1} \e^{\CL_\gamma t}(\e^{\CL_{n} t}-I) [\vH_k]\\
    &= \sum^{k-1}_{n=1}\left( \prod_{\gamma=n-1}^{1} \e^{\CL_\gamma t}(\e^{\CL_{n} t}-I) [\vH_k] -  \prod_{\gamma=n-1}^{1} \e^{\CL_\gamma t}\BE_{n-1}(\e^{\CL_{n} t}-I) [\vH_k]\right) \quad& \text{(the difference)}\\
    &+\sum^{k-1}_{n=1} \prod_{\gamma=n-1}^{1} \e^{\CL_\gamma t}\BE_{n-1}(\e^{\CL_{n} t}-I) [\vH_k]\quad &\text{(the bias)}\\
    &:= \sum^{k-1}_{n=1}  \vec{D}_n +  \vec{B}_n\quad \text{where} \quad \BE_{n-1}  \vec{D}_n =0. 
\end{align}
The dominant source of error comes from the martingale difference sequence $\vec{D}_n$, which features the desired sum-of-squares behavior. The bias term is treated later.
\begin{align}
    \vertiii{ \prod_{\gamma=k-1}^{1} \e^{\CL_\gamma t} [\vH_k] -\vH_k }^2_* = \vertiii{\sum^{k-1}_{n=1}  \vec{D}_n +  \vec{B}_n }^2_* 
    & \le 2 \vertiii{\sum^{k-1}_{n=1}  \vec{D}_n }^2_* +2 \vertiii{\sum^{k-1}_{n=1}  \vec{B}_n }^2_*\\
    & \le 2 C_p \sum^{k-1}_{n=1}\vertiii{  \vec{D}_n }^2_* +2 \vertiii{\sum^{k-1}_{n=1}  \vec{B}_n }^2_*.
\end{align}
The first inequality is elementary $(a+b)^2 \le 2a^2+2b^2$ and the last inequality uses uniform smoothness. It remains to evaluate both terms. Compute the summand of the first term
\begin{align}
    \vertiii{ \vec{D}_n}^2_* &=  \vertiii{\prod_{\gamma=n-1}^{1} \e^{\CL_\gamma t}(\e^{\CL_{n} t}-I) [\vH_k] -  \prod_{\gamma=n-1}^{1} \e^{\CL_\gamma t}\BE_{n-1}(\e^{\CL_{n} t}-I) [\vH_k]}_*^2\\
    &\le \vertiii{\left((\e^{\CL_{n} t}-I)- \BE_{n-1}[e^{\CL_{n} t}-I]\right)[\vH_k]}_*^2\\
    &\le 4\vertiii{\left(\e^{\CL_{n} t}-I\right) [\vH_k]}_*^2\\
    &\le 4t^2 \labs{\big\lV[\vH_n,\vH_k]\big\rV_\infty}_p^2  \vertiii{\vI}_*^2
\end{align}
where the factor of $2^2$ is due to convexity of $ \vertiii{\cdot}_*^2$. The bias term cannot be treated as martingales; apply a crude triangle inequality
\begin{align}
\vertiii{\sum^{k-1}_{n=1} \vec{B}_n}^2_*&= \vertiii{\sum^{k-1}_{n=1} \prod_{\gamma=n-1}^{1} \e^{\CL_\gamma t}\BE_{n-1}(\e^{\CL_{n} t}-I) [\vH_k]}^2_*\\
&\le \left(\sum^{k-1}_{n=1}  \frac{t^2}{2}\labs{\left\lV[\vH_n,[\vH_n,\vH_k]]\right\rV_\infty}_p \right)^2\vertiii{\vI}^2_*.
 \end{align}
Fortunately, it is at high-orders $\CO(t^2)$ and thus subleading. Combining the two terms yields the advertised result.
\end{proof}

\section{Preliminary: Concentration for multivariate polynomials}\label{sec:matrix_poly}
This section develops concentration inequalities for multivariate polynomials of independent random matrices. This will prepares us for the proof of higher-order Trotter error for random Hamiltonians (Section~\ref{chap:random}).

\subsection{Scalars}\label{sec:scalar_hyper}
For a polynomial of independent scalars, the general results are relatively new and multifaceted~\cite{kim_vu,Lata_a_2006,schudy2012polyconcentration}. The problem is better understood for Rademachers and Gaussians, captured in the form of Hypercontractivity~\cite{odonnell2021analysis,janson_1997}.
As in Section~\ref{sec:prelim_hyper}, it relates the p-norm $\labs{f}_{p}:= ( \BE [ \labs{f}^p] )^{1/p}$ to the 2-norm, i.e., the typical fluctuation is well-captured by the variance.
\begin{fact}[Hypercontractivity for Rademacher polynomial~\cite{odonnell2021analysis}]
Consider a degree-r polynomial of Rademachers 
\begin{align}
f(z_m,\cdots,z_1)= \sum_{S\subset \{1,\cdots,m\}} f_{S} \prod_{s\in S} z_s \quad \text{where} \quad z_s =\pm 1.
\end{align}
For $p \ge 2$, 
\begin{align}
\labs{f}_{p} \le \labs{\sum_{S} \sqrt{C_p}^{\labs{S}} f_{S} \prod_{s\in S} z_s}_{2} \le \sqrt{C_p}^{r}\labs{f}_{2}.
\end{align}
\end{fact}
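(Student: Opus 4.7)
The statement bundles two inequalities; I will dispatch them separately. The second inequality $\labs{\sum_S \sqrt{C_p}^{\labs{S}} f_S \prod_{s \in S} z_s}_2 \le \sqrt{C_p}^{\,r}\,\labs{f}_2$ is immediate. By independence of the Rademachers, the $L^2$ norm of any Rademacher polynomial equals the Euclidean norm of its coefficient vector, so the left-hand side squared is $\sum_S C_p^{\labs{S}} f_S^2 \le C_p^r \sum_S f_S^2 = C_p^r\,\labs{f}_2^2$, using $\labs{S} \le r$ for every monomial of a degree-$r$ polynomial.

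For the first inequality, my plan is to replay the proof of Proposition~\ref{prop:general_pauli_expansion} in the scalar commutative setting by induction on the number of variables $m$. The case $m=0$ is trivial. For the induction step, split off the first variable as $f(z_1,\ldots,z_m) = g(z_2,\ldots,z_m) + z_1\,h(z_2,\ldots,z_m)$. For every fixed assignment of $z_2,\ldots,z_m$, the two-point inequality (Fact~\ref{fact:two_point}) applied to the real numbers $(g,h)$ yields
$$\BE_{z_1}\bigl[\labs{g + z_1 h}^p\bigr] = \tfrac{1}{2}\bigl((g+h)^p + (g-h)^p\bigr) \le \bigl(g^2 + C_p h^2\bigr)^{p/2}.$$
Taking the expectation over $z_2,\ldots,z_m$ and invoking Minkowski's inequality in $L^{p/2}$ (valid because $p\ge 2$) produces a scalar analog of uniform smoothness,
$$\labs{f}_p^2 \le \labs{g^2 + C_p h^2}_{p/2} \le \labs{g}_p^2 + C_p\,\labs{h}_p^2.$$
Now apply the inductive hypothesis to $g$ and $h$, each a Rademacher polynomial in at most $m-1$ variables. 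Unrolling the recursion across all $m$ coordinates, a monomial indexed by $S$ accumulates exactly one factor of $C_p$ for each coordinate it contains, yielding
$$\labs{f}_p^2 \le \sum_{S\subset\{1,\ldots,m\}} C_p^{\labs{S}} f_S^2 = \labs{\sum_S \sqrt{C_p}^{\labs{S}} f_S \prod_{s\in S} z_s}_2^2$$
by Rademacher orthogonality, which closes the first inequality.

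There is no genuine technical obstacle: the whole argument is the commutative scalar shadow of Proposition~\ref{prop:general_pauli_expansion}, with Fact~\ref{fact:two_point} playing the role that Fact~\ref{fact:unif_schatten_recap} plays there, and the absence of matrix non-commutativity means no gauge or partial-trace gymnastics are needed. The one piece of care is to invoke Minkowski in $L^{p/2}$ at the right moment so that the recursion stays a clean sum of squares rather than degenerating into $p$-th-moment cross terms; once that is in place, the bookkeeping of the $C_p^{\labs{S}}$ factors is automatic.
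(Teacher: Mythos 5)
Your proof is correct. A small but important caveat: the paper does not actually prove this statement; it cites it as a Fact from the analysis-of-Boolean-functions literature. So there is no in-paper proof to compare against line by line. That said, your inductive argument is precisely the standard one (and is the one in the cited reference), and — more to the point for this paper — it is the exact scalar template that Proposition~\ref{prop:general_pauli_expansion} imitates at the operator level: your two-point inequality plus $L^{p/2}$-Minkowski step is the scalar shadow of Proposition~\ref{prop:unif_subsystem_recap} (uniform smoothness for subsystems), and your coordinate-by-coordinate unrolling is the same induction the paper runs with partial traces in place of conditional expectations over $z_1$. So the proof is both correct and in the spirit the authors clearly have in mind when they quote this Fact. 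One tiny note on hygiene: the paper's Fact~\ref{fact:two_point} is written without absolute values, which only makes literal sense for even integer $p$; you correctly apply the version with $\labs{\cdot}^p$, which is the form needed for general $p \ge 2$ and for the conditional-expectation identity $\BE_{z_1}\labs{g+z_1 h}^p = \tfrac{1}{2}\bigl(\labs{g+h}^p+\labs{g-h}^p\bigr)$. Everything else — the split $f = g + z_1 h$, the $L^{p/2}$-Minkowski step giving $\labs{f}_p^2 \le \labs{g}_p^2 + C_p \labs{h}_p^2$, the bookkeeping showing each coordinate in $S$ contributes exactly one $C_p$, and the trivial second inequality by Parseval and $\labs{S}\le r$ — checks out.
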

\begin{fact}[Hypercontractivity for a polynomial of independent Gaussians{~\cite[Theorem~6.12]{janson_1997}}]\label{fact:hyper_scalar_g}
Conisder a degree-r polynomial of i.i.d. Gaussian variables $f(g_m,\cdots,g_1)$. For $p \ge 2$, 
\begin{align}
    \labs{f}_{p} \le \sqrt{C_p}^{r}\labs{f}_{2}.
\end{align}
\end{fact}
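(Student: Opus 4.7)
The plan is to derive the Gaussian case by a central limit reduction from the Rademacher case, which is already stated just above as hypercontractivity for Rademacher polynomials. The key observation is that a degree-$r$ polynomial structure is preserved under the substitution $g_i \mapsto \frac{1}{\sqrt{N}}\sum_{j=1}^N \epsilon_{i,j}$ with $\epsilon_{i,j}$ i.i.d.\ Rademachers, so Rademacher hypercontractivity transfers after passing to the limit $N\to\infty$.

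Concretely, I would first fix $N$ and define the auxiliary random variable
\begin{align}
f^{(N)} := f\!\left(\tfrac{1}{\sqrt{N}}\sum_{j=1}^{N}\epsilon_{m,j},\,\ldots,\,\tfrac{1}{\sqrt{N}}\sum_{j=1}^{N}\epsilon_{1,j}\right).
\end{align}
By expanding each $g_i$, the random variable $f^{(N)}$ is a degree-$r$ polynomial in the i.i.d.\ Rademacher variables $\{\epsilon_{i,j}\}_{i\le m,\, j\le N}$. Applying the Rademacher hypercontractivity fact stated just before yields
\begin{align}
|f^{(N)}|_p \le \sqrt{C_p}^{\,r}\, |f^{(N)}|_2
\end{align}
uniformly in $N$.

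Next I would take $N\to\infty$. By the multivariate central limit theorem, $\big(\tfrac{1}{\sqrt{N}}\sum_{j}\epsilon_{i,j}\big)_{i=1}^m$ converges in distribution to $(g_m,\ldots,g_1)$, so $f^{(N)} \xrightarrow{d} f(g_m,\ldots,g_1)$. To upgrade convergence in distribution to convergence of the $p$-th and $2$-nd moments, I would invoke uniform integrability: because $f^{(N)}$ is a degree-$r$ polynomial in bounded i.i.d.\ variables, one has $\sup_N \BE[|f^{(N)}|^{p+1}] < \infty$ (this follows from the same Rademacher hypercontractivity applied at exponent $p+1$, together with $|f^{(N)}|_2 \to |f|_2$, which is a finite-dimensional computation in the covariance structure of the leading monomials). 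Standard uniform-integrability arguments then give $|f^{(N)}|_p \to |f|_p$ and $|f^{(N)}|_2 \to |f|_2$, and the inequality passes to the limit.

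The main obstacle is the moment convergence step: one needs that $p$-th moments converge, not merely distributions, and this is exactly where the $(p+1)$-moment bound (or any strictly higher moment) from Rademacher hypercontractivity is used to dominate. An alternative route, which I would briefly mention as a cross-check, is the semigroup proof via the Ornstein--Uhlenbeck operator $\vec{T}_\rho$ with $\rho = 1/\sqrt{C_p}$: Nelson's hypercontractivity states $\|\vec{T}_\rho h\|_p \le \|h\|_2$, and combined with the fact that each Hermite-degree-$d$ component is an eigenfunction of $\vec{T}_\rho$ with eigenvalue $\rho^d$, this gives $|f|_p \le \rho^{-r}|f|_2 = \sqrt{C_p}^{\,r}|f|_2$ directly for degree-$r$ polynomials. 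Either route suffices; the Rademacher-limit route is more in line with the elementary, martingale-flavored tools emphasized throughout the paper.
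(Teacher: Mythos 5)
This Fact is quoted in the paper from Janson's book without proof; the text immediately following it even explains that the paper deliberately omits the Hermite-expansion argument. Your proposal supplies a correct, self-contained proof, and your first route --- representing each $g_i$ as $N^{-1/2}\sum_{j=1}^{N}\epsilon_{i,j}$, invoking Rademacher hypercontractivity at every finite $N$, and passing to the limit --- is precisely the scalar shadow of the technique the paper itself uses for its matrix analogs (Proposition~\ref{prop:Gaussian_coefficients} and the CLT-based proof of Corollary~\ref{prop:Gaussian_matrix_multi}), so it is very much in the spirit of this work. Two points deserve tightening. First, after the substitution the expression is not automatically multilinear in the $\epsilon_{i,j}$: you should explicitly reduce $\epsilon_{i,j}^2 = 1$ and observe that this reduction can only lower the degree, so the resulting multilinear representative has degree $\le r$ and Fact~4.1 of the paper then applies with exponent $r$ (the extra factors of $\sqrt{C_p}\ge 1$ only help). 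Second, the claim that $|f^{(N)}|_2\to|f|_2$ is ``a finite-dimensional computation in the covariance structure of the leading monomials'' slightly understates what is being used: the second moment of a degree-$r$ polynomial involves joint moments of the underlying variables up to order $2r$, so one really appeals to moment convergence of the scaled Rademacher sums to the Gaussian (the method-of-moments version of the CLT), not merely convergence of covariances. With these clarifications, the uniform-integrability step carried by the $(p+1)$-moment bound is sound, and your semigroup cross-check is the classical Nelson/Ornstein--Uhlenbeck proof, which is the route the cited reference (Janson) takes.
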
 
We do not present the intermediate bound for the Gaussian case because it requires expansion by the orthogonal Hermite polynomials, which complicates the picture. Note that we can WLG assumed the above to have zero mean.
\subsection{Matrices}
Unlike the scalar cases, concentration for a multivariate polynomial of matrices is relatively unexplored; even the i.i.d. cases are fairly modern (See, e.g.,~\cite{tropp2015introduction}) and there it remains what the appropriate matrix analog quantity (such as the variance) is. For multivariate polynomials, the problem seems too general in terms of how matrices may interact with each other and how randomness is involved. 

Nevertheless, we will derive concentration results that arguably match the best-known scalar results. What enables this is that we specialize in polynomials of \textit{bounded} matrices with Gaussian coefficients, motivated by concrete applications in physics and quantum information (e.g., Hamiltonian with Paulis strings with Gaussian coefficients). 

As we discussed in Section~\ref{sec:prelim_hyper}, we use the ``local'' uniform smoothness inequality recursively to derive the ``global'' concentration for multivariate matrix polynomials.
However, the external classical randomness will require slightly different arguments and presentations. We begin with a result being essentially the analog of Hypercontractivity we showed (Proposition~\ref{prop:general_pauli_expansion}). Unless otherwise noted, the norms in this section will be overloaded
\begin{align}
    \vertiii{\cdot}_{*} = \vertiii{\cdot}_{p} \quad \text{or}\quad \vertiii{\cdot}_{fix,p}.
\end{align}
Uniform smoothness holds for both norms (Fact\ref{fact:sub_average_pq}, Fact~\ref{fact:sub_average_2q_DP}).

\begin{prop}[Concentration for matrix function]\label{prop:general_matrix_function}
For a matrix-valued function $\vF(\vX_m,\cdots, \vX_1)$, with matrix-valued variables $\vX_i$,
\begin{align}
    \vertiii{\vF(\vX_m,\cdots, \vX_1)}^2_* \le \sum_{S\subset \{m,\cdots,1\}} (C_p)^{|S|}\vertiii{\prod_{s\in S}(1-\BE_s)\prod_{s'\in S^c}(\BE_{s'})\vF(\vX_m,\cdots, \vX_1)}^2_*.
\end{align}
The expectation $\BE_s$ is associated with random matrix $\vX_s$ and $S^c$ denotes the complement of set $S$.
\end{prop}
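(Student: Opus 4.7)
The plan is to mirror the induction that proved Proposition~\ref{prop:general_pauli_expansion}, replacing the partial-trace decomposition over subsystems with a conditional-expectation decomposition over the random variables $\vX_1,\ldots,\vX_m$. The primitive inequality I will invoke at each splitting is the martingale form of uniform smoothness (Fact~\ref{fact:uni_smooth_recall}), which holds equally for $\vertiii{\cdot}_p$ and $\vertiii{\cdot}_{\textrm{fix},p}$, so the two instances of $\vertiii{\cdot}_*$ can be handled in parallel.

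I induct on $m$. The base case $m=0$ is trivial: $\vF$ is deterministic and only $S=\emptyset$ contributes to the right-hand side. For the inductive step, decompose
\begin{equation}
\vF = \BE_m[\vF] + (1-\BE_m)[\vF] =: \vU + \vW,
\end{equation}
where $\vU$ depends only on $\vX_{m-1},\ldots,\vX_1$, while $\vW$ satisfies $\BE_m[\vW]=0$ pointwise in the other variables. In particular $\BE[\vW\mid \vU]=0$, so Fact~\ref{fact:uni_smooth_recall} gives
\begin{equation}
\vertiii{\vF}_*^2 \le \vertiii{\vU}_*^2 + C_p\,\vertiii{\vW}_*^2.
\end{equation}
Now apply the induction hypothesis to each of $\vU$ and $\vW$, viewing them as functions of the remaining $m-1$ random variables. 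For $\vW$ this is legitimate because the operator $(1-\BE_m)$ commutes with every $\BE_s$ and $(1-\BE_{s'})$ with $s,s'<m$ (they act on independent variables), and the same independence preserves the conditional-zero-mean hypothesis at every subsequent step of the recursion. Expanding $\vU$ produces a sum over $S\subset\{m-1,\ldots,1\}$; expanding $\vW$ produces the analogous sum with an extra outer factor $(1-\BE_m)$ and an extra $C_p$. Relabeling the $\vU$-contributions as subsets of $\{m,\ldots,1\}$ that exclude $m$ and the $\vW$-contributions as those that include $m$, the two families assemble into exactly the full sum over $S\subset\{m,\ldots,1\}$ with the correct $C_p^{|S|}$ weights.

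The only subtle point is bookkeeping: one must check that the commuting-projection structure of the $\BE_s$ keeps the hypotheses of Fact~\ref{fact:uni_smooth_recall} intact through every level of the recursion, and that the norms $\vertiii{\cdot}_p$ and $\vertiii{\cdot}_{\textrm{fix},p}$ are both compatible with conditioning (which they are, since the supremum over rank-one projectors $\vec P$ in $\vertiii{\cdot}_{\textrm{fix},p}$ can be pulled outside the expectations used in the induction). I do not expect a real obstacle beyond careful notation; the proof is a direct transcription of Proposition~\ref{prop:general_pauli_expansion} into the language of matrix-valued filtrations.
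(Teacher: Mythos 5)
Your proof is correct and is essentially the paper's own argument: both proceed by induction on the number of random variables, peeling off one $\vX_i$ via the decomposition $\vF = \BE_i\vF + (1-\BE_i)\vF$, invoking the martingale form of uniform smoothness (Fact~\ref{fact:uni_smooth_recall}), and recursing on the remaining variables while using independence/commutation of the $\BE_s$ to assemble the subsets. The only difference is cosmetic indexing (you peel off $\vX_m$ in the step $m-1\to m$ whereas the paper peels off $\vX_{m+1}$ in the step $m\to m+1$, relying on the same conditioning-on-the-split-variable device to justify re-applying the hypothesis to $(1-\BE_{m})\vF$).
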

The proof is identical to Proposition~\ref{prop:general_pauli_expansion}. Note the expectation $\BE_s$ should not be confused with the conditional expectation.

To give a concrete example, we take $\vF$ to be a multi-linear function.
\begin{cor}[Multi-linear function of bounded matrices]\label{cor:bounded_matrix_multilinear}
Consider a degree $r$ multi-linear polynomial 
\begin{align}
\vF(\vX_m,\cdots, \vX_1)= \sum_{i_r,\cdots, i_i}T_{\vec{i}}\vX_{i_r}\cdots \vX_{i_1} = \sum_{S\subset \{m,\cdots,1\}} \sum_{\vec{i}\sim S} T_{\vec{i}}\vX_{i_r}\cdots \vX_{i_1}
\end{align}
where $\vec{i}=i_r,\cdots,i_1$ denotes the tuple and $\vec{i}\sim S$ indicates that the indices coincide (up to relabeling) with the set $S=\{s_r,\cdots,s_1\}$. Suppose each argument $\vX_i$ is an independent random matrix with zero mean $\BE \vX_i=0$ and bounded operator norm $\lV \vX_i\rV \le b_i$. Then 
\begin{align}
    \vertiii{\vF(\vX_m,\cdots, \vX_1)}^2_* &\le (C_p)^{r}\sum_{S\subset \{m,\cdots,1\}} \vertiii{\sum_{\vec{i}\sim S} T_{\vec{i}}\vX_{i_r}\cdots \vX_{i_1} }^2_* \\
    &\le (C_p)^r \sum_{S\subset \{m,\cdots,1\}} \left(b_{s_r}\cdots b_{s_1}\sum_{\vec{i}\sim S}\labs{T_{\vec{i}}}\right)^2 \vertiii{\vI}^2_*.
\end{align}

\end{cor}
Intuitively, the sum over different sets $S$ exhibits a sum-of-squares behavior. Within each set $S$, the reordering of the polynomial is summed via a triangle inequality ($\sum_{\vec{i}\sim S}\labs{T_{\vec{i}}}$), reflecting the fact that we are bounding the matrices $\vX_i$ by their scalar absolute bound $b_i$. This may seem wasteful to matrix concentration specialists but is a mild overhead for our applications. 
\begin{proof}
By Proposition~\ref{prop:general_matrix_function},
\begin{align}
    \vertiii{\sum_{\vec{i}}T_{\vec{i}}\vX_{i_r}\cdots \vX_{i_1}}^2_* &\le \sum_{S\subset \{m,\cdots,1\}} (C_p)^{|S|}\vertiii{\prod_{s\in S}(1-\BE_s)\prod_{s'\in S^c}(\BE_{s'})\sum_{\vec{i}}T_{\vec{i}}\vX_{i_r}\cdots \vX_{i_1}}^2_*\\
    &=\sum_{S\subset \{m,\cdots,1\}} (C_p)^{r}\vertiii{\sum_{\vec{i}}\indicator(\{i_r,\ldots, i_1\}=S)\ T_{\vec{i}}\vX_{i_r}\cdots \vX_{i_1}}^2_*.
\end{align}
The second line uses the multi-linearity that the expectation vanishes for $i\in S^c$ and converts to the indicator. Lastly, we use that the norm $\vertiii{\cdot}_*$ is operator ideal (Fact~\ref{fact:operator ideal}) to convert to the advertised result.
\end{proof}


\subsubsection{Deterministic matrix with Gaussian coefficients}
Thus far, we have shown for a polynomial of random bounded, zero mean matrices. In physics (such as the SYK model), randomness often comes in via adding Gaussian coefficients to a deterministic matrix. 
\begin{prop} \label{prop:Gaussian_coefficients}
Consider random matrices $\vX, \vY$ of the same size and a standard Gaussian $g$ independent of the matrices $\vY,\vX$. For $p\ge 2$, 
\begin{equation}
\vertiii{\vX+g\vY}_*^2 \le \vertiii{\vX}_*^2  + C_p\vertiii{\vY}_*^2 .
\end{equation}
\end{prop}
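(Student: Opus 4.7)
The plan is to deduce the Gaussian version from the bounded/martingale-difference version (Fact~\ref{fact:sub_average_pq} and Corollary~\ref{fact:sub_average_2q_DP}) by the central limit theorem. I would replace the scalar Gaussian $g$ by a rescaled Rademacher sum $g_N := N^{-1/2}\sum_{j=1}^N \epsilon_j$, with $\epsilon_j$ i.i.d.\ $\pm 1$ symmetric and independent of $(\vX,\vY)$, prove the inequality for each $g_N$ by a telescoping application of uniform smoothness, and then send $N\to\infty$. A direct appeal to the Fact for the pair $(\vX, g\vY)$ is too lossy, because it would pick up a factor $(\BE|g|^p)^{2/p}$ instead of the claimed constant $1$; the rescaling argument is exactly what converts the Gaussian $L^p$ moment back into a variance.

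First I would set $\vZ_j := N^{-1/2}\epsilon_j\vY$ and $\vec{S}_k := \vX + \sum_{j\le k}\vZ_j$. Because $\epsilon_k$ is independent of $\vec{S}_{k-1}$ (which is measurable with respect to $\vX,\vY,\epsilon_1,\dots,\epsilon_{k-1}$) and has zero mean, $\BE[\vZ_k \mid \vec{S}_{k-1}] = 0$. Applying Fact~\ref{fact:sub_average_pq} (or Corollary~\ref{fact:sub_average_2q_DP} in the fixed-input case) $N$ times yields
\[
\vertiii{\vec{S}_N}_*^2 \;\le\; \vertiii{\vX}_*^2 + C_p \sum_{j=1}^N \vertiii{\vZ_j}_*^2.
\]
Since $|\epsilon_j|=1$ a.s.\ and $\vertiii{\cdot}_*$ is unitary invariant, $\vertiii{\vZ_j}_*^2 = (1/N)\vertiii{\vY}_*^2$, and the bound collapses to $\vertiii{\vX + g_N\vY}_*^2 \le \vertiii{\vX}_*^2 + C_p\vertiii{\vY}_*^2$, uniformly in $N$ and in the dimension of $\vX,\vY$.

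Second I would pass to the limit. By the CLT, $g_N \to g$ in distribution, and by Khintchine/Marcinkiewicz--Zygmund the family $\{g_N\}$ has $\sup_N \BE|g_N|^q < \infty$ for every $q\ge 2$, so the convergence actually holds in every $L^q$. Because the Rademachers are independent of $(\vX,\vY)$ and $s\mapsto \|\vX+s\vY\|_p$ is continuous in $s$ with at most linear growth, dominated convergence transfers this to $\vertiii{\vX + g_N\vY}_*^p \to \vertiii{\vX + g\vY}_*^p$, and the uniform Rademacher bound descends to the Gaussian.

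The main obstacle is precisely this last step: because $\vertiii{\cdot}_*$ is itself a $p$-th moment, the limit requires a "second-level" uniform integrability of $\|\vX+g_N\vY\|_p^p$. I would secure this by invoking the same telescoping argument once more at the slightly larger exponent $p+1$ to obtain a uniform bound on $\vertiii{\vX + g_N\vY}_{p+1}$; the resulting $L^{p+1}$ control of $\|\vX+g_N\vY\|_p$ immediately gives uniform integrability of $\|\vX+g_N\vY\|_p^p$ and closes the argument. A clean alternative is Skorokhod coupling of $g_N\to g$ together with Fatou's inequality for the lower bound and the $L^{p+1}$ uniform bound for the reverse direction.
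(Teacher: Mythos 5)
Your proof is correct and follows essentially the same route as the paper: represent the Gaussian as a normalized Rademacher sum, telescope the uniform smoothness inequality across the independent summands, and pass to the $N\to\infty$ limit. The only difference is that you spell out the uniform integrability needed to justify the limit-norm interchange (which the paper's proof glosses over as an unremarked equality), so your version is if anything slightly more careful.
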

\begin{proof}
By the central limit theorem, present Gaussian as a sum over i.i.d. Rademachers 
\begin{align}
g = \lim_{N\rightarrow\infty}\sum_i^N \frac{\epsilon_i }{\sqrt{N}}\quad \text{where}\quad \epsilon_i = \pm 1.    
\end{align}
Then, apply uniform smoothness (Fact~\ref{fact:sub_average_pq},Fact~\ref{fact:sub_average_2q_DP}) to the $\epsilon_i \vY$ yields
\begin{align}
    \vertiii{\vX+g\vY}_*^2 &=  \vertiii{\vX+(\lim_{N\rightarrow\infty}\sum_i^N \frac{\epsilon_i }{\sqrt{N}})\vY}_*^2 = \lim_{N\rightarrow\infty}\vertiii{\vX+(\sum_i^N \frac{\epsilon_i }{\sqrt{N}})\vY}_*^2 \\
    &\le \vertiii{\vX}_*^2  + \lim_{N\rightarrow\infty}\sum_i^N\frac{1}{N}C_p\vertiii{\vY}_*^2=\vertiii{\vX}_*^2  + C_p\vertiii{\vY}_*^2. 
\end{align}
This is better than directly applying uniform smoothness 
\begin{align}
   \vertiii{\vX+g\vY}_*^2\le \vertiii{\vX}_*^2  + C_p\vertiii{g\vY}_*^2=\vertiii{\vX}_*^2  + C_p\CO(p)\vertiii{\vY}_*^2
\end{align}
where the Gaussian moments appear $\lV g\rV^2_p=\CO(p)$.
\end{proof}
It is tempting to guess that the coefficient $g$ only needs to be subgaussian, but it is not evident from the proof. At least, one still obtains comparable results if willing to sacrifice factors of $p$, i.e., heavier tails. Back to the discussion, as a corollary, we can upgrade the premise to allow Gaussian coefficients.
\begin{cor}[multi-linear function of matrices with Gaussian coefficients]\label{prop:Gaussian_matrix_multi}
Consider a degree $r$ multi-linear polynomial
\begin{align}
\vF(\vX_m,\cdots, \vX_1) = \sum_{S\subset \{m,\cdots,1\}} \sum_{\vec{i}\sim S} T_{\vec{i}}\vX_{i_r}\cdots \vX_{i_1}\quad \text{where}\quad \vX_i=g_i\vK_i.
\end{align}
The deterministic matrix $\lV \vK_i\rV \le \sigma_i$ are bounded, and the coefficients are i.i.d. standard Gaussians. Then, \begin{align}
    \vertiii{\sum_{\vec{i}}T_{\vec{i}}\vX_{i_r}\cdots \vX_{i_1} }^2_* 
    &\le (C_p)^{r}\sum_{S\subset \{m,\cdots,1\}} \vertiii{\sum_{\vec{i}\sim S}  T_{\vec{i}}\vK_{i_r}\cdots \vK_{i_1} }^2_* \\
    &\le (C_p)^r \sum_{S\subset \{m,\cdots,1\}} \left(\sigma_{s_r}\cdots \sigma_{s_1}\sum_{\vec{i}\sim S}\labs{T_{\vec{i}}}\right)^2 \vertiii{\vI}^2_*.
\end{align}
\end{cor}

This is immediate from Proposition~\ref{prop:Gaussian_coefficients}. For our later development, let us present another proof via the central limit theorem.

\begin{proof}
We can employ the central limit theorem mindset from the ground up. For each argument $\vX_i$, present each Gaussian via i.i.d. Rademachers $\epsilon_{i,j}$
\begin{align}
    \vX_i=g_i\vK_i = \L(\lim_{N\rightarrow\infty}\sum_j^N \frac{\epsilon_{i,j} }{\sqrt{N}}\R) \vK_i:=\lim_{N\rightarrow\infty}\sum_j^N \vY_{i,j}.
\end{align}
Then, the function
\begin{align}
    \sum_{\vec{i}}T_{\vec{i}}\vX_{i_r}\cdots \vX_{i_1} =   \sum_{j_r,\cdots,j_1}^N \sum_{\vec{i}}T_{\vec{i}} \vY_{i_r,j_r}\cdots \vY_{i_1,j_1} =h\L(\vY_{m,N},\cdots,\vY_{1,N},\cdots,\vY_{1,N},\cdots,\vY_{1,1} \R) 
\end{align}
 is again a multi-linear. By Corollary~\ref{cor:bounded_matrix_multilinear}, 
 \begin{align}
    \vertiii{\sum_{j_r,\cdots,j_1}^N \sum_{\vec{i}}T_{\vec{i}} \vY_{i_r,j_r}\cdots \vY_{i_1,j_1} }^2_* &\le (C_p)^{r}\sum_{S'\subset \{mN,\cdots,1\}}  \vertiii{ \sum_{\vec{j}\vec{i}}\indicator((\vec{i,j})\sim S')T_{\vec{i}} \vY_{i_r,j_r}\cdots \vY_{i_1,j_1} }^2_* \\
    &\le (C_p)^r \sum_{S\subset \{m,\cdots,1\}}\sum_{j_{s_r}}\cdots \sum_{j_{s_1}} \vertiii{\sum_{\vec{i}\sim S}T_{\vec{i}}\vY_{i_r,j_{i_r}}\cdots \vY_{i_1,j_{i_1}}}^2_* \\
    &\le(C_p)^r \sum_{S\subset \{m,\cdots,1\}} \left(\sigma_{s_r}\cdots \sigma_{s_1}\sum_{\vec{i}\sim S}\labs{T_{\vec{i}}}\right)^2 \vertiii{\vI}^2_*.
\end{align}
 The second inequality relabel the subset $S'$ by $S\subset \{m,\cdots,1\}$ and $j_{s_r}$ for each element $s_r$. Once fixing the pairs $(s_r,j_{s_r}),\cdots (s_1, j_{s_1})$, the index $j$ is a function of the index $i$ and hence we only need to look for reordering of indices $i_r,\cdots i_1$. Also note that the coefficients $T_{\vec{i}}$ do not depend on the indices $\vec{j}$. 
\end{proof}
\subsubsection{Beyond multi-linear function}
The story was clean and straightforward for multi-linear functions, but the function arising from Trotter error can well be non-multi-linear. With careful accounting, we will derive more general results for bounded matrices (Corollary~\ref{cor:bounded_matrix_general}) and matrices with Gaussian coefficients (Theorem~\ref{thm:Gaussian_matrix}). The bound is qualitatively the same as the multi-linear case but will require heavier notation to account for repeated terms (i.e., terms that show up more than twice). The bounds are analogous to the best-known scalar results~\cite[Theorem~1.4]{schudy2012polyconcentration}. 
\begin{cor}[Polynomial of bounded matrices]\label{cor:bounded_matrix_general}
Consider a multivariate polynomial (with potentially repeated indices $i_a=i_{a'}$) with zero mean, bounded, independent matrix arguments
\begin{align}
    \vF(\vX_m,\cdots, \vX_1)= \sum_{i_r,\cdots, i_i}T_{i_r,\cdots, i_1}\vX_{i_r}\cdots \vX_{i_1} \quad \text{where} \quad \BE \vX_i=0, \lV \vX_i\rV \le b_i.
\end{align} 
For $p \ge 2$, 
\begin{align*}
    &\vertiii{\vF(\vX_m,\cdots, \vX_1) }^2_* 
    &\le \sum_{S\subset \{m,\cdots,1\}} (4C_p)^{|S|}\left(\sum_{\Supp(\vec{u})=S}b_{1}^{u_1}\cdots b_{m}^{u_m}\sum_{\substack{\Supp(\vec{v})=S^c, \\ v_i\ne 1}} b_{1}^{v_1}\cdots b_{m}^{v_m}\sum_{\pi}|T_{\pi(\vec{u},\vec{v})}|  \right)^2\vertiii{\vI}^2_*
\end{align*}
where $\pi$ enumerates reordering of polynomial $\vX^{u_1+v_1}_{1}\cdots\vX^{u_m+u_1}_{m}$.
\end{cor}
In other words, as usual, we have sum-of-square behavior across different sets $S$. For each set $S$, \begin{enumerate}
    \item select the powers $u_1,\cdots u_m = 0,1,\cdots$ such that its support gives the set $S$,
    \item select the powers $v_1,\cdots, v_m  = 0,2,\cdots$ such that its support gives the complement $S^c$, and
    \item enumerate the reorderings $\pi$ of the non-commutative polynomial.
\end{enumerate}
The takeaway for this calculation is that (1) the larger set $|S|$ corresponds to a heavier tail $C_p^{|S|}$, and (2) the dominating contribution often comes from larger sets $S$ (if we fix the total degree and grow the number of summands). There, unevenly distributed values of powers $v_i\gg 2$ and $v_i\gg 1$ suppress the combinations of other possible $u_i,v_i$. 

\begin{proof}
By Proposition~\ref{prop:general_matrix_function},
\begin{align}
 & \vertiii{\sum_{\vec{i}}T_{\vec{i}}\vX_{i_r}\cdots \vX_{i_1}}^2_* \le \sum_{S\subset \{m,\cdots,1\}} (C_p)^{|S|}\vertiii{\prod_{s\in S}(1-\BE_s)\prod_{s'\in S^c}(\BE_{s'})\sum_{\vec{i}}T_{\vec{i}}\vX_{i_r}\cdots \vX_{i_1}}^2_p\\
 & = \sum_{S\subset \{m,\cdots,1\}} (C_p)^{|S|}\vertiii{\sum_{\vec{i}}T_{\vec{i}}\left(\prod_{s\in S}(1-\BE_s)\sum_{\Supp(\vec{u})=S}\right) \left(\prod_{s'\in S^c}(\BE_{s'})\sum_{\substack{\Supp(\vec{v})=S^c,\\ v_i\ne 1}} \right) \indicator(\vec{i}\sim (\vec{u},\vec{v}) )\vX_{i_r}\cdots \vX_{i_1}}^2_p\\
&=\sum_{S\subset \{m,\cdots,1\}} (C_p)^{|S|}\vertiii{\sum_{\Supp(\vec{u})=S} \sum_{\substack{\Supp(\vec{v})=S^c,\\ v_i\ne 1}}\sum_{\pi} T_{\pi(\vec{u},\vec{v})}\pi\bigg((\vX_{1}^{u_1}-\BE\vX_{1}^{u_1})\cdots (\vX_{m}^{u_m}-\BE \vX_{m}^{u_m})\BE\vX_{1}^{v_1}\cdots \BE \vX_{m}^{v_m}  \bigg) }^2_*.
\end{align}
The second equality inserts indicators for powers in polynomial (up to reordering) $\vX^{u_1}_{1}\cdots\vX^{u_m}_{m}\vX^{v_1}_{1}\cdots\vX^{v_m}_{m}$, with the array $\vec{v}$ for the powers of elements in $S$ and the array $\vec{u}$ for elements in the complement $S^c$. The third equality evaluates the expectations to give the constraint $v_i \ne 1$ and denote by $\pi$ the reordering of the non-commutative polynomial. Lastly, we use $\vertiii{\cdot}_*$ being operator ideal (Fact~\ref{fact:operator ideal}) to convert to bounds on the individual spectral norm. The factor $4^{|S|}$ is due to the crude estimate $\norm{\vX^u-\BE\vX^u}\le 2^{u} \norm{\vX^u}$. This is the advertised result.
\end{proof}

Next, we will use the central limit theorem to upgrade to Gaussian variables. 
\begin{thm}[Polynomial of matrices with Gaussian coefficients]\label{thm:Gaussian_matrix}
Consider a multivariate polynomial (with potentially repeated indices $i_a=i_{a'}$) where each argument is a bounded matrix with an i.i.d. standard Gaussian coefficient
\begin{align}
    \vF(\vX_m,\cdots, \vX_1)= \sum_{i_r,\cdots, i_i}T_{i_r,\cdots, i_1}\vX_{i_r}\cdots \vX_{i_1} \quad \text{where} \quad \vX_i=g_i\vK_i\quad  \text{and}\quad \lV \vK_i\rV \le \sigma_i.
\end{align} 
For $p \ge 2$,
\begin{align}
    &\vertiii{\vF(\vX_m,\cdots, \vX_1) }^2_* 
    \le \sum_{\vec{u}} (C_p)^{|\vec{u}|} \left(\sum_{\vec{v}} \sigma_{1}^{u_1+2v_1} \cdots \sigma_{m}^{u_m+2v_m} \sum_{\pi} |T_{\pi(\vec{u},\vec{v})}|w(\vec{u},\vec{v})\right)^2 \vertiii{\vI}^2_* \quad \text{where} \quad \labs{\vec{u}} := \sum_{i=1}^{m} u_i.
\end{align}
The $\pi(\vec{u},\vec{v})$ enumerates the reorderings of polynomial $\vX^{u_1+2v_1}_1\cdots \vX^{u_m+2v_m}_m$ and 
\begin{align}
     w(\vec{u},\vec{v})= \prod_{i=1}^{m} \frac{(u_i+2v_i-1)!!}{(u_i-1)!!}  \le \prod_{i=1}^{m} (u_i+2v_i)^{v_i}.
\end{align}
\end{thm}
Let us parse the expression.
\begin{enumerate}
    \item The powers $u_1,\cdots u_m = 0,1,\cdots$ are summed incoherently; the sum $\labs{\vec{u}}=\sum_i u_i$ determines the power $C_p^{\labs{\vec{u}}}$.
    \item In the square, fill in the remaining powers $v_1,\cdots, v_m  = 0,2,\cdots$ and sum over them.
    \item Enumerate the reorderings $\pi(\vec{u},\vec{v})$ of the non-commutative polynomial. The prefactor $w(\vec{u},\vec{v})$ comes from Wick contractions.
\end{enumerate}

We see a sum over squares $\sum_{\vec{u}}$, but unlike the multi-linear case, we also allow the same term to have a larger power $\sigma_1^{u_1}$. This is because of the Gaussian tails. One may be concerned about the sum $\sum_{\vec{v}}$ \textit{inside} the square, but looking more carefully, we see each term is already squared $\sigma_i^{2v_i}$. 

Intuitively, the RHS (without the coefficients $C_p$) can be interpreted as a \textit{variance} proxy.
\begin{prop}[Variance Proxy]\label{prop:var_proxy}
\begin{align}
    \sum_{\vec{u}}  \left(\sum_{\vec{v}} \sigma_{1}^{u_1+2v_1} \cdots \sigma_{i}^{u_i+2v_i} \sum_{\pi} |T_{\pi(\vec{u},\vec{v})}|w(\vec{u},\vec{v})\right)^2 = \BE\L[ \L(\sum_{\vec{i}}\labs{T_{\vec{i}}}x_{i_r}\cdots x_{i_1}\R)^2 \R] \label{eq:variance_proxy}
\end{align}
where $x_i (= g_i\sigma_i)$ are scalar Gaussians with variance $\sigma_i^2$.
\end{prop}
This essentially matches the scalar results (Fact~\ref{fact:hyper_scalar_g}) up to the reordering and absolute values $\labs{T_{\vec{i}}}$.\footnote{ The scalar result also replaces the coefficient $(C_p)^{|\vec{u}|}$ with the uniform upper bound $(C_p)^r$.}. Roughly speaking, from Gaussian scalars to matrices (for a bounded matrix $\norm{\vK_i}\le \sigma_i$)
\begin{align}
x_i=g_i\sigma_i \rightarrow \vX_i=g_i\vK_i,    
\end{align}
uniform smoothness tells us that the analogous bound holds; we only need to estimate the variance proxy. 

\begin{proof}[Proof of Theorem~\ref{thm:Gaussian_matrix}]
Let us painfully employ the central limit theorem as in the proof of Proposition~\ref{prop:Gaussian_matrix_multi}. Recall
\begin{align}
    \vX_i=g_i\vK_i = \L(\lim_{N\rightarrow\infty}\sum_j^N \frac{\epsilon_{i,j} }{\sqrt{N}}\R) \vK_i:=\lim_{N\rightarrow\infty}\sum_j^N \vY_{i,j}.
\end{align}
By Corollary~\ref{cor:bounded_matrix_general},
\begin{align}
    &\vertiii{\sum_{j_r,\cdots,j_1}^N \sum_{\vec{i}}T_{\vec{i}} \vY_{i_r,j_r}\cdots \vY_{i_1,j_1} }^2_*\notag\\
    &\stackrel{N\rightarrow \infty}{\le} \sum_{\vec{u}} (C_p)^{|\vec{u}|} \sum_{\vec{j}_1}\cdots \sum_{\vec{j}_m}\lvertiii \sum_{\pi,\vec{v}} \sum_{\vec{k}_1}\cdots \sum_{\vec{k}_m} T_{\pi((\vec{u},\vec{j}),(\vec{v},\vec{k}))}\pi\bigg((\vY_{1,j_{1,1}}\cdots \vY_{1,j_{1,u_1}}) \cdots (\vY_{m,j_{m,1}}\cdots \vY_{m,j_{m,u_m}})\notag\\ 
    & \hspace{9cm} (\BE\vY_{1,k_{1,1}}^{2}\cdots\BE\vY_{1,k_{1,v_m}}^{2})\cdots (\BE\vY_{m,k_{m,1}}^{2}\cdots\BE\vY_{m,k_{m,v_m}}^{2})  \bigg)\rvertiii^2_*  \notag
    \\
    &=: \sum_{\vec{u}} (C_p)^{|\vec{u}|} \sum_{\vec{j}_1}\cdots \sum_{\vec{j}_m} \vertiii{\sum_{\pi,\vec{v}} \sum_{\vec{k}_1}\cdots \sum_{\vec{k}_m} T_{\pi((\vec{u},\vec{j}),(\vec{v},\vec{k}))}\pi\left(\vY_{1}^{(u_1)} \cdots \vY_{m}^{(u_m)} (\BE\vY_{1}^{2})^{(v_1)}\cdots(\BE\vY_{m}^{2})^{(v_m)} \right)}^2_*\\
    &=\sum_{\vec{u}} (C_p)^{|\vec{u}|}  \vertiii{\sum_{\pi,\vec{v}} T_{\pi(\vec{u},\vec{v})}w(\vec{u},\vec{v})\pi\left(\vK_{1}^{u_1+2v_1} \cdots \vK_{m}^{u_m+2v_m} \right)}^2_*\\
    &\le\sum_{\vec{u}} (C_p)^{|\vec{u}|} \left( \sum_{\vec{v}}\sigma_{1}^{u_1+2v_1} \cdots \sigma_{m}^{u_m+2v_m} \sum_{\pi} |T_{\pi(\vec{u},\vec{v})}|w(\vec{u},\vec{v})\right)^2 \vertiii{\vI}^2_*.
 \end{align}
Importantly, in the large $N$ limit for the central limit theorem, the only possible contribution would be the linear terms (e.g., $\vY_{1,j_{1,1}}$) and the expected squares (e.g., $\BE\vY_{1,j_{1,1}} ^2$); any cubic term (e.g., $\vY_{1,j_{1,1}}^3$) is subleading in $1/N$. The array $\vec{u}=u_1,\cdots,u_m$ collects the number of duplicates $u_i$ of each argument $\vY_1,\cdots, \vY_m$ and contributes to $(C_p)^{\labs{\vec{u}}}$. Given the array $\vec{u}$, the sum $\sum_{\vec{j_1}}:=\sum_{j_{1,1}}\cdots \sum_{j_{1,u_1}}$ runs through the duplicates $\vY_{1,j_{1,1}}\cdots \vY_{1,j_{1,u_1}}$ of the term $\vY_1$.

 Second, we compress the notation by grouping duplicates into the exponent. Third, we get rid of duplicates by summing over indices $\vec{j},\vec{k}$ and also drop the Rademachers. The function $w(\vec{u},\vec{v})$ counts the number of ordering $\vY_1^{(u)}(\BE \vY_1^2)^{(v)}$ from $\vX^{u+2v}_1$
\begin{align}
    w(u,v)&:= \binom{u+2v}{u}\cdot (2v)!! \le 2^{u+2v}(2v)^{v}\\
    w(\vec{u},\vec{v})&:= w(u_1,v_1)\cdots w(u_m,v_m).
\end{align}
The binomial coefficient $\binom{u+2v}{u}$ counts the locations of $\vY_1^{(u)}$, and $(2v)!!$ are precisely the Wick contractions for $(\BE \vY_1^2)^{(v)}$. This is the advertised result.
\end{proof}

\begin{proof}[Proof of Proposition~\ref{prop:var_proxy}]
To see why the RHS is the advertised variance proxy~\eqref{eq:variance_proxy}, we apply the two-point inequality (Fact~\ref{fact:two_point}) to the variance proxy. At $p=2$ (which gives $C_p=1$) the two-point inequality becomes an equality, which implies equality for a scalar version of Proposition~\ref{prop:general_matrix_function}. Following the above chain of equations yields the desired expression.
\end{proof}

\section{Random $k$-local Hamiltonians}\label{chap:random}
In this section, we consider k-local Hamiltonians drawn from an ensemble (such as the SYK models) where the terms are bounded matrices with standard Gaussian coefficients $g_{\gamma}$
\begin{align}
   \vH = \sum_{\gamma=1}^\Gamma \vH_\gamma =   \sum_{\gamma=1}^\Gamma g_\gamma \vK_\gamma \quad \text{where} \quad  \BE[g_\gamma^2]=1 \quad \text{and} \quad  \norm{\vK_\gamma} \le b_\gamma.
\end{align}
We apply matrix concentration inequalities to its Trotter error (Section~\ref{sec:matrix_poly}). Recall the local quantities
\begin{align}
 \lnormp{\vH}{(global),2} := \sqrt{\sum_{\gamma} b_\gamma^2} \quad\text{and}\quad
 \lnormp{\vH}{(local),2} &:= \max_{i}  \sqrt{\sum_{\gamma: i \subset \gamma } b^2_\gamma} \quad\text{where}\quad b_{\gamma}:= \norm{\vK_{\gamma}}.
\end{align}

\begin{thm}[Trotter error for random Hamiltonians]\label{thm:Trotter_random_H}
Simulating random $k$-local models with Gaussian coefficients via $2\ell$-th order Suzuki formulas, the gate count 
\begin{align*}
     G &= \Omega\bigg[ 
      \lnormp{\vH}{(local),2} \sqrt{n+\log(1/\delta ) }t \cdot \L(\frac{\lnormp{\vH}{(global),2}^2\sqrt{n+\log(1/\delta ) } t}{\lnormp{\vH}{(local),2} \epsilon} \R)^{\frac{1}{\ell}} 
     \bigg]\quad \text{for}\quad \ell \text{ even}, \\
     G &= \Omega\bigg[ 
      \lnormp{\vH}{(local),2} \sqrt{n+\log(1/\delta ) }t \cdot \L( \L(\frac{\lnormp{\vH}{(global),2}\sqrt{n+\log(1/\delta ) } t}{ \epsilon} \R)^{\frac{1}{\ell}}+\L(\frac{\lnormp{\vH}{(global),2}^2\sqrt{n+\log(1/\delta ) } t}{\lnormp{\vH}{(local),2} \epsilon} \R)^{\frac{1}{\ell+1}}\R) 
     \bigg]\quad \text{for}\quad \ell \text{ odd}\\
    &\hspace{4cm}\textrm{ensures}\quad \Pr\L(\norm{e^{\iunit \vH t}- \vec{S}(t/r)^r} \ge \epsilon\R) \le \delta \quad(\text{worst inputs}).
\end{align*}
The probability $\Pr(\cdot)$ arises from the random Hamiltonian ensemble. For fixed but arbitrary input state $\vrho$, the gate count
\begin{align}
     G &= \Omega\bigg[ 
      \lnormp{\vH}{(local),2} \sqrt{\log(1/\delta ) }t \cdot \L(\frac{\lnormp{\vH}{(global),2}^2\sqrt{\log(1/\delta ) } t}{\lnormp{\vH}{(local),2} \epsilon} \R)^{\frac{1}{\ell}} 
     \bigg]\quad \text{for}\quad \ell \text{ even}, \\
     G &= \Omega\bigg[ 
      \lnormp{\vH}{(local),2} \sqrt{\log(1/\delta ) }t \cdot \L( \L(\frac{\lnormp{\vH}{(global),2}\sqrt{\log(1/\delta ) } t}{ \epsilon} \R)^{\frac{1}{\ell}}+\L(\frac{\lnormp{\vH}{(global),2}^2\sqrt{\log(1/\delta ) } t}{\lnormp{\vH}{(local),2} \epsilon} \R)^{\frac{1}{\ell+1}}\R) \\
    &\hspace{4cm}\textrm{ensures}\ \ \Pr\left(\frac{1}{2}\lnormp{\e^{-\iunit \vH t}\vrho \e^{\iunit \vH t}- \vec{S}(t/r)^{\dagger r}\vrho \vec{S}(t/r)^r}{1} \ge \epsilon\right) \le \delta &(\text{fixed inputs}).
\end{align}

\end{thm}
This is similar but different from the non-random $k$-local results (Theorem~\ref{thm:Trotter_non_random}): when the Hamiltonian is random, an arbitrary fixed input $\vrho$ already displays a 2-norm scaling; even the worst input states that may correlate with the Hamiltonian (the Gibbs state or the ground state) enjoys concentration with a price of $\sqrt{n}$. More carefully, the concentration here is stronger: the dependence on the failure probability $(\log(1/\delta))$ has a lower power because of the many independent Gaussians. However, the factor in $(\cdot)^{1/\ell}$ is slightly worse
(which will be suppressed at large $\ell$, anyway).

The proof strategy is the same Taylor expansion as in Section~\ref{sec:sketch} but with different norms. Recall the error 
\begin{align}
    \vCE = \sum_{g=\ell+1}^{g'-1} \vCE_g + \vCE_{\ge g'}.
\end{align}
The two error terms are combined in Section~\ref{sec:proof_random}. We also present an argument for lower bounds at short times (Section~\ref{sec:counting}).

\subsection{Bounds on the $g$-th Order}

We proceed by controlling each $g$-th order polynomial (for $ \ell < g < g'$) by Theorem~\ref{thm:Gaussian_matrix} for $\sigma_\gamma:= \norm{\vH_{\gamma}}$ 
\begin{align}
    &\vertiii{\vec{\CE}_g}_*^2 = \vertiii{\sum^{J}_{j=1} \sum_{g_J+\cdots+g_{j+1}=g-1}\CL^{g_J}_J\cdots \CL^{g_{j+1}}_{j+1} [\vH_{j+1}]\frac{t^{g-1}}{g_J!\cdots g_{j+1}!}}_*^2\\
    &\le  (2t)^{2(g-1)} \sum_{\vec{u}} (C_p)^{|\vec{u}|} \left(\sum_{\vec{v}} \sigma_{1}^{u_1+2v_1} \cdots \sigma_{\Gamma}^{u_\Gamma+2v_\Gamma}w(\vec{u},\vec{v}) \sum_{j=1}^{J}\sum_{\vec{g}} \frac{\indicator(\vec{g}\sim \vec{u}+2\vec{v})}{g_J!\cdots g_{j+1}!} \right)^2 \vertiii{\vI}^2_*. \label{eq:for_numerics_random_H}
\end{align}
The indicator $\indicator(\vec{g}\sim \vec{u}+2\vec{v})$ (1) keep track of the occurrences $\vec{u},\vec{v}$ of the Hamiltonian terms $\vH_\gamma$ of any ordering $\CL^{g_J}_J\cdots \CL^{g_{j+1}}_{j+1} [\vH_{j+1}]$ and (2) enforces the commutation constraint, i.e., it returns zero if some term $\CL$ commutes through all terms on its right. The factor $2$ is due to the commutator with coefficient $a_j\CL_j$ and the uniform bound $\labs{a_j} \le 1$. 

This scalar sum~\eqref{eq:for_numerics_random_H} can be numerically evaluated to get explicit gate counts for the particular systems. To get analytic estimates, we proceed with the combinatorics. First, as in~\eqref{eq:g-order_LLL}, throw in extra terms to count the symmetrized sum $\vec{\gamma}$ instead of the product-formula-dependent $\vec{g}$
\begin{align}
    \sum_{\vec{g}}\frac{1}{g_J!\cdots g_{j+1}!} \CL^{g_J}_J\cdots \CL^{g_{j+1}}_{j+1}[\vH_{j+1}] \rightarrow \sum_{\gamma_{g-1}=1}^\Gamma \cdots \sum_{\gamma_0=1}^\Gamma \CL_{\gamma_{g-1}} \cdots \CL_{\gamma_1}[\vH_{\gamma_0}]. 
\end{align}
This yields  
\begin{align}
    (cont.)\ &\le  \vertiii{\vI}^2_*(2\Upsilon)^2 (2\Upsilon t)^{2(g-1)} \sum_{\vec{u}} (C_p)^{|\vec{u}|} \left(\sum_{\vec{v}} \sigma_{1}^{u_1+2v_1} \cdots \sigma_{\Gamma}^{u_\Gamma+2v_\Gamma}w(\vec{u},\vec{v}) \sum_{\gamma_{g-1}=1}^\Gamma \cdots \sum_{\gamma_0=1}^\Gamma \indicator(\vec{\gamma}\sim \vec{u}+2\vec{v}) \right)^2 \\
    &\le (\cdot) \sum_{|\vec{u}|=0}^g \bigg[\sum_{\vec{u}} (C_p)^{|\vec{u}|} \sigma_{1}^{2u_1} \cdots \sigma_{\Gamma}^{2u_\Gamma}\sum_{\vec{v}} \sigma_{1}^{2v_1} \cdots \sigma_{\Gamma}^{2v_\Gamma} w(\vec{u},\vec{v}) \sum_{\vec{\gamma}} \indicator(\vec{\gamma}\sim \vec{u}+2\vec{v}) \label{eq:first_holder}\\
    &\hspace{4.5cm}\cdot \max_{\vec{u}} \left(\sum_{\vec{v'}} \sigma_{1}^{2v'_1} \cdots \sigma_{\Gamma}^{2v'_\Gamma} w(\vec{u},\vec{v'})\sum_{\vec{\gamma}} \indicator(\vec{\gamma}\sim \vec{u}+2\vec{v'}) \right)\bigg]\label{eq:second_holder}.
\end{align}
The second inequality is Holder's for the sum over $\vec{u}$. 
For the first term~\eqref{eq:first_holder},
\begin{align}
&\sum_{\vec{u}} (C_p)^{|\vec{u}|} \sigma_{1}^{2u_1} \cdots \sigma_{\Gamma}^{2u_\Gamma}\sum_{\vec{v}} \sigma_{1}^{2v_1} \cdots \sigma_{\Gamma}^{2v_\Gamma} w(\vec{u},\vec{v}) \sum_{\vec{\gamma}} \indicator(\vec{\gamma}\sim \vec{u}+2\vec{v}) \label{eq:g_order_gamma_u_v}\\
&\hspace{2cm}\le g^{|\vec{v}|}\cdot g^{|\vec{v}|}\cdot (C_p)^{g} \cdot \sum_{\vec{u}} \sigma_{1}^{2u_1} \cdots \sigma_{\Gamma}^{2u_\Gamma}\sum_{\vec{v}} \sigma_{1}^{2v_1} \cdots \sigma_{\Gamma}^{2v_\Gamma} \sum_{\vec{\gamma'}} \indicator(\vec{\gamma'}\sim \vec{u}+\vec{v}) \\
&\hspace{2cm}\le g^{g}\cdot (C_p)^{g} \left( gk \lnormp{\vH}{(local),2}^2\right)^{g-1-|\vec{v}|} \lnormp{\vH}{(global),2}^2. 
\end{align}
The first inequality passes $\vec{\gamma}$ to $\vec{\gamma'}$ by assigning pairings of $\vec{v}$ (which is bounded by $g^{|\vec{v}|}$); this divides the $\vec{v}$ occurences by two. The other factor comes from crude uniform estimates $w(\vec{u},\vec{v}) \le 2^g g^{|\vec{v}|}$. The second inequality uses the bound $2\labs{\vec{v}} \le g$ to combine the sum over $\vec{u}$ and $\vec{v}$ and evaluates this chain of commutators (as in~\cite{thy_trotter_error}, but here we uses the 2-norm $\lnormp{\vH}{(local),2}$ instead of 1-norm $\lnormp{\vH}{(local),1}$). We used a crude bound $( g(k-1)+1 )\le gk$ for the locality of commutators.

For the second term~\eqref{eq:second_holder},
\begin{align}
    \max_{\vec{u}} \left(\sum_{\vec{v'}} \sigma_{1}^{2v'_1} \cdots \sigma_{\Gamma}^{2v'_\Gamma} w(\vec{u},\vec{v'})\sum_{\vec{\gamma}} \indicator(\vec{\gamma}\sim \vec{u}+2\vec{v'}) \right)    &\le 2^g g^{|\vec{v}|}\cdot g^{|\vec{u}|} g^{|\vec{v}|}\cdot  \max_{\vec{u}} \left(\sum_{\vec{v'}} \sigma_{1}^{2v'_1} \cdots \sigma_{\Gamma}^{2v'_\Gamma}\sum_{\vec{\gamma''}} \indicator(\vec{\gamma''}\sim \vec{v'}) \right)\\
    &\le (2g)^g\cdot \left( gk\lnormp{\vH}{(local),2}^2 \right)^{|\vec{v}|}   \L(\frac{\lnormp{\vH}{(global),2}^2}{gk\lnormp{\vH}{(local),2}^2}\R)^{\indicator(|\vec{u}|=0)}.
\end{align}
Again, in passing $\vec{\gamma}$ to $\vec{\gamma''}$ we first select the locations $\vec{u}$, and then the pairings over $\vec{v}$. Given a non-empty set of terms in $\vec{u}$, each terms $\vec{v}$ needs to chain together, giving the factor $gk\lnormp{\vH}{(local),2}^2$. For the edge case $|\vec{u}|=0$, we obtain one additional factor $\lnormp{\vH}{(global),2}^2$ since we lose the chaining constraints from $\vec{u}$. Altogether, summing over $\sum_{\labs{u}}$ gives additional factor $g+1\le 2g$
\begin{align}
    \vertiii{[\vec{\CE} (\vH_1,\cdots,\vH_\Gamma,t)]_g}_* &\le \sqrt{C_p}^g \vertiii{\vI}_*2g^{3g/2}\cdot \L(4\sqrt{k} \Upsilon \lnormp{\vH}{(local),2} t \R)^{g} \cdot 
    \begin{cases}
    \displaystyle\frac{\lnormp{\vH}{(global),2}^2}{tk\lnormp{\vH}{(local),2}^2} \quad g\text{ even}\\
   \displaystyle\sqrt{g}\frac{\lnormp{\vH}{(global),2}}{t\sqrt{k}\lnormp{\vH}{(local),2}} \quad g\text{ odd} \label{eq:g_random}
    \end{cases}
\end{align}
for order $g\ge \ell+1 \ge 3$. 
On the other hand, for the first order Trotter ($\ell =1$) we have a sharper bound (Theorem~\ref{thm:first_order_gate_count}).

\subsection{Bounds for $g'$-th Order and Beyond}
In this section, we handle the edge case. Its presentation will suffer from adhoc prefactors, but fortunately, they will not impact the ultimate performance. The expression has an integral so we first remove it via unitary invariance of $\vertiii{\cdot}_*$ norm. Then, we may plug the polynomial into Theorem~\ref{thm:Gaussian_matrix}. 
\begin{align*}
    &\vertiii{[\vec{\CE} (\vH_1,\cdots,\vH_\Gamma,t)]_{\ge g}}_* \\
    &= \vertiii{ \sum^{J}_{j=1} \sum_{m=j+1}^{J} \e^{\CL_J t}\cdots \e^{\CL_{m+1} t} \int_0^t dt_1 \sum_{g_m+\cdots+g_{j+1}=g-1,g_m\ge 1}e^{\CL_{m} t_1} \CL^{g_m}_m\cdots \CL^{g_{j+1}}_{j+1}[\vH_{j}] \frac{(t-t_1)^{g_m-1}t^{g'-g_m-1}}{(g_m-1)!\cdots g_{j+1}!}}_*\\
    &\le \sum_{m=2}^{J} \vertiii{ \sum^{J}_{j=m-1} \sum_{g_m+\cdots+g_{j+1}=g-1,g_m\ge 1}\CL^{g_m}_m\cdots \CL^{g_{j+1}}_{j+1}[\vH_{j}] \frac{t^{g-1}}{g_m!\cdots g_{j+1}!}}_* \\
    &\le \vertiii{\vI}_*2\Upsilon (2\Upsilon t)^{g-1} \sum_{m=2}^{J}\sqrt{\sum_{\vec{u}=0}^{g} (C_p)^{|\vec{u}|} \left(\sum_{\vec{v}} \sigma_{1}^{u_1+2v_1} \cdots \sigma_{\Gamma}^{u_\Gamma+2v_\Gamma}w(\vec{u},\vec{v}) \sum_{\gamma_{g-1}=1}^\Gamma \cdots \sum_{\gamma_0=1}^\Gamma \indicator(\vec{\gamma}\sim \vec{u}+2\vec{v}, \gamma_{g-1}=\gamma(m)) \right)^2 }.
\end{align*}
We have a similar expression except for the constraint $\gamma_{g-1}=\gamma(m)$ coming from $g_m\ge 1$ and the sum $\sum_{m=2}^{J}$ outside the square root. This amounts to minor tweaks in the calculation.
For the first term,
\begin{align}
&\sum_{\vec{u}} (C_p)^{|\vec{u}|} \sigma_{1}^{2u_1} \cdots \sigma_{\Gamma}^{2u_\Gamma}\sum_{\vec{v}} \sigma_{1}^{2v_1} \cdots \sigma_{\Gamma}^{2v_\Gamma} w(\vec{u},\vec{v}) \sum_{\vec{\gamma}} \indicator(\vec{\gamma}\sim \vec{u}+2\vec{v},\gamma_{g-1}=\gamma(m)) \\
&\hspace{2cm}\le g^{|\vec{v}|}\cdot g^{|\vec{v}|}\cdot (C_p)^{g} \cdot \sum_{\vec{u}} \sigma_{1}^{2u_1} \cdots \sigma_{\Gamma}^{2u_\Gamma}\sum_{\vec{v}} \sigma_{1}^{2v_1} \cdots \sigma_{\Gamma}^{2v_\Gamma} \max_{0\le j\le g-1}\sum_{\vec{\gamma'}} \indicator(\vec{\gamma'}\sim \vec{u}+\vec{v}, \gamma'_{j}=\gamma(m)) \\
&\hspace{2cm}\le g^{g}\cdot (C_p)^{g} \left( gk \lnormp{\vH}{(local),2}^2\right)^{g-|\vec{v}|-1} \norm{\vH_{\gamma(m)}}^2. 
\end{align}
The only difference from~\eqref{eq:g_order_gamma_u_v} is that once a pairing of $\vec{v}$ is chosen, we lose one choice of $\gamma'$, but this could happen at any $\gamma'_j$. 

For the second term,
\begin{align}
\max_{\vec{u}} \sum_{\vec{v}} \sigma_{1}^{2v_1} \cdots \sigma_{\Gamma}^{2v_\Gamma} w(\vec{u},\vec{v}) \sum_{\vec{\gamma}} \indicator\L(\vec{\gamma}\sim \vec{u}+2\vec{v},\gamma_{g-1}=\gamma(m)\R) &\le 2^g g^{|\vec{v}|}g^{|\vec{u}|}\cdot g^{|\vec{v}|}\cdot \sum_{\vec{v}} \sigma_{1}^{2v_1} \cdots \sigma_{\Gamma}^{2v_\Gamma} \sum_{\vec{\gamma''}} \indicator(\vec{\gamma''}\sim \vec{v}) \\
&\le (2g)^{g}\left( gk \lnormp{\vH}{(local),2}^2\right)^{|\vec{v}|}.
\end{align}
For the first inequality, since the term $\gamma_{g-1}$ may or may not be in $\vec{v}$, we bound by the latter case.
Finally, sum over index $m$ to arrive at
\begin{align}
    \vertiii{[\vec{\CE} (\vH_1,\cdots,\vH_\Gamma,t)]_{>g}}_* 
    &\le  (C_p)^{g/2}\vertiii{\vI}_*\left(4\sqrt{k} \Upsilon \lnormp{\vH}{(local),2}t\right)^{g} 
    2g^{3g/2}\cdot\frac{\lnormp{\vH}{(global),1}}{t\sqrt{k} \lnormp{\vH}{(1,2)}}.
    \label{eq:g'_random}
\end{align}
\subsection{Proof of Theorem~\ref{thm:Trotter_random_H}}\label{sec:proof_random}
The proof is analogous to Section~\ref{sec:proof_non_random}, with minor changes. We only highlight the differences and hide the numerical factors.
\begin{proof}
From the bounds for each $g-th$ order~\eqref{eq:g_random} and the $g'$-th order~\eqref{eq:g'_random}, define
\begin{align}
    c(k)&:= 4\sqrt{k} \lnormp{\vH}{(local),2}.
\end{align}
For a short time $\tau = t$, we arrange and perform the last integral using estimate $\int (\tau')^{g-1} d\tau'\le \tau'^{g}/g$
\begin{align}
    \frac{\vertiii{e^{\iunit \vH \tau}- \vec{S}_\ell(\tau)}_{*,p}}{\vertiii{\vI}_{*,p}} &\le \int^\tau_0 \frac{\vertiii{\vCE(\tau')}_{*,p}}{\vertiii{\vI}{*,p}} d\tau' \\
    &\le \frac{2\lnormp{\vH}{(global),2}^2}{k\lnormp{\vH}{(local),2}^2} \cdot \sum_{g=\ell+1, even}^{g'-1} \left(g^{3/2}\sqrt{C_p}c(k)\Upsilon\tau \right)^{g}+\frac{2\lnormp{\vH}{(global),2}}{\sqrt{k}\lnormp{\vH}{(local),2}} \cdot \sum_{g=\ell+1, odd}^{g'-1} \left(g^{3/2}\sqrt{C_p}c(k)\Upsilon\tau \right)^{g}  \\  
    &+ \frac{\lnormp{\vH}{(global),1}}{t\sqrt{k} \lnormp{\vH}{(1,2)}} \cdot \left(g'^{3/2}\sqrt{C_p}c(k)\Upsilon\tau \right)^{g'}.
\end{align}

Now, we evaluate Markov's inequality to obtain concentration.

(I) For fixed input (suffice to consider pure states $\ket{\psi}$), 
\begin{align}
    \Pr( \lnormp{ \vec{\CE}_{tot} \ket{\psi} }{\ell_2}  \ge \epsilon)\le \frac{ \vertiii{\vec{\CE}_{tot}}_{fix, p}^p}{\epsilon^p} 
    \le 
    \delta .
\end{align}

Altogether, we obtain the rounds required (dropping $k,\ell$ dependences)
\begin{align}
     r &= \Omega\bigg[ 
      \lnormp{\vH}{(local),2} \sqrt{\log(1/\delta ) }t \cdot \L(\frac{\lnormp{\vH}{(global),2}^2\sqrt{\log(1/\delta ) } t}{\lnormp{\vH}{(local),2} \epsilon} \R)^{\frac{1}{\ell}} 
     \bigg]\quad \text{for}\quad \ell \text{ even}, \\
     r &= \Omega\bigg[ 
      \lnormp{\vH}{(local),2} \sqrt{\log(1/\delta ) }t \cdot \L( \L(\frac{\lnormp{\vH}{(global),2}\sqrt{\log(1/\delta ) } t}{ \epsilon} \R)^{\frac{1}{\ell}}+\L(\frac{\lnormp{\vH}{(global),2}^2\sqrt{\log(1/\delta ) } t}{\lnormp{\vH}{(local),2} \epsilon} \R)^{\frac{1}{\ell+1}}\R) 
     \bigg]\quad \text{for}\quad \ell \text{ odd}.
     \end{align}
Note that when the order $\ell$ is odd (i.e., $g= \ell+1$ is even), a larger cost may incur at order $g+1$.  

(II) For the spectral norm,
\begin{align}
    \hat{\Pr}( \norm{ \vec{\CE}_{tot} } \ge \epsilon)\le d\cdot\L(\frac{ \vertiii{\vec{\CE}_{tot}}_{p}}{\vertiii{\vI}_{p} \epsilon} \R)^p
    \le\delta.
\end{align}
The factor of Hilbert space dimension $d$ comes from the trace in Schatten p-norm. This would require a higher cost ($\log(\delta)\rightarrow \sqrt{n+\log(\delta)}$) to ensure the failure probability is small 
\begin{align}
     r &= \Omega\bigg[ 
      \lnormp{\vH}{(local),2} \sqrt{n+\log(1/\delta ) }t \cdot \L(\frac{\lnormp{\vH}{(global),2}^2\sqrt{n+\log(1/\delta ) } t}{\lnormp{\vH}{(local),2} \epsilon} \R)^{\frac{1}{\ell}} 
     \bigg]\quad \text{for}\quad \ell \text{ even}, \\
     r &= \Omega\bigg[ 
      \lnormp{\vH}{(local),2} \sqrt{n+\log(1/\delta ) }t \cdot \L( \L(\frac{\lnormp{\vH}{(global),2}\sqrt{n+\log(1/\delta ) } t}{ \epsilon} \R)^{\frac{1}{\ell}}+\L(\frac{\lnormp{\vH}{(global),2}^2\sqrt{n+\log(1/\delta ) } t}{\lnormp{\vH}{(local),2} \epsilon} \R)^{\frac{1}{\ell+1}}\R) 
     \bigg]\quad \text{else}.
     \end{align}
This is the advertised result.
\end{proof}
\section{Arguments for optimality}
\subsection{Comparing the random estimates with non-random estimates}
In this section, we compare the Trotter error of random Hamiltonians with that of non-random Hamiltonians. Recall the leading order expansion for both
\begin{align}
    \vertiii{\int_0^t [\vec{\CE} (\vH_1,\cdots,\vH_\Gamma,t)]_gdt}_* &\le f(k,g)\cdot \sqrt{p}^g \cdot \L(\lnormp{\vH}{(local),2}\R)^{g-2} \lnormp{\vH}{(global),2}^2 t^g \quad &\text{(random)}\\ 
    \lnormp{\int_0^t [\vec{\CE} (\vH_1,\cdots,\vH_\Gamma,t)]_gdt}{p} &\le f(k,g)\cdot \sqrt{p}^{g(k-1)+1} \cdot \L(\lnormp{\vH}{(local),2}\R)^{g-1} \lnormp{\vH}{(global),2} t^g \quad &\text{(non-random)}.
\end{align}
Indeed, we see similarities between the expressions as the derivation are qualitatively analogous. The random Hamiltonian has better p-dependences (i.e., better concentration) because of the ``external'' randomness from the gaussian coefficients; the non-random Hamiltonian relies solely on the intrinsic randomness of the k-local structure. 
However, counter-intuitively, the random case has a \textit{worse} dependence on system size. Shouldn't the random case be more ``incoherent'' than the non-random case? More precisely, can we not ignore the gaussian coefficents and plug in the non-random estimates?

Of course, both bounds are consistent because we use different norms. The norms $\normp{\cdot}{*}$ capture more stringent errors. For illustration, consider $p=2$ for the norm $\normp{\cdot}{fix,p}$ 
\begin{align}
\displaystyle\BE_{\vO} \lnormp{\vO}{\bar{2}}^2 = \BE_{i,\vO}\lnormp{\vO\ket{i}}{\ell_2}^2 &\le \sup_{\ket{\psi}} \BE_{\vO}\lnormp{\vO\ket{\psi}}{\ell_2}^2 =  \vertiii{\vO}_{fix,2}^2
\end{align}
where the vectors $\{\ket{i}\}$ are any orthonormal basis. In other words, the 2-norm captures the sizes of the \textit{typical} inputs $\ket{i}$; the fixed norm may optimize for \textit{arbitrary} \textit{fixed} inputs $\ket{\psi}$.  
To illustrate the subtle distinction, consider the following Hamiltonian
\begin{align}
    \vH = \sum_{i>j} g_{ij} (\vsigma^y_i\vsigma^z_j + \vsigma^z_i\vsigma^z_j) + g'_{ij}\vsigma^x_i\vsigma^y_j 
\end{align}
and a term in the commutator at order $\CO(t^4)$
\begin{align}
    t^4\sum_{k>i>j} g_{ij}^2 g^{'2}_{ki} \L[\vsigma^y_i\vsigma^z_j + \vsigma^z_i\vsigma^z_j, \L[\vsigma^x_k\vsigma^y_i,\L[\vsigma^x_k\vsigma^y_i, \vsigma^y_i\vsigma^z_j + \vsigma^z_i\vsigma^z_j\R]\R]\R]  \propto  t^4 \sum_{k>i>j} g_{ij}^2 g^{'2}_{ki} \vsigma^x_i.
\end{align}
Importantly, both two terms $g_{ij}(\vsigma^y_i\vsigma^z_j + \vsigma^z_i\vsigma^z_j)$ and $g'_{ki} \vsigma^x_k\vsigma^y_i$ show up twice that square the Gaussians, making them ``coherent''. We evaluate both norms 
\begin{align}
    \lnormp{ \sum_{k>i>j} g_{ij}^2 g^{'2}_{ki} \vsigma^x_i }{\bar{2}} &= \theta(n^2 \sqrt{n}) =\theta\L((\lnormp{\vH}{(local),2})^{3} \lnormp{\vH}{(global),2} \R) \\
    \lnormp{ \sum_{k>i>j} g_{ij}^2 g^{'2}_{ki} \vsigma^x_i }{fix,\bar{2}} &= \theta (n^2 n) = \theta\L((\lnormp{\vH}{(local),2})^{2} \lnormp{\vH}{(global),2}^2 \R)
\end{align}
and observe the latter is $\sqrt{n}$-larger because of optimizing over the fixed input 
\begin{align}
    \lnormp{\sum_i \vsigma^x_i \ket{+\cdots +} }{\ell_2} = n \ne \sqrt{n}= \lnormp{\sum_i \vsigma^x_i }{\bar{2}}. 
\end{align}
Nevertheless, the extra gate complexity cost due to the term $\lnormp{\vH}{(global),2}^2$ vanishes asymptotically at higher-order formulas. 
On the other hand, the optimality for the operator norm bounds (with extra factors of $\sqrt{n}$) is less understood. For example, a commutator term at order $\CO(t^2)$ reads
\begin{align}
   \sum_{k>i>j}g'_{ki}g_{ij}\L[\vsigma^x_k\vsigma^y_i, \vsigma^y_i\vsigma^z_j + \vsigma^z_i\vsigma^z_j\R] \propto \sum_{k>i>j} g'_{ki}g_{ij} \vsigma^x_k\vsigma^x_i \vsigma^z_j.
\end{align}
The expression resembles spin-glass Hamiltonians (e.g., the Sherrington-Kirkpatrick model~\cite{SK_model_75}), whose optimization is an entire research field. Even worse, the coefficients here are correlated, so we leave it as an open problem.

\subsection{Counting lower bounds at early times}\label{sec:counting}
In this section, we give a counting argument suggesting that our gate complexity for random $k$-local Hamiltonians with fixed inputs is optimal at early times. For a particular unitary $e^{\iunit \vH t}$, is it generally hard to rule out the existence of a shorter circuit. Fortunately, lower bounds do exist by a counting argument for a set of unitaries. 

We begin with reviewing the gate complexity for 1d-spatially-local models. There, the gate complexity $nt$ is known to be tight.
\begin{fact}[upper bounds, analog to digital~\cite{haah2020quantum,thy_trotter_error}]
For every piece-wise constant Hamiltonian
\begin{align}
    \vH([T,T+1]) = \vH_i(T), \lV \vH_i \rV\le 1,
\end{align}
product formula approximates it well using $\tilde{\CO}(nt)$ gates
\end{fact}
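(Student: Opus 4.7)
The approach is to exploit the piecewise-constant structure in time to reduce the problem to $t$ independent time-independent simulations, each on $n$ qubits with Hamiltonian norm at most $1$. Since $\vH$ is constant on each unit interval $[T, T+1]$, the ideal evolution factors as $\exp_{\CT}(\iunit \int_0^t \vH(s) ds) = \prod_{T=0}^{t-1} \e^{\iunit \vH_i(T)}$, so it suffices to implement each factor $\e^{\iunit \vH_i(T)}$ to accuracy $\epsilon/t$ using $\e^{\tilde{\CO}(n)}$ gates via a product formula; the union/telescoping bound then gives total error $\epsilon$ at total cost $t \cdot \e^{\tilde{\CO}(n)} = \e^{\tilde{\CO}(nt)}$.

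For a single interval, first I would Pauli-expand $\vH_i(T) = \sum_{\alpha} c_\alpha \vec{\sigma}^\alpha$ over all $4^n$ Pauli strings on $n$ qubits. The coefficients satisfy $|c_\alpha| = |2^{-n}\tr(\vH_i(T) \vec{\sigma}^\alpha)| \le \lnorm{\vH_i(T)} \le 1$, so this realizes $\vH_i(T)$ as a sum of $\Gamma \le 4^n$ bounded terms. Viewed trivially as an $n$-local Hamiltonian, the local $1$-norms satisfy $\lnormp{\vH_i(T)}{(0),1} \le 4^n$ and $\lnormp{\vH_i(T)}{(1),1} \le 4^n$.

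Next I would invoke the state-of-the-art higher-order product formula gate count in spectral norm from~\cite{thy_trotter_error} (see the top row of Table~\ref{table:main}): for the $\ell$-th order Suzuki formula, the cost to approximate $\e^{\iunit \vH t_0}$ within spectral-norm error $\epsilon_0$ is
\begin{align*}
G_{\text{piece}} = \CO\!\L( \L(\tfrac{\lnormp{\vH}{(0),1} t_0}{\epsilon_0}\R)^{1/\ell} \Gamma \lnormp{\vH}{(1),1} t_0 \R).
\end{align*}
Plugging in $t_0 = 1$, $\epsilon_0 = \epsilon/t$, $\Gamma \le 4^n$, and both local $1$-norms at most $4^n$ yields $G_{\text{piece}} \le 4^{2n} (4^n t/\epsilon)^{1/\ell} = \e^{\CO(n) + \CO(\log(t/\epsilon))}$ for any constant $\ell$. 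Summing over the $t$ pieces gives total gate count $t \cdot \e^{\CO(n) + \CO(\log(t/\epsilon))} = \e^{\tilde{\CO}(nt)}$ as claimed.

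There is no serious obstacle here: the bound $\e^{\tilde{\CO}(nt)}$ is extremely generous, reflecting the fact that any $n$-qubit unitary can be compiled in $\e^{\CO(n)}$ gates by universal compilation, and we are simply stringing together $t$ such unitaries. The only items requiring (trivial) care are verifying that the product formula theorem applies when the locality parameter $k$ is set to $n$, and handling the error budget telescopically via $\epsilon_0 = \epsilon/t$ so that constituent spectral-norm errors sum into an overall $\epsilon$. No lower-bound or optimality argument is needed.
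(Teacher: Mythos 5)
First, note that the paper does not actually prove this Fact — it is quoted from~\cite{haah2020quantum,thy_trotter_error} as background for the counting argument — so the real question is whether your argument establishes what the Fact is meant to assert. Read in context (``in the 1d spatially local model, $nt$ is known to be the upper and lower bounds of gate complexity,'' followed by the matching $\tilde{\Omega}(nt)$ lower bound in the next Fact), the intended content is that a piecewise-constant \emph{spatially local} Hamiltonian $\vH(T)=\sum_i \vH_i(T)$, with each \emph{local term} satisfying $\lnorm{\vH_i}\le 1$, is simulable with $\tilde{\CO}(nt)$ gates; the ``$e^{(\cdot)}$'' in the display is best read as a typo or as the count $e^{\tilde{\CO}(nt)}$ of distinct implementable unitaries, since otherwise the stated upper and lower bounds would be exponentially far apart and the counting argument would conclude nothing.

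Your proof does not reach that statement. By Pauli-expanding each piece into $4^n$ terms you discard the spatial locality entirely, so $\Gamma$, $\lnormp{\vH}{(0),1}$ and $\lnormp{\vH}{(1),1}$ are all of order $4^n$, and you land at $e^{\Theta(n)}$ gates per unit interval, i.e.\ $t\cdot e^{\Theta(n)}$ in total. That is indeed bounded by $e^{\tilde{\CO}(nt)}$, but — as you yourself observe — it is no better than generic universal compilation of $t$ arbitrary $n$-qubit unitaries, and it is exponentially weaker than the $\tilde{\CO}(n)$ per unit interval required for this Fact to pair with the $\tilde{\Omega}(nt)$ counting lower bound. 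The repair is simply to read $\vH_i$ as the local terms of a 1d lattice Hamiltonian: then $\Gamma=\CO(n)$, $\lnormp{\vH}{(1),1}=\CO(1)$, $\lnormp{\vH}{(0),1}=\CO(n)$, and the very same higher-order product-formula bound you invoke yields $\tilde{\CO}(n)$ gates per unit interval and $\tilde{\CO}(nt)$ overall, which is the content of the cited references. Your time-slicing of the evolution into unit intervals and the telescoping error budget $\epsilon_0=\epsilon/t$ are fine and carry over unchanged.
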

\begin{fact}[Lower bounds, digital to analog~\cite{haah2020quantum}]
 In the family of piece-wise constant Hamiltonian, there exists $\tilde{\CO}(nt)$ different instances of Boolean circuits, and hence require a circuit of size $\tilde{\Omega}(nt)$. 
\end{fact}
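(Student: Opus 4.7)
The plan is to prove this lower bound by a standard volume/counting argument: exhibit a family of $\exp(\tilde{\Omega}(nt))$ unitaries realizable by the piece-wise constant 1D-local Hamiltonian family, verify these are pairwise far apart in operator norm (so no single circuit can approximate two of them), and then count the number of Boolean circuits of a given size $G$. Pigeonhole then forces $G = \tilde{\Omega}(nt)$.

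For the construction, I would carve $[0,t]$ into $t$ unit-length segments and on segment $\tau \in \{1,\ldots,t\}$ choose, independently on every qubit $i$, a single-site Hamiltonian $\vH_{i,\tau} \in \{0,\ \tfrac{\pi}{2}\vX_i\}$ (or any two-element set from the allowed ball $\|\vH_i\|\le 1$). This yields a family indexed by a bit string $s\in\{0,1\}^{nt}$, so $2^{nt}$ candidate Hamiltonian schedules. The resulting unitary $\vU_s$ factorizes as a tensor product over sites of independent single-qubit unitaries along a length-$t$ trajectory, so distinct $s\ne s'$ differ in at least one (qubit $i$, time $\tau$) slot by an $\Omega(1)$ rotation; by choosing the two-element set to produce orthogonal single-qubit states on input $\ket{0}$, the states $\vU_s\ket{0}^{\otimes n}$ and $\vU_{s'}\ket{0}^{\otimes n}$ differ in at least one tensor factor, giving $\|\vU_s-\vU_{s'}\|\ge \Omega(1)$ uniformly. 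Thus the family contains $2^{nt}$ unitaries that are pairwise $\Omega(1)$-separated in operator norm.

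Next, count Boolean circuits (in, say, a fixed finite universal gate set with $O(1)$-local gates and $r$ bits of angle precision per gate). The number of circuits with at most $G$ gates is bounded by $\bigl(\binom{n}{O(1)}\cdot c\bigr)^G \le 2^{O(G\log n)}$, where $c$ counts gate type and discretized angles. Each such circuit approximates at most one member of our family to within $\|\cdot\| \le 1/4$ (since family members are $\Omega(1)$ apart). Therefore $2^{O(G\log n)} \ge 2^{nt}$, which rearranges to $G \ge \tilde{\Omega}(nt)$, with the tilde absorbing the $\log n$ factor.

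The main obstacle is the separation step: one has to be sure that the chosen family is genuinely packed in operator norm, not merely in parameters. The product-structure argument above sidesteps this because the two local choices on a single qubit give literally orthogonal output states, so pairwise separation is witnessed by a product input. An alternative route, if one wants richer Hamiltonians or a robustness-to-approximation statement, is to use a Haar-random sampling argument together with a net/packing estimate for $U(2^n)$; this is cleaner but gives the same asymptotic bound and the same counting conclusion.
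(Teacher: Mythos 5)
Your packing step does not work, and this is a genuine gap rather than a fixable technicality. With the choices $\vH_{i,\tau}\in\{0,\ \tfrac{\pi}{2}\vX_i\}$ acting independently on each qubit and each time segment, all the Hamiltonians on a given qubit commute. The total unitary on qubit $i$ is therefore $(\e^{\ri \pi \vX/2})^{k_i} = (\ri\vX)^{k_i}$ where $k_i$ is the number of ``on'' segments for that qubit, and this depends only on $k_i \bmod 4$. So the family of $2^{nt}$ schedules collapses to at most $4^n$ distinct unitaries; the Hamiltonians are $2^{nt}$ in number but the unitaries are not, and the counting argument returns only $G = \tilde\Omega(n)$, missing the factor of $t$ entirely. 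Worse, this is not specific to your choice of angle: any tensor-product construction with single-qubit gates gives, at $\Omega(1)$ operator-norm resolution, at most $2^{O(n)}$ pairwise-separated unitaries (each Bloch sphere has only $O(1)$ points that are pairwise $\Omega(1)$ apart, and a tensor product multiplies these counts over $n$ qubits). There is no way to recover $2^{\tilde\Omega(nt)}$ while the unitary factorizes over sites.

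The construction in the cited reference (Haah, Hastings, Kothari, Low) avoids this by making the time segments \emph{non-commuting} and \emph{entangling}: one embeds reversible classical 1D circuits of size $\Theta(nt)$ into piecewise-constant nearest-neighbor Hamiltonians, and there are $2^{\Omega(nt)}$ distinct Boolean functions computable by such circuits. Two unitaries realizing distinct Boolean functions differ on some computational-basis input, which gives the needed $\Omega(1)$ operator-norm separation. Your circuit-counting step (roughly $2^{O(G\log n)}$ circuits of size $G$) and the pigeonhole conclusion are correct and essentially match the reference; the missing ingredient is that the packed family must exploit spatial correlations between qubits, not just a product of independent single-qubit trajectories.
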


Now, for $k$-local random Hamiltonians, we have shown the unitary evolution $\e^{-\iunit \vH t}$ can be simulated with gate complexity of $\tilde{\CO}(n^k \lnormp{\vH}{(local),2} t)$ for a fixed input state. Is the factor of the number of Hamiltonian terms $\Gamma=n^k$ a feature or a bug? We conjecture it is the former.
\begin{hyp}
Simulation of a typical sample of random $k$-local (SYK normalization) Hamiltonian for time $t$ requires $\tilde{\Omega}(n^kt)$ gates.
\end{hyp}

We present a supportive early-time argument. Consider the random Hamiltonian drawn randomly where the $k$-local matrices are o.n.
\begin{equation}
    \vH^{}_{k,n} := \sum_{i_1< \ldots <i_k \le n} \vH^{}_{i_1\cdots i_k} =\sum_{i_1< \ldots <i_k \le n}  J_{i_1\cdots i_k} \vK^{}_{i_1\cdots i_k}\quad\text{where}\quad \tr(\vK_{i_1\cdots i_k}\vK_{i'_1\cdots i'_k})=d \delta_{\vec{i},\vec{i'}}
\end{equation}
 and $d$ is the dimension of the Hilbert space. Recall, the number of terms is $\Gamma=\binom{n}{k}=\CO(n^k)$. The coefficients are i.i.d. Gaussian\footnote{The argument generalizes to sub-Gaussian coefficients, e.g., bounded coefficients.} with variance
\begin{align}
    \BE[J^2_{\vec{i}}]=
    \CO(\frac{1}{n^{k-1}}).
\end{align}
For a counting argument, we count the number of different Hamiltonians by the size of epsilon net $N(\epsilon)$ via collision probability. 
Draw $N$ i.i.d. sample from the random Hamiltonian ensemble and take a union bound over the chance a pair of random samples collide
\begin{align}
  \Pr\left(\exists  \vH,\vH': \lnormp{ \vH- \vH'}{\infty}<\epsilon \right) \le \binom{N}{2}\Pr(\lnormp{ \vH- \vH'}{\infty}<\epsilon).
\end{align}
So long as RHS $< 1$, there must exist an epsilon-net of size $N$, i.e., $N(\epsilon)$ can be as large as
\begin{align}
    N(\epsilon) = \lfloor{\sqrt{2/\Pr(\lnormp{ \vH- \vH'}{\infty}<\epsilon)}}\rfloor.
\end{align}
To bound the RHS, we reduce to controlling the 2-norm
\begin{align}
    \Pr(\lnormp{ \vH- \vH'}{\infty}<\epsilon) \le \Pr(\lnormp{ \vH- \vH'}{2}<\epsilon\sqrt{d}),
\end{align}
where the dimension of Hilbert space $d$ will be canceled. The 2-norm calculation is a scalar concentration bound
\begin{align}
    \lnormp{ \vH- \vH'}{2}^2 = \lnormp{\sum_{\vec{i}}(J_{\vec{i}}-J'_{\vec{i}})\vK_{\vec{i}}}{2}^2 = \sum_{\vec{i}}(J_{\vec{i}}-J'_{\vec{i}})^2 d \simeq  2\sum_{\vec{i}}J_{\vec{i}}^2 d.
\end{align}
The last line uses that two i.i.d. Gaussians sum to another Gaussian. We can use Bernsteins' inequality for variables $x_{\vec{i}} := J^2_{\vec{i}}$
\begin{fact}[Bernstien's inequality]
For independent variables $x_i$ with variance $\sum_i (x_i-\BE[x_i])^2 = v$ and bound $\labs{x_i-\BE[x_i]} \stackrel{a.s.}{\le} L$, 
\begin{align}
    \Pr\left(|\sum_{i} x_{i} -\BE \sum_{\vec{i}} x_{i} |\ge \delta\right) &\le 2\exp\L(\frac{-\delta^2/2}{v +L\delta/3}\R).
\end{align}
\end{fact}
For our parameters,
\begin{align*}
        \Pr(\lnormp{ \vH- \vH'}{2}<\epsilon\sqrt{D})\le \Pr(\sum_{\vec{i}} J_{\vec{i}}^2\le \epsilon^2/2 )&\le  \Pr\left(|\sum_{\vec{i}} J_{\vec{i}}^2 -\Omega(n) |\ge \Omega(n)-\epsilon^2/2 \right)  \\
        &\lesssim \exp(-\Omega(n^k)) = \exp(-\Omega(\Gamma)).
\end{align*}
where we plugged in $\BE \sum_{\vec{i}} x_{\vec{i}} = \CO(n)$,\ $\delta= n-\epsilon/2=\CO( n)$, $L=\CO( 1/n^{k-1})$, and $v= \CO(n^k/n^{2(k-1)} )=\CO(1/n^{k-2})$.
To estimate the circuit lower bound, consider unitary evolution up to a short time (say $t_*\sim \theta(1)$\footnote{ The argument also works the slightly later scrambling time~\cite{Lashkari_2013,chen2021concentration,Sekino_2008} for $k$-local models $t_* \sim \Omega(\log(n))$ }) .
\begin{align}
    \Pr(\lnormp{e^{\iunit \vH t_*}-e^{\iunit \vH't_*}}{\infty}\le \epsilon ) \stackrel{?}{\lesssim} \Pr(\lnormp{( \vH- \vH')t^*}{\infty}<\epsilon) \le  \exp(\Omega(-\Gamma)).
\end{align}
Unfortunately, there is still a missing step from the Hamiltonian to the unitary evolution; the second inequality is rigorous. If this line holds, then a circuit of size 
\begin{align}
\Omega(\Gamma)=\Omega(n^k) 
\end{align}
is needed at early times $t=\theta(1)$, matching our Trotter bounds for non-random and random Hamiltonians (Theorem~\ref{thm:Trotter_non_random}, Theorem~\ref{thm:Trotter_random_H}).

\appendix
\section{Truncating the Hamiltonian}\label{sec:trunc_H}
When the Hamiltonian has many weak terms and fewer strong terms, directly calling Trotter costs a considerable price for the number of terms $\Gamma$. A common fix is to simulate a truncated Hamiltonian with much fewer terms~\cite{thy_trotter_error,2021_Microsoft_catalysis}
\begin{align}
    \vH = \vH_{tr}+ \delta\vH. 
\end{align}
When the Hamiltonian is $k$-local (including the Fermionic cases and/or in low particle number subspaces), Hypercontractivity (Proposition~\ref{prop:general_pauli_expansion}) quickly applies to the p-norm of the truncated part $\delta\vH$. 

For example, consider the power-law interacting Hamiltonian on a d-dimensional cube, with a total number of sites $n$ 
\begin{align}
    \vH = \sum_{x,y} \vH_{xy}\quad \text{where}\quad \norm{\vH_{xy}} \le \frac{1}{\labs{x-y}^{\alpha}},
\end{align}
and $\alpha <d$. We truncate the Hamiltonian for distance $\labs{x-y}$ larger than a tunable cut-off $\ell$.
Then 
\begin{align}
\lnormp{\e^{i\vH t}-\vS(t/r)^r}{p}  &= \lnormp{\e^{i\vH t} -\e^{i\vH_{tr} t}}{p} + \lnormp{\e^{i\vH_{tr} t}-\vS(t/r)^r }{p}     \\
& \le t \lnormp{\delta \vH}{p} + \lnormp{\e^{i\vH_{tr} t}-\vS(t/r)^r }{p}.
\end{align}
Set $\ell =\theta((\frac{nt^2}{\epsilon})^{\frac{1}{2\alpha-d}} )$ and drop all polynomial factors of $p$, then
\begin{align}
    t \lnormp{\delta \vH}{p} = t \sqrt{ n \ell^{d-2\alpha} } \lesssim \epsilon,
\end{align}
and the gate complexity for typical input states would be (dropping dependence of failure probability $\poly\ln(\delta)$)
\begin{align}
    G = \Omega( n\ell^d \lnormp{\vH_{tr}}{(1),2} ) = \Omega( nt \cdot (\frac{nt^2}{\epsilon} )^{\frac{d}{2\alpha-d}} ),
\end{align}
which is slightly better than the operator norm estimates~\cite{thy_trotter_error}.

\bibliography{ref}

\begin{thebibliography}{10}

\bibitem{Arunachalam21}
S.~Arunachalam and J.~F. Doriguello.
\newblock Matrix hypercontractivity, streaming algorithms and ldcs: the large
  alphabet case, 2021.

\bibitem{SYK_Babbush}
R.~Babbush, D.~W. Berry, and H.~Neven.
\newblock Quantum simulation of the sachdev-ye-kitaev model by asymmetric
  qubitization.
\newblock {\em Phys. Rev. A}, 99:040301, 2019.

\bibitem{babbush2018low}
R.~Babbush, N.~Wiebe, J.~McClean, J.~McClain, H.~Neven, and G.~K.-L. Chan.
\newblock Low-depth quantum simulation of materials.
\newblock {\em Phys. Rev. X}, 8:011044, 2018.

\bibitem{BallCL_optimal_unif_smooth}
K.~Ball, E.~A. Carlen, and E.~H. Lieb.
\newblock Sharp uniform convexity and smoothness inequalities for trace norms.
\newblock {\em Inventiones mathematicae}, 115(1):463--482, 1994.

\bibitem{Beigi_2013}
S.~Beigi.
\newblock Sandwiched rényi divergence satisfies data processing inequality.
\newblock {\em Journal of Mathematical Physics}, 54(12):122202, Dec 2013.

\bibitem{Ben_Aroya_2008}
A.~Ben-Aroya, O.~Regev, and R.~de~Wolf.
\newblock A hypercontractive inequality for matrix-valued functions with
  applications to quantum computing and {LDCs}.
\newblock In {\em 2008 49th Annual {IEEE} Symposium on Foundations of Computer
  Science}. {IEEE}, oct 2008.

\bibitem{Berry2005EfficientQA}
D.~W. Berry, G.~Ahokas, R.~Cleve, and B.~C. Sanders.
\newblock Efficient quantum algorithms for simulating sparse hamiltonians.
\newblock {\em Communications in Mathematical Physics}, 270:359--371, 2005.

\bibitem{campbell_mixing16}
E.~Campbell.
\newblock Shorter gate sequences for quantum computing by mixing unitaries.
\newblock {\em Phys. Rev. A}, 95:042306, 2017.

\bibitem{campbell2019random}
E.~Campbell.
\newblock Random compiler for fast hamiltonian simulation.
\newblock {\em Phys. Rev. Lett.}, 123:070503, 2019.

\bibitem{Carlen_hyper_fermions}
E.~A. Carlen and E.~H. Lieb.
\newblock Optimal hypercontractivity for fermi fields and related
  non-commutative integration inequalities.
\newblock {\em Communications in Mathematical Physics}, 155(1):27–46, Jul
  1993.

\bibitem{Carlen_2008}
E.~A. Carlen and E.~H. Lieb.
\newblock A minkowski type trace inequality and strong subadditivity of quantum
  entropy ii: Convexity and concavity.
\newblock {\em Letters in Mathematical Physics}, 83(2):107–126, Feb 2008.

\bibitem{chamberland2020building}
C.~Chamberland, K.~Noh, P.~Arrangoiz-Arriola, E.~T. Campbell, C.~T. Hann,
  J.~Iverson, H.~Putterman, T.~C. Bohdanowicz, S.~T. Flammia, A.~Keller,
  G.~Refael, J.~Preskill, L.~Jiang, A.~H. Safavi-Naeini, O.~Painter, and F.~G.
  S.~L. Brandão.
\newblock Building a fault-tolerant quantum computer using concatenated cat
  codes, 2020.

\bibitem{chen2021concentration}
C.-F. Chen.
\newblock Concentration of otoc and lieb-robinson velocity in random
  hamiltonians, 2021.

\bibitem{chen2020quantum}
C.-F. Chen, H.-Y. Huang, R.~Kueng, and J.~A. Tropp.
\newblock Quantum simulation via randomized product formulas: Low gate
  complexity with accuracy guarantees, 2020.

\bibitem{chen2021optimal}
C.-F. Chen and A.~Lucas.
\newblock Optimal frobenius light cone in spin chains with power-law
  interactions, 2021.

\bibitem{Childs2017TowardTF}
A.~M. Childs, D.~L. Maslov, Y.~S. Nam, N.~J. Ross, and Y.~Su.
\newblock Toward the first quantum simulation with quantum speedup.
\newblock {\em Proceedings of the National Academy of Sciences}, 115:9456 --
  9461, 2017.

\bibitem{Childs2019NearlyOL}
A.~M. Childs and Y.~Su.
\newblock Nearly optimal lattice simulation by product formulas.
\newblock {\em Physical review letters}, 123 5:050503, 2019.

\bibitem{thy_trotter_error}
A.~M. Childs, Y.~Su, M.~C. Tran, N.~Wiebe, and S.~Zhu.
\newblock Theory of trotter error with commutator scaling.
\newblock {\em Physical Review X}, 11(1), Feb 2021.

\bibitem{LCU}
A.~M. Childs and N.~Wiebe.
\newblock Hamiltonian simulation using linear combinations of unitary
  operations.
\newblock {\em Quantum Inf. Comput.}, 12(11-12):901--924, 2012.

\bibitem{Davoudi2022QuantumSF}
C.~W. B.~Z. Davoudi, A.~B. Balantekin, T.~Bhattacharya, M.~Carena, W.~A.
  de~Jong, P.~Draper, A.~X. El-Khadra, N.~Gemelke, M.~Hanada, D.~E. Kharzeev,
  H.~Lamm, Y.~Li, J.~Liu, M.~Lukin, Y.~Meurice, C.~Monroe, B.~P. Nachman,
  G.~Pagano, J.~Preskill, E.~Rinaldi, A.~Roggero, D.~I. Santiago, M.~J. Savage,
  I.~Siddiqi, G.~Siopsis, D.~van Zanten, N.~Wiebe, Y.~Yamauchi,
  K.~Yeter-Aydeniz, and S.~Zorzetti.
\newblock Quantum simulation for high energy physics.
\newblock 2022.

\bibitem{Garling2007InequalitiesAJ}
D.~J. Garling.
\newblock Inequalities: A journey into linear analysis.
\newblock 2007.

\bibitem{haah2020quantum}
J.~Haah, M.~B. Hastings, R.~Kothari, and G.~H. Low.
\newblock Quantum algorithm for simulating real time evolution of lattice
  hamiltonians, 2020.

\bibitem{Haah_2021}
J.~Haah, M.~B. Hastings, R.~Kothari, and G.~H. Low.
\newblock Quantum algorithm for simulating real time evolution of lattice
  hamiltonians.
\newblock {\em SIAM J. on Comput.}, page FOCS18–250–FOCS18–284, 2021.

\bibitem{hastings2016turning}
M.~B. Hastings.
\newblock Turning gate synthesis errors into incoherent errors, 2016.

\bibitem{HNTR20:Matrix-Product}
D.~Huang, J.~Niles-Weed, J.~A. Tropp, and R.~Ward.
\newblock Matrix concentration for products.
\newblock {\em arXiv preprint arXiv:2003.05437}, 2020.

\bibitem{janson_1997}
S.~Janson.
\newblock {\em Gaussian Hilbert Spaces}.
\newblock Cambridge Tracts in Mathematics. Cambridge University Press, 1997.

\bibitem{jungeZenq_nc15}
M.~Junge and Q.~Zeng.
\newblock Noncommutative martingale deviation and poincar{\'e}type inequalities
  with applications.
\newblock {\em Probability Theory and Related Fields}, 161(3):449--507, 2015.

\bibitem{kim_vu}
J.~H. Kim and V.~H. Vu.
\newblock Concentration of multivariate polynomials and its applications.
\newblock {\em Combinatorica}, 20(3):417--434, 2000.

\bibitem{king2012Hypercontractivity}
C.~King.
\newblock Hypercontractivity for semigroups of unital qubit channels, 2012.

\bibitem{kivlichan2018quantum}
I.~D. Kivlichan, J.~McClean, N.~Wiebe, C.~Gidney, A.~Aspuru-Guzik, G.~K.-L.
  Chan, and R.~Babbush.
\newblock Quantum simulation of electronic structure with linear depth and
  connectivity.
\newblock {\em Phys. Rev. Lett.}, 120:110501, 2018.

\bibitem{Lashkari_2013}
N.~Lashkari, D.~Stanford, M.~Hastings, T.~Osborne, and P.~Hayden.
\newblock Towards the fast scrambling conjecture.
\newblock {\em Journal of High Energy Physics}, 2013(4), Apr 2013.

\bibitem{Lata_a_2006}
R.~Latala.
\newblock Estimates of moments and tails of gaussian chaoses.
\newblock {\em The Annals of Probability}, 34(6), Nov 2006.

\bibitem{THC_google}
J.~Lee, D.~W. Berry, C.~Gidney, W.~J. Huggins, J.~R. McClean, N.~Wiebe, and
  R.~Babbush.
\newblock Even more efficient quantum computations of chemistry through tensor
  hypercontraction.
\newblock {\em PRX Quantum}, 2(3), Jul 2021.

\bibitem{lloyd1996universal}
S.~Lloyd.
\newblock Universal quantum simulators.
\newblock {\em Science}, 273(5278):1073--1078, 1996.

\bibitem{low2017optimal}
G.~H. Low and I.~L. Chuang.
\newblock Optimal hamiltonian simulation by quantum signal processing.
\newblock {\em Phys. Rev. Lett.}, 118:010501, 2017.

\bibitem{Low_2019_qubitize}
G.~H. Low and I.~L. Chuang.
\newblock Hamiltonian simulation by qubitization.
\newblock {\em Quantum}, 3:163, 2019.

\bibitem{lust_piquard_86}
F.~{Lust-Piquard}.
\newblock {In\'egalites de Khintchine dans \(C_p(1<p<\infty)\). (Khinchin
  inequalities in \(C_p\) \((1<p<\infty))\)}.
\newblock {\em {C. R. Acad. Sci., Paris, S\'er. I}}, 303:289--292, 1986.

\bibitem{lust_Pisier_91}
F.~Lust-Piquard and G.~Pisier.
\newblock Non commutative khintchine and paley inequalities.
\newblock {\em Arkiv f{\"o}r Matematik}, 29(1):241--260, 1991.

\bibitem{maldacena2016remarks}
J.~Maldacena and D.~Stanford.
\newblock Remarks on the {Sachdev-Ye-Kitaev} model.
\newblock {\em Phys. Rev. D}, 94:106002, 2016.

\bibitem{mcardle2020quantum}
S.~McArdle, S.~Endo, A.~Aspuru-Guzik, S.~C. Benjamin, and X.~Yuan.
\newblock Quantum computational chemistry.
\newblock {\em Rev. Mod. Phys.}, 92:015003, 2020.

\bibitem{Montanaro_application_hypercontract}
A.~Montanaro.
\newblock Some applications of hypercontractive inequalities in quantum
  information theory.
\newblock {\em Journal of Mathematical Physics}, 53(12):122206, Dec 2012.

\bibitem{montanaro_q_boolean}
A.~Montanaro and T.~J. Osborne.
\newblock Quantum boolean functions, 2010.

\bibitem{naor_2012}
A.~NAOR.
\newblock On the banach-space-valued azuma inequality and small-set
  isoperimetry of alon–roichman graphs.
\newblock {\em Combinatorics, Probability and Computing}, 21(4):623–634,
  2012.

\bibitem{odonnell2021analysis}
R.~O'Donnell.
\newblock Analysis of boolean functions, 2021.

\bibitem{oliveira2010concentration}
R.~I. Oliveira.
\newblock Concentration of the adjacency matrix and of the laplacian in random
  graphs with independent edges, 2010.

\bibitem{Pisier_1997}
G.~Pisier and Q.~Xu.
\newblock Non-commutative martingale inequalities.
\newblock {\em Communications in Mathematical Physics}, 189(3):667–698, Nov
  1997.

\bibitem{ricardXu16}
{\'E}.~Ricard and Q.~Xu.
\newblock {A noncommutative martingale convexity inequality}.
\newblock {\em The Annals of Probability}, 44(2):867 -- 882, 2016.

\bibitem{Sachdev_1993}
S.~Sachdev and J.~Ye.
\newblock Gapless spin-fluid ground state in a random quantum heisenberg
  magnet.
\newblock {\em Physical Review Letters}, 70(21):3339–3342, May 1993.

\bibitem{low_energy2020}
B.~Sahinoglu and R.~D. Somma.
\newblock Hamiltonian simulation in the low energy subspace.
\newblock {\em preprint arXiv:2006.02660}, 2020.

\bibitem{schudy2012polyconcentration}
W.~Schudy and M.~Sviridenko.
\newblock Concentration and moment inequalities for polynomials of independent
  random variables, 2012.

\bibitem{Sekino_2008}
Y.~Sekino and L.~Susskind.
\newblock Fast scramblers.
\newblock {\em Journal of High Energy Physics}, 2008(10):065–065, Oct 2008.

\bibitem{SK_model_75}
D.~Sherrington and S.~Kirkpatrick.
\newblock Solvable model of a spin-glass.
\newblock {\em Phys. Rev. Lett.}, 35:1792--1796, Dec 1975.

\bibitem{2021_Su_nearly_tight}
Y.~Su, H.-Y. Huang, and E.~T. Campbell.
\newblock Nearly tight trotterization of interacting electrons.
\newblock {\em Quantum}, 5:495, Jul 2021.

\bibitem{suzuki1991general}
M.~Suzuki.
\newblock General theory of fractal path integrals with applications to
  many-body theories and statistical physics.
\newblock {\em J. Math. Phys.}, 32(2):400--407, 1991.

\bibitem{2017_Temme_error_mitigation}
K.~Temme, S.~Bravyi, and J.~M. Gambetta.
\newblock Error mitigation for short-depth quantum circuits.
\newblock {\em Physical Review Letters}, 119(18), Nov 2017.

\bibitem{Tomczak1974}
N.~Tomczak-Jaegermann.
\newblock The moduli of smoothness and convexity and the rademacher averages of
  the trace classes \(s_p (1\leq p\leq \infty)\).
\newblock {\em Studia Mathematica}, 50(2):163--182, 1974.

\bibitem{Tong2021ProvablyAS}
Y.~Tong, V.~V. Albert, J.~R. McClean, J.~Preskill, and Y.~Su.
\newblock Provably accurate simulation of gauge theories and bosonic systems.
\newblock {\em ArXiv}, abs/2110.06942, 2021.

\bibitem{tropp2011freedmans}
J.~A. Tropp.
\newblock Freedman's inequality for matrix martingales, 2011.

\bibitem{tropp2015introduction}
J.~A. Tropp.
\newblock An introduction to matrix concentration inequalities, 2015.

\bibitem{2021_Microsoft_catalysis}
V.~von Burg, G.~H. Low, T.~Häner, D.~S. Steiger, M.~Reiher, M.~Roetteler, and
  M.~Troyer.
\newblock Quantum computing enhanced computational catalysis.
\newblock {\em Physical Review Research}, 3(3), Jul 2021.

\bibitem{Wilde_QShannon}
M.~M. Wilde.
\newblock Preface to the second edition.
\newblock {\em Quantum Information Theory}, page xi–xii.

\bibitem{Qi_2021_Hamiltonian_simulation_random}
Q.~Zhao, Y.~Zhou, A.~F. Shaw, T.~Li, and A.~M. Childs.
\newblock Hamiltonian simulation with random inputs, 2021.

\end{thebibliography}
\bibliographystyle{abbrv}

\end{document}